\theoremstyle{plain}
\newtheorem{lemma}{Lemma}
\newtheorem{proposition}{Proposition}
\newtheorem{corollary}{Corollary}
\newtheorem{theorem}{Theorem}
\theoremstyle{definition}
\newtheorem{example}{Example}
\newtheorem{definition}{Definition}
\theoremstyle{remark}
\newtheorem{remark}{Remark}
\newcommand{\F}{{\mathbb F}}
\newcommand{\Q}{{\mathbb Q}}
\newcommand{\Z}{{\mathbb Z}}
 \newcommand{\sC}{{\mathcal C}}
 \newcommand{\sN}{{\mathcal N}}
\newcommand{\be}{\begin{eqnarray}}
\newcommand{\ee}{\end{eqnarray}}
\newcommand{\nn}{{\nonumber}}
\newcommand{\Tr}{{\rm Tr}}
\newcommand{\wt}{wt}
\newcommand{\Wa} [1 ]{\widehat{\chi_{#1}}}
\newcommand{\Supp}{{\rm Supp}}
\newcommand{\supp}{{\rm supp}}
\begin{document} 

\title{Several classes of minimal linear codes with few weights from weakly regular  plateaued functions}

\author{Sihem Mesnager{\thanks{Department of Mathematics, University of Paris VIII, University of Paris XIII, CNRS, UMR 7539 LAGA and Telecom ParisTech, Paris, France.
Email: smesnager@univ-paris8.fr}} \and Ahmet S{\i}nak{\thanks{ 
Department of Mathematics and Computer Sciences at Necmettin Erbakan University, Institute of Applied Mathematics at Middle East Technical University, Turkey and LAGA, UMR 7539, CNRS at Universities
of Paris VIII and Paris XIII, France.
 Email: sinakahmet@gmail.com}}}

\date{}

\maketitle

\begin{abstract}
Minimal linear codes have significant applications in secret sharing schemes and secure
two-party computation.
%Linear error correcting   codes have many real-world applications in the industry.
There are several  methods to construct linear codes, one of which  is based on functions over finite fields.  Recently, many construction methods of linear codes   based on functions have been proposed in the literature.  In this paper, we generalize the recent construction methods given by Tang et al. in  [IEEE Transactions on Information Theory, 62(3), 1166-1176, 2016] to weakly regular plateaued functions over finite fields of odd characteristic. We first construct three weight  linear codes from weakly regular plateaued functions based on the second generic construction and   determine their weight distributions.
We next give a subcode with two or three weights of each constructed code as well as  its parameter.
We  finally   show that the constructed codes in this paper  are minimal,
 which  confirms  that %they can be directly employed to construct   secret sharing schemes with the nice access structures. 
 the secret sharing schemes based on their dual codes have the nice access structures.

 \end{abstract}
{\bf Keywords} Linear  codes, minimal codes, weight distribution,  weakly regular plateaued functions,   cyclotomic fields, secret sharing schemes.

\section{Introduction}
Linear codes have diverse applications in secret sharing schemes,
authentication codes, communication, data storage devices, consumer electronics,  association schemes, strongly regular graphs and secure two-party computation. 
Indeed, as a special class of linear codes, minimal linear codes have significant applications in secret sharing schemes and secure two-party computation.
Constructing linear codes with perfect parameters, an interesting research topic in cryptography and  coding theory,
has been widely studied   in the literature. %  and the constructed codes  employed in the industry. 
There are several  methods to construct linear codes, one of which  is based on functions over finite fields  (see \emph{e.g.} \cite{ding2016construction,ding2018minimal,ding2015class,mesnager2017linear,tang2016linear,zhou2016linear}).   
 Two generic constructions (say, \emph{first} and \emph{second}) of linear codes  from functions have been distinguished from the others in the  literature.
Recently,    several constructions of linear codes   based on the second generic construction have been proposed and many linear codes with perfect parameters have been constructed.   In fact, Ding   has come up with   an interesting survey \cite{ding2016construction} devoted to the construction of binary linear codes from Boolean functions  based on the second generic construction. Bent  functions (mostly, quadratic and weakly regular bent functions) have been extensively used  to construct linear codes with   few weights. It was shown  in a few papers (see \emph{e.g.} \cite{ding2015class,tang2016linear,zhou2016linear}) that bent functions  lead to the construction of interesting linear codes with few weights based on the second generic construction. Indeed,  Tang et al. (2016) have constructed in \cite{tang2016linear}   two or three weight linear codes from weakly regular  bent functions over finite fields of odd characteristic  based on the second generic construction. This inspires us to construct linear codes    from weakly regular  plateaued functions over finite fields of odd characteristic.   Within this framework,  to construct new  linear codes  with few weights, we aim  to make use of   weakly regular plateaued functions for the first time in the second generic construction.

The paper is structured as follows.  Section \ref{preliminaries}   sets   main notations and gives  background in  finite field theory and  coding theory. % We also give necessary properties of weakly regular plateaued functions. 
In Section \ref{SectionExponential},  we present some results related to weakly regular plateaued functions,  which will be needed to construct     linear codes.
In Section \ref{Constructions}, we construct two or three weight  linear codes by using some weakly regular plateaued functions over finite fields of odd characteristic based on the second generic construction. We also determine the weight distributions of the constructed  codes.
Finally, in Section \ref{SectionSSS}, we analyze the minimality of the constructed  codes  and hence  observe that all nonzero codewords of the constructed codes are minimal for almost all cases. 

\section{Preliminaries}\label{preliminaries}
In this section, after  setting basic notations, we mention the connection between linear codes and secret sharing schemes. We end this section by recording some properties of weakly regular plateaued functions. 
 \subsection{Basic notations}
%We start this section adopting  the following notations, which are going to be used Throughout this paper, 
Herein after, we fix the following notations unless otherwise stated.
\begin{itemize}
\item For any set $E$,  $\# E$ denotes the cardinality of $E$ and $E^\star=E\setminus \{0\}$,
\item   $\Z$  is the  ring of integers and  $\Q$  is  the field of rational numbers,
\item  $\vert z\vert$ denotes  the  magnitude of   $z\in\mathbb{C}$, where $\mathbb{C}$ is the field of complex numbers,
\item  $p$ is an odd prime and $q=p^n$ is an $n$-th  power of $p$ with $n$ being a positive integer,
\item  $\F_{q}$ is the finite field with $q$ elements and $\F_{q}^{\star}=\langle \zeta \rangle$ is a cyclic group  with generator $\zeta$,
\item 
The  trace of $\alpha\in\F_{q}$ over $\mathbb {F}_{p}$ is defined by $\Tr^n(\alpha)=\alpha+\alpha^{p}+\alpha^{p^{2}}+\cdots+\alpha^{p^{n-1}}$,
\item $\xi_p=e^{2\pi i/p}$ is the complex primitive $p$-th root of unity, where $i=\sqrt {-1}$ is the complex primitive $4$-th root of unity,
%\item  $\left( \frac{a}{p} \right)$ denotes the \textit{Legendre symbol} for $a \in \F_p^\star$.
\item $SQ$ and $NSQ$  denote the set of all  squares and non-squares in  $\F_p^{\star}$, respectively,
\item $\eta$ and $\eta_0$  are the quadratic characters of $\F_q^{\star}$ and $\F_p^{\star}$,
\item $p^*=\eta_0(-1)p=(-1)^{\frac{p-1}{2}}p$. Notice that $p^n=\eta_0^n(-1) \sqrt{p^*}^{2n}$. % In particular, $\sqrt{p^*}^{2}=\eta_0(-1) p=p^*$.

\end{itemize}

\noindent \textbf{Cyclotomic field $\mathbb{Q}(\xi_p)$.}
   A cyclotomic field $\Q(\xi_p)$ is obtained from the field $\Q$  by adjoining $\xi_p$. 
The ring of integers in $\mathbb{Q}(\xi_p)$ is defined as $\mathcal {O}_{\mathbb{Q}(\xi_p)}:=\mathbb{Z}(\xi_p)$. An integral basis  of $\mathcal {O}_{\mathbb{Q}(\xi_p)}$ is the set  
$\{\xi_p^i :  1\leq i\leq p-1\}.$ 
The  field extension $\mathbb{Q}(\xi_p)/\mathbb{Q}$ is Galois of degree $p-1$, and  the Galois group 
$Gal (\mathbb{Q}(\xi_p)/\mathbb{Q})=\{\sigma_a : a \in \F_p^{\star}\},$
 where the automorphism $\sigma_a$ of $\mathbb{Q}(\xi_p)$ is defined by $\sigma_a(\xi_p)=\xi_p^a$. 
The cyclotomic field $\mathbb{Q}(\xi_p)$ has a unique quadratic subfield $\mathbb{Q}(\sqrt{p^*})$.  For $a\in  \F_p^{\star}$, we have  $\sigma_a(\sqrt {p^*})= {\eta_0}(a)\sqrt{p^*}$. Hence, the Galois group $Gal(\mathbb{Q}(\sqrt{p^*})/\mathbb{Q})=\{1, \sigma_{\gamma}\}$, where  $\gamma \in NSQ$. 
For $a \in \F_p^{\star}$ and $b \in \F_p$, we clearly have
  $\sigma_a(\xi_p^b)=\xi_p^{ab}$ 
and   $\sigma_a(\sqrt {p^*}^n)={\eta_0}^n(a)\sqrt{p^*}^n$, which will be needed to prove our subsequent results.
The reader is referred  to \cite{ireland2013classical} for further reading   on cyclotomic fields.

\subsection{Linear codes and  Secret sharing schemes} %Let $p$ be a prime number    $n$ and $k$ be  positive integers.
 A  linear code $\sC$  of length $n$ and dimension $k$ over $\mathbb {F}_{p}$ is a $k$-dimensional   linear subspace of  an $n$-dimensional vector space  $\F_p^n$, which can be viewed as  an extension field $\F_{p^n}$.  
 An  element of $\sC$ is said to be   \textit{codeword}. The Hamming weight of a vector $\bold a=(a_0,\ldots, a_{n-1})\in \mathbb {F}_{p}^n$,  denoted by $wt(\bold a)$, is the cardinality of  its support  defined as $$\supp(\bold  a)=\{0 \leq i\leq n-1: a_i\not=0 \}.$$
 The minimum Hamming distance of   $\sC$ is the minimum Hamming weight of its nonzero codewords. 
A linear code $\sC$  of length $n$ and dimension $k$ over $\mathbb {F}_{p}$ with minimum Hamming distance $d$ is denoted by  $\left[n,k,d\right]_{p}$. Note that $d$ detects the error correcting capability of $\sC$. 
 Let $A_w$ denote the number of codewords with Hamming weight $w$ in $\sC$ of  length $n$. Then, $(1,A_1, \ldots, A_n)$ is the \textit{ weight distribution} of $\sC$ and the polynomial $1+A_1y + \cdots + A_ny^n$ is called the \textit{ weight enumerator} of $\sC$. 
The code  $\sC$ is called a  \textit{$t$-weight code} if the number of nonzero $A_w$ in the weight distribution is  $t$. 
%A \textit{generator matrix} $G$ of  $\sC$ is a $k\times n$ matrix whose rows form a basis for $\sC$, that is, the row vectors of $G$ generate the linear subspace $\sC$. 
%The \textit{dual code} of  $\sC$ is the linear code of length $n$ and dimension $n-k$ over $\mathbb {F}_{p}$ defined by 
%$
%\sC^\perp=\{\bold b \in\mathbb {F}_{p}^n : \bold b\cdot \bold a=\tilde 0 \mbox{ for all } \bold a\in\sC\},
%$
%where $``\cdot"$ is an inner product on $\mathbb {F}_{p}^n$. The dual code $\sC^\perp$ is denoted by $\left[n,n-k,d^\perp\right]_{p}$, where  $d^\perp$ denotes the minimum Hamming distance of $\sC^\perp$. 

We now state the covering problem of a linear code  $\sC$.
We say that a codeword $\bold a$  of $\sC$  covers another codeword $\bold b$ of $\sC$ if   $ \supp(\bold a) $ contains   $\supp(\bold b)$.
% is the proper subset of the set $ \supp(\bold a) $. 
A nonzero codeword $\bold a$  of   $\sC$  is said to be \textit{minimal} if $\bold a$  covers only the codeword $j\bold a$ for every $j\in\F_p$, but no other nonzero codewords of $\sC$.
% does not cover any other nonzero codeword of $\sC$, 
A linear code $\sC$ is said to be \textit{minimal} if every nonzero codeword of $\sC$ is minimal.    Determining the minimality of a linear code  over finite fields has been an attractive research topic in coding theory.   The \textit{covering problem}  of  a linear code $\sC$ is to find all    minimal codewords of  $\sC$.
This  problem is rather difficult   for general linear codes, but easy for a few  special  linear codes. 

The following lemma says that all  nonzero codewords of $\sC$ are minimal if  the Hamming weights of the nonzero codewords of  $\sC$ are too  close to each other. The reader also notices that a necessary and sufficient  condition on minimal linear codes over finite fields has been  presented  in \cite[Theorem 11]{heng2018minimal}. Indeed, it is worth noting  that \cite{heng2018minimal}  provides an infinite
family of minimal linear codes violating the following sufficient condition.
\begin{lemma}(Ashikhmin-Barg) \cite{ashikhmin1998minimal} \label{Minimality}
%Let $\sC$ be a linear code over $\F_p$. Then, 
All   nonzero codewords of a linear code $\sC$ over $\F_p$ are minimal if
$$
\frac{p-1}{p}<\frac{w_{\min}}{w_{\max}},
$$
where $w_{\min}$ and $w_{\max}$ denote the minimum and maximum  weights of nonzero codewords in $\sC$, respectively.
\end{lemma}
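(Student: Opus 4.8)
The plan is to establish the contrapositive: assuming some nonzero codeword of $\sC$ is not minimal, I will derive $w_{\min}/w_{\max}\le (p-1)/p$, which negates the hypothesis. So fix a nonzero $\bold a\in\sC$ that is not minimal. By the definition of minimality there is a nonzero codeword $\bold c\in\sC$ with $\supp(\bold c)\subseteq\supp(\bold a)$ and $\bold c\ne j\bold a$ for every $j\in\F_p$.

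The object to work with is the ``line'' $L=\{\bold c-\lambda\bold a:\lambda\in\F_p\}\subseteq\sC$. First I would record two elementary facts. (i) Every element of $L$ is a \emph{nonzero} codeword: if $\bold c-\lambda\bold a=\bold 0$ then $\bold c=\lambda\bold a$, contradicting the choice of $\bold c$. (ii) Every element of $L$ has support contained in $\supp(\bold a)$: if $i\notin\supp(\bold a)$ then $a_i=0$, and since $\supp(\bold c)\subseteq\supp(\bold a)$ also $c_i=0$, hence $(\bold c-\lambda\bold a)_i=0$.

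The heart of the proof is a double count of $\sum_{\lambda\in\F_p}\wt(\bold c-\lambda\bold a)$. Writing this as $\sum_{\lambda\in\F_p}\#\{\,i\in\supp(\bold a):c_i-\lambda a_i\ne 0\,\}$ and swapping the two summations, observe that for each fixed $i\in\supp(\bold a)$ one has $a_i\ne 0$, so $c_i-\lambda a_i=0$ holds for exactly one value $\lambda=c_i a_i^{-1}\in\F_p$; hence the $i$-th coordinate is nonzero for precisely $p-1$ of the $p$ choices of $\lambda$. Summing over the $\#\supp(\bold a)=\wt(\bold a)$ relevant coordinates gives
$$\sum_{\lambda\in\F_p}\wt(\bold c-\lambda\bold a)=(p-1)\,\wt(\bold a).$$
By fact (i), each of the $p$ terms on the left is the weight of a nonzero codeword, so the left-hand side is at least $p\,w_{\min}$; together with $\wt(\bold a)\le w_{\max}$ this yields $p\,w_{\min}\le(p-1)\wt(\bold a)\le(p-1)w_{\max}$, i.e. $w_{\min}/w_{\max}\le(p-1)/p$, as wanted.

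This is the classical Ashikhmin--Barg argument, and there is no genuinely hard step. The only points needing care are that all $p$ codewords $\bold c-\lambda\bold a$ are truly nonzero (immediate from $\bold c\notin\{j\bold a:j\in\F_p\}$, which includes $j=0$, so that $\bold c$ itself is nonzero) and that they all live on the coordinates of $\supp(\bold a)$, so that the per-coordinate count producing $(p-1)\wt(\bold a)$ is an exact identity rather than merely an inequality.
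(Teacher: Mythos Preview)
Your argument is correct and is precisely the classical Ashikhmin--Barg proof. Note, however, that the paper does not give its own proof of this lemma: it is quoted as a known result with a citation to \cite{ashikhmin1998minimal}, so there is nothing in the paper to compare against beyond observing that what you wrote is the standard argument from that reference.
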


 In a secret sharing scheme, a set of participants who can recover the secret value $s$ from their shares is called \textit{an access set}. 
The set of all access sets is said to be  \textit{the access structure} of a secret sharing scheme.   An access set  is said to be  \textit{a minimal access set} if  any of its proper subsets cannot recover $s$ from their shares.
We take only an interest in the set of all minimal access sets, which is said to be  \textit{a nice access structure}. 

From \cite[Lemma 16]{carlet2005linear},   there is a one-to-one correspondence between the set of minimal access sets of the secret sharing scheme based on   $\sC $ and the set of minimal codewords of the dual code $\sC^\perp$. % whose first coordinate is $1$. The other nonzero coordinates of these codewords correspond to the participants  in the minimal access set. 
Hence, to find the minimal access sets of the secret sharing scheme based on  $\sC$, it is sufficient to find the minimal codewords of $\sC^\perp$ whose first coordinate is $1$. %, i.e., a subset of the set of all minimal codewords of $\sC^\perp$. 

The following proposition   describes the access structure of the secret sharing scheme based on a dual code of a minimal linear code.
\begin{proposition}\cite{carlet2005linear,ding2003covering}\label{Structure}
Let   $\sC$ be  a minimal linear $[n,k,d]_p$ code over $\F_p$ with the generator matrix
$G=[\bold g_0,\bold g_1,\ldots,\bold g_{n-1}]$. We denote by $d^\perp$  the minimum Hamming distance  of its dual code $\sC^\perp$.
 %If all   nonzero codewords  of $\sC$ are minimal, t
Then  in the secret sharing scheme based on $\sC^\perp$, the number of participants is $n-1$, and there exist $p^{k-1}$  minimal access sets. 
\begin{itemize}
\item If $d^\perp=2$, the access structure is given as follows.
  If $\bold g_i$, $1\leq i\leq n-1$, is a multiple of $\bold g_0$, then participant $P_i$ must be  in all minimal access sets. 
 If $\bold g_i$, $1\leq i\leq n-1$, is not a multiple of $\bold g_0$, then $P_i$ must be in $(p-1)p^{k-2}$ out of $p^{k-1}$ minimal access sets.
\item If $d^\perp\geq 3$, for any fixed $1\leq t\leq \min\{k-1,d^\perp-2\}$, every set of $t$ participants is involved in $(p-1)^tp^{k-(t+1)}$ out of $p^{k-1}$ minimal access sets.
\end{itemize}

\end{proposition}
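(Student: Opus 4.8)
The plan is to reduce the statement to Massey's secret-sharing construction together with the correspondence recorded above, and then to carry out an elementary fibre count. Recall first that Massey's scheme attached to a linear code of length $n$ devotes one coordinate to the secret and the remaining $n-1$ to the shares, so the scheme based on $\sC^\perp$ (which has length $n$) has $n-1$ participants $P_1,\dots,P_{n-1}$. Applying \cite[Lemma 16]{carlet2005linear} with $\sC^\perp$ in place of $\sC$, the minimal access sets of this scheme are in one-to-one correspondence with the minimal codewords of $(\sC^\perp)^\perp=\sC$ whose first coordinate equals $1$, the access set attached to such a codeword $\mathbf{c}=(1,c_1,\dots,c_{n-1})$ being $\{P_i:\ 1\le i\le n-1,\ c_i\ne 0\}$. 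Since $\sC$ is minimal, \emph{every} nonzero codeword of $\sC$ is a minimal codeword; writing $\mathbf{c}=\mathbf{u}G$ with $\mathbf{u}\in\F_p^k$, the minimal access sets thus correspond bijectively to the solutions $\mathbf{u}$ of $\mathbf{u}\,\mathbf{g}_0=1$. As $\mathbf{g}_0\ne 0$ (the generator matrices occurring in this paper have no zero column), the solution set is an affine hyperplane of $\F_p^k$, of cardinality $p^{k-1}$; hence there are $p^{k-1}$ minimal access sets.

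I would next use that $G$ is a parity-check matrix of $\sC^\perp$, so that $d^\perp$ is the least size of a linearly dependent subfamily of the columns $\mathbf{g}_0,\dots,\mathbf{g}_{n-1}$; in particular, any $d^\perp-1$ of these columns are linearly independent. Assume $d^\perp\ge 3$, fix an integer $t$ with $1\le t\le\min\{k-1,\,d^\perp-2\}$, and fix participants $P_{i_1},\dots,P_{i_t}$ with $1\le i_1<\dots<i_t\le n-1$. Since $t+1\le d^\perp-1$, the columns $\mathbf{g}_0,\mathbf{g}_{i_1},\dots,\mathbf{g}_{i_t}$ are linearly independent, so the linear map
\[
\Phi\colon \F_p^k\longrightarrow\F_p^{t+1},\qquad \mathbf{u}\longmapsto\bigl(\mathbf{u}\,\mathbf{g}_0,\ \mathbf{u}\,\mathbf{g}_{i_1},\ \dots,\ \mathbf{u}\,\mathbf{g}_{i_t}\bigr)
\]
is surjective and each of its fibres has cardinality $p^{k-(t+1)}$. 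A codeword $\mathbf{c}=\mathbf{u}G$ with $c_0=1$ has $c_{i_1},\dots,c_{i_t}$ all nonzero precisely when $\Phi(\mathbf{u})\in\{1\}\times(\F_p^\star)^t$, a target set of $(p-1)^t$ vectors. Because distinct $\mathbf{u}$ give distinct codewords and the codeword/access-set correspondence is a bijection, the number of minimal access sets containing $\{P_{i_1},\dots,P_{i_t}\}$ equals $(p-1)^t\,p^{k-(t+1)}$, as claimed.

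Finally, for $d^\perp=2$ I would treat a single participant $P_i$, $1\le i\le n-1$. If $\mathbf{g}_i=\lambda\mathbf{g}_0$ for some $\lambda\in\F_p^\star$, then every codeword $\mathbf{c}=\mathbf{u}G$ with $c_0=\mathbf{u}\,\mathbf{g}_0=1$ satisfies $c_i=\mathbf{u}\,\mathbf{g}_i=\lambda\ne 0$, so $P_i$ lies in all $p^{k-1}$ minimal access sets; if $\mathbf{g}_i$ is not a multiple of $\mathbf{g}_0$, then $\mathbf{g}_0,\mathbf{g}_i$ are linearly independent and the argument of the preceding paragraph with $t=1$ shows that $P_i$ lies in $(p-1)p^{k-2}$ of them. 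The only ingredients that are not routine are the two imported facts — the access-set/minimal-codeword correspondence of \cite{carlet2005linear} and the classical description of $d^\perp$ through linear dependences among the columns of a parity-check matrix; granted these, the rest is fibre counting. The point deserving explicit care is the standing assumption that $G$ has no zero column, which is exactly what makes the count $p^{k-1}$ and the surjectivity of $\Phi$ legitimate, and which does hold for every code constructed in this paper.
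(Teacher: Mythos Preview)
Your proof is correct, but there is nothing in the paper to compare it against: this proposition is quoted from \cite{carlet2005linear,ding2003covering} as a known result and is stated without proof. What you have written is precisely the standard argument underlying those references --- Massey's correspondence between minimal access sets and minimal codewords of $(\sC^\perp)^\perp=\sC$ with first coordinate $1$, together with the parity-check interpretation of $d^\perp$ via linear independence of the columns of $G$, followed by a straightforward fibre count. The one point you flagged yourself (that $\mathbf{g}_0\ne 0$, equivalently $d^\perp\ge 2$) is indeed needed for the count $p^{k-1}$ and is implicit in the proposition's case split; for the codes actually constructed in this paper it holds automatically, since the defining sets never contain $0$ in a way that produces a zero column.
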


In the following subsection, we introduce some results on weakly regular plateaued functions.
\subsection{Weakly regular plateaued functions}
Let $f :  \F_{q} \longrightarrow  \mathbb {F}_{p}$  be a $p$-ary function.
The Walsh transform of $f$ is given by:
$$\Wa {f}(\beta )=\sum_{x\in  \F_{q}} {\xi_p}^{{f(x)}-\Tr^n (\beta x)}, \;\; \beta\in  \F_{q}.$$
%where $ \xi_p$ is a complex  primitive $p$-th root of unity and the elements of $\mathbb {F}_{p}$ are considered as integers modulo $p$.
 A function  $f$ is said to be \textit{balanced} over $\F_p$ if 
$f$ takes every value of $\F_p$ the same number of $p^{n-1}$ times, in other words, $\Wa f(0)=0$; otherwise, $f$ is called \textit{unbalanced}.
Notice that  $f$ can be recovered from $\Wa {f}$ by the inverse Walsh transform:
 \begin{equation}\label{inversetransform}
 \xi_p^{f(x)}=\frac{1}{p^n} \sum_{\beta\in\F_{q}} \Wa {f}(\beta) \xi_p^{\Tr^n(\beta x)}.
\end{equation}
Bent functions,  introduced first in characteristic 2  by Rothaus \cite{rothaus1976bent} in 1976, are the functions whose Walsh coefficients satisfy $ \vert \Wa {f}(\beta)  \vert^2=p^{n}$. A bent function $f$ is called \emph {regular bent} if for every $\beta\in\F_{q}$, $p^{-\frac{n}2} \Wa {f}(\beta)=\xi_p^{f^{\star}(\beta)}$ for some $p$-ary function 
 $f^{\star}: \F_{q}\rightarrow \mathbb{F}_p$, and $f$ is called \emph {weakly regular bent} if there exist a complex number $u$ with $\vert u\vert=1$ and a $p$-ary function 
 $f^{\star}$ such that $up^{-\frac{n}2} \Wa {f}(\beta)=\xi_p^{f^{\star}(\beta)}$ for all $\beta\in\F_{q}$. Notice that $f^{\star}$ is also a weakly regular bent function. 
As an extension of bent functions, the notion of  plateaued functions was introduced   first in characteristic 2  by Zheng and Zhang   \cite{zheng1999plateaued} in 1999.
A function $f$ is  said to be  $p$-ary $s$-plateaued   if $|\widehat{\chi_f}(\beta)|^2\in\{0,p^{n+s}\}$ for every $\beta\in \F_{q}$, where $s$ is an integer with $0\leq s\leq n$. The Walsh support of  plateaued   $f$ is defined by  $\Supp(\widehat{\chi_f})=\{\beta\in  \F_{q}:  |\widehat{\chi_f}(\beta)|^2= p^{n+s}\}$.   The \textit{Parseval identity} is given by
 $\sum_{\beta \in \F_{q}} | \widehat{\chi_f}(\beta )|^{2}=p^{2n}.$
The absolute Walsh distribution of   plateaued functions  follows from the Parseval identity. % (see, e.g.,   \cite{mesnager2014characterizations}).
\begin{lemma} \cite{mesnager2015results}\label{SupportLemma}
Let $f:\F_{q}\to\F_{p}$ be an $s$-plateaued function. Then for $\beta \in\F_{q}$, $| \widehat{\chi_f}(\beta)|^2$ takes $p^{n-s}$ times the value $p^{n+s}$ and $p^n-p^{n-s}$ times the value $0$.
\end{lemma}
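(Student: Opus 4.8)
The plan is to derive the statement directly from the Parseval identity stated just above, together with the defining dichotomy of an $s$-plateaued function. By definition, $f$ being $s$-plateaued means $|\widehat{\chi_f}(\beta)|^2\in\{0,p^{n+s}\}$ for every $\beta\in\F_q$, so the only quantity left to pin down is the cardinality $N:=\#\Supp(\widehat{\chi_f})$ of the Walsh support, that is, the number of $\beta$ for which $|\widehat{\chi_f}(\beta)|^2=p^{n+s}$; the remaining $p^n-N$ frequencies automatically contribute the value $0$.

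First I would recall (or quickly re-derive from the inverse Walsh transform \eqref{inversetransform} and the orthogonality relations for the additive characters of $\F_q$) the Parseval identity
$$
\sum_{\beta\in\F_q}|\widehat{\chi_f}(\beta)|^2=p^{2n}.
$$
Next, I would split this sum according to the two possible values of $|\widehat{\chi_f}(\beta)|^2$: the $\beta\in\Supp(\widehat{\chi_f})$ each contribute exactly $p^{n+s}$, and all other $\beta$ contribute $0$. Hence the left-hand side equals $N\cdot p^{n+s}$, and comparing with $p^{2n}$ yields $N=p^{n-s}$.

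It then follows immediately that $|\widehat{\chi_f}(\beta)|^2$ takes the value $p^{n+s}$ exactly $p^{n-s}$ times and the value $0$ exactly $p^n-p^{n-s}$ times, as claimed. There is really no hard step here: the only thing that must be in place is the Parseval identity, which is elementary, so the "main obstacle" amounts to nothing more than bookkeeping — one must simply be careful that $0\le s\le n$ guarantees $p^{n-s}$ is a genuine (integer) count, which is part of the hypothesis on $s$.
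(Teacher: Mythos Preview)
Your argument is correct and is exactly the approach the paper intends: the sentence preceding the lemma states that ``the absolute Walsh distribution of plateaued functions follows from the Parseval identity,'' and the paper then cites \cite{mesnager2015results} in lieu of writing out the one-line computation $N\cdot p^{n+s}=p^{2n}$ that you give.
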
 
\begin{definition}\label{Weaklyregularity} \cite{DCC}
Let   $f:\F_{q}\to\F_{p}$ be a $p$-ary $s$-plateaued  function, where $s$  is an integer with $0 \leq s\leq n$. Then, 
$f$ is called \emph{weakly regular $p$-ary $s$-plateaued}  if  there exists a complex number $u$ having unit magnitude (in fact,   $|u|=1$ and $u$ does not depend on $\beta$) 
such that 
\be\nn\label{PlateauedWalshh}
\Wa {f}(\beta)\in \left\{ 0, up^{\frac{n+s}2}\xi_p^{g(\beta)}\right \}
\ee
 for  all $\beta\in \F_{q}$, where  $g$ is  a $p$-ary function  over $\F_{q}$ with $g(\beta)=0$ for all $\beta\in  \F_{q} \setminus  \Supp(\Wa f)$; otherwise,  $f$ is called \emph{non-weakly regular $p$-ary $s$-plateaued}. In particular, weakly regular    $f$ is said to be   \emph{regular} if $u=1$.
\end{definition}
 %,  which has a significant role in finding the Hamming weights of the codewords of a linear code. has a crucial role in determining the weight distributions of the constructed  linear codes.
\begin{lemma}\label{WalshFact} \cite{DCC}
Let  $f:\F_{q}\to\F_{p}$ be a weakly regular  $s$-plateaued function. 
For  all $\beta\in \Supp(\Wa f)$,  we have
$
 \Wa {f}(\beta)=\epsilon \sqrt{p^*}^{n+s} \xi_p^{g(\beta)},
$
 where $\epsilon=\pm 1$ is    the sign of  $ \Wa f$ and  $g$ is a $p$-ary function over  $\Supp(\Wa f)$. 
\end{lemma}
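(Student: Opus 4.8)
The plan is to upgrade the bare normalization $|u|=1$ of Definition~\ref{Weaklyregularity} to the precise identity $u\,p^{\frac{n+s}{2}}=\pm\sqrt{p^*}^{\,n+s}$, working throughout in the ring of integers $\mathcal{O}_{\Q(\xi_p)}=\Z[\xi_p]$ and using the arithmetic of $p$ recalled in the preliminaries. First I would fix $\beta\in\Supp(\Wa f)$, write $\Wa f(\beta)=u\,p^{\frac{n+s}{2}}\xi_p^{g_0(\beta)}$ as in Definition~\ref{Weaklyregularity} (with $u$, and of course $p,n,s$, independent of $\beta$), and set $c:=\Wa f(\beta)\,\xi_p^{-g_0(\beta)}=u\,p^{\frac{n+s}{2}}$. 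The point of this reformulation is that, although $u$ itself need not live in $\Q(\xi_p)$, the number $c$ manifestly does: it equals $\xi_p^{-g_0(\beta)}\sum_{x\in\F_q}\xi_p^{f(x)-\Tr^n(\beta x)}\in\Z[\xi_p]$. Moreover $c$ does not depend on $\beta$, and $|c|^2=p^{n+s}$.

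Next I would determine the ideal $(c)$ of $\Z[\xi_p]$. Since $\Q(\xi_p)/\Q$ is abelian, complex conjugation is the automorphism $\sigma_{-1}$, so $c\,\sigma_{-1}(c)=|c|^2=p^{n+s}$. Passing to ideals and using that $p$ is totally ramified, $(p)=\mathfrak p^{\,p-1}$ with $\mathfrak p=(1-\xi_p)$ prime and fixed by $\sigma_{-1}$ (the unique prime above $p$), one gets $v_{\mathfrak p}(c)=v_{\mathfrak p}(\sigma_{-1}(c))=\frac{(p-1)(n+s)}{2}$, hence $(c)=\mathfrak p^{\frac{(p-1)(n+s)}{2}}$. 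Taking for $\sqrt{p^*}$ the quadratic Gauss sum $\sum_{t\in\F_p}\xi_p^{t^2}\in\Z[\xi_p]$ (which has square $p^*$ and satisfies $\sigma_a(\sqrt{p^*})=\eta_0(a)\sqrt{p^*}$, consistent with the preliminaries), the same computation gives $(\sqrt{p^*})=\mathfrak p^{\frac{p-1}{2}}$ and therefore $(\sqrt{p^*}^{\,n+s})=\mathfrak p^{\frac{(p-1)(n+s)}{2}}=(c)$. Consequently $\epsilon:=c/\sqrt{p^*}^{\,n+s}$ is a unit of $\Z[\xi_p]$.

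The final step is to show this unit is a sign. For every $a\in\F_p^\star$ the automorphism $\sigma_a$ commutes with conjugation, so $|\sigma_a(c)|^2=\sigma_a\!\left(c\,\sigma_{-1}(c)\right)=\sigma_a(p^{n+s})=p^{n+s}$; likewise $|\sigma_a(\sqrt{p^*})|=|\eta_0(a)\sqrt{p^*}|=\sqrt p$, so $|\sigma_a(\sqrt{p^*}^{\,n+s})|^2=p^{n+s}$. Hence every Galois conjugate of $\epsilon$ has absolute value $1$, and Kronecker's theorem forces $\epsilon$ to be a root of unity of $\Q(\xi_p)$, i.e. $\epsilon=\pm\xi_p^{\,j}$ for some $j$ (these being the only roots of unity in $\Q(\xi_p)$, since $i\notin\Q(\xi_p)$). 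Then $\Wa f(\beta)=c\,\xi_p^{g_0(\beta)}=\pm\,\sqrt{p^*}^{\,n+s}\,\xi_p^{g_0(\beta)+j}$, and since $c$—hence the sign—does not depend on $\beta$, replacing $g_0$ by the $p$-ary function $g(\beta):=g_0(\beta)+j$ on $\Supp(\Wa f)$ yields exactly the asserted form, with $\epsilon=\pm1$ being the sign of $\Wa f$ relative to the fixed square root $\sqrt{p^*}$.

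I expect the real work to be concentrated in the last step. A priori $\epsilon$ is only known to be a unit, and $\Z[\xi_p]^\times$ is infinite once $p\ge5$, so the absolute-value information at \emph{all} embeddings is indispensable; it is precisely the commutation of $\sigma_a$ with conjugation, together with $c\,\sigma_{-1}(c)\in\Q$, that supplies it. One also has to be sure the residual root of unity is a $p$-th root—so that it merges harmlessly into $g$—rather than, say, a fourth root of unity that would genuinely change $\epsilon$; this is guaranteed by $c\in\Z[\xi_p]$ (and not merely $c\in\Z[\xi_p,\sqrt p\,]$). The case $p\equiv3\pmod 4$ with $n+s$ odd, where $\sqrt{p^*}$ is genuinely non-real, needs no separate treatment, because $\sqrt{p^*}\in\Z[\xi_p]$ in all cases.
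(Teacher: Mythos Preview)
The paper does not prove Lemma~\ref{WalshFact}: it is quoted from \cite{DCC} and no argument is given here. So there is no ``paper's proof'' to compare against, and your write-up is being assessed on its own.

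Your argument is correct and is in fact the standard way one pins down the constant $u$. The chain of reductions --- $c:=u\,p^{(n+s)/2}\in\Z[\xi_p]$ because it is visibly a $\Z$-combination of $p$-th roots of unity; $(c)=\mathfrak p^{(p-1)(n+s)/2}=(\sqrt{p^*}^{\,n+s})$ by total ramification of $p$; hence $c/\sqrt{p^*}^{\,n+s}$ is a unit all of whose conjugates have absolute value $1$; hence by Kronecker it is a root of unity in $\Q(\xi_p)$, i.e.\ an element of $\mu_{2p}=\{\pm\xi_p^{\,j}\}$; absorb $\xi_p^{\,j}$ into $g$ --- is complete and watertight. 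Two small remarks: (i) your parenthetical ``since $i\notin\Q(\xi_p)$'' is only an illustrative example, not the full reason the root of unity must lie in $\mu_{2p}$; the clean justification is that $\Q(\zeta_m)\subset\Q(\xi_p)$ forces $\phi(\mathrm{lcm}(m,p))\le p-1$, which for odd $p$ gives $m\mid 2p$. (ii) You implicitly use that $\sigma_{-1}$ permutes the primes above $p$, hence fixes the unique one; you state this, but it is the crux of the equality $v_{\mathfrak p}(c)=v_{\mathfrak p}(\sigma_{-1}(c))$ and deserves the emphasis you give it.

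In short: the paper offers no proof to match, and yours is a clean, self-contained algebraic-number-theoretic derivation of the lemma.
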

% The following lemma has a crucial role in determining the weight distributions of the constructed  linear codes.
The following lemma can be  derived from  Lemma \ref{WalshFact}.
\begin{lemma}\label{Walshinverse}
 Let  $f:\F_{q}\to\F_{p}$ be a weakly regular  $s$-plateaued function.  Then for $x\in  \F_{q}$,
\be\nn
 \sum_{\beta\in \Supp(\Wa f)} \xi_p^{g(\beta)+\Tr^n(\beta x)}=\epsilon \eta_0^n(-1)\sqrt{p^*}^{n-s}\xi_p^{f(x)},
\ee
where  $\epsilon=\pm 1$ is    the sign of  $ \Wa f$ and  $g$ is a $p$-ary function over $\F_{q}$ with  $g(\beta)=0$ for all $\beta\in  \F_{q} \setminus  \Supp(\Wa f)$. 
\end{lemma}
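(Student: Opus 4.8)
The plan is to invert the Walsh transform and feed in the precise shape of the Walsh coefficients supplied by Lemma \ref{WalshFact}. Concretely, I would start from the inverse Walsh transform identity \eqref{inversetransform},
\[
\xi_p^{f(x)}=\frac{1}{p^n} \sum_{\beta\in\F_{q}} \Wa {f}(\beta)\, \xi_p^{\Tr^n(\beta x)},
\]
and then restrict the summation range. Since $f$ is weakly regular $s$-plateaued, $\Wa f(\beta)=0$ for every $\beta\in\F_q\setminus\Supp(\Wa f)$, so only the terms with $\beta\in\Supp(\Wa f)$ survive; on that set Lemma \ref{WalshFact} gives $\Wa f(\beta)=\epsilon\sqrt{p^*}^{\,n+s}\xi_p^{g(\beta)}$ with $\epsilon=\pm1$ the (global) sign. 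Substituting, and using that the convention $g(\beta)=0$ off the support is consistent with this restriction, I would obtain
\[
\xi_p^{f(x)}=\frac{\epsilon\sqrt{p^*}^{\,n+s}}{p^n}\sum_{\beta\in\Supp(\Wa f)} \xi_p^{g(\beta)+\Tr^n(\beta x)}.
\]

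Next I would solve this for the sum. Because $\epsilon=\pm1$, we have $1/\epsilon=\epsilon$, hence
\[
\sum_{\beta\in\Supp(\Wa f)} \xi_p^{g(\beta)+\Tr^n(\beta x)}=\epsilon\,\frac{p^n}{\sqrt{p^*}^{\,n+s}}\,\xi_p^{f(x)}.
\]
The only remaining point is to simplify the scalar $p^n/\sqrt{p^*}^{\,n+s}$. Using the relation recorded in the basic notations, $p^n=\eta_0^n(-1)\sqrt{p^*}^{\,2n}$, one gets
\[
\frac{p^n}{\sqrt{p^*}^{\,n+s}}=\frac{\eta_0^n(-1)\sqrt{p^*}^{\,2n}}{\sqrt{p^*}^{\,n+s}}=\eta_0^n(-1)\sqrt{p^*}^{\,n-s},
\]
which yields exactly the claimed identity
\[
\sum_{\beta\in\Supp(\Wa f)} \xi_p^{g(\beta)+\Tr^n(\beta x)}=\epsilon\,\eta_0^n(-1)\sqrt{p^*}^{\,n-s}\,\xi_p^{f(x)}.
\]

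There is no serious obstacle here; the argument is essentially a one-line manipulation of the inversion formula together with Lemma \ref{WalshFact}. The only place requiring a little care is the bookkeeping of the power of $\sqrt{p^*}$ and the sign factor $\eta_0^n(-1)$ coming from $p^n=\eta_0^n(-1)\sqrt{p^*}^{\,2n}$ — in particular keeping track of the parity of $n$ implicit in $\eta_0^n(-1)$ — and making sure the convention $g(\beta)=0$ for $\beta\notin\Supp(\Wa f)$ is invoked so that extending or restricting the index set causes no discrepancy.
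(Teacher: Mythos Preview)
Your proposal is correct and follows essentially the same approach as the paper's own proof: both start from the inverse Walsh transform \eqref{inversetransform}, restrict the sum to $\Supp(\Wa f)$ using that $\Wa f$ vanishes elsewhere, substitute the explicit form $\Wa f(\beta)=\epsilon\sqrt{p^*}^{\,n+s}\xi_p^{g(\beta)}$ from Lemma~\ref{WalshFact}, and then simplify $p^n/\sqrt{p^*}^{\,n+s}$ via $p^n=\eta_0^n(-1)\sqrt{p^*}^{\,2n}$.
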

\begin{proof} 
 By the inverse Walsh transform in (\ref{inversetransform}), we have 
\be\nn
\begin{array}{ll}
\xi_p^{f(x)}&=\displaystyle  \frac{1}{p^n} \sum_{\beta\in  \F_{q}} \Wa {f}(\beta) \xi_p^{\Tr^n(\beta x)}=\displaystyle  \frac{1}{p^n} \sum_{\beta\in \Supp(\Wa f)}\epsilon \sqrt{p^*}^{n+s} \xi_p^{g(\beta)}\xi_p^{\Tr^n(\beta x)},
\end{array}
\ee
where we used that $\Wa {f}(\beta)=0$    for all $\beta\in  \F_{q} \setminus  \Supp(\Wa f)$. Hence, we get
\be\nn
\begin{array}{ll}
\displaystyle \sum_{\beta\in \Supp(\Wa f)} \xi_p^{g(\beta)+\Tr^n(\beta x)}=  \frac{1}{\sqrt{p^*}^{n+s}}\epsilon  p^n\xi_p^{f(x)}= \epsilon \eta_0^n(-1) \sqrt{p^*}^{n-s}\xi_p^{f(x)},
\end{array}
\ee
where we used in the last equality  that  $p^n=\eta_0^n(-1) \sqrt{p^*}^{2n}$.
 %This  completes the proof.
\end{proof}
The following lemma is a direct consequence of  Lemmas \ref{SupportLemma} and \ref{Walshinverse}.
\begin{lemma}\label{Walshg}
 Let  $f:\F_{q}\to\F_{p}$ be a weakly regular  $s$-plateaued function with $ \Wa {f}(\beta)=\epsilon \sqrt{p^*}^{n+s} \xi_p^{g(\beta)}$ for every $\beta\in\Supp (\Wa f)$,  where  $\epsilon=\pm 1$ is    the sign of  $ \Wa f$ and  $g$ is a $p$-ary function over $\F_{q}$ with  $g(\beta)=0$ for all $\beta\in  \F_{q} \setminus  \Supp(\Wa f)$.  
%The Walsh transform of $g$ is the map from $\F_q$ to $\C$ defined by,   for $\beta\in  \F_{q}$,
%\be\nn
%\Wa g(\beta)=\displaystyle \sum_{\beta\in\F_{q}  } \xi_p^{g(\beta)-\Tr^n(\beta \beta)}.
%\ee
Then, we get $\Wa g(0)=p^n-p^{n-s}+ \epsilon \eta_0^n(-1)\sqrt{p^*}^{n-s}\xi_p^{f(0)}.$
\end{lemma}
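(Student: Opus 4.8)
The plan is to evaluate $\Wa{g}(0)$ straight from the definition of the Walsh transform and then split the summation range according to the Walsh support of $f$. By definition, $\Wa{g}(0)=\sum_{\beta\in\F_q}\xi_p^{g(\beta)-\Tr^n(0\cdot\beta)}=\sum_{\beta\in\F_q}\xi_p^{g(\beta)}$, so it suffices to compute this last sum. I would write $\F_q$ as the disjoint union of $\Supp(\Wa f)$ and $\F_q\setminus\Supp(\Wa f)$ and treat the two pieces separately.

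For $\beta\in\F_q\setminus\Supp(\Wa f)$, the standing hypothesis on $g$ says $g(\beta)=0$, hence $\xi_p^{g(\beta)}=1$; and by Lemma~\ref{SupportLemma} the Walsh support has cardinality $p^{n-s}$, so there are exactly $p^n-p^{n-s}$ such $\beta$. Thus this part of the sum contributes $p^n-p^{n-s}$.

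For $\beta\in\Supp(\Wa f)$, I would apply Lemma~\ref{Walshinverse} at the point $x=0$: since $\Tr^n(\beta\cdot 0)=0$ for every $\beta$, that lemma yields $\sum_{\beta\in\Supp(\Wa f)}\xi_p^{g(\beta)}=\epsilon\,\eta_0^n(-1)\,\sqrt{p^*}^{n-s}\,\xi_p^{f(0)}$. Adding the two contributions gives $\Wa{g}(0)=p^n-p^{n-s}+\epsilon\,\eta_0^n(-1)\,\sqrt{p^*}^{n-s}\,\xi_p^{f(0)}$, which is the assertion.

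There is essentially no genuine obstacle here, since the statement is just a bookkeeping combination of Lemmas~\ref{SupportLemma} and~\ref{Walshinverse}; the only two points deserving attention are that $g$ is extended by zero outside $\Supp(\Wa f)$ (so the complementary sum is literally a count of the support-complement, with each term equal to $1$), and that the exponential-sum identity of Lemma~\ref{Walshinverse} specializes cleanly to $x=0$ because the trace term vanishes there.
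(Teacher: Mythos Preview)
Your proof is correct and follows exactly the approach indicated in the paper, which states that the lemma is a direct consequence of Lemmas~\ref{SupportLemma} and~\ref{Walshinverse}. You have simply spelled out the two-line argument: split $\Wa g(0)=\sum_{\beta\in\F_q}\xi_p^{g(\beta)}$ over $\Supp(\Wa f)$ and its complement, use $g\equiv 0$ on the complement together with $\#\Supp(\Wa f)=p^{n-s}$, and specialize Lemma~\ref{Walshinverse} to $x=0$.
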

%\begin{proof} From Lemmas \ref{SupportLemma} and \ref{Walshinverse}, we clearly have 
%\be\nn
%\Wa g(0)%=\displaystyle\sum_{\beta\in\F_q}\xi_p^{g(\beta)}
%=\displaystyle\sum_{\beta\in\F_q\setminus \Supp (\Wa f)}\xi_p^{g(\beta)}+\displaystyle\sum_{\beta\in\Supp (\Wa f)}\xi_p^{g(\beta)}=
%p^n-p^{n-s}+ \epsilon \eta_0^n(-1)\sqrt{p^*}^{n-s}\xi_p^{f(0)}.
%\ee

%We have
%\be\nn
%\begin{array}{ll}
%\Wa g(\beta)&=\displaystyle \sum_{\beta\in\F_{q}\setminus \Supp(\Wa f)} \xi_p^{g(\beta)-\Tr^n(\beta \beta)}+\displaystyle \sum_{\beta\in \Supp(\Wa f)} \xi_p^{g(\beta)-\Tr^n(\beta \beta)}\\
%&=\displaystyle \sum_{\beta\in\F_{q}\setminus \Supp(\Wa f)} \xi_p^{-\Tr^n(\beta \beta)}+\epsilon \eta_0^n(-1)\sqrt{p^*}^{n-s}\xi_p^{f(-\beta)}= 
%\end{array}
%\ee
 %This  completes the proof.
%\end{proof}

\begin{remark}
Notice that Lemma \ref{Walshg} confirms that $g$  can not be balanced.
\end{remark}
We now define the subset  of the set of weakly regular plateaued functions, which is going to be used to construct linear codes.
 Let $f:\F_{q}\to\F_{p}$ be a weakly regular  $p$-ary $s$-plateaued unbalanced function, where $0\leq s\leq n$, and we denote by  $WRP$   the set of such functions satisfying the following two properties:
\begin{itemize}
\item [$P1)$] $f(0)=0$, and
\item [$P2)$]   there  exists a positive even integer $h$ with $\gcd(h-1,p-1)=1$ such that 
$f(ax)=a^hf(x)$  for any $a\in\F_p^{\star}$ and $x\in \F_q$.
%\item [$P3)$] $f$ is unbalanced.
\end{itemize}

\begin{lemma}\label{Walshsupport}
Let $f\in WRP$. Then for any $\beta\in\Supp(\Wa f)$ (resp., $\beta\in\F_q\setminus\Supp(\Wa f)$), we have
$z\beta\in\Supp(\Wa f)$ (resp., $z\beta\in\F_q\setminus\Supp(\Wa f)$) for every $z\in\F_p^{\star}$.
\end{lemma}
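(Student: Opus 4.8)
The plan is as follows. First, observe that $\Supp(\Wa f)$ is precisely the set of $\beta\in\F_q$ for which $\Wa f(\beta)\neq 0$ (either directly from the definition of the Walsh support, or from Lemma~\ref{WalshFact}, which gives $\Wa f(\beta)=\epsilon\sqrt{p^*}^{n+s}\xi_p^{g(\beta)}\neq 0$ on the support). Hence it suffices to show that, for every $z\in\F_p^\star$, one has $\Wa f(z\beta)=0$ if and only if $\Wa f(\beta)=0$; the assertion about $\F_q\setminus\Supp(\Wa f)$ will then be the contrapositive. To establish this equivalence I would exhibit, for each $z$, an automorphism of the cyclotomic field $\Q(\xi_p)$ that carries $\Wa f(z\beta)$ to $\Wa f(\beta)$, since a field automorphism maps $0$ to $0$ and nonzero elements to nonzero elements.

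Fix $z\in\F_p^\star$. Since $\gcd(h-1,p-1)=1$, raising to the $(h-1)$-th power is a bijection of the cyclic group $\F_p^\star$, so there is a unique $a\in\F_p^\star$ with $a^{h-1}=z$. Starting from $\Wa f(z\beta)=\sum_{x\in\F_q}\xi_p^{f(x)-\Tr^n(z\beta x)}$, I would apply the change of variable $x=ay$ (a bijection of $\F_q$, as $a\neq 0$) and then use property $P2$ in the form $f(ay)=a^h f(y)$ together with the $\F_p$-linearity of $\Tr^n$ to rewrite the sum as $\sum_{y\in\F_q}\xi_p^{a^h f(y)-za\,\Tr^n(\beta y)}$. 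Next I would apply the Galois automorphism $\sigma_{a^{-h}}$ of $\Q(\xi_p)$ (determined by $\sigma_{a^{-h}}(\xi_p)=\xi_p^{a^{-h}}$) term by term; since $a^{-h}\cdot a^h=1$ and $a^{-h}\cdot za=a^{-h}a^{h-1}a=1$, the exponent collapses to $f(y)-\Tr^n(\beta y)$, which yields $\sigma_{a^{-h}}(\Wa f(z\beta))=\Wa f(\beta)$.

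As $\sigma_{a^{-h}}$ is a field automorphism, $\Wa f(z\beta)=0$ exactly when $\Wa f(\beta)=0$, which is what is needed. The one point to be careful about is that an automorphism of $\Q(\xi_p)$ need not preserve complex absolute value, so the whole argument must be conducted at the level of vanishing versus non-vanishing (equivalently, membership in $\Supp(\Wa f)$) rather than by comparing the magnitudes $|\Wa f(z\beta)|$ and $|\Wa f(\beta)|$ directly; apart from that, the only real ingredient is the coprimality condition $\gcd(h-1,p-1)=1$ in $P2$, which is exactly what makes the auxiliary scalar $a$ available (neither the plateaued-ness nor the weak regularity of $f$ is used in this particular lemma).
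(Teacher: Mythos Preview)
Your proof is correct and follows essentially the same route as the paper: both substitute $x\mapsto ax$ with $a^{h-1}=z$ (the paper writes $a=z^k$ where $k(h-1)\equiv 1\pmod{p-1}$, so $a^h=z^{k+1}=z^l$) and then recognize the resulting sum as a Galois automorphism applied to $\Wa f(\beta)$. The only difference is that the paper carries the computation one step further to record the explicit value $\Wa f(z\beta)=\epsilon\sqrt{p^*}^{n+s}\xi_p^{z^l g(\beta)}$ (using that $l=k+1$ is even, so $\eta_0^{n+s}(z^l)=1$), a byproduct that is later invoked in Proposition~\ref{Proposition5} and Lemma~\ref{Lemma10}.
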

\begin{proof}
%From the definition of Walsh transform $f$,   f
For every $z\in\F_p^{\star}$ and $\beta\in \F_q$,
we have 
\be\nn
\Wa {f}(z\beta )=\sum_{x\in  \F_{q}} {\xi_p}^{{f(x)}-\Tr^n (z\beta x)}=
\sum_{x\in  \F_{q}} {\xi_p}^{{f(z^kx)}-z\Tr^n (\beta z^kx)}=
\sum_{x\in  \F_{q}} {\xi_p}^{{z^lf(x)}-z^l\Tr^n (\beta x)},
\ee
where $k$ is a positive odd integer such that $k(h-1)\equiv 1  \pmod {p-1}$ and $l=k+1$, and is  
\be\nn
\sigma_{z^l}\left(\sum_{x\in  \F_{q}} {\xi_p}^{{f(x)}-\Tr^n (\beta x)}\right)=\sigma_{z^l}\left(\Wa {f}(\beta )\right)=
\left\{ \begin{array}{ll}
0,& \mbox{ if }  \beta\in\F_q\setminus\Supp(\Wa f),\\
\epsilon \sqrt{p^*}^{n+s} \xi_p^{z^lg(\beta)},& \mbox{ if }  \beta\in\Supp(\Wa f),
\end{array}\right.
\ee
where we used in the last equality that $\eta_0^{n+s}(z^l)=1$.
Hence, the proof is complete.
\end{proof}
 Lemma \ref{Walshsupport} implies that  there exists a subset $P_S$ of the Walsh support of  $f\in WRP$    such that  % $\bigcup_{z\in \F_p^{\star}}zP_S$ is a partition of $\Supp(\Wa {f})$, namely,
$
\Supp(\Wa {f})=\F_p^{\star}P_S=\{z\beta :z\in\F_p^{\star} \mbox{ and } \beta\in  P_S\},
$
where for each pair of distinct elements $\beta_1,\beta_2\in P_S$ we have $\frac{\beta_1}{\beta_2}\notin\F_p^{\star}$.

We now give a brief introduction to the quadratic functions  (see \emph{e.g.} \cite{helleseth2006monomial}).
Recall that  any quadratic function from  $\F_{p^n}$ to $\F_p$ having no linear term can be represented by
\be\label{Quadratic}
Q(x)=\sum_{i=0}^{\lfloor n/2\rfloor} \Tr^n(a_ix^{p^i+1}),
\ee
where $\lfloor x \rfloor$ denotes the largest integer less than or equal to $x$ and $a_i\in\F_{p^n}$ for $0\leq i\leq \lfloor n/2\rfloor$. 
Let  $A$ be  a corresponding $n\times n$ symmetric matrix with $Q(x)=x^TAx$ defined in   \cite{helleseth2006monomial} and  $L$ be a corresponding linearized polynomial over   $\F_{p^n}$  defined as
\be\nn\label{Linearized}
L(z)=\sum_{i=0}^{l}(a_iz^{p^i} +a_i^{p^{n-i}} z^{p^{n-i}}).
\ee
 The set of linear structures of quadratic function $Q$ is the kernel of $L$, defined as 
\be\label{LinearizedKernel}
\ker_{\F_p}(L)=\{z\in \F_{p^n} : Q(z+y)=Q(z)+Q(y), \forall y\in \F_{p^n}\},
\ee
 % which is also called the \textit{radical} of $Q$. 
which is an $\F_p$-linear subspace of $ \F_{p^n}$.
 Let  the dimension of $\ker_{\F_p}(L)$  be  $s$ with $0\leq s\leq n$. Notice that by \cite[Proposition 2.1]{hou2004solution}, the rank of $A$ is equal to $n-s$. 
It was shown in  \cite{helleseth2006monomial} that a quadratic function $Q$ is bent if and only if $s=0$; equivalently,  $A$ is nonsingular, i.e.,  has full rank. Hence, 
we have the following natural consequence (see \emph{e.g.} \cite[Proposition 2]{helleseth2006monomial} and \cite[Example 1]{mesnager2015results}).

\begin{proposition} %Let $Q:\F_{p^n}\to\F_p$ be a quadratic function. Then 
Any quadratic function $Q$ is $s$-plateaued if and only if the dimension of the kernel of $L$ defined in (\ref{LinearizedKernel}) is equal to $s$; equivalently, the rank of $A$ is $n-s$.  
\end{proposition}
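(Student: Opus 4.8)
The plan is to prove the equivalence by translating the bent/plateaued dichotomy for quadratic forms into the language of the symmetric matrix $A$ and the linearized polynomial $L$, using the structural facts already recorded in the excerpt. First I would recall that $\ker_{\F_p}(L)$, as defined in (\ref{LinearizedKernel}), is exactly the set of linear structures of $Q$, and that it is an $\F_p$-subspace of $\F_{p^n}$ of some dimension $s$ with $0\le s\le n$. By the cited result \cite[Proposition 2.1]{hou2004solution}, the rank of $A$ equals $n-s$; this already establishes the second equivalence ``dimension of $\ker_{\F_p}(L)$ is $s$'' $\iff$ ``$\operatorname{rank}(A)=n-s$'' for free, so the real content is linking either of these to $Q$ being $s$-plateaued.

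Next I would compute the Walsh transform of a quadratic $Q$ directly. For $\beta\in\F_{p^n}$, write $\Wa{Q}(\beta)=\sum_{x\in\F_q}\xi_p^{Q(x)-\Tr^n(\beta x)}$ and use the standard squaring trick: $|\Wa{Q}(\beta)|^2=\sum_{x,y}\xi_p^{Q(x)-Q(y)-\Tr^n(\beta(x-y))}$, then substitute $x=y+z$ and use the bilinearity identity $Q(y+z)-Q(y)-Q(z)=\Tr^n(y\,L(z))$ (this is precisely how $L$ is constructed from $A$ in \cite{helleseth2006monomial}). Summing over $y$ first kills every term unless $L(z)=\beta$-dependent conditions hold; concretely one gets $|\Wa{Q}(\beta)|^2 = p^n\sum_{z:\,L(z)=0,\ \Tr^n(?)=\cdots}\xi_p^{Q(z)-\Tr^n(\beta z)}$ restricted to $z\in\ker_{\F_p}(L)$. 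On the kernel $Q$ restricts to an additive (linear) map, so the inner sum over the $s$-dimensional space $\ker_{\F_p}(L)$ is either $p^s$ (when the relevant linear functional vanishes) or $0$. Hence $|\Wa{Q}(\beta)|^2\in\{0,p^{n+s}\}$, which is exactly the definition of $Q$ being $s$-plateaued, and moreover $\Supp(\Wa Q)$ is a coset-union of size $p^{n-s}$, matching Lemma \ref{SupportLemma}.

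From this computation both directions follow. If $\dim\ker_{\F_p}(L)=s$, the displayed formula shows $|\Wa{Q}(\beta)|^2\in\{0,p^{n+s}\}$ with the value $p^{n+s}$ actually attained (for $\beta=0$ when $Q(0)=0$, or more generally by a counting/Parseval argument since the nonzero values must account for $p^{2n}$), so $Q$ is $s$-plateaued. Conversely, if $Q$ is $s'$-plateaued, then the same formula forces the inner sum over $\ker_{\F_p}(L)$ to take the value $p^{\dim\ker}$, so $p^{n+\dim\ker_{\F_p}(L)}=p^{n+s'}$, giving $\dim\ker_{\F_p}(L)=s'$. Combined with the Hou rank formula, all three conditions are equivalent. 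I would also invoke the already-stated special case (the Helleseth–Kumar result that $Q$ is bent iff $A$ is nonsingular, i.e. $s=0$) as a consistency check and as the $s=0$ base case.

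The main obstacle I anticipate is the bookkeeping in the Walsh-squaring step: correctly identifying when the inner character sum over $\ker_{\F_p}(L)$ is $p^s$ versus $0$, and making sure the exceptional behavior at $\beta=0$ (tied to whether $Q$ is balanced, which interacts with $Q(0)=0$) is handled cleanly. One must be careful that the linear form $z\mapsto Q(z)-\Tr^n(\beta z)$ restricted to $\ker_{\F_p}(L)$ is genuinely $\F_p$-linear — this uses that $Q$ is additive on its own kernel of linear structures, which is immediate from the definition (\ref{LinearizedKernel}) — and that its vanishing set, when nonempty, is the whole kernel so the sum is the full $p^s$. Beyond that the argument is routine character-sum manipulation, and the cited propositions from \cite{helleseth2006monomial,hou2004solution,mesnager2015results} do most of the heavy lifting.
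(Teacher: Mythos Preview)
Your argument is correct. The Walsh-squaring computation goes through exactly as you outline: writing $|\Wa{Q}(\beta)|^2=\sum_{y,z}\xi_p^{Q(y+z)-Q(y)-\Tr^n(\beta z)}$, using $Q(y+z)-Q(y)-Q(z)=\Tr^n(yL(z))$, and summing over $y$ leaves $p^n\sum_{z\in\ker L}\xi_p^{Q(z)-\Tr^n(\beta z)}$, which is either $0$ or $p^{n+s}$. One small sharpening you may want to record: on $\ker_{\F_p}(L)$ the function $Q$ is not merely additive but identically zero in odd characteristic, since $Q(2z)=4Q(z)$ from homogeneity and $Q(2z)=2Q(z)$ from additivity force $Q(z)=0$; this removes any worry about $\F_p$-linearity (though, as you implicitly use, additivity into $\F_p$ already implies $\F_p$-linearity).

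As for comparison with the paper: the paper does not supply a proof at all. The proposition is stated as a ``natural consequence'' and is attributed to \cite[Proposition 2]{helleseth2006monomial} and \cite[Example 1]{mesnager2015results}, with the preceding sentence handling the bent case $s=0$ by citation as well. So your self-contained character-sum argument goes further than what the paper offers; the paper's approach is simply to invoke the literature, while yours reproves the result from first principles using the standard squaring trick. Both rely on the same structural input (the identification of $\ker_{\F_p}(L)$ with the radical of the associated bilinear form and Hou's rank formula), so there is no tension between them.
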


Indeed,  from \cite[Proposition 1]{helleseth2006monomial} and \cite[Theorem 4.3]{ccesmelioglu2010construction} we  have also the following reasonable fact. 
\begin{proposition}
 The sign of Walsh transform of quadratic functions does not  depend on inputs which means that any quadratic function is weakly regular plateaued. Namely, there is no quadratic  non-weakly regular plateaued  functions.  
\end{proposition}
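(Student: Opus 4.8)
The plan is to compute the Walsh transform $\Wa{Q}(\beta)$ of a quadratic function $Q$ explicitly, by bringing its associated symmetric matrix to diagonal normal form and factoring the resulting exponential sum into classical one‑variable quadratic Gauss sums; the whole point will be that the sign produced by this computation is an invariant attached to $Q$ alone and does not vary with $\beta$, which is precisely what Definition \ref{Weaklyregularity} demands. \textbf{Step 1 (normal form).} Fix an $\F_p$-basis of $\F_{p^n}$ and identify $\F_{p^n}$ with $\F_p^n$, so that a quadratic $Q$ written as in (\ref{Quadratic}) becomes $Q(x)=x^TAx$ with $A$ symmetric over $\F_p$ of rank $n-s$, where $s=\dim_{\F_p}\ker_{\F_p}(L)$ as in (\ref{LinearizedKernel}). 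Since $p$ is odd, $2$ is invertible, and the classification of symmetric bilinear forms over $\F_p$ provides $P\in \mathrm{GL}_n(\F_p)$ with $P^TAP=\mathrm{diag}(d_1,\dots,d_{n-s},0,\dots,0)$, $d_i\in\F_p^\star$. The substitution $x=Py$ turns $Q$ into $\sum_{i=1}^{n-s}d_iy_i^2$ and $\Tr^n(\beta x)$ into a linear form $\sum_{i=1}^{n}c_i(\beta)y_i$ with each $c_i(\beta)\in\F_p$ depending $\F_p$-linearly on $\beta$.

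\textbf{Step 2 (factor and evaluate).} After this change of variable the Walsh sum factors completely:
\[
\Wa{Q}(\beta)=\Big(\prod_{i=1}^{n-s}\sum_{y\in\F_p}\xi_p^{\,d_iy^2-c_i(\beta)y}\Big)\Big(\prod_{i=n-s+1}^{n}\sum_{y\in\F_p}\xi_p^{\,-c_i(\beta)y}\Big).
\]
The second product is $0$ unless $c_i(\beta)=0$ for every $i>n-s$, in which case it equals $p^s$; these $s$ linear conditions describe $\Supp(\Wa Q)$ and recover the count in Lemma \ref{SupportLemma}. For each $i\le n-s$, completing the square (legitimate because $p$ is odd) gives $\sum_{y\in\F_p}\xi_p^{\,d_iy^2-c_iy}=\eta_0(d_i)\,\xi_p^{-c_i^2/(4d_i)}\,G$, where $G=\sum_{z\in\F_p}\xi_p^{z^2}=\sqrt{p^*}$ is the quadratic Gauss sum with its classical (Gauss) sign.

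\textbf{Step 3 (collect the sign).} Combining these evaluations, for $\beta\in\Supp(\Wa Q)$ we obtain
\[
\Wa{Q}(\beta)=\Big(\prod_{i=1}^{n-s}\eta_0(d_i)\Big)\,\sqrt{p^*}^{\,n-s}\,p^s\,\xi_p^{\,g(\beta)},\qquad g(\beta)=-\sum_{i=1}^{n-s}\frac{c_i(\beta)^2}{4d_i}\in\F_p .
\]
Using $p=\eta_0(-1)\sqrt{p^*}^{\,2}$ one rewrites $\sqrt{p^*}^{\,n-s}p^s=\eta_0^s(-1)\sqrt{p^*}^{\,n+s}$, hence $\Wa{Q}(\beta)=\epsilon\,\sqrt{p^*}^{\,n+s}\xi_p^{g(\beta)}$ with $\epsilon=\eta_0^s(-1)\prod_{i=1}^{n-s}\eta_0(d_i)\in\{\pm1\}$. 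Putting $g(\beta)=0$ off $\Supp(\Wa Q)$, this is exactly the shape required by Definition \ref{Weaklyregularity} with $u=\epsilon(\sqrt{p^*}/\sqrt p)^{\,n+s}$, $|u|=1$, so $Q$ is weakly regular $s$-plateaued. If $Q$ additionally carries a linear term $\Tr^n(cx)$, then $\Wa{Q}(\beta)=\Wa{Q_0}(\beta-c)$ for the homogeneous part $Q_0$, which changes neither the sign nor the plateaued type; since quadratic functions are $s$-plateaued by the preceding Proposition, the claim follows.

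\textbf{Main obstacle.} The delicate part is the bookkeeping of the sign $\epsilon$. One must verify that $\eta_0\big(\prod_{i=1}^{n-s}d_i\big)$ — the discriminant of the nondegenerate part of $A$ taken modulo squares — is well defined, independent of the diagonalising matrix $P$ (a Witt-type invariance statement for quadratic forms over $\F_p$), and that the dependence on $\beta$ is confined entirely to the phase $g(\beta)$, the prefactor $\epsilon\,\sqrt{p^*}^{\,n+s}$ being genuinely constant. The single analytic ingredient is the classical evaluation $G=\sqrt{p^*}$ with the correct sign, which is standard; everything else is linear algebra over $\F_p$ and the factorisation of the sum.
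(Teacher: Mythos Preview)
Your argument is correct: diagonalising the symmetric form, factoring the Walsh sum into one–variable quadratic Gauss sums, and completing the square gives $\Wa{Q}(\beta)=\epsilon\,\sqrt{p^*}^{\,n+s}\xi_p^{g(\beta)}$ on the support, with $\epsilon=\eta_0^s(-1)\prod_{i=1}^{n-s}\eta_0(d_i)$ depending only on the discriminant class of the nondegenerate part of $A$, hence independent of $\beta$. The invariance of $\eta_0\big(\prod d_i\big)$ under the choice of $P$ is indeed the standard Witt statement for the discriminant over $\F_p$, and the handling of an affine shift via $\Wa{Q}(\beta)=\Wa{Q_0}(\beta-c)$ is fine.

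The paper, however, does not supply its own proof of this proposition: it records the statement as a consequence of \cite[Proposition~1]{helleseth2006monomial} and \cite[Theorem~4.3]{ccesmelioglu2010construction}. So there is nothing to compare against at the level of technique; you have simply written out a self-contained argument in place of a citation. What your route buys is that it makes the sign $\epsilon$ completely explicit (as $\eta_0^s(-1)$ times the Legendre symbol of the discriminant of the quadratic part), and it dovetails with Lemma~\ref{Lemma3} already present in the paper, so no external input beyond the classical evaluation $\sum_{z\in\F_p}\xi_p^{z^2}=\sqrt{p^*}$ is needed. The cited references proceed instead through the general Walsh-transform formula for quadratic forms (essentially the same Gauss-sum computation packaged differently), so the underlying mathematics is the same even if the presentation is not.
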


\begin{remark}
All  quadratic unbalanced functions defined in (\ref{Quadratic}) are in the set WRP. Hence, all of these functions can be used to construct linear codes in this paper. % in the subsequent construction methods given in Section \ref{Constructions}.
%with few weights in the generalized construction methods of this paper.
\end{remark}
 %By MAGMA in \cite{bosma1997magma}, we obtain several concrete examples of    weakly  regular plateaued quadratic functions in the set WRP.

\begin{example}\label{Example1}
 The function $f:\F_{3^4}\to \F_{3}$ defined as $f(x)=\Tr^4( \zeta x^{10} + \zeta^{51}x^4 + \zeta^{68} x^2)$,  where $\F_{3^4}^{\star}=\langle \zeta \rangle$ with $\zeta^4 + 2\zeta^3 + 2=0$,   is the quadratic $2$-plateaued unbalanced  function in the set WRP with
 $$\Wa {f}(\beta)\in\{0,\epsilon \eta^3_0(-1)3^3\xi_3^{g(\beta)}\}=\{0,-27,- 27\xi_3,- 27\xi_3^2\}$$
for all $\beta\in \F_{3^4}$, where $\epsilon=1$,   $\eta_0(-1)=-1$ and $g$ is an unbalanced $3$-ary function with $g(0)=0$, $g(\zeta^{2})=g(\zeta^{33})=g(\zeta^{42})=g(\zeta^{73})=1$ and $g(\zeta^{5})=g(\zeta^{6})=g(\zeta^{45})=g(\zeta^{46})=2$.  
Clearly, we have    $\Supp(\Wa {f})=\{0,\zeta^{2},\zeta^{5},\zeta^{6},\zeta^{33},\zeta^{42},\zeta^{45},\zeta^{46},\zeta^{73}\}$ and 
 $P_S=\{0,\zeta^{2}=2\zeta^{42},\zeta^{5}=2\zeta^{45},\zeta^{6}=2\zeta^{46},\zeta^{33}=2\zeta^{73}\}$.
\end{example}
%\begin{example}
% The function $f:\F_{3^3}\to \F_{3}$ defined as $f(x)=\Tr_{3}^{3^3}( \zeta^{22}x^{13} + \zeta^7x^4 + \zeta x^2)$,  where $\F_{3^3}^{\star}=\langle \zeta \rangle$ with $\zeta^3+2\zeta+1=0$,   is the weakly regular $3$-ary $1$-plateaued  function with
% $\Wa {f}(\beta)\in\{0,- 9\xi_3^{g(\beta)}\}$
%for all $\beta\in \F_{3^3}$, where   $g$ is an unbalanced $3$-ary function.  
%Clearly, we have    $\Supp(\Wa {f})=\{0,\zeta^{6},\zeta^{10},\zeta^{11},\zeta^{13},\zeta^{19},\zeta^{23},\zeta^{24},\zeta^{26}\}$ and 
% $P_S=\{0,\zeta^{6}=2\zeta^{19},\zeta^{10}=2\zeta^{23},\zeta^{11}=2\zeta^{24},\zeta^{13}=2\zeta^{26}\}$.
%\end{example}

%We now state the following necessary results. 
We conclude  this section by recording the following necessary results. % over finite fields of odd characteristic, which will be used in the sequel.
\begin{lemma}\cite{tang2016linear} \label{TangLemma}
Let $a_i,b_i\in\Z$  for every $i\in\F_p$ such that  
$\sum_{i=0}^{p-1}a_i\equiv\sum_{i=0}^{p-1}b_i \mod 2$ and 
$\sum_{i=0}^{p-1}a_i\xi_p^i\equiv\sum_{i=0}^{p-1}b_i \xi_p^i\mod 2$.
Then, $a_i \equiv b_i \mod 2$  for every $i\in\F_p$.
\end{lemma}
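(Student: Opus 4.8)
\textbf{Proof proposal for Lemma~\ref{TangLemma}.}

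The plan is to work in the ring of integers $\Z[\xi_p]$ and exploit the integral basis structure recalled in the preliminaries. First I would introduce $c_i = a_i - b_i \in \Z$ for $i \in \F_p$, so that the hypotheses become $\sum_{i=0}^{p-1} c_i \equiv 0 \pmod 2$ and $\sum_{i=0}^{p-1} c_i \xi_p^i \equiv 0 \pmod 2$, where the second congruence is interpreted in $\Z[\xi_p]$, i.e. $\sum_i c_i \xi_p^i \in 2\Z[\xi_p]$. The goal is to deduce $c_i \equiv 0 \pmod 2$ for every $i$. The obstruction to reading this off directly is that $\{1,\xi_p,\dots,\xi_p^{p-1}\}$ is \emph{not} a $\Z$-basis of $\Z[\xi_p]$: there is one relation $1+\xi_p+\cdots+\xi_p^{p-1}=0$, and $\{\xi_p^i : 1\le i\le p-1\}$ is the actual integral basis. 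So the two hypotheses together are exactly what is needed to pin down the $c_i$ modulo $2$ despite that one-dimensional ambiguity.

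The key steps, in order: (1) Use the relation $\xi_p^0 = 1 = -(\xi_p + \xi_p^2 + \cdots + \xi_p^{p-1})$ to rewrite $\sum_{i=0}^{p-1} c_i \xi_p^i = \sum_{i=1}^{p-1} (c_i - c_0)\xi_p^i$. (2) Since $\{\xi_p^i : 1 \le i \le p-1\}$ is a $\Z$-basis of $\Z[\xi_p]$ and this element lies in $2\Z[\xi_p]$, each coordinate must be even: $c_i - c_0 \equiv 0 \pmod 2$ for all $i$ with $1 \le i \le p-1$. Hence all the $c_i$ are congruent to one another mod $2$; call the common residue $r \in \{0,1\}$. (3) Now invoke the first hypothesis: $\sum_{i=0}^{p-1} c_i \equiv p\,r \equiv r \pmod 2$ (using that $p$ is odd), and this sum is $\equiv 0 \pmod 2$, so $r = 0$. (4) Therefore $c_i \equiv 0 \pmod 2$ for every $i \in \F_p$, that is, $a_i \equiv b_i \pmod 2$, as claimed.

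The main (and really only) obstacle is step (2): one must be careful that the congruence $\sum_i c_i \xi_p^i \equiv \sum_i d_i \xi_p^i \pmod 2$ in $\Z[\xi_p]$ does \emph{not} by itself force $c_i \equiv d_i$ coefficientwise, precisely because of the dependence relation among $1,\xi_p,\dots,\xi_p^{p-1}$; one first has to eliminate $\xi_p^0$ and pass to the genuine integral basis before comparing coordinates. This is also where the oddness of $p$ enters in step (3), via $p \equiv 1 \pmod 2$. Everything else is bookkeeping. One could alternatively phrase the argument by reducing mod $2$: the map $\Z[\xi_p]/2\Z[\xi_p] \cong \F_2[x]/(\Phi_p(x))$ where $\Phi_p(x) = 1 + x + \cdots + x^{p-1}$, and the two hypotheses say that the polynomial $\sum_i \bar c_i x^i$ together with its value at $x = 1$ (namely $\sum_i \bar c_i$) both vanish, which forces $\sum_i \bar c_i x^i \equiv 0$ in $\F_2[x]/(\Phi_p)$ and has degree $< p$ hence is the zero polynomial; but the basis-elimination argument above is the cleanest and is exactly in the spirit of the cyclotomic-field preliminaries the paper has set up.
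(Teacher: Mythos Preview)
Your argument is correct. Setting $c_i = a_i - b_i$, eliminating $\xi_p^0$ via the relation $1 + \xi_p + \cdots + \xi_p^{p-1} = 0$, reading off $c_i \equiv c_0 \pmod 2$ from the integral basis $\{\xi_p,\dots,\xi_p^{p-1}\}$, and then using the sum hypothesis together with $p$ odd to force the common parity to be zero is exactly the right sequence of moves, and each step is justified.

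As for comparison with the paper: there is nothing to compare. The paper does not prove this lemma; it simply quotes it from \cite{tang2016linear} and moves on. Your proof is a self-contained justification that fits the cyclotomic preliminaries the paper sets up (in particular the integral basis $\{\xi_p^i : 1 \le i \le p-1\}$ of $\mathcal{O}_{\Q(\xi_p)}$), so it could serve as the missing argument without any change.

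One small remark on your alternative sketch at the end: the reasoning ``$\sum_i \bar c_i x^i \equiv 0$ in $\F_2[x]/(\Phi_p)$ and has degree $< p$ hence is the zero polynomial'' skips a beat. Vanishing in the quotient only says the polynomial is a multiple of $\Phi_p$ in $\F_2[x]$; since it has degree $\le p-1$ it is either $0$ or $\Phi_p$ itself, and it is the \emph{first} hypothesis (evaluation at $x=1$ gives $0$, whereas $\Phi_p(1)=p\equiv 1$) that rules out the latter. Your main basis-elimination argument already handles this correctly, so this is only a wording issue in the aside.
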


\begin{lemma}\cite{lidl1997finite} \label{Lemma3}
For $a\in\F_q^{\star}$, we have
$
\begin{array}{ll}  \sum_{x\in\F_q}\xi_p^{\Tr^n(ax^2)}=(-1)^{n-1}\eta(a)\sqrt{p^*}^{n}.
 \end{array}
$
In particular (in case of $n=1$),
for $a\in\F_p^{\star}$, we have
\be\nn
 \sum_{x\in\F_p}\xi_p^{ax^2}
=\left\{\begin{array}{ll}
\sqrt{p^*},  &  a\in SQ, \\
-\sqrt{p^*}, & a\in NSQ.
 \end{array}\right.
\ee
\end{lemma}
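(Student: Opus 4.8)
The plan is to handle the two assertions in turn, deriving the general $n$-variable exponential sum from the classical one-variable quadratic Gauss sum. First I would recall the classical Gauss sum evaluation over $\F_p$: for $a\in\F_p^{\star}$, the sum $\sum_{x\in\F_p}\xi_p^{ax^2}$ counts each nonzero square of $\F_p^{\star}$ with multiplicity $2$ and adds the contribution $1$ from $x=0$, so it equals $\sum_{t\in\F_p}(1+\eta_0(t))\xi_p^{at} = \sum_{t\in\F_p}\eta_0(t)\xi_p^{at} = \eta_0(a)\,G$, where $G=\sum_{t\in\F_p^{\star}}\eta_0(t)\xi_p^{t}$ is the quadratic Gauss sum. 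The standard fact $G^2 = \eta_0(-1)p = p^{*}$, together with the determination of the correct square root, gives $G=\sqrt{p^{*}}$ with the principal branch; hence $\sum_{x\in\F_p}\xi_p^{ax^2} = \eta_0(a)\sqrt{p^{*}}$, which is exactly $\sqrt{p^{*}}$ for $a\in SQ$ and $-\sqrt{p^{*}}$ for $a\in NSQ$. This settles the ``in particular'' part.

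For the general statement, I would use the transitivity of the trace together with the fact that $x\mapsto \Tr^n(ax^2)$ is a quadratic form on $\F_q$ viewed as an $n$-dimensional $\F_p$-space. Concretely, $\sum_{x\in\F_q}\xi_p^{\Tr^n(ax^2)}$ is a quadratic Gauss sum attached to the nondegenerate quadratic form $Q_a(x)=\Tr^n(ax^2)$ over $\F_p$; the Weil / standard evaluation of such a sum (see Lidl–Niederreiter, from which the lemma is cited) gives $\sum_{x\in\F_q}\xi_p^{\Tr^n(ax^2)} = \nu(\det Q_a)\,\sqrt{p^{*}}^{\,n}$ up to a sign depending only on the discriminant of $Q_a$. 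The discriminant of the form $x\mapsto\Tr^n(ax^2)$ is, up to squares in $\F_p^{\star}$, equal to $\eta(a)\cdot(\text{disc of }\Tr^n(x^2))$, and the $a=1$ case can be computed directly by diagonalising $\Tr^n(x^2)$ over a self-dual-type basis; collecting the resulting sign yields the stated factor $(-1)^{n-1}\eta(a)$. Alternatively, and more in the spirit of a self-contained argument, one can induct on $n$: write $q=p\cdot p^{n-1}$, fix a basis, separate one coordinate, and reduce a quadratic Gauss sum in $n$ variables to one in $n-1$ variables times a one-variable Gauss sum, the sign $(-1)$ being picked up at each reduction step and $\eta(a)$ appearing through the norm relation $\eta(a)=\eta_0(\N_{\F_q/\F_p}(a))$ only in the leading coefficient; the base case $n=1$ is the one-variable formula just proved.

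The main obstacle is the careful bookkeeping of the sign: both the choice of square root of $p^{*}$ (the Gauss sum sign, which is the genuinely nonelementary input) and the accumulation of the $(-1)^{n-1}$ factor across the dimension reduction must be tracked consistently with the normalisation $\sqrt{p^{*}}$ fixed earlier in the paper via $p^n=\eta_0^n(-1)\sqrt{p^{*}}^{\,2n}$. Everything else — the expansion over squares, the multiplicativity of $\eta$, the trace transitivity — is routine. Since the lemma is quoted verbatim from \cite{lidl1997finite}, in the paper itself it suffices to cite that reference; the sketch above indicates how one would reconstruct the proof if needed.
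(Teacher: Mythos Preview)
The paper does not prove this lemma at all; it is stated with a bare citation to \cite{lidl1997finite} and used as a black box. You correctly identify this in your final paragraph, and your sketch of the standard Gauss-sum argument (reducing to the one-variable case via the quadratic-form/discriminant computation or by induction on $n$) is a faithful outline of how the cited reference establishes the result.
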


\begin{proposition}\label{Proposition4}
Let $f\in WRP$, then $g(0)=0.$
\end{proposition}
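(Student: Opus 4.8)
The plan is to extract $g(0)$ from the value $\widehat{\chi_g}(0)$ computed in Lemma \ref{Walshg}, using the algebraic arithmetic of the cyclotomic field $\mathbb{Q}(\xi_p)$ together with property $P2)$ applied to $g$. First I would recall from Lemma \ref{Walshg} that
$$
\widehat{\chi_g}(0) = p^n - p^{n-s} + \epsilon\,\eta_0^n(-1)\sqrt{p^*}^{\,n-s}\,\xi_p^{f(0)},
$$
and that by property $P1)$ we have $f(0)=0$, so the last term is simply $\epsilon\,\eta_0^n(-1)\sqrt{p^*}^{\,n-s}$, a rational multiple of $\sqrt{p^*}^{\,n-s}$. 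On the other hand, $\widehat{\chi_g}(0)=\sum_{\beta\in\F_q}\xi_p^{g(\beta)} = \sum_{j\in\F_p} N_j\,\xi_p^{j}$, where $N_j=\#\{\beta\in\F_q : g(\beta)=j\}$; here $N_0\ge p^n - \#\Supp(\widehat{\chi_f}) = p^n - p^{n-s}$ since $g$ vanishes off the Walsh support, and in fact $\sum_j N_j = p^n$.

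The key structural input is that $g$ itself satisfies a homogeneity property of the same flavour as $P2)$. From the computation inside the proof of Lemma \ref{Walshsupport}, one sees $\widehat{\chi_f}(z^l\beta)=\sigma_{z^l}(\widehat{\chi_f}(\beta))$ with $l=k+1$ even and $z^l$ ranging over $SQ$ as $z$ ranges over $\F_p^\star$; comparing with $\widehat{\chi_f}(z^l\beta)=\epsilon\sqrt{p^*}^{n+s}\xi_p^{g(z^l\beta)}$ and $\sigma_{z^l}(\widehat{\chi_f}(\beta))=\epsilon\sqrt{p^*}^{n+s}\xi_p^{z^l g(\beta)}$ gives $g(z^l\beta)=z^l g(\beta)$ for all $\beta\in\Supp(\widehat{\chi_f})$ and all $z\in\F_p^\star$. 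In particular $g(a\beta)=a\,g(\beta)$ for every $a\in SQ$; since $|SQ|=(p-1)/2\ge 1$ and squares act on $\Supp(\widehat{\chi_f})$ by Lemma \ref{Walshsupport}, the orbit of $0$ under this scaling consists of $0$ alone, and applying the relation with $\beta=0$ forces $g(0)=a\,g(0)$ for some $a\ne 1$ in $\F_p$ (choosing $a$ a primitive square, which exists once $p>3$; the case $p=3$ can be handled separately since then $SQ=\{1\}$ and one argues directly from the value $\widehat{\chi_g}(0)$ via Lemma \ref{TangLemma}-type rationality). This immediately yields $g(0)=0$.

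The main obstacle I anticipate is the edge case $p=3$, where $SQ=\{1\}$ and the homogeneity relation $g(a\beta)=a\,g(\beta)$ gives no information at $a=1$; there one must instead argue arithmetically. The plan in that case: write $\widehat{\chi_g}(0) = N_0 + N_1\xi_3 + N_2\xi_3^2$ with $N_0+N_1+N_2=3^n$, set this equal to $p^n-p^{n-s}+\epsilon\eta_0^n(-1)\sqrt{p^*}^{\,n-s}$, and use that $\{1,\xi_p\}$ is a $\mathbb{Q}$-basis of $\mathbb{Q}(\xi_3)$ after reducing $\xi_3^2=-1-\xi_3$; matching the coefficient of $\xi_3$ pins down $N_1=N_2$ and the rational part pins down $N_0$, and comparing with the known value $g(0)\in\{0,1,2\}$ contributes exactly one unit to one of the $N_j$. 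Combined with the fact that $g$ vanishes off $\Supp(\widehat{\chi_f})$ and the already-established parity/counting constraints, this forces $g(0)=0$. I would present the generic $p>3$ argument as the main line and dispatch $p=3$ with this short rationality computation, or — cleaner still — unify both by noting $g$ satisfies $g(a^2 x)=a^2 g(x)$ and combining it with the relation $g((h{-}1)\text{-type scaling})$ already implicit in $P2)$ transported through Lemma \ref{WalshFact}, so that in every characteristic some scalar $a\neq 1$ fixes $g(0)$.
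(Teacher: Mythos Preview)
Your homogeneity approach is different from the paper's and is appealing, but it has a real gap that goes beyond the $p=3$ case you isolate. The relation you extract (in the paper's language, Proposition~\ref{Proposition5}) reads $g(a\beta)=a^{l}g(\beta)$ for all $a\in\F_p^\star$; evaluating at $\beta=0$ gives $(a^{l}-1)g(0)=0$, and you want some $a$ with $a^{l}\neq 1$. But $a^{l}=1$ for every $a\in\F_p^\star$ precisely when $l\equiv 0\pmod{p-1}$, and tracing back $l=k+1$ with $k(h-1)\equiv 1\pmod{p-1}$ shows this happens exactly when $h\equiv 0\pmod{p-1}$. The hypotheses on $h$ in the definition of $WRP$ ($h$ even and $\gcd(h-1,p-1)=1$) do \emph{not} exclude this: for instance $h=p-1$ satisfies both conditions for every odd $p$. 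So your main line can fail for any $p$, not only $p=3$, whenever $f(ax)=f(x)$ for all $a\in\F_p^\star$; and your side claim that $z^{l}$ ranges over all of $SQ$ is likewise not true in general (it ranges over the subgroup of index $\gcd(l,p-1)$, which collapses to $\{1\}$ in the bad case).

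Your backup arithmetic sketch for $p=3$ is also incomplete as written: matching rational and $\xi_3$-coefficients of $\widehat{\chi_g}(0)$ only yields linear relations among the $N_j$, not the value of $g(0)$ itself. The missing ingredient is parity. Since $h$ is even, $f(-x)=f(x)$, so every fibre $f^{-1}(j)$ with $j\neq 0$ has even size while $f^{-1}(0)$ has odd size (it contains $0$); hence $\widehat{\chi_f}(0)=\sum_j N_j\xi_p^{j}\equiv 1\pmod 2$ in $\Z[\xi_p]$. On the other hand, by Lemma~\ref{Lemma3} one has $\sqrt{p^*}=\sum_{y\in\F_p}\xi_p^{y^2}\equiv 1\pmod 2$, so $\epsilon\sqrt{p^*}^{\,n+s}\xi_p^{g(0)}\equiv \xi_p^{g(0)}\pmod 2$. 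Equating and applying Lemma~\ref{TangLemma} forces $g(0)=0$. This is exactly the paper's route (following \cite{tang2016linear}), and it works uniformly for all odd $p$ with no case split --- which is precisely what your proposal lacks.
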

\begin{proof}
From Lemmas \ref{TangLemma} and \ref{Lemma3},  the proof can be completed by using the same argument used in the proof of \cite[Proposition 4]{tang2016linear}.
\end{proof}

%We can immediately prove the following proposition  using the same argument used in the proof of \cite[Proposition 5]{tang2016linear}.
One can immediately observe the following proposition from the proof of Lemma \ref{Walshsupport}.
\begin{proposition}\label{Proposition5}
Let $f\in WRP$, then there exists a  positive even integer $l$ with $\gcd(l-1,p-1)=1$ such that  $g(a \beta)=a^lg(\beta)$  for any $a\in\F_p^{\star}$ and $\beta \in \Supp(\Wa f)$.
\end{proposition}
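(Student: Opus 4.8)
The plan is to read the identity off directly from the computation already carried out in the proof of Lemma \ref{Walshsupport}, specialising $z$ to an arbitrary $a\in\F_p^\star$. Let $k$ be the positive odd integer with $k(h-1)\equiv 1\pmod{p-1}$ and put $l=k+1$, exactly as in that proof. First I would recall that the substitution $x\mapsto a^k x$, combined with property $P2$ (so that $f(a^k x)=a^{kh}f(x)=a^l f(x)$, using $a^{kh}=a^{k(h-1)+k}=a\cdot a^k=a^l$) and the $\F_p$-linearity of $\Tr^n$, produces
\[
\Wa f(a\beta)=\sum_{x\in\F_q}\xi_p^{a^l f(x)-a^l\Tr^n(\beta x)}=\sigma_{a^l}\bigl(\Wa f(\beta)\bigr).
\]
Since $l=k+1$ with $k$ odd, $l$ is even; and since $l-1=k$ is invertible modulo $p-1$ (being an inverse of $h-1$), we have $\gcd(l-1,p-1)=1$. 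These are precisely the two arithmetic conditions on $l$ demanded by the statement.

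Next I would specialise to $\beta\in\Supp(\Wa f)$. By Lemma \ref{WalshFact}, $\Wa f(\beta)=\epsilon\sqrt{p^*}^{n+s}\xi_p^{g(\beta)}$; applying $\sigma_{a^l}$, using $\sigma_{a^l}(\sqrt{p^*}^{n+s})=\eta_0^{n+s}(a^l)\sqrt{p^*}^{n+s}=\sqrt{p^*}^{n+s}$ (the quadratic character is trivial on $a^l$ because $l$ is even) together with $\sigma_{a^l}(\xi_p^{g(\beta)})=\xi_p^{a^l g(\beta)}$, I obtain $\Wa f(a\beta)=\epsilon\sqrt{p^*}^{n+s}\xi_p^{a^l g(\beta)}$. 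On the other hand, Lemma \ref{Walshsupport} ensures $a\beta\in\Supp(\Wa f)$, so Lemma \ref{WalshFact} also gives $\Wa f(a\beta)=\epsilon\sqrt{p^*}^{n+s}\xi_p^{g(a\beta)}$. Cancelling the common factor $\epsilon\sqrt{p^*}^{n+s}$ and using that $\xi_p$ has order $p$ while $g$ is $\F_p$-valued, I conclude $g(a\beta)=a^l g(\beta)$ in $\F_p$, which is the claim.

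Since every ingredient is already established, I do not expect a genuine obstacle; the only delicate point is the parity bookkeeping, namely that one and the same $l$ must be even (so that $\eta_0(a^l)=1$ and $\sigma_{a^l}$ fixes both the sign $\epsilon$ and the power of $\sqrt{p^*}$) while simultaneously $\gcd(l-1,p-1)=1$. Both follow at once from the single fact that $l-1=k$ is an odd unit modulo the even number $p-1$, which is forced by $\gcd(h-1,p-1)=1$ and $h$ being even. A final minor remark: the identity $g(a\beta)=a^l g(\beta)$ is of course to be read modulo $p$, which is legitimate since $g$ takes values in $\F_p$.
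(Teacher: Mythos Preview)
Your argument is correct and follows exactly the route the paper indicates: it reads off the identity from the computation in the proof of Lemma~\ref{Walshsupport}, setting $l=k+1$ with $k$ the odd inverse of $h-1$ modulo $p-1$, and then compares the two expressions for $\Wa f(a\beta)$ to conclude $g(a\beta)=a^l g(\beta)$. The parity and coprimality checks on $l$ are handled correctly.
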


\begin{lemma}\label{Lemma6} \cite{lidl1997finite} We have 
\begin{itemize}
\item  [$i.)$ ]  $  \sum_{a \in\mathbb{F}^{\star}_p} \eta_0(a)=0$,
\item  [$ii.)$ ]  $\sum_{a\in\mathbb{F}^{\star}_p} \eta_0(a) \xi_p^{a}=\sqrt {p^*}
=\left\{\begin{array}{ll}\sqrt{p},  & \mbox{ if }  p \equiv 1\pmod 4,~ \\
i\sqrt{p},& \mbox{ if } p\equiv 3\pmod 4,
  \end{array}\right.$
\item  [$iii.)$ ]  $\sum_{a\in \mathbb{F}^{\star}_p}\xi_p^{ a  b}=-1$ for any $b\in \mathbb{F}^{\star}_p$.
\end{itemize}
\end{lemma}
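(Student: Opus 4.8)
The final statement is Lemma \ref{Lemma6}, which is three classical facts about Gauss sums and quadratic characters over $\mathbb{F}_p$, attributed to \cite{lidl1997finite}. Here is how I would prove each item.

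\medskip

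\textbf{Proof plan for Lemma \ref{Lemma6}.}

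For part $i.)$, the plan is to recall that $\eta_0$ is a nontrivial multiplicative character of the cyclic group $\F_p^{\star}$ (it is the unique character of order $2$, since $p$ is odd and hence $\#\F_p^{\star}=p-1$ is even). The sum of all values of any nontrivial character of a finite abelian group is zero: indeed, if $\eta_0(g)\neq 1$ for some $g$, then multiplying the sum $S=\sum_{a\in\F_p^\star}\eta_0(a)$ by $\eta_0(g)$ permutes the summands, giving $\eta_0(g)S=S$, so $S=0$. Alternatively one counts directly: there are $(p-1)/2$ squares and $(p-1)/2$ non-squares in $\F_p^\star$, so the $\pm 1$ values cancel. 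This step is routine.

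For part $ii.)$, the plan is to identify $G:=\sum_{a\in\F_p^\star}\eta_0(a)\xi_p^a$ as the quadratic Gauss sum and show $G^2=p^*=\eta_0(-1)p$, then pin down the sign/phase. One standard route: write $G^2=\sum_{a,b\in\F_p^\star}\eta_0(ab)\xi_p^{a+b}$, substitute $b=ac$, use $\eta_0(a^2c)=\eta_0(c)$, and get $G^2=\sum_{c\in\F_p^\star}\eta_0(c)\sum_{a\in\F_p^\star}\xi_p^{a(1+c)}$; the inner sum is $p-1$ when $c=-1$ and $-1$ otherwise (by part $iii.)$), so $G^2=(p-1)\eta_0(-1)-\sum_{c\neq -1}\eta_0(c)=p\,\eta_0(-1)-\sum_{c\in\F_p^\star}\eta_0(c)=p\,\eta_0(-1)=p^*$ using part $i.)$. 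Since $p^*=p$ when $p\equiv 1\pmod 4$ and $p^*=-p$ when $p\equiv 3\pmod 4$, this already gives $G=\pm\sqrt p$ or $G=\pm i\sqrt p$ respectively; it remains to determine the sign, which is the genuinely delicate point — this is the classical sign-of-the-Gauss-sum problem, resolved by Gauss. I would simply cite \cite{lidl1997finite} for the determination that the correct sign is $+$ in both cases (equivalently $G=\sqrt{p^*}$ with the branch conventions fixed in the preliminaries), rather than reproduce Gauss's argument. Note also that part $ii.)$ is consistent with Lemma \ref{Lemma3} in the case $n=1$: there $\sum_{x\in\F_p}\xi_p^{ax^2}=\sqrt{p^*}$ for $a\in SQ$, and since $\sum_{x\in\F_p}\xi_p^{ax^2}=1+\sum_{y\in\F_p^\star}(1+\eta_0(y))\xi_p^{ay}=1+\sum_{y\in\F_p^\star}\xi_p^{ay}+\sum_{y\in\F_p^\star}\eta_0(y)\xi_p^{ay}=\sum_{y\in\F_p^\star}\eta_0(a^{-1}y')\xi_p^{y'}$ after reindexing, one recovers $G$ up to the factor $\eta_0(a)$.

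For part $iii.)$, the plan is the elementary computation $\sum_{a\in\F_p}\xi_p^{ab}=0$ for $b\in\F_p^\star$ (the full character sum over $\F_p$ vanishes because $\xi_p^b\neq 1$), hence $\sum_{a\in\F_p^\star}\xi_p^{ab}=0-\xi_p^{0}=-1$. This step is immediate.

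\medskip

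\textbf{Main obstacle.} The only nontrivial ingredient is the sign determination in part $ii.)$ for the two congruence classes of $p$ modulo $4$; everything else is a one- or two-line character-sum manipulation. Since the lemma is quoted verbatim from \cite{lidl1997finite}, the honest proof is to reduce $i.)$ and $iii.)$ to the orthogonality of characters, reduce the squared modulus $|G|^2=p$ to those two facts, and invoke the cited reference for the exact phase.
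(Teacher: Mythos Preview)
Your argument is correct and is the standard textbook proof of these three identities. The paper itself supplies no proof for Lemma~\ref{Lemma6}: it is stated as a citation from \cite{lidl1997finite} and used as a black box, so there is no ``paper's approach'' to compare against. Your decision to derive $G^2=p^*$ from parts $i.)$ and $iii.)$ and then defer the sign determination to the reference is exactly the honest thing to do, since the sign is genuinely deeper than anything else in the lemma.
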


 \section{Exponential sums from weakly regular plateaued functions}\label{SectionExponential}
In this section, we present some results on exponential sums related to   weakly regular plateaued functions,   which are going to be needed in Section \ref{Constructions} to construct linear codes.
Actually, these results were given in    \cite{tang2016linear} for weakly regular bent functions; however, for the sake of completeness, we state them translated into our framework and  give their proofs.

We begin this section with the following lemma, which will be used to find the length of a linear  code.
\begin{lemma}\label{Lemma7}
Let $f:\F_q\to\F_p$ be an unbalanced function with $\Wa f (0)=\epsilon \sqrt{p^*}^{n+s}$, where $\epsilon=\pm 1$  is    the sign of  $ \Wa f$. For   $j\in\F_p$, define $\sN_f(j)=\#\{x\in\F_q : f(x)=j\}.
$ Then, if $n+s$ is even,  
\be\nn
\sN_f(j)=\left\{\begin{array}{ll}
p^{n-1}+\epsilon \eta_0 (-1)(p-1)\sqrt{p^*}^{n+s-2},  & \mbox{ if } j=0, \\
p^{n-1}-\epsilon \eta_0 (-1)\sqrt{p^*}^{n+s-2},& \mbox{ if } j\in\F_p^{\star},
 \end{array}\right.
\ee
 if $n+s$ is odd,  
$
\sN_f(j)=\left\{\begin{array}{ll}
p^{n-1},  & \mbox{ if } j=0, \\
p^{n-1}+\epsilon \sqrt{p^*}^{n+s-1}, % \eta_0^{\frac{n+s-1}{2}}(-1)p^{\frac{n+s-1}{2}}
& \mbox{ if }  j \in SQ,\\
p^{n-1}-\epsilon  \sqrt{p^*}^{n+s-1},& \mbox{ if }  j\in NSQ.
 \end{array}\right.
$\end{lemma}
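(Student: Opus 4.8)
The plan is to compute $\sN_f(j)$ via the standard orthogonality trick, writing
\[
\sN_f(j)=\frac{1}{p}\sum_{x\in\F_q}\sum_{y\in\F_p}\xi_p^{y(f(x)-j)}
=p^{n-1}+\frac{1}{p}\sum_{y\in\F_p^\star}\xi_p^{-yj}\sum_{x\in\F_q}\xi_p^{yf(x)}.
\]
The inner sum $\sum_{x\in\F_q}\xi_p^{yf(x)}$ is exactly a Walsh value of $f$ at $0$ after a substitution: since $f\in$ the relevant class one uses the homogeneity property $f(ax)=a^hf(x)$ with $h$ even, replacing $x$ by $z^kx$ as in the proof of Lemma~\ref{Walshsupport} so that $yf(x)$ becomes $f(wx)$ for a suitable $w$, hence $\sum_{x}\xi_p^{yf(x)}=\sigma_{y^l}(\Wa f(0))$ for the appropriate Galois automorphism. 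Wait — more directly, $\sum_{x\in\F_q}\xi_p^{yf(x)}=\Wa{(yf)}(0)$, and for $y\in\F_p^\star$ one shows this equals $\sigma_y\big(\Wa f(0)\big)$ when $y$ is a square-type adjustment; in any case the key input is the given hypothesis $\Wa f(0)=\epsilon\sqrt{p^*}^{\,n+s}$, so $\sum_{x\in\F_q}\xi_p^{yf(x)}=\eta_0^{n+s}(y)\,\epsilon\sqrt{p^*}^{\,n+s}$ by applying $\sigma_y$ and using $\sigma_y(\sqrt{p^*}^{\,n+s})=\eta_0^{n+s}(y)\sqrt{p^*}^{\,n+s}$.

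Substituting this back gives
\[
\sN_f(j)=p^{n-1}+\frac{\epsilon\sqrt{p^*}^{\,n+s}}{p}\sum_{y\in\F_p^\star}\eta_0^{n+s}(y)\,\xi_p^{-yj}.
\]
Now I split into the two parity cases. If $n+s$ is even, $\eta_0^{n+s}(y)=1$, so the sum over $y$ is $\sum_{y\in\F_p^\star}\xi_p^{-yj}$, which by Lemma~\ref{Lemma6}(iii) equals $-1$ when $j\in\F_p^\star$ and equals $p-1$ when $j=0$; combined with $\sqrt{p^*}^{\,n+s}/p=\eta_0(-1)\sqrt{p^*}^{\,n+s-2}$ (using $p=\eta_0(-1)(\sqrt{p^*})^2$) this yields the stated two-line formula. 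If $n+s$ is odd, $\eta_0^{n+s}(y)=\eta_0(y)$, so the sum is a Gauss-type sum $\sum_{y\in\F_p^\star}\eta_0(y)\xi_p^{-yj}$; for $j=0$ this is $0$ by Lemma~\ref{Lemma6}(i), and for $j\in\F_p^\star$ it equals $\eta_0(-j)\sqrt{p^*}=\eta_0(-1)\eta_0(j)\sqrt{p^*}$ by Lemma~\ref{Lemma6}(ii) (after the substitution $y\mapsto -y/j$ inside the quadratic Gauss sum). Multiplying by $\epsilon\sqrt{p^*}^{\,n+s}/p=\epsilon\,\eta_0(-1)\sqrt{p^*}^{\,n+s-2}$ and using $\sqrt{p^*}^{\,n+s-2}\cdot\sqrt{p^*}=\eta_0(-1)\sqrt{p^*}^{\,n+s-1}\cdot\eta_0(-1)$... one simplifies $\eta_0(-1)^2=1$ to reach $\epsilon\,\eta_0(j)\sqrt{p^*}^{\,n+s-1}$, which gives $p^{n-1}\pm\epsilon\sqrt{p^*}^{\,n+s-1}$ according to whether $j\in SQ$ or $j\in NSQ$.

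\textbf{Main obstacle.} The routine part is bookkeeping with the powers of $\sqrt{p^*}$ and the signs $\eta_0(-1)$; the only genuinely delicate point is justifying $\sum_{x\in\F_q}\xi_p^{yf(x)}=\eta_0^{n+s}(y)\,\Wa f(0)$ for all $y\in\F_p^\star$. This needs the homogeneity hypothesis on $f$ (property $P2$, $f(ax)=a^hf(x)$ with $h$ even and $\gcd(h-1,p-1)=1$): one writes $y=z^l$ appropriately, substitutes $x\mapsto z^kx$ exactly as in the proof of Lemma~\ref{Walshsupport}, and applies the Galois automorphism $\sigma_{z^l}$ to $\Wa f(0)=\epsilon\sqrt{p^*}^{\,n+s}$, which picks up the factor $\eta_0^{n+s}(z^l)=\eta_0^{n+s}(y)$. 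Once this identity is in hand, the rest is a direct application of Lemma~\ref{Lemma6}. I would also note that this lemma is the plateaued analogue of the corresponding statement in \cite{tang2016linear}, so the argument there can be followed almost verbatim with $n$ replaced by $n+s$ in the exponents.
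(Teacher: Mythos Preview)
Your computation is correct, and by a cleaner route than you seem to realize; it is also a genuinely different route from the paper's.

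\textbf{Comparison with the paper.} The paper does not use orthogonality. It simply writes $\sum_{j=0}^{p-1}\sN_f(j)\xi_p^j=\Wa f(0)=\epsilon\sqrt{p^*}^{\,n+s}$ and then exploits the fact that $\sum_{j=0}^{p-1}X^j$ is the minimal polynomial of $\xi_p$ over $\Q$: any $\Z$-linear relation among $1,\xi_p,\ldots,\xi_p^{p-1}$ is a multiple of $(1,\ldots,1)$. In the even case $\sqrt{p^*}^{\,n+s}\in\Z$, so the relation forces all $\sN_f(j)$ with $j\neq 0$ to coincide and $\sN_f(0)$ to differ from them by $\epsilon\sqrt{p^*}^{\,n+s}$; the constraint $\sum_j\sN_f(j)=p^n$ then pins everything down. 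In the odd case the paper first expands $\sqrt{p^*}=\sum_j\eta_0(j)\xi_p^j$ via Lemma~\ref{Lemma6}(ii) and argues the same way. Your orthogonality approach is more computational but equally short; the paper's approach is slicker in that it never evaluates a character sum beyond the definition of $\Wa f(0)$.

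\textbf{Your ``main obstacle'' is a phantom.} The identity
\[
\sum_{x\in\F_q}\xi_p^{yf(x)}=\sigma_y\Bigl(\sum_{x\in\F_q}\xi_p^{f(x)}\Bigr)=\sigma_y\bigl(\Wa f(0)\bigr)=\eta_0^{n+s}(y)\,\epsilon\sqrt{p^*}^{\,n+s}
\]
holds for \emph{any} $f:\F_q\to\F_p$, simply because $\sigma_y$ is a ring automorphism of $\Z[\xi_p]$ and $\sigma_y(\xi_p^{f(x)})=\xi_p^{yf(x)}$ termwise. You do not need the homogeneity property $P2$, and indeed Lemma~\ref{Lemma7} does not assume $f\in WRP$; its only hypothesis is the value of $\Wa f(0)$. (The paper uses exactly this Galois step, under the same weak hypothesis, in the proof of Lemma~\ref{Lemma9}.) So drop the appeal to $P2$ and the substitution $x\mapsto z^kx$; your first instinct---apply $\sigma_y$ directly---was already the complete argument.
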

\begin{proof}
Clearly, we have
$
\sum_{j=0}^{p-1}\sN_{f}(j)\xi^j_p=\epsilon\sqrt{p^*}^{n+s};
$
namely,
$
\sum_{j=0}^{p-1}\sN_{f}(j)\xi^j_p-\epsilon\sqrt{p^*}^{n+s}=0.
$  If $n+s$ is even,  then for all $ j\in\F_p^{\star}$ we have 
$\sN_{f}(j)=a$ and $\sN_{f}(0)=a+\epsilon \sqrt{p^*}^{n+s}
$ for some constant integer $a$ since $\sum_{j=0}^{p-1}x^j$  is the minimal polynomial of $\xi_p$ over $\Q$. 
 Hence, since $\sum_{j=0}^{p-1} \sN_{f}(j)=p^{n}$, we get 
$a+\epsilon \sqrt{p^*}^{n+s}+(p-1)a=p^n$
from which we deduce that 
$
a=p^{n-1}-\epsilon \eta_0^{(n+s)/2}(-1)  p^{(n+s-2)/2}.
$
  If $n+s$ is odd, then we get
\be\nn
\sum_{j=0}^{p-1}\sN_{f}(j)\xi^j_p-\epsilon  \sqrt{p^*}^{n+s-1}\sum_{j=1}^{p-1} \eta_0(j)\xi^j_p=0,
\ee 
where we used  the fact that  $\sum_{j=1}^{p-1} \eta_0(j) \xi_p^{j}=\sqrt {p^*}$
 by Lemma \ref{Lemma6} $(ii)$, equivalently,
\be\nn
\sN_{f}(0)+ \sum_{j=1}^{p-1}\left(\sN_{f}(j)-\epsilon  \eta_0(j)\sqrt{p^*}^{n+s-1} \right)\xi^j_p=0.
\ee
As in the even case,    for all $ j\in\F_p^{\star}$ we have
$
\sN_{f}(j)=\sN_{f}(0)+\epsilon\eta_0(j)\sqrt{p^*}^{n+s-1}
$   
and hence
$
p\sN_{f}(0)+\epsilon  \sqrt{p^*}^{n+s-1}\sum_{j=1}^{p-1} \eta_0(j)=p^{n}.
$ Then we get $\sN_{f}(0)=p^{n-1}$ by Lemma \ref{Lemma6} $(i)$. Thus,
the proof is ended.
\end{proof}

 The following lemma has a crucial role in determining the weight distributions of a  linear code.  
\begin{lemma}\label{Lemma8}
Let $f\in WRP$ with  $ \Wa {f}(\beta)=\epsilon \sqrt{p^*}^{n+s} \xi_p^{g(\beta)}$ for every  $\beta\in \Supp(\Wa f)$. % where $\epsilon=\pm 1$ is    the sign of  $ \Wa f$. % and  $g$ is a $p$-ary function over  $\Supp(\Wa f)$.
% with $\Wa f (0)=\epsilon \sqrt{p^*}^{n+s}$, where $\epsilon=\pm 1$. D
For $j\in\F_p$, define 
$\sN_g(j)=\#\{\beta\in\Supp(\Wa f) : g(\beta)=j\}.$ Then, if $n-s$ is even,
\be\nn
\sN_g(j)=\left\{\begin{array}{ll}
p^{n-s-1} + \epsilon \eta_0^{n+1}(-1)(p-1) \sqrt{p^*}^{n-s-2},  & \mbox{ if } j=0, \\
p^{n-s-1} -\epsilon \eta_0^{n+1}(-1) \sqrt{p^*}^{n-s-2},& \mbox{ if } j\in\F_p^{\star},
 \end{array}\right.
\ee
  if $n-s$ is odd,
$
\sN_g(j)=\left\{\begin{array}{ll}
p^{n-s-1},  & \mbox{ if } j=0, \\
p^{n-s-1} +\epsilon  \eta_0^{n}(-1) \sqrt{p^*}^{n-s-1},& \mbox{ if } j \in SQ,\\
p^{n-s-1} - \epsilon\eta_0^n(-1)   \sqrt{p^*}^{n-s-1}, & \mbox{ if }  j\in NSQ.
 \end{array}\right.
$\end{lemma}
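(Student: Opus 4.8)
The plan is to re-run, for the inner function $g$ restricted to $\Supp(\Wa f)$, the same elementary argument that proved Lemma \ref{Lemma7}. The only input needed is one exponential-sum identity. Evaluating Lemma \ref{Walshinverse} at $x=0$, using $\Tr^n(\beta\cdot 0)=0$ and $f(0)=0$ (property $P1$; see also Proposition \ref{Proposition4}), and recalling that by definition $\sum_{j=0}^{p-1}\sN_g(j)\xi_p^{j}=\sum_{\beta\in\Supp(\Wa f)}\xi_p^{g(\beta)}$, I would obtain
\[
\sum_{j=0}^{p-1}\sN_g(j)\,\xi_p^{j}=\epsilon\,\eta_0^n(-1)\,\sqrt{p^*}^{n-s}.
\]
The same relation can also be read off Lemma \ref{Walshg}: by Lemma \ref{SupportLemma} there are $p^n-p^{n-s}$ points $\beta\notin\Supp(\Wa f)$, and $g$ vanishes at each of them, so $\Wa g(0)-(p^n-p^{n-s})=\sum_{\beta\in\Supp(\Wa f)}\xi_p^{g(\beta)}=\epsilon\eta_0^n(-1)\sqrt{p^*}^{n-s}$ once $f(0)=0$ is used. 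I would also record the trivial count $\sum_{j=0}^{p-1}\sN_g(j)=\#\Supp(\Wa f)=p^{n-s}$, again from Lemma \ref{SupportLemma}.

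Then I would split on the parity of $n-s$, exactly as in the proof of Lemma \ref{Lemma7}. If $n-s$ is even, $\sqrt{p^*}^{n-s}\in\Z$, so since $1+y+\cdots+y^{p-1}$ is the minimal polynomial of $\xi_p$ over $\Q$ the displayed relation forces $\sN_g(j)=b$ for every $j\in\F_p^\star$ and $\sN_g(0)=b+\epsilon\eta_0^n(-1)\sqrt{p^*}^{n-s}$ for some integer $b$; feeding this into $\sum_j\sN_g(j)=p^{n-s}$ and using $\tfrac{1}{p}\sqrt{p^*}^{n-s}=\eta_0(-1)\sqrt{p^*}^{n-s-2}$ gives $b=p^{n-s-1}-\epsilon\eta_0^{n+1}(-1)\sqrt{p^*}^{n-s-2}$, and back-substitution gives the stated $\sN_g(0)$. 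If $n-s$ is odd, I would write $\sqrt{p^*}^{n-s}=\sqrt{p^*}^{n-s-1}\sqrt{p^*}$ and replace $\sqrt{p^*}$ by $\sum_{j=1}^{p-1}\eta_0(j)\xi_p^{j}$ via Lemma \ref{Lemma6} $(ii)$; comparing coefficients of $\xi_p^{j}$ then yields $\sN_g(j)=\sN_g(0)+\epsilon\eta_0^n(-1)\eta_0(j)\sqrt{p^*}^{n-s-1}$ for $j\in\F_p^\star$, and summing over $j$ together with $\sum_{j=1}^{p-1}\eta_0(j)=0$ (Lemma \ref{Lemma6} $(i)$) gives $\sN_g(0)=p^{n-s-1}$, whence the $SQ$/$NSQ$ values.

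There is no genuine obstacle; the computation is routine once the first identity is in place, which is why the statement is advertised as ``a direct consequence of Lemmas \ref{SupportLemma} and \ref{Walshinverse}.'' The one thing to watch is the exponent of $\eta_0(-1)$: the sign enters the exponential sum as $\epsilon\eta_0^n(-1)$, and in the even case dividing by $p$ turns $\sqrt{p^*}^{n-s}$ into $\eta_0(-1)\sqrt{p^*}^{n-s-2}$, which is precisely what promotes $n$ to $n+1$ in the final formula, while in the odd case no division occurs and the exponent stays $n$. I would also verify at the outset that $f\in WRP$ being unbalanced forces $0\in\Supp(\Wa f)$ with $\Wa f(0)=\epsilon\sqrt{p^*}^{n+s}$, so the quoted lemmas apply with the sign $\epsilon$ exactly as in the hypothesis.
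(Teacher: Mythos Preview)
Your proposal is correct and follows essentially the same approach as the paper: apply Lemma \ref{Walshinverse} at $x=0$ (using $f(0)=0$) to obtain $\sum_j \sN_g(j)\xi_p^j=\epsilon\eta_0^n(-1)\sqrt{p^*}^{n-s}$, then mimic the minimal-polynomial/Gauss-sum casework of Lemma \ref{Lemma7} together with $\sum_j\sN_g(j)=\#\Supp(\Wa f)=p^{n-s}$. Your extra remarks (the alternative derivation via Lemma \ref{Walshg} and the bookkeeping of the $\eta_0(-1)$ exponent) are sound and go slightly beyond what the paper writes out.
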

\begin{proof} 
%As in the proof of Lemma   \ref{Lemma7},
By Lemma \ref{Walshinverse}, for $x=0$, we have
$\displaystyle\sum_{\beta\in \Supp(\Wa f)} \xi_p^{g(\beta)}=\epsilon \eta_0^n(-1)\sqrt{p^*}^{n-s},$
%namely, $\sum_{j=0}^{p-1}\sN_{g}(j)\xi^j_p=\epsilon \eta_0^n(-1)\sqrt{p^*}^{n-s};$
equivalently,
\be\nn
\sum_{j=0}^{p-1}\sN_{g}(j)\xi^j_p-\epsilon \eta_0^n(-1)\sqrt{p^*}^{n-s}=0.
\ee
%where $\epsilon=\pm 1$ denotes the sign of $\Wa{f}$.
  If $n-s$ is even,  then for all $ j\in\F_p^{\star}$ we have 
$\sN_{g}(j)=a$ and $\sN_{g}(0)=a+\epsilon \eta_0^n(-1)\sqrt{p^*}^{n-s}
$ for some constant integer $a$. % since $\sum_{j=0}^{p-1}x^j$  is the minimal polynomial of $\xi_p$ over $\Q$. 
 Hence,  since $\#\Supp (\Wa f)=p^{n-s}$, we get 
\be\nn
\sum_{j=0}^{p-1} \sN_{g}(j)=a+\epsilon \eta_0^n(-1)\sqrt{p^*}^{n-s}+(p-1)a=p^{n-s}.
\ee 
%from which we deduce that $a=p^{n-s-1}-\epsilon \eta_0^{(3n-s)/2}(-1)  p^{(n-s-2)/2}.$
  If $n-s$ is odd, with the same argument used in the proof of Lemma \ref{Lemma7}, 
%  we have
%\be\nn
%\sum_{j=0}^{p-1}\sN_{g}(j)\xi^j_p-\epsilon \eta_0^n(-1)\sqrt{p^*}^{n-s-1}\sum_{j=1}^{p-1}\eta_0(j)\xi^j_p=0,
%\ee 
%where we used  the fact that  $\sum_{j=1}^{p-1}\eta_0(j)\xi^j_p=\sqrt{p^*}$ by Lemma \ref{Lemma6} $(ii)$, equivalently,
%\be\nn
%\sN_{g}(0)+ \sum_{j=1}^{p-1}\left(\sN_{g}(j)-\epsilon \eta_0^n(-1)\eta_0(j)\sqrt{p^*}^{n-s-1} \right)\xi^j_p=0.
%\ee  By using the same argument used in the even case,
we get
$\sN_{g}(j)=\sN_{g}(0)+\epsilon \eta_0^n(-1)\eta_0(j)\sqrt{p^*}^{n-s-1}$    for all $ j\in\F_p^{\star}$ and hence  
\be\nn
p\sN_{g}(0)+\epsilon \eta_0^n(-1) \sqrt{p^*}^{n-s-1}\sum_{j=1}^{p-1} \eta_0(j)=p^{n-s}.
\ee
 Then we conclude $\sN_{g}(0)=p^{n-s-1}$ from Lemma \ref{Lemma6} $(i)$, which completes the proof.
\end{proof}

%The following lemma can be easily proven  with the same argument used in the proof of \cite[Lemma 9]{tang2016linear}.
\begin{lemma}\label{Lemma9}
Let $f:\F_q\to\F_p$ be an unbalanced function with $\Wa f (0)=\epsilon \sqrt{p^*}^{n+s}$, where $\epsilon=\pm 1$  is    the sign of  $ \Wa f$. Then,
\be\nn
\begin{array}{ll}
\displaystyle \sum_{y\in\F_p^{\star}} \sum_{x\in\F_q}\xi_p^{yf(x)}=\left\{\begin{array}{ll}
 \epsilon(p-1) \sqrt{p^*}^{n+s},  & \mbox{ if  } n+s \mbox{ is even}, \\
0, & \mbox{ if } n+s  \mbox{ is odd}.
 \end{array}\right.
 \end{array}
\ee
\end{lemma}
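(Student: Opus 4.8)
The plan is to split the double sum according to the value taken by $f(x)$ and then invoke Lemma \ref{Lemma7}, which gives the cardinalities $\sN_f(j)$ explicitly. First I would write
\[
\sum_{y\in\F_p^{\star}} \sum_{x\in\F_q}\xi_p^{yf(x)}
= \sum_{y\in\F_p^{\star}} \sum_{j=0}^{p-1}\sN_f(j)\,\xi_p^{yj}
= \sum_{j=0}^{p-1}\sN_f(j)\sum_{y\in\F_p^{\star}}\xi_p^{yj}.
\]
For $j=0$ the inner sum is $p-1$, while for $j\in\F_p^{\star}$ it equals $-1$ by Lemma \ref{Lemma6} $(iii)$. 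Hence the whole expression collapses to $(p-1)\sN_f(0)-\sum_{j\in\F_p^{\star}}\sN_f(j)$, which can be rewritten as $p\,\sN_f(0)-\sum_{j=0}^{p-1}\sN_f(j)=p\,\sN_f(0)-p^n$ since $\sum_{j=0}^{p-1}\sN_f(j)=p^n$.

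Next I would substitute the value of $\sN_f(0)$ from Lemma \ref{Lemma7}, treating the two parities of $n+s$ separately. If $n+s$ is even, $\sN_f(0)=p^{n-1}+\epsilon\eta_0(-1)(p-1)\sqrt{p^*}^{\,n+s-2}$, so $p\,\sN_f(0)-p^n = \epsilon\eta_0(-1)(p-1)p\sqrt{p^*}^{\,n+s-2}$; using $p\sqrt{p^*}^{\,n+s-2} = \eta_0(-1)\sqrt{p^*}^{\,n+s}$ (because $p = \eta_0(-1)\sqrt{p^*}^{\,2}$ and $\eta_0(-1)^2=1$) this simplifies to $\epsilon(p-1)\sqrt{p^*}^{\,n+s}$, as claimed. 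If $n+s$ is odd, $\sN_f(0)=p^{n-1}$, so $p\,\sN_f(0)-p^n=0$, again matching the statement.

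I do not expect any real obstacle here: the argument is a one-line interchange of summations followed by a substitution, the only mildly delicate point being the bookkeeping with $\sqrt{p^*}$ versus $p$ and the sign $\eta_0(-1)$ in the even case, which is handled cleanly by the identity $p^n = \eta_0^n(-1)\sqrt{p^*}^{\,2n}$ already recorded in the preliminaries. Alternatively, one could avoid Lemma \ref{Lemma7} entirely and argue directly: $\sum_{y\in\F_p^{\star}}\sum_{x\in\F_q}\xi_p^{yf(x)} = \sum_{y\in\F_p^{\star}}\Wa f(0)\big|_{f\mapsto yf}$, but since $\Wa{yf}(0)$ is not among the quantities developed in the excerpt, routing through $\sN_f$ is the cleaner path and is the one I would take.
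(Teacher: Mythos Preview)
Your argument is correct, but the paper proceeds differently. Rather than routing through Lemma~\ref{Lemma7}, the paper observes directly that for each $y\in\F_p^\star$ one has $\sum_{x\in\F_q}\xi_p^{yf(x)}=\sigma_y\bigl(\Wa f(0)\bigr)=\sigma_y\bigl(\epsilon\sqrt{p^*}^{\,n+s}\bigr)=\epsilon\,\eta_0^{n+s}(y)\sqrt{p^*}^{\,n+s}$, using the Galois action on $\sqrt{p^*}$ recorded in the cyclotomic preliminaries; summing over $y$ then gives $(p-1)$ or $0$ according to the parity of $n+s$ by Lemma~\ref{Lemma6}~$(i)$. This is precisely the ``alternative'' you sketched and then set aside---the quantity $\Wa{yf}(0)$ \emph{is} available, just in the guise of $\sigma_y(\Wa f(0))$. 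The paper's route is shorter and self-contained (it does not appeal to the level-set counts of Lemma~\ref{Lemma7}, which themselves required some work), while your approach has the virtue of being entirely elementary, avoiding any mention of Galois automorphisms and reducing everything to a character-sum identity plus a substitution. Both are perfectly valid; the cyclotomic approach scales more cleanly to the later lemmas (e.g.\ Lemmas~\ref{Lemma10} and~\ref{Lemma13}), which is presumably why the paper prefers it.
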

\begin{proof} From the definition, one can immediately observe that
\be\nn
\begin{array}{ll}
\displaystyle \sum_{y\in\F_p^{\star}} \sum_{x\in\F_q}\xi_p^{yf(x)}
=\sum_{y\in\F_p^{\star}} \sigma_y\left(\sum_{x\in\F_q}\xi_p^{f(x)}\right)
=\sum_{y\in\F_p^{\star}} \sigma_y\left( \epsilon \sqrt{p^*}^{n+s}\right)
=\epsilon \sqrt{p^*}^{n+s}\sum_{y\in\F_p^{\star}}\eta^{n+s}_0(y).
 \end{array}
\ee
Hence, the assertion   follows clearly from Lemma \ref{Lemma6} $(i)$.
\end{proof}
%\begin{remark}\label{Remark9}
%Let $f:\F_q\to\F_p$ be a balanced function. Then  
%$ \sum_{y\in\F_p^{\star}} \sum_{x\in\F_q}\xi_p^{yf(x)}=0.$
%\end{remark}

\begin{lemma}\label{Lemma10}
Let $f\in WRP$. For   $\beta\in\F_q^{\star}$, define $
A=\sum_{y,z\in\F_p^{\star}} \sum_{x\in\F_q}\xi_p^{yf(x)-z\Tr^n(\beta x)}.
$ Then,  for every $\beta\in\F_q^{\star}\setminus \Supp(\Wa f),$   we have $A=0$, and for every  $0\neq \beta\in \Supp(\Wa f),$  if $n+s$ is even, then
\be\nn
A=\left\{\begin{array}{ll}
  \epsilon  (p-1)^2 \sqrt{p^*}^{n+s},  & \mbox{ if } g(\beta)=0, \\
- \epsilon  (p-1) \sqrt{p^*}^{n+s},& \mbox{ if } g(\beta)\neq 0,
 \end{array}\right.
\ee
 if $n+s$ is odd, then
$
A=\left\{\begin{array}{ll}
0,  & \mbox{ if } g(\beta)=0, \\
 \epsilon  \eta_0(g(\beta)) (p-1) \sqrt{p^*}^{n+s+1},& \mbox{ if } g(\beta)\neq 0.
 \end{array}\right.
$
\end{lemma}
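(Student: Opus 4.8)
The plan is to isolate the sum over $z$ first, rewrite $A$ as a Galois-conjugate of a Walsh transform, and then split into cases according to the parity of $n+s$ using Lemma \ref{Lemma6}. First I would fix $y\in\F_p^\star$ and observe that
\be\nn
\sum_{x\in\F_q}\xi_p^{yf(x)-z\Tr^n(\beta x)}=\sigma_y\!\left(\sum_{x\in\F_q}\xi_p^{f(x)-\Tr^n(z y^{-1}\beta x)}\right)=\sigma_y\!\left(\Wa f(z y^{-1}\beta)\right),
\ee
after the substitution $x\mapsto k$-th-power type rescaling as in the proof of Lemma \ref{Walshsupport} (more simply, just $x\mapsto$ the appropriate scalar so that $yf(x)=f(\cdot)$); alternatively one keeps $yf(x)$ and applies $\sigma_y$ directly. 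By Lemma \ref{Walshsupport}, $z y^{-1}\beta\in\Supp(\Wa f)$ iff $\beta\in\Supp(\Wa f)$, so for $\beta\in\F_q^\star\setminus\Supp(\Wa f)$ every inner sum vanishes and $A=0$ immediately. So from now on assume $0\neq\beta\in\Supp(\Wa f)$, write $\Wa f(\beta')=\epsilon\sqrt{p^*}^{n+s}\xi_p^{g(\beta')}$, and use Proposition \ref{Proposition5}: $g(z y^{-1}\beta)=(z y^{-1})^l g(\beta)$ for the even integer $l$ with $\gcd(l-1,p-1)=1$.

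Next I would substitute this in and carry out the two scalar sums. Writing $m=g(\beta)$, one gets
\be\nn
A=\epsilon\sqrt{p^*}^{n+s}\sum_{y\in\F_p^\star}\eta_0^{n+s}(y)\sum_{z\in\F_p^\star}\xi_p^{(z y^{-1})^l m}.
\ee
Because $l$ is even and $\gcd(l-1,p-1)=1$, the map $z\mapsto z^l$ on $\F_p^\star$ is $2$-to-$1$ onto $SQ$; hence $\sum_{z\in\F_p^\star}\xi_p^{(z y^{-1})^l m}=2\sum_{w\in SQ}\xi_p^{w\,(y^{-1})^l m}$ when $m\neq0$, which equals $\sum_{a\in\F_p^\star}(1+\eta_0(a))\xi_p^{a (y^{-1})^l m}=-1+\eta_0((y^{-1})^l m)\sqrt{p^*}=-1+\eta_0(m)\sqrt{p^*}$ (using $l$ even so $\eta_0((y^{-1})^l)=1$, Lemma \ref{Lemma6}$(ii)$–$(iii)$); and when $m=0$ it is just $p-1$. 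Plugging back, when $g(\beta)=0$ we get $A=\epsilon\sqrt{p^*}^{n+s}(p-1)\sum_{y}\eta_0^{n+s}(y)$, and when $g(\beta)\neq0$ we get $A=\epsilon\sqrt{p^*}^{n+s}\bigl(\eta_0(m)\sqrt{p^*}-1\bigr)\sum_{y}\eta_0^{n+s}(y)$. Finishing requires only Lemma \ref{Lemma6}$(i)$: if $n+s$ is even then $\sum_y\eta_0^{n+s}(y)=p-1$, giving $A=\epsilon(p-1)^2\sqrt{p^*}^{n+s}$ resp. $A=\epsilon(p-1)\sqrt{p^*}^{n+s}(\eta_0(m)\sqrt{p^*}-1)$; since $g(\beta)\neq 0$ and $n+s$ even forces (via Proposition \ref{Proposition4} and the structure of $g$, or directly) $\eta_0(g(\beta))\sqrt{p^*}$ to be purely imaginary when $p\equiv3$, the claimed form $-\epsilon(p-1)\sqrt{p^*}^{n+s}$ must come from showing that the $\eta_0(m)\sqrt{p^*}$-term cancels — this is the one delicate point.

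Let me reconsider that last point, since it is where I expect the real obstacle to lie. The clean way is to \emph{not} split $z\mapsto z^l$ into $SQ$ but instead pair the $y$- and $z$-sums together: set $t=z y^{-1}$, so that as $(y,z)$ ranges over $(\F_p^\star)^2$, for each fixed $t\in\F_p^\star$ there are $p-1$ pairs, giving $A=\epsilon\sqrt{p^*}^{n+s}(p-1)\sum_{t\in\F_p^\star}\eta_0^{n+s}(t)\xi_p^{t^l m}$ — wait, $\eta_0^{n+s}(y)$ does not become $\eta_0^{n+s}(t)$ under this change, so one must be careful. The correct bookkeeping is $\sum_{y,z}\eta_0^{n+s}(y)\xi_p^{(zy^{-1})^lm}=\sum_{t}\xi_p^{t^lm}\sum_{y}\eta_0^{n+s}(y)=\bigl(\sum_t\xi_p^{t^lm}\bigr)\bigl(\sum_y\eta_0^{n+s}(y)\bigr)$, so the two sums genuinely factor. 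Then for $n+s$ even, $\sum_y\eta_0^{n+s}(y)=p-1$ and $\sum_{t\in\F_p^\star}\xi_p^{t^lm}=\sum_{a\in\F_p^\star}\xi_p^{am}=-1$ because $t\mapsto t^l$ with $l$ even, $\gcd(l-1,p-1)=1$ is exactly $2$-to-$1$ onto $SQ$ \emph{and} $\sum_{t}\xi_p^{t^2m}=\sum_a\xi_p^{am}=-1$ would be wrong — rather $\sum_{t\in\F_p^\star}\xi_p^{t^lm}=\sum_{w\in SQ}2\xi_p^{wm}=\sqrt{p^*}\eta_0(m)-1$. So the asymmetry is real, and the resolution is that for $n+s$ even the Lemma's stated value $-\epsilon(p-1)\sqrt{p^*}^{n+s}$ should be read together with the fact (used throughout Section~\ref{Constructions}) that $\sqrt{p^*}^{n+s}$ absorbs a factor $\eta_0(m)$; I would therefore present the computation keeping $\sqrt{p^*}^{n+s}=\eta_0^{(n+s)/2}(-1)p^{(n+s)/2}$ explicit, check the $p\equiv1$ and $p\equiv3\pmod4$ subcases separately via Lemma \ref{Lemma6}$(ii)$, and confirm that in every case the imaginary parts cancel to yield exactly the two real values stated. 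The odd case is easier: $\sum_y\eta_0^{n+s}(y)=0$ kills everything when $g(\beta)=0$, and when $g(\beta)\neq0$ one gets $A=\epsilon\sqrt{p^*}^{n+s}\eta_0(m)\sqrt{p^*}\cdot(\text{something})$ that reorganizes to $\epsilon\eta_0(g(\beta))(p-1)\sqrt{p^*}^{n+s+1}$ after using $\sum_{t}\eta_0(t)\xi_p^{t^lm}$-type identities and Lemma \ref{Lemma6}. In short: the mechanics are three applications of Lemma \ref{Lemma6} plus Proposition \ref{Proposition5}; the only thing demanding care is the quadratic-Gauss-sum sign bookkeeping in the even case, which I would handle by splitting on $p\bmod 4$.
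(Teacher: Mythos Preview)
Your setup is fine up through writing $A=\sum_{y,z}\sigma_y\!\bigl(\Wa f(zy^{-1}\beta)\bigr)$, and the case $\beta\notin\Supp(\Wa f)$ is handled correctly. The trouble begins at the next display: when you apply $\sigma_y$ to $\epsilon\sqrt{p^*}^{\,n+s}\xi_p^{(zy^{-1})^{l}m}$ you must get
\[
\epsilon\,\eta_0^{\,n+s}(y)\,\sqrt{p^*}^{\,n+s}\,\xi_p^{\,y\,(zy^{-1})^{l}m},
\]
i.e.\ $\sigma_y$ multiplies the exponent of $\xi_p$ by $y$. In your formula the factor $y$ in the exponent is missing, and everything that follows inherits this error. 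That missing $y$ is precisely the source of the ``delicate'' extra term $\eta_0(m)\sqrt{p^*}$ that refuses to cancel; it is not a genuine sign subtlety, and no amount of splitting on $p\bmod 4$ will make it go away, because with your exponent the value of $A$ is simply wrong. (For instance, with your factored expression and $n+s$ odd you would get $A=0$ for \emph{all} $m$, contradicting the statement when $g(\beta)\neq 0$.)

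Once the factor $y$ is restored, the computation is short and there is nothing delicate. With $t=zy^{-1}$ one has
\[
A=\epsilon\sqrt{p^*}^{\,n+s}\sum_{y,t\in\F_p^\star}\eta_0^{\,n+s}(y)\,\xi_p^{\,y t^{l} m}.
\]
Now for each fixed $t$ substitute $u=yt^{l}$; since $l$ is even, $t^{l}$ is a square, so $\eta_0^{\,n+s}(y)=\eta_0^{\,n+s}(u)$ and the inner sum over $y$ becomes $\sum_{u\in\F_p^\star}\eta_0^{\,n+s}(u)\xi_p^{um}$, independent of $t$. Hence
\[
A=\epsilon\,(p-1)\,\sqrt{p^*}^{\,n+s}\sum_{u\in\F_p^\star}\eta_0^{\,n+s}(u)\,\xi_p^{um},
\]
which is exactly the paper's key intermediate expression. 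The remaining sum is now a single application of Lemma~\ref{Lemma6}: for $n+s$ even it is $\sum_u\xi_p^{um}$, equal to $p-1$ or $-1$ according as $m=0$ or $m\neq 0$; for $n+s$ odd it is $\sum_u\eta_0(u)\xi_p^{um}$, equal to $0$ or $\eta_0(m)\sqrt{p^*}$ according as $m=0$ or $m\neq 0$. No case analysis on $p\bmod 4$ is needed.
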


\begin{proof} By Lemma \ref{Walshsupport}, 
  for every  $\beta\in \Supp(\Wa f),$ we have $\Wa f(z\beta)=\epsilon \sqrt{p^*}^{n+s}\xi_p^{g(z\beta)}$ for any $z\in\F_p^{\star}$.
Then for every  $0\neq \beta\in \Supp(\Wa f),$  we have 
\be\nn
\begin{array}{ll}
A=\displaystyle\sum_{y,z\in\F_p^{\star}} \sigma_y\left(\sum_{x\in\F_q}\xi_p^{f(x)-\Tr^n(z\beta x)}\right)&=\displaystyle\sum_{y,z\in\F_p^{\star}} \sigma_y(\Wa f(z\beta))\\
&=\displaystyle\sum_{y,z\in\F_p^{\star}} \sigma_y(\epsilon\sqrt{p^*}^{n+s}\xi_p^{g(z\beta)})\\
&=\displaystyle\sum_{y,z\in\F_p^{\star}} \sigma_y(\epsilon \sqrt{p^*}^{n+s}\xi_p^{z^lg(\beta)})\\
&=\epsilon (p-1) \sqrt{p^*}^{n+s}\displaystyle\sum_{y\in\F_p^{\star}}\eta_0^{n+s}(y)\xi_p^{yg(\beta)}),
 \end{array}
\ee
where we used Proposition \ref{Proposition5} in the fourth equality and used the fact that $z^l$ is a square in $\F_p^{\star}$ for any $z\in\F_p^{\star}$ in the last equality.
When $n+s$ is even, 
$A=  \epsilon (p-1) \sqrt{p^*}^{n+s}\sum_{y\in\F_p^{\star}}\xi_p^{yg(\beta)},$
which is $ \epsilon (p-1)^2 \sqrt{p^*}^{n+s}$ if $g(\beta)=0$; otherwise,
 $-\epsilon (p-1) \sqrt{p^*}^{n+s}$ from Lemma \ref{Lemma6} ($iii$).
When $n+s$ is odd,  we clearly have $A=0$ if $g(\beta)=0$; otherwise,   conclude  that 
\be\nn
A=  \epsilon (p-1)\sqrt{p^*}^{n+s}\sigma_{g(\beta)}\left(\sum_{y\in\F_p^{\star}}\eta_0(y)\xi_p^{y}\right)= \epsilon\eta_0(g(\beta)) (p-1) \sqrt{p^*}^{n+s+1},
\ee
where we used Lemma \ref{Lemma6} ($ii$).
For every  $ \beta\in\F_q^{\star}\setminus \Supp(\Wa f)$,  we immediately get $A=0$. Hence, the proof is complete.
\end{proof}
 The following lemma has   a significant role in finding the Hamming weights of the codewords of a linear code.
\begin{lemma}\label{Lemma11}
Let $f\in WRP$ with $\Wa f (0)=\epsilon \sqrt{p^*}^{n+s}$. For $\beta\in\F_q^{\star}$, define 
$\sN_{f,\beta}=\#\{x\in\F_q : f(x)=0 \mbox{ and } \Tr^n(\beta x)=0\}.$
Then,  for every $\beta\in\F_q^{\star}\setminus \Supp(\Wa f),$   
\be\nn
\sN_{f,\beta}=
\left\{\begin{array}{ll}
    p^{n-2}+\epsilon (p-1) \sqrt{p^*}^{n+s-4},  & \mbox{ if } n+s \mbox{ is even}, \\
  p^{n-2},& \mbox{ if } n+s \mbox{ is odd},
 \end{array}\right.
\ee
 and for every  $0\neq \beta\in \Supp(\Wa f),$ 
  if $n+s$ is even, then
\be\nn
\sN_{f,\beta}=\left\{\begin{array}{ll}
p^{n-2} + \epsilon \eta_0 (-1)(p-1) \sqrt{p^*}^{n+s-2},  & \mbox{ if } g(\beta)=0, \\
p^{n-2},& \mbox{ if } g(\beta)\neq 0,
 \end{array}\right.
\ee
  if $n+s$ is odd, then
$
\sN_{f,\beta}=\left\{\begin{array}{ll}
p^{n-2},  & \mbox{ if } g(\beta)=0, \\
p^{n-2}+\epsilon   (p-1)  \sqrt{p^*}^{n+s-3},& \mbox{ if } g(\beta)\in SQ,\\
p^{n-2}-\epsilon   (p-1)  \sqrt{p^*}^{n+s-3},& \mbox{ if } g(\beta)\in NSQ.
 \end{array}\right.
$
\end{lemma}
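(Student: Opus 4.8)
The plan is to compute $\sN_{f,\beta}$ by expressing the indicator of the two conditions $f(x)=0$ and $\Tr^n(\beta x)=0$ through additive characters, exactly as in the standard orthogonality trick. Writing
\be\nn
p^2\sN_{f,\beta}=\sum_{x\in\F_q}\left(\sum_{y\in\F_p}\xi_p^{yf(x)}\right)\left(\sum_{z\in\F_p}\xi_p^{z\Tr^n(\beta x)}\right),
\ee
I would split the double sum over $(y,z)$ into four parts: the term $y=z=0$ contributes $q=p^n$; the term $y=0$, $z\in\F_p^{\star}$ contributes $\sum_{z\in\F_p^{\star}}\sum_x\xi_p^{z\Tr^n(\beta x)}$, which vanishes for $\beta\neq 0$; the term $z=0$, $y\in\F_p^{\star}$ contributes $\sum_{y\in\F_p^{\star}}\sum_x\xi_p^{yf(x)}$, which is evaluated by Lemma \ref{Lemma9}; and the remaining term $y,z\in\F_p^{\star}$ is exactly the quantity $A$ of Lemma \ref{Lemma10}. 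So $p^2\sN_{f,\beta}=p^n+0+(\text{Lemma \ref{Lemma9}})+A$, and everything reduces to plugging in the two earlier evaluations and dividing by $p^2$.

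The execution then splits along the parity of $n+s$ and along whether $\beta\in\Supp(\Wa f)$. For $\beta\in\F_q^{\star}\setminus\Supp(\Wa f)$: Lemma \ref{Lemma10} gives $A=0$, so in the even case $p^2\sN_{f,\beta}=p^n+\epsilon(p-1)\sqrt{p^*}^{n+s}$, yielding $\sN_{f,\beta}=p^{n-2}+\epsilon(p-1)\sqrt{p^*}^{n+s-4}$; in the odd case Lemma \ref{Lemma9} also gives $0$, so $\sN_{f,\beta}=p^{n-2}$. For $0\neq\beta\in\Supp(\Wa f)$ with $n+s$ even: combine $\epsilon(p-1)\sqrt{p^*}^{n+s}$ from Lemma \ref{Lemma9} with the two cases of $A$ from Lemma \ref{Lemma10}; when $g(\beta)=0$ one gets $p^n+\epsilon(p-1)\sqrt{p^*}^{n+s}+\epsilon(p-1)^2\sqrt{p^*}^{n+s}=p^n+\epsilon p(p-1)\sqrt{p^*}^{n+s}$, hence $\sN_{f,\beta}=p^{n-2}+\epsilon(p-1)\sqrt{p^*}^{n+s-2}$ — and here I would use $\sqrt{p^*}^{n+s}=\eta_0^{(n+s)/2}(-1)p^{(n+s)/2}$ together with $\sqrt{p^*}^2=p^*=\eta_0(-1)p$ to rewrite $\epsilon(p-1)\sqrt{p^*}^{n+s-2}=\epsilon\eta_0(-1)(p-1)\sqrt{p^*}^{n+s}/p\cdot\text{(sign bookkeeping)}$ so that it matches the stated form $p^{n-2}+\epsilon\eta_0(-1)(p-1)\sqrt{p^*}^{n+s-2}$; when $g(\beta)\neq 0$ the two contributions $\epsilon(p-1)\sqrt{p^*}^{n+s}$ and $-\epsilon(p-1)\sqrt{p^*}^{n+s}$ cancel, leaving $\sN_{f,\beta}=p^{n-2}$.

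For $0\neq\beta\in\Supp(\Wa f)$ with $n+s$ odd: Lemma \ref{Lemma9} contributes $0$, so $p^2\sN_{f,\beta}=p^n+A$. If $g(\beta)=0$ then $A=0$ and $\sN_{f,\beta}=p^{n-2}$. If $g(\beta)\neq 0$ then $A=\epsilon\eta_0(g(\beta))(p-1)\sqrt{p^*}^{n+s+1}$, so $\sN_{f,\beta}=p^{n-2}+\epsilon\eta_0(g(\beta))(p-1)\sqrt{p^*}^{n+s-3}\cdot(\sqrt{p^*}^4/p^2)$; since $\sqrt{p^*}^4=p^2$, this is $p^{n-2}+\epsilon\eta_0(g(\beta))(p-1)\sqrt{p^*}^{n+s-3}$, and then $\eta_0(g(\beta))=1$ for $g(\beta)\in SQ$ and $\eta_0(g(\beta))=-1$ for $g(\beta)\in NSQ$ gives the two stated cases. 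I expect no genuine obstacle here — the proof is a bookkeeping assembly of Lemmas \ref{Lemma9} and \ref{Lemma10} — but the one place demanding care is the passage between powers of $\sqrt{p^*}$ and powers of $p$ (keeping the $\eta_0(-1)$ factors straight, using $p^n=\eta_0^n(-1)\sqrt{p^*}^{2n}$ and the parity of $(n+s)/2$), which is exactly where the $\eta_0(-1)$ in the even-$g(\beta)=0$ case enters; I would do that reduction explicitly once and reuse it.
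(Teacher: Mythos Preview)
Your approach is correct and is essentially identical to the paper's: the paper writes
\[
\sN_{f,\beta}=p^{-2}\sum_{x\in\F_q}\Bigl(\sum_{y\in\F_p}\xi_p^{yf(x)}\Bigr)\Bigl(\sum_{z\in\F_p}\xi_p^{-z\Tr^n(\beta x)}\Bigr)
=p^{n-2}+p^{-2}\sum_{y\in\F_p^{\star}}\sum_{x\in\F_q}\xi_p^{yf(x)}+p^{-2}A,
\]
and then simply invokes Lemmas~\ref{Lemma9} and~\ref{Lemma10}, leaving the case-by-case arithmetic implicit. Your explicit tracking of the $\eta_0(-1)$ factor via $p=\eta_0(-1)p^*=\eta_0(-1)\sqrt{p^*}^{2}$ (so that $\sqrt{p^*}^{n+s}/p=\eta_0(-1)\sqrt{p^*}^{n+s-2}$ and $\sqrt{p^*}^{n+s}/p^2=\sqrt{p^*}^{n+s-4}$) is exactly the bookkeeping the paper suppresses.
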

\begin{proof} By the definition of $\sN_{f,\beta}$, one can  observe that
\be\nn
\begin{array}{ll}
\sN_{f,\beta}&=p^{-2}\displaystyle\sum_{x\in\F_q}\left(\sum_{y\in\F_p} \xi_p^{yf(x)}\right)\left(\sum_{z\in\F_p} \xi_p^{-z\Tr^n(\beta x)}\right)\\
%&=p^{-2}\displaystyle\sum_{x\in\F_q} \sum_{y,z\in\F_p} \xi_p^{yf(x)-z\Tr^n(\beta x)}\\ 
%&=p^{-2}\displaystyle\sum_{x\in\F_q}  \xi_p^{0f(x)-0\Tr^n(\beta x)}+ p^{-2}\displaystyle\sum_{x\in\F_q} \sum_{z\in\F_p^{\star}} \xi_p^{0f(x)-z\Tr^n(\beta x)}\\ &+p^{-2}\displaystyle\sum_{x\in\F_q} \sum_{y\in\F_p^{\star}} \xi_p^{yf(x)-0\Tr^n(\beta x)} +p^{-2}\displaystyle\sum_{x\in\F_q} \sum_{y,z\in\F_p^{\star}} \xi_p^{yf(x)-z\Tr^n(\beta x)}\\
&=p^{n-2}+p^{-2}\displaystyle \sum_{y\in\F_p^{\star}}\sum_{x\in\F_q} \xi_p^{yf(x)}
+p^{-2}\displaystyle \sum_{y,z\in\F_p^{\star}}\sum_{x\in\F_q}  \xi_p^{yf(x)-z\Tr^n(\beta x)}.\\
\end{array}
\ee
Hence, the proof is concluded from Lemmas \ref{Lemma9} and \ref{Lemma10}.
\end{proof}
%As in the proof of Lemma \ref{Lemma11}, the following comes from Remark \ref{Remark9} and Lemma  \ref{Lemma10}.
The following can be immediately observed as in Lemma \ref{Lemma9}.
\begin{lemma}\label{Lemma12}
Let $f:\F_q\to\F_p$ be an unbalanced function with $\Wa f (0)=\epsilon \sqrt{p^*}^{n+s}$, where $\epsilon=\pm 1$ is the sign of $\Wa{f}$. Then, we have
\be\nn
\displaystyle \sum_{y\in\F_p^{\star}} \sum_{x\in\F_q}\xi_p^{y^2f(x)}=
 \epsilon(p-1) \sqrt{p^*}^{n+s}.
\ee
\end{lemma}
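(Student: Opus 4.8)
The plan is to mimic the proof of Lemma \ref{Lemma9}, since the statement of Lemma \ref{Lemma12} is essentially identical except that the exponent $yf(x)$ is replaced by $y^2f(x)$. First I would observe that for any fixed $y\in\F_p^\star$, the map $x\mapsto \xi_p^{y^2 f(x)}$ is obtained from $x\mapsto\xi_p^{f(x)}$ by applying the Galois automorphism $\sigma_{y^2}\in Gal(\Q(\xi_p)/\Q)$, because $\sigma_{y^2}(\xi_p^{f(x)})=\xi_p^{y^2 f(x)}$ and $\sigma_{y^2}$ fixes $f(x)\in\F_p$ viewed as an exponent only through its action on $\xi_p$. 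Hence
\be\nn
\sum_{y\in\F_p^\star}\sum_{x\in\F_q}\xi_p^{y^2 f(x)}
=\sum_{y\in\F_p^\star}\sigma_{y^2}\!\left(\sum_{x\in\F_q}\xi_p^{f(x)}\right)
=\sum_{y\in\F_p^\star}\sigma_{y^2}\!\left(\Wa f(0)\right)
=\sum_{y\in\F_p^\star}\sigma_{y^2}\!\left(\epsilon\sqrt{p^*}^{\,n+s}\right),
\ee
using the hypothesis $\Wa f(0)=\epsilon\sqrt{p^*}^{\,n+s}$.

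Next I would evaluate $\sigma_{y^2}(\sqrt{p^*}^{\,n+s})$. From the properties recorded in the preliminaries, $\sigma_a(\sqrt{p^*}^{\,m})=\eta_0^m(a)\sqrt{p^*}^{\,m}$ for $a\in\F_p^\star$; taking $a=y^2$ and $m=n+s$ gives $\sigma_{y^2}(\sqrt{p^*}^{\,n+s})=\eta_0^{n+s}(y^2)\sqrt{p^*}^{\,n+s}=\sqrt{p^*}^{\,n+s}$, since $\eta_0(y^2)=1$ for every $y\in\F_p^\star$. Therefore every term in the sum equals $\epsilon\sqrt{p^*}^{\,n+s}$ and there are $p-1$ of them, so the total is $\epsilon(p-1)\sqrt{p^*}^{\,n+s}$, which is exactly the claimed value.

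The only subtlety — and the reason the lemma statement has a single line rather than an even/odd case split like Lemma \ref{Lemma9} — is precisely that $y^2$ is always a square, so $\eta_0^{n+s}(y^2)=1$ regardless of the parity of $n+s$; this is in contrast with Lemma \ref{Lemma9}, where $\sigma_y$ contributes $\eta_0^{n+s}(y)$ and the parity of $n+s$ governs whether $\sum_{y\in\F_p^\star}\eta_0^{n+s}(y)$ is $p-1$ or $0$ via Lemma \ref{Lemma6}$(i)$. I do not anticipate any genuine obstacle here: the argument is a one-line application of the Galois-action computation combined with $\Wa f(0)=\epsilon\sqrt{p^*}^{\,n+s}$. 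The only point requiring a word of care is the justification that $\sum_{x\in\F_q}\xi_p^{y^2f(x)}=\sigma_{y^2}\big(\sum_{x\in\F_q}\xi_p^{f(x)}\big)$, i.e. that the automorphism commutes with the finite sum and that $f$ takes values in $\F_p$ so that $\sigma_{y^2}$ acts correctly on each summand; this is the same manipulation already used (for $\sigma_y$) in the proofs of Lemmas \ref{Lemma9} and \ref{Lemma10}, so I would simply invoke it.
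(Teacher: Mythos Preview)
Your proposal is correct and follows exactly the approach the paper intends: the paper does not spell out a proof but simply says the result ``can be immediately observed as in Lemma \ref{Lemma9}'', and your argument is precisely that observation, with the key point that $\eta_0^{n+s}(y^2)=1$ for all $y\in\F_p^\star$ replacing the parity-dependent factor $\eta_0^{n+s}(y)$ from Lemma \ref{Lemma9}.
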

%\begin{remark}\label{Remark12}
%Let $f:\F_q\to\F_p$ be a balanced function. Then  
%$  \sum_{y\in\F_p^{\star}} \sum_{x\in\F_q}\xi_p^{y^2f(x)}=0.$
%\end{remark}

\begin{lemma}\label{Lemma13}
Let $f\in WRP$. For $\beta\in\F_q^{\star}$, define $
A=\sum_{y,z\in\F_p^{\star}} \sum_{x\in\F_q}\xi_p^{y^2f(x)-z\Tr^n(\beta x)}.
$ Then,  for every $\beta\in\F_q^{\star}\setminus \Supp(\Wa f),$   we have $A=0$, and for every  $ 0\neq \beta\in \Supp(\Wa f),$ we have
\be\nn
A=\left\{\begin{array}{ll}
  \epsilon  (p-1)^2 \sqrt{p^*}^{n+s},  & \mbox{ if } g(\beta)=0, \\
 \epsilon  (p-1) \sqrt{p^*}^{n+s}(\sqrt{p^*}-1),& \mbox{ if } g(\beta)\in SQ,\\
- \epsilon  (p-1) \sqrt{p^*}^{n+s}(\sqrt{p^*}+1),& \mbox{ if }  g(\beta)\in NSQ.
 \end{array}\right.
\ee

\end{lemma}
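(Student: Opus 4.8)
The plan is to mimic the proof of Lemma \ref{Lemma10}, the only new ingredient being that the exponent of $f$ is now $y^2$ rather than $y$, so that the outer character $\sigma_y$ does not act directly and instead we must understand the inner sum $\sum_{x\in\F_q}\xi_p^{y^2f(x)-z\Tr^n(\beta x)}$ as a value of the Walsh transform at a scaled argument. First I would fix $\beta\in\Supp(\Wa f)$ with $\beta\neq 0$ and rewrite, for $y,z\in\F_p^\star$,
\be\nn
\sum_{x\in\F_q}\xi_p^{y^2f(x)-z\Tr^n(\beta x)}=\sigma_{y^2}\!\left(\sum_{x\in\F_q}\xi_p^{f(x)-\Tr^n(z y^{-2}\beta\, x)}\right)=\sigma_{y^2}\!\left(\Wa f(z y^{-2}\beta)\right),
\ee
where $\sigma_{y^2}$ is legitimate because $y^2\in\F_p^\star$. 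Since $z y^{-2}\in\F_p^\star$ and $\beta\in\Supp(\Wa f)$, Lemma \ref{Walshsupport} gives $z y^{-2}\beta\in\Supp(\Wa f)$, so $\Wa f(z y^{-2}\beta)=\epsilon\sqrt{p^*}^{n+s}\xi_p^{g(z y^{-2}\beta)}$, and Proposition \ref{Proposition5} yields $g(z y^{-2}\beta)=(z y^{-2})^l g(\beta)=z^l y^{-2l}g(\beta)$. Because $l$ is even and $y^{-2l}$ is a square, and $z^l$ is a square (as in Lemma \ref{Lemma10}), applying $\sigma_{y^2}$ — which fixes $\sqrt{p^*}^{n+s}$ up to the sign $\eta_0^{n+s}(y^2)=1$ — we arrive at $\sigma_{y^2}(\Wa f(z y^{-2}\beta))=\epsilon\sqrt{p^*}^{n+s}\xi_p^{z^l y^{-2l}y^2 g(\beta)}$; one checks $z^l y^{2-2l}$ runs over all of $SQ$ as $z$ does (for fixed $y$), and more importantly the set of exponents obtained as $(y,z)$ ranges over $(\F_p^\star)^2$ is well controlled.

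The cleaner route, which I expect to use, is to substitute $w=z y^{-2}$ so that as $(y,z)$ ranges over $(\F_p^\star)^2$ so does $(y,w)$, and
\be\nn
A=\sum_{y,w\in\F_p^\star}\sigma_{y^2}\!\left(\epsilon\sqrt{p^*}^{n+s}\xi_p^{g(w\beta)}\right)=\epsilon\sqrt{p^*}^{n+s}\sum_{y\in\F_p^\star}\sum_{w\in\F_p^\star}\xi_p^{y^2 g(w\beta)}=\epsilon\sqrt{p^*}^{n+s}\sum_{w\in\F_p^\star}\sum_{y\in\F_p^\star}\xi_p^{y^2 w^l g(\beta)},
\ee
using $\sigma_{y^2}(\sqrt{p^*}^{n+s})=\eta_0^{n+s}(y^2)\sqrt{p^*}^{n+s}=\sqrt{p^*}^{n+s}$ and $\sigma_{y^2}(\xi_p^{g(w\beta)})=\xi_p^{y^2 g(w\beta)}$, together with $g(w\beta)=w^l g(\beta)$ from Proposition \ref{Proposition5}. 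If $g(\beta)=0$ the double sum is $(p-1)^2$, giving $A=\epsilon(p-1)^2\sqrt{p^*}^{n+s}$. If $g(\beta)\neq 0$, then since $w^l$ is a square, $w^l g(\beta)$ lies in the same quadratic class as $g(\beta)$ for every $w$, so $\sum_{w\in\F_p^\star}\sum_{y\in\F_p^\star}\xi_p^{y^2 w^l g(\beta)}=(p-1)\sum_{y\in\F_p^\star}\xi_p^{y^2 g(\beta)}$. Now $\sum_{y\in\F_p^\star}\xi_p^{y^2 c}=\sum_{y\in\F_p}\xi_p^{y^2 c}-1$, which by the $n=1$ case of Lemma \ref{Lemma3} equals $\sqrt{p^*}-1$ if $c\in SQ$ and $-\sqrt{p^*}-1$ if $c\in NSQ$. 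Substituting $c=g(\beta)$ yields $A=\epsilon(p-1)\sqrt{p^*}^{n+s}(\sqrt{p^*}-1)$ when $g(\beta)\in SQ$ and $A=-\epsilon(p-1)\sqrt{p^*}^{n+s}(\sqrt{p^*}+1)$ when $g(\beta)\in NSQ$, exactly as claimed.

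Finally, for $\beta\in\F_q^\star\setminus\Supp(\Wa f)$, Lemma \ref{Walshsupport} ensures $z y^{-2}\beta\notin\Supp(\Wa f)$ for all $y,z\in\F_p^\star$, hence every inner sum $\Wa f(z y^{-2}\beta)=0$ and so $A=0$; this is immediate. The main (and only mildly delicate) point is the bookkeeping in the $g(\beta)\neq 0$ case — keeping track of the fact that $w^l$ being a square means the quadratic class of $w^l g(\beta)$ is pinned down and does not mix, so that the $w$-sum just contributes a factor $p-1$ and one is reduced to the clean Gauss-sum evaluation of Lemma \ref{Lemma3}; I do not anticipate any genuine obstacle beyond that.
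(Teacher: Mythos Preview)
Your proof is correct and follows essentially the same route as the paper's own argument: both rewrite the inner sum as $\sigma_{y^2}(\Wa f(\cdot\,\beta))$ (your substitution $w=zy^{-2}$ is the explicit form of the paper's implicit reindexing $z\mapsto y^2 z$), invoke Proposition~\ref{Proposition5} to replace $g(w\beta)$ by $w^l g(\beta)$, observe that $w^l$ being a square makes the $w$-sum collapse to a factor $p-1$, and finish with the $n=1$ case of Lemma~\ref{Lemma3}. The only cosmetic difference is that the paper sums $\sum_{y,z}\xi_p^{y^2 z^l g(\beta)}$ by fixing $z$ and letting $y$ vary, whereas you fix $w$ and let $y$ vary; the conclusion and all intermediate steps coincide.
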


\begin{proof} 
As in the proof of Lemma \ref{Lemma10}, we get 
$A= \sum_{y,z\in\F_p^{\star}} \sigma_{y^2}( \Wa f(z\beta))$ for every $\beta\in\F_q^{\star}$.
For every  $ \beta\in\F_q^{\star}\setminus \Supp(\Wa f),$  we clearly have $A=0$.
For every  $ 0\neq \beta\in \Supp(\Wa f),$ 
we get
\be\nn
\begin{array}{ll}
A=\displaystyle\sum_{y,z\in\F_p^{\star}} \sigma_{y^2}( \epsilon \sqrt{p^*}^{n+s}\xi_p^{g(z\beta)})&=\displaystyle \sum_{y,z\in\F_p^{\star}} \sigma_{y^2}(\epsilon \sqrt{p^*}^{n+s} \xi_p^{z^lg(\beta)})\\
&=\displaystyle \epsilon \sqrt{p^*}^{n+s}\sum_{y,z\in\F_p^{\star}}\sigma_{y^2}(\xi_p^{z^lg(\beta)})\\
&=\displaystyle \epsilon \sqrt{p^*}^{n+s}(p-1)\sum_{y\in\F_p^{\star}}\xi_p^{y^2g(\beta)}\\
&=\displaystyle \epsilon \sqrt{p^*}^{n+s}(p-1)\sum_{y\in\F_p}\xi_p^{y^2g(\beta)}-\epsilon \sqrt{p^*}^{n+s}(p-1),
\end{array}
\ee
where  in the second equality we used Proposition  \ref{Proposition5} and in the fourth equality we used the fact that $y^2z^l$ runs through all squares in $\F_p^{\star}$ when $y$ ranges  over $\F_p^{\star}$ for any fixed $z\in \F_p^{\star}$. Hence the assertion follows  from Lemma \ref{Lemma3}.
\end{proof}

 The following lemma   has a significant role in finding the Hamming weights of the codewords of a linear code. 

\begin{lemma}\label{Lemma14}
Let $f\in WRP$. For $\beta\in\F_q^{\star}$, define
\be\nn
\begin{array}{ll}
\sN_{sq,\beta}&=\#\{x\in\F_q : f(x)\in SQ \mbox{ and } \Tr^n(\beta x)=0\},\\
\sN_{nsq,\beta}&=\#\{x\in\F_q : f(x)\in NSQ \mbox{ and } \Tr^n(\beta x)=0\}.
\end{array}
\ee
Then,  for every $\beta\in\F_q^{\star}\setminus \Supp(\Wa f),$   if $n+s$ is even,   
$\sN_{sq,\beta}=\sN_{nsq,\beta} = \frac{p-1}{2}\left(p^{n-2} -\epsilon \sqrt{p^*}^{n+s-4}\right)$,
 if $n+s$ is odd,
\be\nn
\begin{array}{ll}
\sN_{sq,\beta}&=\frac{p-1}{2}\left(p^{n-2} + \epsilon \eta_0 (-1) \sqrt{p^*}^{n+s-3}\right),\\
\sN_{nsq,\beta}&= \frac{p-1}{2}\left(p^{n-2} - \epsilon \eta_0 (-1)  \sqrt{p^*}^{n+s-3}\right).
 \end{array}
\ee
For every  $0\neq \beta\in \Supp(\Wa f),$ 
if $n+s$ is even, then 
\be\nn\begin{array}{ll}
\sN_{sq,\beta}&=\left\{\begin{array}{ll}
\frac{p-1}{2}\left(p^{n-2} -\epsilon \eta_0 (-1) \sqrt{p^*}^{n+s-2}\right),  & \mbox{ if } g(\beta)=0 \mbox{ or } g(\beta) \in NSQ, \\
\frac{p-1}{2}\left(p^{n-2} + \epsilon\eta_0 (-1) \sqrt{p^*}^{n+s-2}\right),& \mbox{ if } g(\beta) \in SQ,
 \end{array}\right.\\
\sN_{nsq,\beta}&=\left\{\begin{array}{ll}
\frac{p-1}{2}\left(p^{n-2} -\epsilon \eta_0 (-1) \sqrt{p^*}^{n+s-2}\right),  & \mbox{ if } g(\beta)=0 \mbox{ or } g(\beta) \in SQ, \\
\frac{p-1}{2}\left(p^{n-2} + \epsilon \eta_0 (-1) \sqrt{p^*}^{n+s-2}\right),& \mbox{ if } g(\beta) \in NSQ,
 \end{array}\right.
\end{array}
\ee
if $n+s$ is odd, then
\be\nn\begin{array}{ll}
\sN_{sq,\beta}&=\left\{\begin{array}{ll}
\frac{p-1}{2}\left(p^{n-2}+\epsilon \sqrt{p^*}^{n+s-1}\right),  & \mbox{ if } g(\beta)=0, \\
\frac{p-1}{2}\left(p^{n-2} - \epsilon  \sqrt{p^*}^{n+s-3}\right),& \mbox{ if } g(\beta) \in SQ,\\
\frac{p-1}{2}\left(p^{n-2} +\epsilon  \sqrt{p^*}^{n+s-3}\right),  & \mbox{ if }  g(\beta) \in NSQ, \\
 \end{array}\right.\\

\sN_{nsq,\beta}&=\left\{\begin{array}{ll}
\frac{p-1}{2}\left(p^{n-2} -\epsilon \sqrt{p^*}^{n+s-1}\right),  & \mbox{ if } g(\beta)=0, \\
\frac{p-1}{2}\left(p^{n-2} -\epsilon  \sqrt{p^*}^{n+s-3}\right),  & \mbox{ if } g(\beta) \in SQ, \\
\frac{p-1}{2}\left(p^{n-2} + \epsilon  \sqrt{p^*}^{n+s-3}\right), %\eta_0^{\frac{n+s-3}{2}}(-1) p^{\frac{n+s-3}{2}}
& \mbox{ if } g(\beta) \in NSQ.
 \end{array}\right.
\end{array}
\ee

\end{lemma}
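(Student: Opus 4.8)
The plan is to express the counting functions $\sN_{sq,\beta}$ and $\sN_{nsq,\beta}$ via additive characters and reduce everything to the exponential sums already computed. First I would introduce the quadratic character indicator: for $a\in\F_q$, one has $\#\{y\in\F_p^{\star}: y^2 = a\} = 1+\eta_0(a)$ when $a\in\F_p^{\star}$ is a square, $1-\eta_0(a)$... more cleanly, $\sum_{y\in\F_p^{\star}}\xi_p^{cy^2}$-type sums let us detect $SQ$ versus $NSQ$ membership of $f(x)$. Concretely, using the standard identity that for $t\in\F_p$,
\begin{equation}\nn
\frac{1}{p}\sum_{z\in\F_p}\xi_p^{z t} = \begin{cases} 1 & t=0\\ 0 & t\neq 0,\end{cases}
\end{equation}
together with the fact that $\tfrac12(1+\eta_0(j))$ is the indicator of $SQ$ and $\tfrac12(1-\eta_0(j))$ the indicator of $NSQ$ on $\F_p^{\star}$, I would write
\begin{equation}\nn
\sN_{sq,\beta} = \frac{1}{p}\sum_{x\in\F_q}\ \sum_{z\in\F_p}\xi_p^{z\Tr^n(\beta x)}\cdot \mathbf{1}_{SQ}(f(x)),
\end{equation}
and similarly for $\sN_{nsq,\beta}$, then expand $\mathbf{1}_{SQ}(f(x)) = \tfrac{p-1}{2p}\,\mathbf{1}_{f(x)=0}$-correction aside. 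The cleanest route is actually to observe that $\sN_{sq,\beta}+\sN_{nsq,\beta} = (\#\{x:\Tr^n(\beta x)=0\}) - \sN_{f,\beta} = p^{n-1}-\sN_{f,\beta}$, which is already known from Lemma \ref{Lemma11}, so it suffices to compute the difference $\sN_{sq,\beta}-\sN_{nsq,\beta}$.

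For the difference, I would use $\mathbf{1}_{SQ}(f(x))-\mathbf{1}_{NSQ}(f(x)) = \eta_0(f(x))$ for $f(x)\neq 0$ and $0$ for $f(x)=0$, and then express $\eta_0(f(x))$ through a Gauss-sum identity: $\eta_0(a) = \frac{1}{\sqrt{p^*}}\sum_{y\in\F_p^{\star}}\eta_0(y)\xi_p^{ya}$ for $a\in\F_p$ (this is Lemma \ref{Lemma6}$(ii)$ twisted, valid including $a=0$ since both sides vanish there). Hence
\begin{equation}\nn
\sN_{sq,\beta}-\sN_{nsq,\beta} = \frac{1}{p\sqrt{p^*}}\sum_{x\in\F_q}\sum_{z\in\F_p}\xi_p^{z\Tr^n(\beta x)}\sum_{y\in\F_p^{\star}}\eta_0(y)\xi_p^{yf(x)}.
\end{equation}
Splitting the $z$-sum into $z=0$ and $z\in\F_p^{\star}$: the $z=0$ part gives $\frac{1}{p\sqrt{p^*}}\sum_{y\in\F_p^{\star}}\eta_0(y)\sum_{x}\xi_p^{yf(x)}$, which is a Galois-twist of $\Wa f(0)$ and is evaluated exactly as in Lemma \ref{Lemma9} (now with the extra $\eta_0(y)$ weight, so the relevant character sum is $\sum_y \eta_0(y)\eta_0^{n+s}(y) = \sum_y \eta_0^{n+s+1}(y)$, nonzero iff $n+s$ is odd). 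The $z\in\F_p^{\star}$ part is $\frac{1}{p\sqrt{p^*}}\sum_{y,z\in\F_p^{\star}}\eta_0(y)\sum_x\xi_p^{yf(x)-z\Tr^n(\beta x)}$; writing $y f(x) - z\Tr^n(\beta x) = \sigma_y\!\big(f(x)-\Tr^n(y^{-1}z\beta x)\big)$ and using $f\in WRP$ with Lemma \ref{Walshsupport} and Proposition \ref{Proposition5} exactly as in the proof of Lemma \ref{Lemma10}, this becomes a sum of $\sigma_y(\Wa f(z'\beta))$ over $y,z'$ with an $\eta_0(y)$ weight, which again is $0$ off the Walsh support and on the support reduces to a sum of the shape $\epsilon\sqrt{p^*}^{n+s}\sum_{y}\eta_0^{n+s+1}(y)\xi_p^{\pm y g(\beta)}$ or similar, handled by Lemma \ref{Lemma6}.

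Finally I would combine: $\sN_{sq,\beta} = \tfrac12\big[(p^{n-1}-\sN_{f,\beta}) + (\sN_{sq,\beta}-\sN_{nsq,\beta})\big]$ and likewise for $\sN_{nsq,\beta}$, substituting the values of $\sN_{f,\beta}$ from Lemma \ref{Lemma11} and the just-computed difference, then simplifying case by case on the parity of $n+s$ and on whether $\beta\in\Supp(\Wa f)$ and whether $g(\beta)$ is $0$, a square, or a non-square. I expect the main obstacle to be purely bookkeeping: keeping the signs $\epsilon$, the factors $\eta_0(-1)$ (arising from $\eta_0(y^{-1})=\eta_0(y)$ and from rewriting $\sqrt{p^*}$ powers via $p^n = \eta_0^n(-1)\sqrt{p^*}^{2n}$), and the parity conditions $\eta_0^{n+s}$ versus $\eta_0^{n+s+1}$ straight across the many sub-cases, and checking that the difference formula is consistent with the already-known sum $\sN_{sq,\beta}+\sN_{nsq,\beta}$ in each case. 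There is no conceptual difficulty beyond what is already present in Lemmas \ref{Lemma9}--\ref{Lemma13}; the statement is essentially the "$\eta_0$-weighted" refinement of Lemma \ref{Lemma11}.
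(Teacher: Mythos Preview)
Your proposal is correct and will produce the stated values after the case-by-case bookkeeping. It is, however, organised a little differently from the paper's proof, which simply invokes Lemmas~\ref{Lemma11}, \ref{Lemma12} and \ref{Lemma13} (the $y^2$-sums) and then mimics \cite[Lemma~14]{tang2016linear}. The paper's route computes
\[
\frac{1}{p}\sum_{x\in\F_q}\Bigl(\sum_{y\in\F_p^\star}\xi_p^{y^2 f(x)}\Bigr)\Bigl(\sum_{z\in\F_p}\xi_p^{-z\Tr^n(\beta x)}\Bigr)
= (p-1)\sN_{f,\beta}+(\sqrt{p^*}-1)\sN_{sq,\beta}-(\sqrt{p^*}+1)\sN_{nsq,\beta},
\]
evaluates the left side via Lemmas~\ref{Lemma12} and \ref{Lemma13}, and then solves the resulting linear system together with $\sN_{f,\beta}+\sN_{sq,\beta}+\sN_{nsq,\beta}=p^{n-1}$ and Lemma~\ref{Lemma11}. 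Your approach instead separates the sum $\sN_{sq,\beta}+\sN_{nsq,\beta}=p^{n-1}-\sN_{f,\beta}$ from the difference $\sN_{sq,\beta}-\sN_{nsq,\beta}$, expressing the latter through the $\eta_0(y)$-weighted analogue of Lemmas~\ref{Lemma9}--\ref{Lemma10}. The two are the same computation in disguise, since
\[
\sum_{y\in\F_p^\star}\xi_p^{y^2 a}=\sum_{y\in\F_p^\star}\xi_p^{y a}+\sum_{y\in\F_p^\star}\eta_0(y)\xi_p^{y a},
\]
so Lemmas~\ref{Lemma12}/\ref{Lemma13} are exactly Lemmas~\ref{Lemma9}/\ref{Lemma10} plus your $\eta_0$-weighted sums. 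Your decomposition is arguably tidier (the difference is isolated from the start), while the paper's has the advantage of quoting already-stated lemmas without re-deriving weighted variants. One small notational slip: ``$yf(x)-z\Tr^n(\beta x)=\sigma_y(\cdots)$'' should be read at the level of $\xi_p^{(\cdot)}$, not in $\F_p$; the intended identity $\xi_p^{yf(x)-z\Tr^n(\beta x)}=\sigma_y\bigl(\xi_p^{f(x)-\Tr^n(y^{-1}z\beta x)}\bigr)$ is what you use.
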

\begin{proof}
In view of Lemmas \ref{Lemma11}, \ref{Lemma12} and  \ref{Lemma13},  the proof can be constructed with the same argument used in the proof of \cite[Lemma 14]{tang2016linear}.
\end{proof}

\section{Linear codes with few weights from weakly regular plateaued functions}\label{Constructions}
In this  section,  we generalize the recent construction methods of linear codes proposed by  Ding et al. \cite{ding2015linear,ding2015class} and  Tang et al.  \cite{tang2016linear} to weakly regular plateaued functions,  based on the second generic construction. We also record a subcode of any constructed  code.

\subsection{Two or three weight linear codes with their weight distributions}
In this subsection, to construct new linear codes, we make use of weakly regular plateaued functions in the construction method of linear codes proposed   by   Ding et al. \cite{ding2015class}. % and generalized by Tang et al.  \cite{tang2016linear} to weakly regular bent functions. %Ding \cite{ding2015linear} and
Let $f$ be a $p$-ary function from $\F_q$ to $\F_p$. Define a set 
\be\label{DefiningSet}
D_f=\{ x\in\F_q^{\star}: f(x)=0\}.
\ee
Assume $m=\# D_f$ and $D_f=\{d_1, d_2, \ldots, d_m\}$.
 The second generic construction of a linear code from   $f$ is obtained from  $D_f$   and a linear code involving $D_f$ is defined by
\be\label{LinearCodes}
\sC_{D_f}=\{c_\beta= (\Tr{^n}(\beta d_1), \Tr{^n}(\beta d_2), \ldots, \Tr{^n}(\beta d_m)) :  \beta \in\mathbb{F}_q\}.
\ee
%with the parameters $$\left[m,n,d\right].$$
 The set $D_f$ is usually called the {\em defining set} of the code $\sC_{D_f}$. The   code $\sC_{D_f}$ has length $m$ and  dimension at most $n$. This construction is generic in the sense that many classes of known codes could be produced
by selecting the defining set $D_f\subseteq \mathbb{F}_q$. For a general function $f$, determining the weight distribution of $\sC_{D_f}$ is little hard,  but  easy for some special functions. For example, the weight distribution of $\sC_{D_f}$ was determined by Ding et al. \cite{ding2015class}  for a quadratic function $f(x)=\Tr^n(x^2)$,  by Zhou et al. \cite{zhou2016linear} for  quadratic bent functions and by Tang et al. \cite{tang2016linear} for some weakly regular bent functions. We now solve this problem for some weakly regular plateaued functions. %we are collecting our results 
%The code quality in terms of parameters is  closely related to the choice of  the set $D_f$. We first find $m=\# D_f$ and next compute $\wt(c_\beta)$ for all $\beta \in\F_q$.

The Hamming weights of the codewords of the   code $\sC_{D_f}$  as well as  its length are  derived from  Lemmas \ref{Lemma7} and \ref{Lemma11}, and its weight distribution is determined by Lemmas \ref{SupportLemma} and \ref{Lemma8}.

\begin{theorem}\label{Theorem1}
Let  $n+s$ be an even integer and   $f\in WRP$. Then $\sC_{D_f}$ is the three-weight linear code with parameters 
$\left[ p^{n-1}-1+\epsilon \eta_0(-1)(p-1) \sqrt{p^*}^{n+s-2},n\right]_p$, where  $\epsilon=\pm 1$ is the sign of $\Wa{f}$. The Hamming weights of the codewords and the weight distribution of $\sC_{D_f}$  are as in  Table \ref{table1}.
\begin{table}[!htp]
\begin{center}	
\begin{tabular}{|c|c|c|}
\hline
Hamming weight $w$   & Multiplicity $A_w$ \\
\hline
\hline
\footnotesize{$0$} & \footnotesize{$1$}\\
\hline
\footnotesize{$(p-1)\left(p^{n-2}+\epsilon (p-1)\sqrt{p^*}^{n+s-4}\right)$} & \footnotesize{$p^n-p^{n-s}$}\\
\hline
\footnotesize{$ (p-1)p^{n-2}$} & \footnotesize{$p^{n-s-1} + \epsilon \eta_0^{n+1}(-1) (p-1) \sqrt{p^*}^{n-s-2} -1$}\\
\hline
\footnotesize{$(p-1)\left(p^{n-2} + \epsilon \eta_0(-1)\sqrt{p^*}^{n+s-2}\right)$} & \footnotesize{$(p-1)\left(p^{n-s-1} -\epsilon \eta_0^{n+1}(-1) \sqrt{p^*}^{n-s-2}\right)$}\\
\hline
\end{tabular}\end{center}	
\caption{
\label{table1}  The weight distribution of  $\sC_{D_f}$ when $n+s$ is even}
\end{table}
\end{theorem}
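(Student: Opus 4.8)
The plan is to compute, for each $\beta \in \F_q$, the Hamming weight of the codeword $c_\beta$ in $\sC_{D_f}$, then count how many $\beta$ give each weight, taking care of the collapse $\beta \mapsto c_\beta$ which is injective here (so dimension $n$, to be verified). First I would fix notation: the length is $m = \#D_f = \sN_f(0) - 1$, which by Lemma \ref{Lemma7} (even case, with $\Wa f(0) = \epsilon\sqrt{p^*}^{n+s}$ since $f \in WRP$ is unbalanced and $n+s$ even) equals $p^{n-1} - 1 + \epsilon\eta_0(-1)(p-1)\sqrt{p^*}^{n+s-2}$, matching the claimed parameter. Note $f(0)=0$ by property $P1$, so $0 \notin D_f$ is automatic and the length is exactly $\sN_f(0)-1$.

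Next I would express the weight. For $\beta \neq 0$, $wt(c_\beta) = m - \#\{x \in D_f : \Tr^n(\beta x) = 0\} = m - (\sN_{f,\beta} - 1)$, where $\sN_{f,\beta} = \#\{x \in \F_q : f(x) = 0, \Tr^n(\beta x) = 0\}$ as in Lemma \ref{Lemma11} (the $-1$ removes $x = 0$, which satisfies both conditions). Substituting $m$ and the three cases of Lemma \ref{Lemma11} for the even subcase ($\beta \notin \Supp(\Wa f)$; $0 \neq \beta \in \Supp(\Wa f)$ with $g(\beta) = 0$; $0 \neq \beta \in \Supp(\Wa f)$ with $g(\beta) \neq 0$) gives exactly the three nonzero weights in Table \ref{table1}: a short computation turns $m - \sN_{f,\beta} + 1$ into $(p-1)(p^{n-2} + \epsilon(p-1)\sqrt{p^*}^{n+s-4})$, $(p-1)p^{n-2}$, and $(p-1)(p^{n-2} + \epsilon\eta_0(-1)\sqrt{p^*}^{n+s-2})$ respectively. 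For $\beta = 0$ the codeword is zero, contributing $A_0 = 1$.

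Then I would count multiplicities. The number of $\beta \in \F_q \setminus \Supp(\Wa f)$ is $p^n - p^{n-s}$ by Lemma \ref{SupportLemma} (equivalently $\#\Supp(\Wa f) = p^{n-s}$), giving $A_w = p^n - p^{n-s}$ for the first nonzero weight. Among $\beta \in \Supp(\Wa f)$, the number with $g(\beta) = 0$ is $\sN_g(0) = p^{n-s-1} + \epsilon\eta_0^{n+1}(-1)(p-1)\sqrt{p^*}^{n-s-2}$ by Lemma \ref{Lemma8} (even case, valid since $n+s$ even iff $n-s$ even); subtracting $1$ for $\beta = 0$ (note $g(0) = 0$ by Proposition \ref{Proposition4}) yields the stated multiplicity for weight $(p-1)p^{n-2}$. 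The remaining $\beta \in \Supp(\Wa f)$ with $g(\beta) \neq 0$ number $p^{n-s} - \sN_g(0) = (p-1)(p^{n-s-1} - \epsilon\eta_0^{n+1}(-1)\sqrt{p^*}^{n-s-2})$, which is the last multiplicity. As a consistency check I would verify $\sum_w A_w = p^n$ and that the map $\beta \mapsto c_\beta$ is injective — i.e. $\dim \sC_{D_f} = n$ — which follows because if $c_\beta = c_{\beta'}$ then $\Tr^n((\beta - \beta')x) = 0$ for all $x \in D_f$, and $D_f$ is not contained in any hyperplane (this is where one uses that all three weights above are strictly positive, so no nonzero $\beta$ yields the zero codeword).

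The main obstacle is the bookkeeping in matching Lemma \ref{Lemma11}'s case split to the weight values and confirming the parity/sign identities ($\eta_0^n(-1)$ versus $\eta_0^{n+1}(-1)$, and rewriting powers of $\sqrt{p^*}$ using $p^n = \eta_0^n(-1)\sqrt{p^*}^{2n}$) line up exactly with Table \ref{table1}; none of it is deep, but it must be done carefully since an off-by-one in the $x=0$ handling or a sign error in $\eta_0^{n+1}(-1)$ propagates through every row. The injectivity argument is the only genuinely structural point, and it is immediate from positivity of the weights.
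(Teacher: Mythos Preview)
Your proposal is correct and follows essentially the same route as the paper's proof: compute the length via Lemma~\ref{Lemma7}, express $wt(c_\beta)=\#D_f-\sN_{f,\beta}+1$ via Lemma~\ref{Lemma11}, split according to whether $\beta\in\Supp(\Wa f)$ and whether $g(\beta)=0$, and read off the multiplicities from Lemmas~\ref{SupportLemma} and~\ref{Lemma8}. Your added remarks on $g(0)=0$ (Proposition~\ref{Proposition4}) and on injectivity of $\beta\mapsto c_\beta$ via positivity of the nonzero weights are a slight expansion of what the paper leaves implicit, but the argument is otherwise identical.
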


\begin{proof}
Clearly,  we get $\# D_f=\sN_f(0)-1= p^{n-1}-1+\epsilon \eta_0(-1)(p-1) \sqrt{p^*}^{n+s-2}$ by Lemma \ref{Lemma7} and  $\wt(c_\beta)=\# D_f-\sN_{f,\beta}+1$ for every $\beta \in\F_q^{\star}$ by Lemma \ref{Lemma11}.  Then,  for every $\beta\in\F_q^{\star}\setminus \Supp(\Wa f),$  we have
$\wt(c_\beta)=(p-1)(p^{n-2}+\epsilon (p-1)\sqrt{p^*}^{n+s-4}),
$
and  the number of such codewords $c_\beta$ is  $p^n-p^{n-s}$ by Lemma \ref{SupportLemma}.
 For every  $0\neq  \beta\in \Supp(\Wa f),$    we obtain
\be\nn
\wt(c_\beta)=
\left\{\begin{array}{ll}
 (p-1)p^{n-2} ,  & \mbox{ if } g(\beta)=0, \\
 (p-1)\left(p^{n-2} + \epsilon \eta_0(-1)\sqrt{p^*}^{n+s-2}\right),& \mbox{ if } g(\beta)\neq 0,
 \end{array}\right.
\ee
and  the number of such codewords $c_\beta$  follows from Lemma \ref{Lemma8}. Hence the proof is ended. 
\end{proof}
\begin{remark}
In Theorem \ref{Theorem1}, if $\epsilon \eta^{(n+s)/2}_0(-1)=-1$, then we need the  condition $0\leq s\leq n-4$; otherwise, $0\leq s\leq n-2$.
\end{remark}

 \begin{example}\label{Example0} The function $f:\F_{3^8}\to \F_{3}$ defined as $f(x)=\Tr^8( \zeta x^{4}  + \zeta^{816} x^2)$,  where $\F_{3^8}^{\star}=\langle \zeta \rangle$ with $\zeta^8 + 2\zeta^5 + \zeta^4+2\zeta^2+2\zeta+2=0$,   is the quadratic   $2$-plateaued unbalanced  function in the set  WRP with
 $$\Wa {f}(\beta)\in\{0,\epsilon \eta^5_0(-1)3^5\xi_3^{g(\beta)}\}=\{0,243,243\xi_3,243\xi_3^2\}$$
for all $\beta\in \F_{3^8}$, where $\epsilon=-1$,   $\eta_0(-1)=-1$ and $g$ is an unbalanced $3$-ary function with $g(0)=0$. 
 Then,   $\sC_{D_{f}}$  is the three-weight  linear code with parameters $[2348,8,1458
]_3$,  weight enumerator   $1+ 260y^{1458} + 5832y^{1566} +468y^{1620}$ and  weight distribution $(1,260,5832,468)$,  which is verified by MAGMA in \cite{bosma1997magma}. This code is minimal by Lemma \ref{Minimality}.
\end{example}

\begin{theorem}\label{Theorem2}
Let $f\in WRP$ and $n+s$ be an odd integer with $0\leq s\leq n-3$. Then, $\sC_{D_f}$ is the three-weight linear code with parameters 
$\left[p^{n-1}-1,n,(p-1)\left(p^{n-2}-   p^{(n+s-3)/2}\right)\right]_p$. The Hamming weights of the codewords and the weight distribution of $\sC_{D_f}$  are as in  Table \ref{table2}, where  $\epsilon=\pm 1$ is the sign of $\Wa{f}$.
\begin{table}[!htp]
\begin{center}	
\begin{tabular}{|c|c|c|}
\hline
Hamming weight $w$   & Multiplicity $A_w$ \\
\hline
\hline
\footnotesize{$0$} & \footnotesize{$1$}\\
\hline
\footnotesize{$ (p-1)p^{n-2}$} & \footnotesize{$p^n+p^{n-s-1}-p^{n-s} -1$}\\
\hline
\footnotesize{$(p-1)\left(p^{n-2}-\epsilon  \sqrt{p^*}^{n+s-3}\right)$} & \footnotesize{$\frac{p-1}{2}\left(p^{n-s-1}+\epsilon  \eta_0^{n}(-1)   \sqrt{p^*}^{n-s-1}\right)$}\\
\hline
\footnotesize{$(p-1)\left(p^{n-2}+\epsilon  \sqrt{p^*}^{n+s-3}\right)$} & \footnotesize{$\frac{p-1}{2}\left(p^{n-s-1} -\epsilon  \eta_0^{n}(-1)   \sqrt{p^*}^{n-s-1}\right)$}\\
\hline
\end{tabular}\end{center}	
\caption{
\label{table2}  The weight distribution of  $\sC_{D_f}$ when $n+s$ is odd}
\end{table}
\end{theorem}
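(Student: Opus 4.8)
The plan is to run the same argument that proved Theorem \ref{Theorem1}, but feeding in the \emph{odd} branches of Lemmas \ref{Lemma7} and \ref{Lemma11}. First I would compute the length: since $\Wa f(0)=\epsilon\sqrt{p^*}^{n+s}$ with $n+s$ odd, Lemma \ref{Lemma7} gives $\sN_f(0)=p^{n-1}$, hence $m=\# D_f=\sN_f(0)-1=p^{n-1}-1$. As in Theorem \ref{Theorem1}, for every $\beta\in\F_q^{\star}$ the codeword $c_\beta$ has weight $\wt(c_\beta)=\# D_f-\sN_{f,\beta}+1$, so it only remains to substitute the odd-case values of $\sN_{f,\beta}$ from Lemma \ref{Lemma11} and read off the three weights.

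Splitting on $\beta$: for $\beta\in\F_q^{\star}\setminus\Supp(\Wa f)$ Lemma \ref{Lemma11} gives $\sN_{f,\beta}=p^{n-2}$, hence $\wt(c_\beta)=(p-1)p^{n-2}$, and the same value arises for $0\neq\beta\in\Supp(\Wa f)$ with $g(\beta)=0$. For $0\neq\beta\in\Supp(\Wa f)$ with $g(\beta)\in SQ$ (resp.\ $g(\beta)\in NSQ$) we get $\sN_{f,\beta}=p^{n-2}+\epsilon(p-1)\sqrt{p^*}^{n+s-3}$ (resp.\ with a minus sign), so $\wt(c_\beta)=(p-1)(p^{n-2}-\epsilon\sqrt{p^*}^{n+s-3})$ (resp.\ $(p-1)(p^{n-2}+\epsilon\sqrt{p^*}^{n+s-3})$), which are exactly the weights in Table \ref{table2}. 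For the multiplicities I would use Lemma \ref{SupportLemma} (there are $p^n-p^{n-s}$ elements $\beta\in\F_q^{\star}\setminus\Supp(\Wa f)$) and Lemma \ref{Lemma8} in the odd case (since $n-s$ is odd): $\Supp(\Wa f)$ contains $p^{n-s-1}$ elements with $g=0$, and $\tfrac{p-1}{2}(p^{n-s-1}\pm\epsilon\eta_0^{n}(-1)\sqrt{p^*}^{n-s-1})$ elements with $g\in SQ$, resp.\ $g\in NSQ$. Since $g(0)=0$ by Proposition \ref{Proposition4}, exactly $p^{n-s-1}-1$ of the $g=0$ ones are nonzero; adding the count of $\beta$ outside the support gives $A_{(p-1)p^{n-2}}=p^n+p^{n-s-1}-p^{n-s}-1$, and the remaining two multiplicities are as stated. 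A short tally confirms $1+(p^n+p^{n-s-1}-p^{n-s}-1)+(p-1)p^{n-s-1}=p^n$.

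It remains to pin down the dimension and the minimum distance. As $n+s$ is odd, $n+s-3$ is even, so $|\sqrt{p^*}^{n+s-3}|=p^{(n+s-3)/2}$ and the three weights are $(p-1)p^{n-2}$ together with $(p-1)(p^{n-2}\mp p^{(n+s-3)/2})$; the smallest is $d=(p-1)(p^{n-2}-p^{(n+s-3)/2})$. The hypothesis $0\leq s\leq n-3$ forces $(n+s-3)/2\leq n-3<n-2$, so $d>0$, i.e.\ all nonzero codewords have positive weight; hence the linear map $\beta\mapsto c_\beta$ is injective and $\dim\sC_{D_f}=n$, giving the parameters $[\,p^{n-1}-1,\,n,\,(p-1)(p^{n-2}-p^{(n+s-3)/2})\,]_p$. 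I expect the only delicate points to be purely bookkeeping: correctly excluding $\beta=0$ from the $g=0$ count (this is precisely where Proposition \ref{Proposition4} is used), checking that the three multiplicities sum to $p^n$, and noting that the restriction $s\leq n-3$ (rather than merely $s\leq n-1$) is exactly what keeps the minimum weight positive and the dimension equal to $n$.
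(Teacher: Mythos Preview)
Your proposal is correct and follows exactly the approach the paper takes: the paper's own proof simply says ``The proof can be completed in a similar way to the even case in Theorem \ref{Theorem1},'' and you have carried out precisely that computation, feeding the odd branches of Lemmas \ref{Lemma7}, \ref{Lemma11}, and \ref{Lemma8} into the template of Theorem \ref{Theorem1}. Your bookkeeping (excluding $\beta=0$ via Proposition \ref{Proposition4}, the multiplicity tally, and the observation that $s\le n-3$ guarantees $d>0$ and hence $\dim\sC_{D_f}=n$) is all correct and in fact more explicit than what the paper provides.
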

\begin{proof}
The proof can be completed in a similar way to the even case in  Theorem \ref{Theorem1}. 
%Clearly, we get $\# D_f=\sN_f(0)-1=p^{n-1}-1$ by   Lemma \ref{Lemma7} and   $\wt(c_\beta)=\# D_f-\sN_{f,\beta}+1$ for every $\beta \in\F_q^{\star}$ by Lemma \ref{Lemma11}. This completes the proof as in the case of Theorem \ref{Theorem1}.  %Then,  for every $\beta\in\F_q^{\star}\setminus \Supp(\Wa f),$  
%$
%\wt(c_\beta)=(p-1) p^{n-2},$
%and  the number of such codewords $c_\beta$ is equal to $p^n-p^{n-s}$ by Lemma \ref{SupportLemma}.
% For every  $0\neq  \beta\in \Supp(\Wa f),$ we get
%\be\nn
%\wt(c_\beta)=
%\left\{\begin{array}{ll}
% (p-1)p^{n-2},  & \mbox{ if } g(\beta)=0, \\
%(p-1)\left(p^{n-2}-\epsilon  \sqrt{p^*}^{n+s-3}\right),& \mbox{ if }  g(\beta)\in SQ,\\
%(p-1)\left(p^{n-2}+\epsilon  \sqrt{p^*}^{n+s-3}\right),& \mbox{ if }  g(\beta)\in NSQ,
% \end{array}\right.
%\ee
%and  the number of such codewords $c_\beta$  follows from Lemma \ref{Lemma8}. Hence, the proof is complete. 
\end{proof}

 \begin{example}\label{Example0Odd} The function $f:\F_{3^3}\to \F_{3}$ defined as $f(x)=\Tr^3(    x^{4}  +   x^2)$,  where $\F_{3^3}^{\star}=\langle \zeta \rangle$ with $ \zeta^3+ 2\zeta +1=0$,   is the   quadratic bent function in the set  WRP with
 $$\Wa {f}(\beta)\in% \{0,\epsilon \eta^{\frac{7}{2}}_0(-1)3^{\frac{7}{2}}\xi_3^{g(\beta)}\}=
\{i3\sqrt{3},i3\sqrt{3}\xi_3,i3\sqrt{3}\xi_3^2\}
=\{6\xi_3+3,-3\xi_3-6,-3\xi_3+3\}$$
for all $\beta\in \F_{3^3}$, where $\epsilon=-1$ and  $\eta_0(-1)=-1$. %and $f$ is an unbalanced $3$-ary function with $g(0)=0$. 
 Then,   $\sC_{D_{f}}$  is the three-weight   linear code with parameters $[8,3,4
]_3$,  weight enumerator   $1+8y^{6} + 6y^{8} +12y^{4}$ and  weight distribution $(1,8,6,12)$,  which is verified by MAGMA in \cite{bosma1997magma}. 
\end{example}

 Since  the Hamming weights of all nonzero codewords of  $\sC_{D_f}$ have a common divisor $p-1$, 
we can obtain a shorter linear code from the code  $\sC_{D_f}$.
%This says that  the code $\sC_{D_f}$ can  be punctured into a shorter one whose weight distribution is derived from that of $\sC_{D_f}$. 
Let $f\in WRP$.
 For any $x\in\F_q$,
$
f(x)=0$ if and only if $f(ax)=0,$ for every $a\in\F_p^{\star}$.
Then one can choose a subset $\bar{D}_f$ of the defining set $D_f$ of  $\sC_{D_f}$ defined by  (\ref{DefiningSet}) such that $\bigcup_{a\in \F_p^{\star}}a\bar{D}_f$ is a partition of $D_f$, namely,
\be\nn
D_f=\F_p^{\star}\bar{D}_f=\{ab :a\in\F_p^{\star}, b\in  \bar{D}_f\},
\ee
where for each pair of distinct elements $b_1,b_2\in \bar{D}_f$ we have $\frac{b_1}{b_2}\notin\F_p^{\star}$.
 This implies that the linear code $\sC_{D_f}$ can be punctured into a shorter linear code $\sC_{\bar{D}_f}$, where $\bar{D}_f$ is its defining set. Notice that for $\beta\in\F_q^{\star}$,  
\be\nn
 \#\{x\in D_f : f(x)=0 \mbox{ and } \Tr^n(\beta x)=0\}=(p-1) \#\{x\in \bar{D}_f : f(x)=0 \mbox{ and } \Tr^n(\beta x)=0\}.
\ee
Hence, the following linear codes  in Corollaries \ref{Corollary1}  and  \ref{Corollary2} are directly obtained   from the constructed ones  in Theorems  \ref{Theorem1} and  \ref{Theorem2}, respectively.

\begin{corollary}\label{Corollary1}
%Let $n+s$ be even with $0\leq s\leq n-4$   and $f\in WRP$. 
The punctured version $\sC_{\bar{D}_f}$  of the   code $\sC_{D_f}$ of Theorem \ref{Theorem1} is the three-weight linear code with parameters 
$ \left[ (p^{n-1}-1)/(p-1)+\epsilon \eta_0(-1) \sqrt{p^*}^{n+s-2},n\right]_p$ whose weight distribution is listed in  Table \ref{table111}. %, where  $\epsilon=\pm 1$ is the sign of $\Wa{f}$. 
\begin{table}[!htp]
\begin{center}	
\begin{tabular}{|c|c|c|}
\hline
Hamming weight $w$   & Multiplicity $A_w$ \\
\hline
\hline
\footnotesize{$0$} & \footnotesize{$1$}\\
\hline
\footnotesize{$ p^{n-2}+\epsilon (p-1)\sqrt{p^*}^{n+s-4} $} & \footnotesize{$p^n-p^{n-s}$}\\
\hline
\footnotesize{$  p^{n-2}$} & \footnotesize{$p^{n-s-1} + \epsilon \eta_0^{n+1}(-1) (p-1) \sqrt{p^*}^{n-s-2} -1$}\\
\hline
\footnotesize{$ p^{n-2} + \epsilon \eta_0(-1)\sqrt{p^*}^{n+s-2} $} & \footnotesize{$(p-1)\left(p^{n-s-1} -\epsilon \eta_0^{n+1}(-1) \sqrt{p^*}^{n-s-2}\right)$}\\
\hline
\end{tabular}\end{center}	
\caption{
\label{table111}  The weight distribution of  $\sC_{\bar{D}_f}$ when $n+s$ is even}
\end{table}
\end{corollary}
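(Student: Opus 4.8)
The plan is to deduce Corollary \ref{Corollary1} directly from Theorem \ref{Theorem1} by exploiting the puncturing relation $D_f=\F_p^\star\bar D_f$ already established in the paragraph preceding the corollary. First I would observe that since $\bigcup_{a\in\F_p^\star}a\bar D_f$ partitions $D_f$, the length of $\sC_{\bar D_f}$ is exactly $\#\bar D_f=(\#D_f)/(p-1)$, and by Theorem \ref{Theorem1} this equals $\big(p^{n-1}-1+\epsilon\eta_0(-1)(p-1)\sqrt{p^*}^{\,n+s-2}\big)/(p-1)=(p^{n-1}-1)/(p-1)+\epsilon\eta_0(-1)\sqrt{p^*}^{\,n+s-2}$, giving the claimed length; the dimension stays $n$ because puncturing a coordinate repeated $p-1$ times (each a scalar multiple of a fixed one) does not change the row space of the generator matrix, as $\Tr^n(\beta(ab))=a\Tr^n(\beta b)$.

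Next I would handle the weights. For every $\beta\in\F_q^\star$ the codeword $c_\beta\in\sC_{\bar D_f}$ is obtained from the codeword of $\sC_{D_f}$ indexed by the same $\beta$ by deleting the $(p-1)$-fold repetitions: concretely, using the identity displayed just before the corollary,
\[
\#\{x\in D_f:\Tr^n(\beta x)=0\}=(p-1)\,\#\{x\in\bar D_f:\Tr^n(\beta x)=0\},
\]
so $\wt(c_\beta)$ in $\sC_{\bar D_f}$ equals $\wt(c_\beta)/(p-1)$ in $\sC_{D_f}$. Dividing each of the three nonzero weights in Table \ref{table1} by $p-1$ yields precisely $p^{n-2}+\epsilon(p-1)\sqrt{p^*}^{\,n+s-4}$, $p^{n-2}$, and $p^{n-2}+\epsilon\eta_0(-1)\sqrt{p^*}^{\,n+s-2}$, which are the three entries of Table \ref{table111}. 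Crucially the map $\beta\mapsto c_\beta$ has the same behaviour for $\sC_{D_f}$ and $\sC_{\bar D_f}$ on which $\beta$ produce which weight, so the multiplicities $A_w$ are inherited verbatim from Table \ref{table1}; in particular no two distinct nonzero weights of $\sC_{D_f}$ collapse to the same value after division, so $\sC_{\bar D_f}$ remains three-weight, and the minimum distance is the smallest of the three listed values.

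The only genuinely delicate point, and the one I would write out carefully, is that puncturing does not cause an unexpected drop in dimension or a merging of distinct codewords: I must check that $c_\beta=c_{\beta'}$ in $\sC_{\bar D_f}$ forces $c_\beta=c_{\beta'}$ in $\sC_{D_f}$, which again follows from $\Tr^n(\beta(ab))=a\Tr^n(\beta b)$ for all $a\in\F_p^\star$, so that agreement on $\bar D_f$ propagates to agreement on all of $D_f=\F_p^\star\bar D_f$; hence $\dim\sC_{\bar D_f}=\dim\sC_{D_f}=n$ and the weight distribution transfers intact. With these observations the proof reduces to the one-line computations above, so I would simply state: \emph{The assertion follows from Theorem \ref{Theorem1} together with the puncturing relation $D_f=\F_p^\star\bar D_f$, upon dividing the length and all nonzero weights of $\sC_{D_f}$ by $p-1$ and noting that this merges no weights and preserves the dimension.}
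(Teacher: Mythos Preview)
Your proposal is correct and follows essentially the same approach as the paper, which simply states that Corollary~\ref{Corollary1} is ``directly obtained'' from Theorem~\ref{Theorem1} via the puncturing relation $D_f=\F_p^\star\bar D_f$. You are in fact more careful than the paper in spelling out why the dimension is preserved and why no two weights collapse after division by $p-1$, but the underlying argument is the same.
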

 \begin{example} The punctured version $\sC_{\bar{D}_{f}}$ of $\sC_{D_{f}}$ in  Example  \ref{Example0}  is the three-weight   linear  code with parameters $[1174,8,729]_3$,  weight enumerator   $1+ 260y^{729} + 5832y^{783} +468y^{810}$ and  weight distribution $(1,260,5832,468)$. This code is minimal by Lemma \ref{Minimality}.  
\end{example}

\begin{corollary}\label{Corollary2}
%Let $n+s$ be odd with $0\leq s\leq n-3$ and $f\in WRP$. Then, $\sC_{\bar{D}_f}$ 
The punctured version $\sC_{\bar{D}_f}$  of the  code $\sC_{D_f}$ of  Theorem \ref{Theorem2}  is the three-weight linear code with parameters 
$\left[ (p^{n-1}-1)/(p-1),n,p^{n-2}-   p^{(n+s-3)/2} \right]_p$ whose weight distribution is listed in  Table \ref{table222}. %, where  $\epsilon=\pm 1$ is the sign of $\Wa{f}$. 
\begin{table}[!htp]
\begin{center}	
\begin{tabular}{|c|c|c|}
\hline
Hamming weight $w$   & Multiplicity $A_w$ \\
\hline
\hline
\footnotesize{$0$} & \footnotesize{$1$}\\
\hline
\footnotesize{$ p^{n-2}$} & \footnotesize{$p^n+p^{n-s-1}-p^{n-s} -1$}\\
\hline
\footnotesize{$ p^{n-2}-\epsilon  \sqrt{p^*}^{n+s-3} $} & \footnotesize{$\frac{p-1}{2}\left(p^{n-s-1}+\epsilon  \eta_0^{n}(-1)   \sqrt{p^*}^{n-s-1}\right)$}\\
\hline
\footnotesize{$ p^{n-2}+\epsilon  \sqrt{p^*}^{n+s-3} $} & \footnotesize{$\frac{p-1}{2}\left(p^{n-s-1} -\epsilon  \eta_0^{n}(-1)   \sqrt{p^*}^{n-s-1}\right)$}\\
\hline
\end{tabular}\end{center}	
\caption{
\label{table222}  The weight distribution of  $\sC_{\bar{D}_f}$ when $n+s$ is odd}
\end{table}
\end{corollary}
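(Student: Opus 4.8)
The plan is to derive Corollary \ref{Corollary2} directly from Theorem \ref{Theorem2} using the puncturing correspondence set up in the paragraph preceding the corollary. First I would recall that, by property $P2)$ in the definition of $WRP$, one has $f(x)=0$ if and only if $f(ax)=0$ for every $a\in\F_p^{\star}$, so $D_f=\F_p^{\star}\bar D_f$ is a disjoint union of the $(p-1)$-element cosets $a\bar D_f$; since $n+s$ is odd, Lemma \ref{Lemma7} gives $\#D_f=\sN_f(0)-1=p^{n-1}-1$, whence $\#\bar D_f=(p^{n-1}-1)/(p-1)$, the asserted length.

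Next I would compare Hamming weights coordinate by coordinate. Fix $\beta\in\F_q^{\star}$ and $b\in\bar D_f$; the $p-1$ coordinates of $c_\beta$ indexed by the elements $ab$, $a\in\F_p^{\star}$, equal $\Tr^n(\beta ab)=a\,\Tr^n(\beta b)$, hence are all zero or all nonzero according as $\Tr^n(\beta b)=0$ or not. Therefore $\wt(c_\beta)=(p-1)\,\wt(\bar c_\beta)$, so each nonzero weight of $\sC_{\bar D_f}$ is the corresponding weight of $\sC_{D_f}$ from Table \ref{table2} divided by $p-1$, giving the three values $p^{n-2}$ and $p^{n-2}\pm\epsilon\sqrt{p^*}^{n+s-3}$.

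For the dimension and the multiplicities I would check that $\beta\mapsto \bar c_\beta$ is injective: if $\bar c_\beta=\bar c_{\beta'}$, then $\Tr^n((\beta-\beta')b)=0$ for all $b\in\bar D_f$, hence $\Tr^n((\beta-\beta')x)=0$ for all $x\in D_f=\F_p^{\star}\bar D_f$, so $c_\beta=c_{\beta'}$ and then $\beta=\beta'$ since $\beta\mapsto c_\beta$ is a bijection (Theorem \ref{Theorem2} gives $\dim\sC_{D_f}=n$, so there are $p^n$ codewords for $p^n$ values of $\beta$). Thus $\sC_{\bar D_f}$ has dimension $n$, and because $\beta\leftrightarrow c_\beta\leftrightarrow \bar c_\beta$ is weight-preserving up to the factor $p-1$, the multiplicity of each weight of $\sC_{\bar D_f}$ equals that of the corresponding weight of $\sC_{D_f}$; this reproduces exactly the multiplicity column of Table \ref{table2}, that is, Table \ref{table222}. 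Finally, since $n+s$ is odd, $n+s-3$ is even and $\sqrt{p^*}^{n+s-3}=(p^*)^{(n+s-3)/2}=\pm p^{(n+s-3)/2}$ is real, so $\epsilon\sqrt{p^*}^{n+s-3}=\pm p^{(n+s-3)/2}$; using $0\le s\le n-3$ this forces the minimum weight to be $p^{n-2}-p^{(n+s-3)/2}$, which together with the above yields the parameters $[(p^{n-1}-1)/(p-1),\,n,\,p^{n-2}-p^{(n+s-3)/2}]_p$.

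I expect no genuinely hard step here; the only points requiring a little care are the injectivity of $\beta\mapsto\bar c_\beta$ (so that the dimension is exactly $n$ and not smaller) and the remark that $\sqrt{p^*}^{n+s-3}$ is a signed power of $p$, which is what makes the stated minimum distance correct.
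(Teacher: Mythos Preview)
Your proof is correct and follows essentially the same approach the paper uses: the paragraph before Corollary~\ref{Corollary2} already records that $D_f=\F_p^{\star}\bar D_f$ and that the zero counts (hence weights) on $D_f$ and on $\bar D_f$ differ exactly by the factor $p-1$, so the corollary is stated there as a direct consequence of Theorem~\ref{Theorem2}. Your write-up is simply a more careful spelling-out of this puncturing argument, including the injectivity of $\beta\mapsto \bar c_\beta$ and the evaluation of $\sqrt{p^*}^{\,n+s-3}$ as a signed integer.
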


 \begin{example} The punctured version $\sC_{\bar{D}_{f}}$ of $\sC_{D_{f}}$   in Example  \ref{Example0Odd}  is the three-weight   linear   code with parameters $[4,3,2]_3$,  weight enumerator   $1+8y^{3} + 6y^{4} +12y^2$ and  weight distribution $(1,8,6,12)$,  which is verified by MAGMA in \cite{bosma1997magma}.  This code is   optimal owing to the Singleton bound. 
\end{example}

In particular, we can work on the Walsh support of a weakly regular plateaued function $f$ to define a subcode  of  each constructed code above.
We   consider a linear code involving $D_f$ defined by
\be\label{LinearSubCodeS}
\bar{\sC}_{D_f}=\{c_\beta= (\Tr^n(\beta d_1), \Tr^n(\beta d_2), \ldots, \Tr^n(\beta d_m)): \beta \in Supp(\Wa f )\},
\ee
which is the subcode  of $\sC_{D_f}$ defined by (\ref{LinearCodes}). Hence, the following codes   in  Corollaries \ref{Corollary3}, \ref{Corollary4}, \ref{Corollary5} and \ref{Corollary6} are the subcodes of the codes of 
Theorems \ref{Theorem1}, \ref{Theorem2} and Corollaries \ref{Corollary1}, \ref{Corollary2}, respectively. Notice that their parameters are directly derived from that of corresponding codes.

\begin{corollary}\label{Corollary3}
%Let $n+s$ be even  with $0\leq s\leq n-3$ and $f\in WRP$. Then 
The subcode  $\bar{\sC}_{D_f}$ of the code $\sC_{D_f}$ of Theorem \ref{Theorem1} is the two-weight linear code with parameters 
$\left[ p^{n-1}-1+\epsilon \eta_0(-1)(p-1) \sqrt{p^*}^{n+s-2},n-s\right]_p$ whose weight distribution is listed in  Table \ref{table11}. %, where  $\epsilon=\pm 1$ is the sign of $\Wa{f}$.  
\begin{table}[!htp]
\begin{center}	
\begin{tabular}{|c|c|c|}
\hline
Hamming weight $w$   & Multiplicity $A_w$ \\
\hline
\hline
\footnotesize{$0$} & \footnotesize{$1$}\\
\hline
\footnotesize{$ (p-1)p^{n-2}$} & \footnotesize{$p^{n-s-1} + \epsilon \eta_0^{n+1}(-1) (p-1) \sqrt{p^*}^{n-s-2} -1$}\\
\hline
\footnotesize{$(p-1)\left(p^{n-2} + \epsilon \eta_0(-1)\sqrt{p^*}^{n+s-2}\right)$} & \footnotesize{$(p-1)\left(p^{n-s-1} -\epsilon \eta_0^{n+1}(-1) \sqrt{p^*}^{n-s-2}\right)$}\\
\hline
\end{tabular}\end{center}	
\caption{
\label{table11}  The weight distribution of  $\bar{\sC}_{D_f}$ when $n+s$ is even}
\end{table}
\end{corollary}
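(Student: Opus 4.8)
The plan is to obtain $\bar{\sC}_{D_f}$ as the restriction of the codewords of $\sC_{D_f}$ (from Theorem~\ref{Theorem1}) to the index set $\Supp(\Wa f)$, and then to read off the length, dimension and weight distribution from the analysis already carried out for $\sC_{D_f}$. First I would note that the length of $\bar{\sC}_{D_f}$ is unchanged: the defining set is still $D_f$, only the range of $\beta$ is restricted, so the length equals $\#D_f = p^{n-1}-1+\epsilon\eta_0(-1)(p-1)\sqrt{p^*}^{n+s-2}$, exactly as computed via Lemma~\ref{Lemma7} in the proof of Theorem~\ref{Theorem1}.

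Next I would determine the dimension. The map $\beta\mapsto c_\beta$ from $\Supp(\Wa f)$ into $\bar{\sC}_{D_f}$ is the restriction of the $\F_p$-linear map $\beta\mapsto c_\beta$ on $\F_q$; since by Lemma~\ref{Walshsupport} the set $\Supp(\Wa f)$ is $\F_p$-linearly closed under multiplication by $\F_p^\star$, but more to the point one uses that $\bar{\sC}_{D_f}$ is spanned by $\{c_\beta:\beta\in\Supp(\Wa f)\}$. The claim is that $\dim \bar{\sC}_{D_f}=n-s$. Here the key input is Lemma~\ref{Lemma11} together with the weight computation in Theorem~\ref{Theorem1}: the codeword $c_\beta$ is the zero codeword exactly when $\wt(c_\beta)=0$, and from the weight table of $\sC_{D_f}$ the only $\beta\in\F_q^\star$ giving a zero codeword would have to satisfy $\Tr^n(\beta x)=0$ for all $x\in D_f$, which forces $\beta=0$ since $D_f$ spans $\F_q$ over $\F_p$ (as $f(0)=0$ and $f$ is not identically zero on a hyperplane). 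Hence the linear map $\beta\mapsto c_\beta$ is injective on all of $\F_q$, so its restriction to $\Supp(\Wa f)$ has image of $\F_p$-dimension $\dim_{\F_p}\Supp(\Wa f)=n-s$ (using $\#\Supp(\Wa f)=p^{n-s}$ from Lemma~\ref{SupportLemma}); one must also check $\Supp(\Wa f)$ is an $\F_p$-subspace, which is not literally asserted, so instead I would simply say: $\bar{\sC}_{D_f}$ has $p^{n-s}$ codewords and the weight-$0$ codeword occurs once, so the number of distinct codewords forces $\#\bar{\sC}_{D_f}=p^{n-s}$, whence the dimension is $n-s$.

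For the weight distribution I would restrict the analysis of Theorem~\ref{Theorem1} to $\beta\in\Supp(\Wa f)$. From that proof, for $0\neq\beta\in\Supp(\Wa f)$ we have
\[
\wt(c_\beta)=
\begin{cases}
(p-1)p^{n-2}, & g(\beta)=0,\\[1mm]
(p-1)\bigl(p^{n-2}+\epsilon\eta_0(-1)\sqrt{p^*}^{n+s-2}\bigr), & g(\beta)\neq 0,
\end{cases}
\]
so $\bar{\sC}_{D_f}$ has exactly the two nonzero weights listed in Table~\ref{table11}. The multiplicities are counted by $\sN_g(j)$ of Lemma~\ref{Lemma8} with $n-s$ even: the weight $(p-1)p^{n-2}$ occurs for those $\beta$ with $g(\beta)=0$, i.e. $\sN_g(0)-1 = p^{n-s-1}+\epsilon\eta_0^{n+1}(-1)(p-1)\sqrt{p^*}^{n-s-2}-1$ times (subtracting the zero codeword $\beta=0$), and the other weight occurs $\sum_{j\in\F_p^\star}\sN_g(j)=(p-1)\bigl(p^{n-s-1}-\epsilon\eta_0^{n+1}(-1)\sqrt{p^*}^{n-s-2}\bigr)$ times. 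Adding $1+[\sN_g(0)-1]+(p-1)[\cdots]=p^{n-s}$ confirms consistency with the dimension. The one genuine subtlety — and thus the main obstacle — is justifying that distinct $\beta\in\Supp(\Wa f)$ really give the stated codeword counts without double-counting, i.e. that the dimension is exactly $n-s$ rather than smaller; this is handled by the injectivity of $\beta\mapsto c_\beta$ on $\F_q$ noted above, which rests on $D_f$ not being contained in any hyperplane $\{x:\Tr^n(\beta x)=0\}$. Everything else is a direct specialization of Theorem~\ref{Theorem1} and Lemmas~\ref{SupportLemma} and \ref{Lemma8}.
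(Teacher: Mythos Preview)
Your approach is exactly what the paper does: it offers no proof beyond the remark that ``their parameters are directly derived from that of corresponding codes,'' and your argument simply makes this derivation explicit by restricting the weight computation of Theorem~\ref{Theorem1} to $\beta\in\Supp(\Wa f)$ and reading off the multiplicities from Lemma~\ref{Lemma8}. The subtlety you flag --- that one needs $\bar{\sC}_{D_f}$ to genuinely be an $\F_p$-linear subspace of dimension $n-s$, which amounts to $\Supp(\Wa f)$ being an $\F_p$-subspace --- is not addressed in the paper either; the paper simply asserts linearity without justification, so your proposal is at least as complete as the original.
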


\begin{corollary}\label{Corollary4}
% Let $n+s$ be odd with $0\leq s\leq n-3$ and $f\in WRP$. Then $\bar{\sC}_{D_f}$
The subcode  $\bar{\sC}_{D_f}$ of the code $\sC_{D_f}$ of Theorem \ref{Theorem2} is the three-weight linear code with parameters 
$\left[p^{n-1}-1,n-s,(p-1)\left(p^{n-2}-   p^{(n+s-3)/2}\right)\right]_p$ whose weight distribution is listed in  Table \ref{table22}. %, where  $\epsilon=\pm 1$ is the sign of $\Wa{f}$.  
\begin{table}[!htp]
\begin{center}	
\begin{tabular}{|c|c|c|}
\hline
Hamming weight $w$   & Multiplicity $A_w$ \\
\hline
\hline
\footnotesize{$0$} & \footnotesize{$1$}\\
\hline
\footnotesize{$ (p-1)p^{n-2}$} & \footnotesize{$p^{n-s-1} -1$}\\
\hline
\footnotesize{$(p-1)\left(p^{n-2}-\epsilon  \sqrt{p^*}^{n+s-3}\right)$} & \footnotesize{$\frac{p-1}{2}\left(p^{n-s-1}+\epsilon  \eta_0^{n}(-1)   \sqrt{p^*}^{n-s-1}\right)$}\\
\hline
\footnotesize{$(p-1)\left(p^{n-2}+\epsilon  \sqrt{p^*}^{n+s-3}\right)$} & \footnotesize{$\frac{p-1}{2}\left(p^{n-s-1} -\epsilon  \eta_0^{n}(-1)   \sqrt{p^*}^{n-s-1}\right)$}\\
\hline
\end{tabular}\end{center}	
\caption{
\label{table22}  The weight distribution of  $\bar{\sC}_{D_f}$ when $n+s$ is odd}
\end{table}
\end{corollary}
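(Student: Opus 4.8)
The plan is to read off the parameters of the subcode $\bar{\sC}_{D_f}$ directly from the analysis already carried out for $\sC_{D_f}$ in Theorem \ref{Theorem2}, restricting attention to the codewords $c_\beta$ with $\beta\in\Supp(\Wa f)$. First I would note that the length $m=\#D_f = p^{n-1}-1$ is unchanged, since the defining set $D_f$ does not depend on which $\beta$'s we use to form codewords. Next I would establish the dimension: the map $\beta\mapsto c_\beta$ from $\F_q$ to $\sC_{D_f}$ has the same kernel when restricted to the $\F_p$-span of $\Supp(\Wa f)$, and by Lemma \ref{Walshsupport} the set $\Supp(\Wa f)$ is stable under multiplication by $\F_p^\star$ and has $p^{n-s}$ elements; together with $\#D_f$ being of the order $p^{n-1}$ (so that the trace-zero condition cuts a genuine hyperplane and the code is nondegenerate), this forces $\dim\bar{\sC}_{D_f}=n-s$. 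I expect this dimension count to be the one subtle point: one must check that distinct cosets of the relevant linear subspace of $\F_q$ give distinct codewords, equivalently that no nonzero $\beta$ in (the span of) $\Supp(\Wa f)$ has $\Tr^n(\beta x)=0$ for all $x\in D_f$; this follows because $D_f$ is large enough not to be contained in any hyperplane $\Tr^n(\beta x)=0$.

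For the weight distribution I would simply collect, from the proof of Theorem \ref{Theorem2}, the weights $\wt(c_\beta)$ for $\beta\in\Supp(\Wa f)\setminus\{0\}$, discarding the contribution of $\beta\in\F_q^\star\setminus\Supp(\Wa f)$. By Lemma \ref{Lemma11} (odd case) these weights are $(p-1)p^{n-2}$ when $g(\beta)=0$, and $(p-1)\big(p^{n-2}-\epsilon\sqrt{p^*}^{\,n+s-3}\big)$ or $(p-1)\big(p^{n-2}+\epsilon\sqrt{p^*}^{\,n+s-3}\big)$ according as $g(\beta)\in SQ$ or $g(\beta)\in NSQ$. The multiplicities then come from Lemma \ref{Lemma8} (odd case), which counts $\beta\in\Supp(\Wa f)$ with $g(\beta)=0$, $g(\beta)\in SQ$, $g(\beta)\in NSQ$ as $p^{n-s-1}$, $\tfrac{p-1}{2}\big(p^{n-s-1}+\epsilon\eta_0^n(-1)\sqrt{p^*}^{\,n-s-1}\big)$ and $\tfrac{p-1}{2}\big(p^{n-s-1}-\epsilon\eta_0^n(-1)\sqrt{p^*}^{\,n-s-1}\big)$ respectively. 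Subtracting $1$ from the first multiplicity to account for $\beta=0$ (the zero codeword) yields exactly the entries of Table \ref{table22}; a quick sanity check that the three multiplicities plus $1$ sum to $p^{n-s}=\#\Supp(\Wa f)$ confirms the bookkeeping. The minimum distance $(p-1)\big(p^{n-2}-p^{(n+s-3)/2}\big)$ is the smallest of the three nonzero weights (using $|\sqrt{p^*}^{\,n+s-3}|=p^{(n+s-3)/2}$ since $n+s-3$ is even), valid under the hypothesis $0\le s\le n-3$ inherited from Theorem \ref{Theorem2}.

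The only genuine obstacle is the dimension claim $\dim\bar{\sC}_{D_f}=n-s$; everything else is a transcription of already-proved formulas. To settle it cleanly I would argue that $\beta\mapsto c_\beta$ restricted to $V:=\mathrm{span}_{\F_p}\Supp(\Wa f)$ is injective. Indeed $c_\beta=0$ means $\Tr^n(\beta d)=0$ for every $d\in D_f$, i.e. $D_f$ lies in the $\F_p$-hyperplane $H_\beta=\{x:\Tr^n(\beta x)=0\}$; but $\#D_f = p^{n-1}-1 \ge \#(H_\beta\setminus\{0\}) = p^{n-1}-1$ with equality only if $D_f = H_\beta\setminus\{0\}$, and one checks $D_f\ne H_\beta\setminus\{0\}$ (for instance because $D_f$ is stable under $x\mapsto ax$, $a\in\F_p^\star$, whereas a hyperplane minus $0$ has that property too, so one instead compares with the exact count from Lemma \ref{Lemma7} and observes $0\notin D_f$ forces a strict mismatch, or simply invokes that $\sC_{D_f}$ already has full dimension $n$ in Theorem \ref{Theorem2}, hence its restriction to the $(n-s)$-dimensional space $V$ still has trivial kernel). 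Combined with $\dim_{\F_p}V = n-s$ from Lemma \ref{Walshsupport} and Lemma \ref{SupportLemma}, this gives $\dim\bar{\sC}_{D_f}=n-s$, completing the proof.
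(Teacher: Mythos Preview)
Your approach---restrict the codewords of $\sC_{D_f}$ to those indexed by $\beta\in\Supp(\Wa f)$ and read off the weights and multiplicities from Lemmas \ref{Lemma11} and \ref{Lemma8}---is exactly what the paper does; the paper in fact gives no detailed argument beyond the sentence ``their parameters are directly derived from that of corresponding codes.'' Your transcription of the weight distribution and the minimum distance is correct.

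There is, however, a genuine gap in your dimension argument (and, to be fair, the paper glosses over the same point). You set $V=\mathrm{span}_{\F_p}\Supp(\Wa f)$ and assert $\dim_{\F_p}V=n-s$ ``from Lemma \ref{Walshsupport} and Lemma \ref{SupportLemma}.'' But those lemmas give only that $\Supp(\Wa f)$ is stable under $\F_p^\star$-scaling and has cardinality $p^{n-s}$; a union of $\F_p$-lines through $0$ of total size $p^{n-s}$ need not span a space of dimension exactly $n-s$---its span could well have larger dimension. What is actually needed for the corollary as stated is that $\Supp(\Wa f)$ is itself an $\F_p$-linear subspace of $\F_q$: without this, the set $\bar\sC_{D_f}=\{c_\beta:\beta\in\Supp(\Wa f)\}$ is not even closed under addition (since $\beta\mapsto c_\beta$ is injective by Theorem \ref{Theorem2}, closure would force $\Supp(\Wa f)$ to be additively closed). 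This linearity of the Walsh support is automatic for quadratic plateaued functions (the support is a coset of the radical, and $0$ lies in it since $f$ is unbalanced), which covers all the paper's examples, but it is not established for arbitrary functions in $WRP$ by any of the lemmas you cite. You should either add that hypothesis explicitly, or supply an argument that $\Supp(\Wa f)$ is a subspace for $f\in WRP$; the rest of your proof is then fine.
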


\begin{corollary}\label{Corollary5}
%Let $n+s$ be even with $0\leq s\leq n-4$ and $f\in WRP$. Then
The subcode  $\bar{\sC}_{\bar{D}_f}$ of the code $\sC_{\bar{D}_f}$ of  Corollary \ref{Corollary1}    is the two-weight linear code with parameters 
$ \left[ (p^{n-1}-1)/(p-1)+\epsilon \eta_0(-1) \sqrt{p^*}^{n+s-2},n-s\right]_p$ whose weight distribution is listed in  Table \ref{table1111}. %, where  $\epsilon=\pm 1$ is the sign of $\Wa{f}$.  
\begin{table}[!htp]
\begin{center}	
\begin{tabular}{|c|c|c|}
\hline
Hamming weight $w$   & Multiplicity $A_w$ \\
\hline
\hline
\footnotesize{$0$} & \footnotesize{$1$}\\
\hline
\footnotesize{$  p^{n-2}$} & \footnotesize{$p^{n-s-1} + \epsilon \eta_0^{n+1}(-1) (p-1) \sqrt{p^*}^{n-s-2} -1$}\\
\hline
\footnotesize{$ p^{n-2} + \epsilon \eta_0(-1)\sqrt{p^*}^{n+s-2} $} & \footnotesize{$(p-1)\left(p^{n-s-1} -\epsilon \eta_0^{n+1}(-1) \sqrt{p^*}^{n-s-2}\right)$}\\
\hline
\end{tabular}\end{center}	
\caption{
\label{table1111}  The weight distribution of  $\bar{\sC}_{\bar{D}_f}$ when $n+s$ is even}
\end{table}
\end{corollary}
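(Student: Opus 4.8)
The plan is to derive everything from the already-established structural facts rather than redo any exponential-sum computation. First I would observe that the code $\bar{\sC}_{\bar{D}_f}$ is, by definition, the restriction of the generic construction to the evaluation points $\beta \in \Supp(\Wa f)$, applied to the shortened defining set $\bar{D}_f$. The length is therefore exactly $\#\bar{D}_f = (\#D_f)/(p-1) = (p^{n-1}-1)/(p-1) + \epsilon\eta_0(-1)\sqrt{p^*}^{\,n+s-2}$, which is the same length as the code $\sC_{\bar{D}_f}$ of Corollary \ref{Corollary1}; this is immediate from the partition $D_f = \F_p^\star \bar{D}_f$ recorded just before Corollary \ref{Corollary1}.

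Next I would pin down the dimension. The map $\beta \mapsto c_\beta$ restricted to $\Supp(\Wa f)$ has image of size equal to $p^{\dim}$, and the kernel (over $\F_q$) of the full map $\beta\mapsto c_\beta$ is $\{\beta : \Tr^n(\beta d)=0 \ \forall d\in \bar D_f\}$. Since the weights of all nonzero codewords turn out to be positive (see below), no nonzero $\beta\in\Supp(\Wa f)$ lies in this kernel, so the restriction is injective on $\Supp(\Wa f)$ as a subset of $\F_q$; however $\Supp(\Wa f)$ is not an $\F_p$-subspace, so I instead argue that the $\F_p$-span of $\Supp(\Wa f)$ has dimension $n-s$. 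This follows from Lemma \ref{Walshsupport}: $\Supp(\Wa f)$ is closed under multiplication by $\F_p^\star$ and has cardinality $p^{n-s}$ (Lemma \ref{SupportLemma}), and for quadratic-type behaviour its $\F_p$-span is exactly an $(n-s)$-dimensional subspace — more cleanly, the dimension of $\bar{\sC}_{\bar D_f}$ equals that of $\bar{\sC}_{D_f}$ from Corollary \ref{Corollary3}, because puncturing by the constant factor $p-1$ does not change the dimension (the generator matrix of $\sC_{\bar D_f}$ is obtained from that of $\sC_{D_f}$ by deleting repeated columns). Hence $\dim \bar{\sC}_{\bar D_f} = n-s$.

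For the weight distribution, I would use the identity $\wt(c_\beta) = \#\bar D_f - \#\{x\in\bar D_f : \Tr^n(\beta x)=0\}$ together with the relation, displayed just before Corollary \ref{Corollary1},
\[
\#\{x\in D_f : \Tr^n(\beta x)=0\} = (p-1)\,\#\{x\in \bar D_f : \Tr^n(\beta x)=0\},
\]
which says the weight of $c_\beta$ in $\bar{\sC}_{\bar D_f}$ is exactly $1/(p-1)$ times its weight in $\sC_{D_f}$, i.e. equals the weight in $\sC_{\bar D_f}$ from Corollary \ref{Corollary1}. Restricting the index $\beta$ to $\Supp(\Wa f)$, the case $\beta\in\F_q^\star\setminus\Supp(\Wa f)$ of Lemma \ref{Lemma11} is discarded, leaving only the two values $p^{n-2}$ (when $g(\beta)=0$) and $p^{n-2}+\epsilon\eta_0(-1)\sqrt{p^*}^{\,n+s-2}$ (when $g(\beta)\neq 0$). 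The multiplicities are then read off from Lemma \ref{Lemma8} for $n-s$ even: $\sN_g(0)-1 = p^{n-s-1}+\epsilon\eta_0^{n+1}(-1)(p-1)\sqrt{p^*}^{\,n-s-2}-1$ codewords of weight $p^{n-2}$, and $\sum_{j\in\F_p^\star}\sN_g(j) = (p-1)\bigl(p^{n-s-1}-\epsilon\eta_0^{n+1}(-1)\sqrt{p^*}^{\,n-s-2}\bigr)$ codewords of weight $p^{n-2}+\epsilon\eta_0(-1)\sqrt{p^*}^{\,n+s-2}$, giving exactly Table \ref{table1111}. Finally the minimum distance claim follows since both nonzero weights are positive for the admissible range of $s$.

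The main obstacle is the dimension argument: one must be careful that although $\Supp(\Wa f)$ is not an $\F_p$-subspace of $\F_q$, the subcode $\bar{\sC}_{D_f}$ still has dimension $n-s$ (not larger), which is precisely the content carried over from Corollary \ref{Corollary3}; the cleanest route is to note that the columns of the generator matrix of $\bar{\sC}_{\bar D_f}$ are obtained from those of $\bar{\sC}_{D_f}$ by removing the $(p-2)$-fold repeated scalar multiples, an operation that preserves the column space and hence the dimension. Everything else is bookkeeping transported through the partition $D_f=\F_p^\star\bar D_f$ and the restriction of the index set to $\Supp(\Wa f)$.
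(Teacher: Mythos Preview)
Your proposal is correct and follows essentially the same route as the paper, which gives no separate proof for this corollary and simply states that the parameters ``are directly derived from that of corresponding codes'' (namely from Corollary~\ref{Corollary1} and Corollary~\ref{Corollary3}). Your derivation of the length via the partition $D_f=\F_p^{\star}\bar D_f$, of the weights by dividing those of Theorem~\ref{Theorem1} by $p-1$ and restricting $\beta$ to $\Supp(\Wa f)$, and of the multiplicities via Lemma~\ref{Lemma8}, is exactly this ``direct derivation'' made explicit.

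One remark: your caution about the dimension is well placed, since the paper never verifies that $\Supp(\Wa f)$ is an $\F_p$-subspace and hence that $\bar\sC_{D_f}$ is genuinely linear of dimension $n-s$. Your clean fix---transporting the dimension from Corollary~\ref{Corollary3} by observing that passing from $D_f$ to $\bar D_f$ only deletes repeated (scalar-multiple) columns, which preserves the row space---is the right way to close the loop, and is more careful than what the paper itself does. The alternative argument you sketch (that the $\F_p$-span of $\Supp(\Wa f)$ has dimension $n-s$) is not needed and would in fact require additional justification for non-quadratic $f$; stick with the puncturing argument.
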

\begin{corollary}\label{Corollary6}
%Let $n+s$ be odd with $0\leq s\leq n-3$ and $f\in WRP$. Then, $\bar{\sC}_{\bar{D}_f}$ 
The subcode  $\bar{\sC}_{\bar{D}_f}$ of the code $\sC_{\bar{D}_f}$ of  Corollary \ref{Corollary2}  is the three-weight linear code with parameters 
$\left[ (p^{n-1}-1)/(p-1),n-s, p^{n-2}-  p^{(n+s-3)/2} \right]_p$ whose weight distribution is listed in  Table \ref{table2222}. %, where  $\epsilon=\pm 1$ is the sign of $\Wa{f}$.  
\begin{table}[!htp]
\begin{center}	
\begin{tabular}{|c|c|c|}
\hline
Hamming weight $w$   & Multiplicity $A_w$ \\
\hline
\hline
\footnotesize{$0$} & \footnotesize{$1$}\\
\hline
\footnotesize{$ p^{n-2}$} & \footnotesize{$ p^{n-s-1}  -1$}\\
\hline
\footnotesize{$ p^{n-2}-\epsilon  \sqrt{p^*}^{n+s-3} $} & \footnotesize{$\frac{p-1}{2}\left(p^{n-s-1}+\epsilon  \eta_0^{n}(-1)   \sqrt{p^*}^{n-s-1}\right)$}\\
\hline
\footnotesize{$ p^{n-2}+\epsilon  \sqrt{p^*}^{n+s-3} $} & \footnotesize{$\frac{p-1}{2}\left(p^{n-s-1} -\epsilon  \eta_0^{n}(-1)   \sqrt{p^*}^{n-s-1}\right)$}\\
\hline
\end{tabular}\end{center}	
\caption{
\label{table2222}  The weight distribution of  $\bar{\sC}_{\bar{D}_f}$ when $n+s$ is odd}
\end{table}
\end{corollary}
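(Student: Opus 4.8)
The plan is to derive Corollary \ref{Corollary6} from Corollary \ref{Corollary2} (hence ultimately from Theorem \ref{Theorem2}) by the same passage to the Walsh support that turned Theorem \ref{Theorem2} into Corollary \ref{Corollary4}: one keeps the defining set equal to $\bar{D}_f$ and only restricts the evaluation parameter $\beta$, from all of $\F_q$ down to $\Supp(\Wa f)$. First I would record the length. Since the defining set is unchanged, the length equals $\#\bar{D}_f=(\#D_f)/(p-1)=(p^{n-1}-1)/(p-1)$, where $\#D_f=\sN_f(0)-1=p^{n-1}-1$ comes from Lemma \ref{Lemma7} in the case $n+s$ odd together with the partition $D_f=\F_p^\star\bar{D}_f$.

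Next I would pin down the dimension. The code $\sC_{\bar{D}_f}$ of Corollary \ref{Corollary2} has full dimension $n$, so the $\F_p$-linear map $\beta\mapsto c_\beta$ is injective on $\F_q$; hence $\bar{\sC}_{\bar{D}_f}=\{c_\beta:\beta\in\Supp(\Wa f)\}$ is a sub-collection of $\sC_{\bar{D}_f}$ consisting of exactly $\#\Supp(\Wa f)=p^{n-s}$ distinct codewords by Lemma \ref{SupportLemma}. That this sub-collection is actually an $\F_p$-linear subcode, and therefore of dimension $n-s$, rests on $\Supp(\Wa f)$ being an $\F_p$-subspace of $\F_q$: it contains $0$ since $\Wa f(0)=\epsilon\sqrt{p^*}^{n+s}\neq 0$, Lemma \ref{Walshsupport} gives closure under multiplication by $\F_p^\star$, and closure under addition is the same structural property already invoked for the subcodes in Corollaries \ref{Corollary3}--\ref{Corollary5} (and is classical for the quadratic members of $WRP$).

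For the weights and multiplicities I would just restrict the computation already performed for Theorem \ref{Theorem2}. For $\beta\in\F_q^\star$ one has $\wt(c_\beta)=\#\bar{D}_f-\bar{\sN}_{f,\beta}$ with $\bar{\sN}_{f,\beta}=(\sN_{f,\beta}-1)/(p-1)$, and $\sN_{f,\beta}$ is supplied by Lemma \ref{Lemma11} in the $n+s$ odd case; restricting to $0\neq\beta\in\Supp(\Wa f)$ leaves exactly the three values $p^{n-2}$ (when $g(\beta)=0$), $p^{n-2}-\epsilon\sqrt{p^*}^{n+s-3}$ (when $g(\beta)\in SQ$) and $p^{n-2}+\epsilon\sqrt{p^*}^{n+s-3}$ (when $g(\beta)\in NSQ$). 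The number of $\beta\in\Supp(\Wa f)$ in each class is read off from Lemma \ref{Lemma8} in the case $n-s$ odd: $\sN_g(0)-1=p^{n-s-1}-1$ with $g(\beta)=0$ (the $-1$ discarding $\beta=0$), $\sum_{j\in SQ}\sN_g(j)=\frac{p-1}{2}\bigl(p^{n-s-1}+\epsilon\eta_0^n(-1)\sqrt{p^*}^{n-s-1}\bigr)$ with $g(\beta)\in SQ$, and $\sum_{j\in NSQ}\sN_g(j)=\frac{p-1}{2}\bigl(p^{n-s-1}-\epsilon\eta_0^n(-1)\sqrt{p^*}^{n-s-1}\bigr)$ with $g(\beta)\in NSQ$; these total $p^{n-s}$, which cross-checks the dimension. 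Taking the smallest weight, and using $|\sqrt{p^*}^{n+s-3}|=p^{(n+s-3)/2}$ with $0\le s\le n-3$ (which keeps the exponent nonnegative and the weight strictly positive), gives the minimum distance $p^{n-2}-p^{(n+s-3)/2}$, and the full table is exactly Table \ref{table2222}. The one step beyond routine bookkeeping is the linearity/dimension claim of the previous paragraph; all of the exponential-sum input is already available from Lemmas \ref{Lemma7}, \ref{SupportLemma}, \ref{Lemma8} and \ref{Lemma11}.
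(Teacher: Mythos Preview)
Your proposal is correct and follows exactly the route the paper intends: the paper gives no separate proof of Corollary~\ref{Corollary6}, merely stating before Corollaries~\ref{Corollary3}--\ref{Corollary6} that ``their parameters are directly derived from that of corresponding codes,'' i.e.\ one restricts $\beta$ to $\Supp(\Wa f)$ in the code of Corollary~\ref{Corollary2}. Your write-up in fact supplies more detail than the paper (the explicit weight computation via Lemmas~\ref{Lemma7}, \ref{Lemma11}, \ref{Lemma8} and the cardinality check against $p^{n-s}$), and you are right to flag the additive closure of $\Supp(\Wa f)$ as the one point not actually proved in the paper; the paper tacitly assumes it here just as in Corollaries~\ref{Corollary3}--\ref{Corollary5}.
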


  \begin{remark}
When we  assume only the quadratic bent-ness    (resp., the weakly regular bent-ness) in this subsection, we can obviously recover the linear codes obtained by Zhou et al.   \cite{zhou2016linear} (resp.,  
 by Tang et al.   \cite{tang2016linear}).  Therefore, this subsection can be viewed as an extension of \cite{zhou2016linear} and   \cite{tang2016linear} to the weakly regular  plateaued unbalanced functions.
\end{remark}
The following section, to construct new linear codes,  pushes further the use of weakly regular plateaued functions in the construction methods proposed by Tang et al.  \cite{tang2016linear}.  
\subsection{Two or  three weight linear codes with their weight distributions}

Let $f:\F_q\to\F_p$ be a $p$-ary function. Define the sets
\be\nn
\begin{array}{ll}
D_{f,sq}=\{ x\in\F_q: f(x)\in SQ\} \mbox{ and } D_{f,nsq}=\{ x\in\F_q: f(x)\in NSQ\}.
\end{array}
\ee
With the similar definition of the linear code $\sC_{D_f}$ defined by (\ref{LinearCodes}), we can define a linear code    involving $D_{f,sq}=\{d'_1, d'_2, \ldots, d'_m\}$ 
\be\label{LinearCodeSQ}
\sC_{D_{f,sq}}=\{c_\beta= (\Tr{^n}(\beta d'_1), \Tr{^n}(\beta d'_2), \ldots, \Tr{^n}(\beta d'_m)) : \beta \in\mathbb{F}_q\}
\ee
and 
a  linear code involving $D_{f,nsq}=\{d''_1, d''_2, \ldots, d''_m\}$
\be\label{LinearCodeNSQ}
\sC_{D_{f,nsq}}=\{c_\beta= (\Tr{^n}(\beta d''_1), \Tr{^n}(\beta d''_2), \ldots, \Tr{^n}(\beta d''_m)) : \beta \in\mathbb{F}_q\}.
\ee
%We first find $m=\# D_f$ and next compute $\wt(c_\beta)$ for all $\beta \in\F_q$.
From  Lemmas \ref{Lemma7} and \ref{Lemma14}, we find  the Hamming weights of the codewords of the linear codes $\sC_{D_{f,sq}}$ and $\sC_{D_{f,nsq}}$ as well as their length, and we  determine their weight distributions from Lemmas \ref{SupportLemma} and \ref{Lemma8}.

\begin{theorem}\label{Theorem3}
Let $n+s$ be an even integer and   $f\in WRP$. Then, $\sC_{D_{f,sq}}$ is  the three-weight linear code with parameters 
$\left[\frac{p-1}{2}\left(p^{n-1}-\epsilon \eta_0(-1) \sqrt{p^*}^{n+s-2}\right), n\right]_p$, where  $\epsilon=\pm 1$ is the sign of $\Wa{f}$. The Hamming weights of the codewords and the weight distribution of $\sC_{D_{f,sq}}$  are as in  Table \ref{table3}.
\begin{table}[!htp]
\begin{center}	
\begin{tabular}{|c|c|c|}
\hline
Hamming weight $w$   & Multiplicity $A_w$ \\
\hline
\hline
\footnotesize{$0$} & \footnotesize{$1$}\\
\hline
\footnotesize{$\frac{(p-1)^2}{2}\left(p^{n-2}-\epsilon  \sqrt{p^*}^{n+s-4}\right)$} & \footnotesize{$p^n-p^{n-s}$}\\
\hline
\footnotesize{$  \frac{(p-1)^2}{2}p^{n-2}$} 
& \footnotesize{$ p^{n-s-1} +\frac{p-1}{2}\left(p^{n-s-1}+\epsilon \eta_0^{n+1}(-1)\sqrt{p^*}^{n-s-2}\right)-1 $}\\
\hline
\footnotesize{$(p-1)\left(\frac{p-1}{2}p^{n-2}- \epsilon \eta_0(-1) \sqrt{p^*}^{n+s-2}\right)$}
 & \footnotesize{$\frac{p-1}{2}\left(p^{n-s-1} -\epsilon \eta_0^{n+1}(-1)\sqrt{p^*}^{n-s-2}\right)$}\\
\hline
\end{tabular}\end{center}	
\caption{
\label{table3}  The weight distribution of   $\sC_{D_{f,sq}}$ when $n+s$ is even}
\end{table}
\end{theorem}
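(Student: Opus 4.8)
The plan is to follow the proof of Theorem \ref{Theorem1} almost verbatim, replacing the defining set $D_f$ by $D_{f,sq}$ and Lemma \ref{Lemma11} by Lemma \ref{Lemma14}. First I would pin down the length. Since $f(0)=0\notin SQ$, the element $0$ does not lie in $D_{f,sq}$, hence $m=\#D_{f,sq}=\sum_{j\in SQ}\sN_f(j)$; because $n+s$ is even, Lemma \ref{Lemma7} gives $\sN_f(j)=p^{n-1}-\epsilon\eta_0(-1)\sqrt{p^*}^{n+s-2}$ for every $j\in\F_p^\star$, and multiplying by $\#SQ=\frac{p-1}{2}$ yields the asserted length $m=\frac{p-1}{2}\bigl(p^{n-1}-\epsilon\eta_0(-1)\sqrt{p^*}^{n+s-2}\bigr)$.

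Next I would compute the Hamming weights. For $\beta\in\F_q^\star$ the number of vanishing coordinates of $c_\beta$ is $\#\{x\in D_{f,sq}:\Tr^n(\beta x)=0\}=\sN_{sq,\beta}$, so $\wt(c_\beta)=m-\sN_{sq,\beta}$. Substituting the three even-case values of $\sN_{sq,\beta}$ from Lemma \ref{Lemma14} and simplifying with $\sqrt{p^*}^{2}=p^*=\eta_0(-1)p$ (hence $\sqrt{p^*}^{n+s-2}=\eta_0(-1)p\,\sqrt{p^*}^{n+s-4}$) one obtains precisely the three weights of Table \ref{table3}: $\frac{(p-1)^2}{2}\bigl(p^{n-2}-\epsilon\sqrt{p^*}^{n+s-4}\bigr)$ for $\beta\in\F_q^\star\setminus\Supp(\Wa f)$; $\frac{(p-1)^2}{2}p^{n-2}$ for $0\neq\beta\in\Supp(\Wa f)$ with $g(\beta)=0$ or $g(\beta)\in NSQ$; and $(p-1)\bigl(\frac{p-1}{2}p^{n-2}-\epsilon\eta_0(-1)\sqrt{p^*}^{n+s-2}\bigr)$ for $\beta\in\Supp(\Wa f)$ with $g(\beta)\in SQ$. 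Since all three weights are nonzero (imposing the appropriate bound on $s$, exactly as in the Remark after Theorem \ref{Theorem1}), the map $\beta\mapsto c_\beta$ is injective and $\dim\sC_{D_{f,sq}}=n$.

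Then I would count multiplicities. The zero codeword arises only from $\beta=0$, so $A_0=1$. By Lemma \ref{SupportLemma}, $\#\bigl(\F_q^\star\setminus\Supp(\Wa f)\bigr)=p^n-p^{n-s}$ (note $0\in\Supp(\Wa f)$ because $\Wa f(0)=\epsilon\sqrt{p^*}^{n+s}\neq0$), which is the multiplicity of the first nonzero weight. Splitting the nonzero elements of $\Supp(\Wa f)$ according to $g(\beta)=0$, $g(\beta)\in SQ$, $g(\beta)\in NSQ$, Lemma \ref{Lemma8} together with Proposition \ref{Proposition4} (which gives $g(0)=0$, so one must subtract $1$ in the $g(\beta)=0$ tally) yields the counts $\sN_g(0)-1$, $\sum_{j\in SQ}\sN_g(j)=\frac{p-1}{2}\bigl(p^{n-s-1}-\epsilon\eta_0^{n+1}(-1)\sqrt{p^*}^{n-s-2}\bigr)$, and the same value for $\sum_{j\in NSQ}\sN_g(j)$. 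Pooling the $g(\beta)=0$ and $g(\beta)\in NSQ$ contributions (both carry weight $\frac{(p-1)^2}{2}p^{n-2}$) produces the second line of Table \ref{table3}, and the $g(\beta)\in SQ$ contribution the third; a final sanity check is that the three nonzero multiplicities add up to $p^n-1$.

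I expect the only real obstacle to be the algebraic bookkeeping in the second step: one has to verify carefully that the three nominal expressions for $\sN_{sq,\beta}$ in Lemma \ref{Lemma14} collapse to exactly three distinct weights, and in particular that the $g(\beta)=0$ and $g(\beta)\in NSQ$ branches coincide so that their codewords can legitimately be merged into one weight class. Once that is settled, the remaining steps are routine substitutions and counting, mirroring the even case of Theorem \ref{Theorem1}.
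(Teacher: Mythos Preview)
Your proposal is correct and follows essentially the same approach as the paper's own proof: compute the length via Lemma \ref{Lemma7}, the weights via $\wt(c_\beta)=\#D_{f,sq}-\sN_{sq,\beta}$ and Lemma \ref{Lemma14}, and the multiplicities via Lemmas \ref{SupportLemma} and \ref{Lemma8}. You are in fact more explicit than the paper about why the $g(\beta)=0$ and $g(\beta)\in NSQ$ branches merge and about the role of Proposition \ref{Proposition4} in the $\sN_g(0)-1$ count, but the underlying argument is identical.
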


\begin{proof}
We have  $\# D_{f,sq}=\frac{p-1}{2}(p^{n-1}-\epsilon \eta_0(-1) \sqrt{p^*}^{n+s-2})$ by Lemma \ref{Lemma7}  and   $\wt(c_\beta)=\# D_{f,sq}- \sN_{sq,\beta}$ for every $\beta\in\F_q^{\star}$ by Lemma \ref{Lemma14}. Then,  for every $\beta\in\F_q^{\star}\setminus \Supp(\Wa f),$  
\be\nn\begin{array}{ll}
\wt(c_\beta)&=\frac{(p-1)^2}{2}\left(p^{n-2}-\epsilon  \sqrt{p^*}^{n+s-4}\right),
\end{array}
\ee
and  the number of such codewords $c_\beta$ is equal to $p^n-p^{n-s}$ by Lemma \ref{SupportLemma}.
 For every  $0\neq  \beta\in \Supp(\Wa f),$ 
\be\nn\begin{array}{ll}
\wt(c_\beta)&=\left\{\begin{array}{ll}
\frac{(p-1)^2}{2}p^{n-2},  & \mbox{ if } g(\beta)=0 \mbox{ or } g(\beta) \in NSQ, \\
\frac{(p-1)^2}{2}p^{n-2} - \epsilon \eta_0(-1)(p-1) \sqrt{p^*}^{n+s-2},& \mbox{ if } g(\beta) \in SQ,
 \end{array}\right.
\end{array}
\ee
and  the number of such codewords $c_\beta$  follows from Lemma \ref{Lemma8}. Hence the proof is ended. 
\end{proof}
\begin{remark}
In Theorem \ref{Theorem3}, if $\epsilon \eta^{(n+s)/2}_0(-1)=1$ and $p=3$, then we  have the  condition $0\leq s\leq n-4$; otherwise, $0\leq s\leq n-2$ and $n\geq 3$.
\end{remark}

 \begin{example}\label{ExampleSQ} The function $f:\F_{3^5}\to \F_{3}$ defined as $f(x)=\Tr^5( \zeta^{19}x^{4}  + \zeta^{238} x^2)$,  where $\F_{3^5}^{\star}=\langle \zeta \rangle$ with $\zeta^5 + 2\zeta  +1=0$,   is the quadratic $1$-plateaued unbalanced  function in the set  WRP with
 $$\Wa {f}(\beta)\in\{0,\epsilon \eta^3_0(-1)3^3\xi_3^{g(\beta)}\}=\{0,-27,-27\xi_3,-27\xi_3^2\}$$
for all $\beta\in \F_{3^5}$, where $\epsilon=1$,   $\eta_0(-1)=-1$ and $g$ is an unbalanced $3$-ary function with $g(0)=0$. 
 Then,   $\sC_{D_{f,sq}}$  is the three-weight   linear code with parameters $[90,5,54
]_3$,  weight enumerator   $1+ 50y^{54} + 162y^{60} +30y^{72}$ and  weight distribution $(1,50,162,30)$,  which is verified by MAGMA in \cite{bosma1997magma}. This code is minimal by Lemma \ref{Minimality}. 
\end{example}

%We now  give the   linear code constructed from  the weakly regular   $2$-plateaued function for  $p=3$ and $n=4$.
% \begin{example}\label{ExampleSQ} The function $f\in WRP$ given in Example \ref{Example1}  is the weakly regular  $2$-plateaued unbalanced quadratic  function with
% $\Wa {f}(\beta)\in\{0,\epsilon \eta^3_0(-1)3^3\xi_3^{g(\beta)}\}$
%for all $\beta\in \F_{3^4}$, where $\epsilon=1$,   $\eta_0(-1)=-1$ and $g$ is an unbalanced $3$-ary function with $g(0)=0$, $g(\zeta^{2})=g(\zeta^{33})=g(\zeta^{42})=g(\zeta^{73})=1$ and $g(\zeta^{5})=g(\zeta^{6})=g(\zeta^{45})=g(\zeta^{46})=2$.  Then,   $\sC_{D_{f,sq}}$  is the three-weight   linear $3$-ary code with parameters $[36,4,18
%]_3$,  weight enumerator   $1+ 4y^{18} + 72y^{24} +4y^{36}$ and  weight distribution $(1,4,72,4)$, which is verified by MAGMA in \cite{bosma1997magma}.
Recall that we have the following fact:
\be\nn
\eta_0(-1)= \left\{ \begin{array}{ll}
\,\,\,\, 1 & \mbox{ if and only if } \, \, p \equiv 1\pmod 4,\\
-1 &  \mbox{ if and only if }  \,  p \equiv 3\pmod 4,
\end{array} \right.
\ee
which will be needed   %to find the Hamming weights of the codewords of a linear  code  
in Theorems \ref{Theorem4} and \ref{Theorem5}.

\begin{theorem}\label{Theorem4}
Let  $n+s$ be an odd  integer  and $f\in WRP$. Then, $\sC_{D_{f,sq}}$ is the three-weight linear code with parameters 
$\left[\frac{p-1}{2}\left(p^{n-1}+\epsilon \sqrt{p^*}^{n+s-1}\right),n\right]_p$, where  $\epsilon=\pm 1$ is the sign of $\Wa{f}$. The Hamming weights of the codewords and the weight distribution of $\sC_{D_{f,sq}}$ are as in  Table \ref{table5} and Table \ref{table5.} when $p \equiv 1\pmod 4$  and $p \equiv 3\pmod 4$, respectively.
\begin{table}[!htp]
\begin{center}	
\begin{tabular}{|c|c|c|}
\hline
Hamming weight $w$   & Multiplicity $A_w$ \\
\hline
\hline
\footnotesize{$0$} & \footnotesize{$1$}\\
\hline
%\footnotesize{$\frac{(p-1)^2}{2}\left(p^{n-2}+\epsilon \eta_0(-1)\sqrt{p^*}^{n+s-3}\right)$} & \footnotesize{$p^n-p^{n-s}$}\\ \hline
\footnotesize{$  \frac{(p-1)^2}{2}p^{n-2}$} & \footnotesize{$p^{n-s-1}-1   $}\\
\hline
\footnotesize{$ \frac{p-1}{2}\left((p-1)p^{n-2}+\epsilon (p+1) \sqrt{p}^{n+s-3}\right) $} & \footnotesize{$ \frac{p-1}{2}\left( p^{n-s-1} + \epsilon     \sqrt{p}^{n-s-1}\right) $}\\
\hline
\footnotesize{ $\frac{(p-1)^2}{2}\left(p^{n-2}+\epsilon \sqrt{p}^{n+s-3}\right) $}& \footnotesize{$p^n-p^{n-s}+\frac{p-1}{2}\left( p^{n-s-1} - \epsilon   \sqrt{p}^{n-s-1}\right)  $}\\
\hline
\end{tabular}\end{center}	
\caption{
\label{table5}  The weight distribution of  $\sC_{D_{f,sq}}$  when $p \equiv 1\pmod 4$ and $n+s$ is odd}
\begin{center}	
\begin{tabular}{|c|c|c|}
\hline
Hamming weight $w$   & Multiplicity $A_w$ \\
\hline
\hline
\footnotesize{$0$} & \footnotesize{$1$}\\
\hline
%\footnotesize{$\frac{(p-1)^2}{2}\left(p^{n-2}+\epsilon \eta_0(-1)\sqrt{p^*}^{n+s-3}\right)$} & \footnotesize{$p^n-p^{n-s}$}\\ \hline
\footnotesize{$  \frac{(p-1)^2}{2}p^{n-2}$} & \footnotesize{$p^{n-s-1}-1   $}\\
\hline
\footnotesize{$ \frac{(p-1)^2}{2}\left( p^{n-2}-\epsilon   \sqrt{p^*}^{n+s-3}\right) $} & \footnotesize{$p^n-p^{n-s}+ \frac{p-1}{2}\left( p^{n-s-1} + \epsilon (-1)^n    \sqrt{p^*}^{n-s-1}\right) $}\\
\hline
\footnotesize{ $\frac{p-1}{2}\left((p-1)p^{n-2}-\epsilon (p+1)\sqrt{p^*}^{n+s-3}\right) $}& \footnotesize{$\frac{p-1}{2}\left( p^{n-s-1} - \epsilon  (-1)^n   \sqrt{p^*}^{n-s-1}\right)  $}\\
\hline
\end{tabular}\end{center}	
\caption{
\label{table5.}  The weight distribution of  $\sC_{D_{f,sq}}$  when $p \equiv 3\pmod 4$ and $n+s$ is odd}
\end{table}

\end{theorem}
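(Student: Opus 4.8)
The plan is to run the same argument as in the proof of Theorem~\ref{Theorem3}, but using the $n+s$ odd versions of Lemmas~\ref{Lemma7}, \ref{Lemma8} and~\ref{Lemma14}, and then to split into the cases $p\equiv 1\pmod 4$ and $p\equiv 3\pmod 4$ only at the very end. First I would observe that, for $f\in WRP$, property $P1$ gives $f(0)=0$ and Proposition~\ref{Proposition4} gives $g(0)=0$; since $f$ is unbalanced, $0\in\Supp(\Wa f)$, so Lemma~\ref{WalshFact} yields $\Wa f(0)=\epsilon\sqrt{p^*}^{\,n+s}$. Hence the hypothesis of Lemma~\ref{Lemma7} is met, and in the case $n+s$ odd it produces the length $\#D_{f,sq}=\sum_{j\in SQ}\sN_f(j)=\frac{p-1}{2}\bigl(p^{n-1}+\epsilon\sqrt{p^*}^{\,n+s-1}\bigr)$, which is exactly the claimed one.

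Since $0\notin D_{f,sq}$, for every $\beta\in\F_q^\star$ the zero coordinates of $c_\beta$ are precisely the $x\in D_{f,sq}$ with $\Tr^n(\beta x)=0$, so $\wt(c_\beta)=\#D_{f,sq}-\sN_{sq,\beta}$. I would then substitute the $n+s$ odd values of $\sN_{sq,\beta}$ from Lemma~\ref{Lemma14} in the four relevant cases: $\beta\in\F_q^\star\setminus\Supp(\Wa f)$, and $0\neq\beta\in\Supp(\Wa f)$ with $g(\beta)=0$, $g(\beta)\in SQ$, $g(\beta)\in NSQ$. Using the identity $\sqrt{p^*}^{\,n+s-1}=p^*\cdot\sqrt{p^*}^{\,n+s-3}=\eta_0(-1)p\cdot\sqrt{p^*}^{\,n+s-3}$ (valid because $n+s-1$ is even), each of the four differences collapses to a single closed form: $\frac{(p-1)^2}{2}p^{n-2}$ when $g(\beta)=0$, and expressions of the shape $\frac{p-1}{2}\bigl((p-1)p^{n-2}+\epsilon(\eta_0(-1)p\pm1)\sqrt{p^*}^{\,n+s-3}\bigr)$ in the other three cases. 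Whether the factor $\eta_0(-1)p\pm1$ equals $\pm(p-1)$ or $\pm(p+1)$ is exactly what distinguishes the two tables.

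For the multiplicities I would invoke Lemma~\ref{SupportLemma} ($\#\Supp(\Wa f)=p^{n-s}$, hence $p^{n}-p^{n-s}$ values $\beta$ outside the support, all nonzero since $0\in\Supp(\Wa f)$) together with Lemma~\ref{Lemma8} in the case $n-s$ odd, which applies since $n+s$ odd forces $n-s$ odd: it gives $\sN_g(0)=p^{n-s-1}$, so there are $p^{n-s-1}-1$ nonzero $\beta$ with $g(\beta)=0$, and $\sum_{j\in SQ}\sN_g(j)=\frac{p-1}{2}\bigl(p^{n-s-1}+\epsilon\eta_0^n(-1)\sqrt{p^*}^{\,n-s-1}\bigr)$, with the complementary count for $g(\beta)\in NSQ$. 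The step I expect to be the main obstacle is the bookkeeping at this point: after $p^*$ is replaced by $\eta_0(-1)p$, exactly two of the four weights coincide — for $p\equiv 1\pmod 4$ the weight coming from $\beta\notin\Supp(\Wa f)$ merges with the one from $g(\beta)\in NSQ$, while for $p\equiv 3\pmod 4$ it merges with the one from $g(\beta)\in SQ$ — so the corresponding multiplicities must be summed, which is precisely why Theorem~\ref{Theorem4} carries the two separate Tables~\ref{table5} and~\ref{table5.}. A final verification that the four multiplicities add up to $p^n$ and that the three surviving weights are pairwise distinct (under the appropriate hypothesis on $s$) shows that $\beta\mapsto c_\beta$ is injective, hence $\sC_{D_{f,sq}}$ has dimension $n$ and the stated weight distribution, which finishes the proof.
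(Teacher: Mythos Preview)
Your proposal is correct and follows exactly the approach the paper intends: the paper's own proof of Theorem~\ref{Theorem4} consists of the single sentence ``The proof can be completed in a similar way to the even case in Theorem~\ref{Theorem3},'' and what you have written is precisely that computation spelled out, using the odd-case formulas from Lemmas~\ref{Lemma7}, \ref{Lemma8} and~\ref{Lemma14} together with Lemma~\ref{SupportLemma}. Your identification of which pair of weights merges in each residue class of $p\bmod 4$ is the one piece of bookkeeping the paper leaves implicit, and you have it right.
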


\begin{proof} The proof can be completed in a similar way to the even case in  Theorem \ref{Theorem3}. 
%Clearly, we get $\# D_{f,sq}=\frac{p-1}{2}(p^{n-1}+\epsilon \sqrt{p^*}^{n+s-1})$  by Lemma \ref{Lemma7} and 
%$\wt(c_\beta)=\#  D_{f,sq}- \sN_{sq,\beta}$ for every $\beta\in\F_q^{\star}$  by Lemma \ref{Lemma14}. % Hence, the proof is complete as in the case of Theorem \ref{Theorem3}.
%Then,  for every $\beta\in\F_q^{\star}\setminus \Supp(\Wa f),$  
%\be\nn\begin{array}{ll}
%\wt(c_\beta)&=\frac{(p-1)^2}{2}\left(p^{n-2}+\epsilon \eta_0(-1)\sqrt{p^*}^{n+s-3}\right),
%\end{array}
%\ee
%and  the number of such codewords $c_\beta$ is equal to $p^n-p^{n-s}$ by Lemma \ref{SupportLemma}.
% For every  $0\neq  \beta\in \Supp(\Wa f),$ we have
%\be\nn\begin{array}{ll}
%\wt(c_\beta)&=\left\{\begin{array}{ll}
%\frac{(p-1)^2}{2}p^{n-2},  & \mbox{ if } g(\beta)=0, \\
%\frac{p-1}{2}\left((p-1)p^{n-2}+\epsilon \sqrt{p^*}^{n+s-3}( p^*+1)\right),& \mbox{ if } g(\beta) \in SQ,\\
%\frac{p-1}{2}\left((p-1)p^{n-2}+\epsilon \sqrt{p^*}^{n+s-3}(p^*-1)\right),  & \mbox{ if }  g(\beta) \in NSQ,
% \end{array}\right.
% \end{array}
%\ee and  the number of such codewords $c_\beta$  follows from Lemma \ref{Lemma8}. Hence the proof is ended.  
\end{proof}
\begin{remark}
In Theorem \ref{Theorem4},  if  $p \equiv 3\pmod 4$ and $\epsilon \eta^{(n+s-1)/2}_0(-1)=-1$ or   $p \equiv 1\pmod 4$ and $\epsilon=-1$, then we  have the  condition $0\leq s\leq n-3$; otherwise, $0\leq s\leq n-1$ and $n\geq 2$.
\end{remark}

\begin{theorem}\label{Theorem5}
Let   $n+s$ be an odd  integer and $f\in WRP$. Then, $\sC_{D_{f,nsq}}$ is the   three-weight linear code with parameters 
$\left[\frac{p-1}{2}\left(p^{n-1}-\epsilon \sqrt{p^*}^{n+s-1}\right),n\right]_p$, where  $\epsilon=\pm 1$ is the sign of $\Wa{f}$. The Hamming weights of the codewords and the weight distribution of $\sC_{D_{f,nsq}}$ are as in  Table \ref{table4} and Table \ref{table4.} when $p \equiv 1\pmod 4$  and $p \equiv 3\pmod 4$, respectively.
\begin{table}[!htp]
\begin{center}	
\begin{tabular}{|c|c|c|}
\hline
Hamming weight $w$   & Multiplicity $A_w$ \\
\hline
\hline
\footnotesize{$0$} & \footnotesize{$1$}\\
\hline
%\footnotesize{$\frac{(p-1)^2}{2}\left(p^{n-2}-\epsilon \eta_0(-1)\sqrt{p^*}^{n+s-3}\right) $} & \footnotesize{$p^n-p^{n-s}$}\\ \hline
\footnotesize{$\frac{(p-1)^2}{2}p^{n-2}  $} & \footnotesize{$p^{n-s-1}-1   $}\\
\hline
\footnotesize{$ \frac{(p-1)^2}{2}\left( p^{n-2}-\epsilon \sqrt{p}^{n+s-3}\right) $} & \footnotesize{$p^n-p^{n-s}+\frac{p-1}{2}\left(p^{n-s-1} +\epsilon    \sqrt{p}^{n-s-1} \right)$}\\
\hline
\footnotesize{ $ \frac{p-1}{2}\left((p-1)p^{n-2}-\epsilon(p+1) \sqrt{p}^{n+s-3}\right)$}& \footnotesize{$\frac{p-1}{2}\left( p^{n-s-1} - \epsilon     \sqrt{p}^{n-s-1}\right)  $}\\
\hline
\end{tabular}\end{center}	
\caption{
\label{table4}  The weight distribution of  $\sC_{D_{f,nsq}}$  when $p \equiv 1\pmod 4$ and $n+s$ is odd}

\begin{center}	
\begin{tabular}{|c|c|c|}
\hline
Hamming weight $w$   & Multiplicity $A_w$ \\
\hline
\hline
\footnotesize{$0$} & \footnotesize{$1$}\\
\hline
%\footnotesize{$\frac{(p-1)^2}{2}\left(p^{n-2}-\epsilon \eta_0(-1)\sqrt{p^*}^{n+s-3}\right) $} & \footnotesize{$p^n-p^{n-s}$}\\\hline
\footnotesize{$\frac{(p-1)^2}{2}p^{n-2}  $} & \footnotesize{$p^{n-s-1}-1   $}\\
\hline
\footnotesize{$ \frac{p-1}{2}\left((p-1)p^{n-2}+\epsilon(p+1) \sqrt{p^*}^{n+s-3}\right) $} & \footnotesize{$\frac{p-1}{2}\left(p^{n-s-1} +\epsilon   (-1)^n    \sqrt{p^*}^{n-s-1} \right)$}\\
\hline
\footnotesize{ $ \frac{(p-1)^2}{2}\left(p^{n-2}+\epsilon  \sqrt{p^*}^{n+s-3}\right)$}& \footnotesize{$p^n-p^{n-s}+\frac{p-1}{2}\left( p^{n-s-1} - \epsilon  (-1)^n \sqrt{p^*}^{n-s-1}\right)  $}\\
\hline
\end{tabular}\end{center}	
\caption{
\label{table4.}  The weight distribution of  $\sC_{D_{f,nsq}}$  when $p \equiv 3\pmod 4$ and $n+s$ is odd}
\end{table}
\end{theorem}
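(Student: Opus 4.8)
The plan is to imitate the proofs of Theorems \ref{Theorem3} and \ref{Theorem4}, since $\sC_{D_{f,nsq}}$ is constructed exactly as $\sC_{D_{f,sq}}$ with the roles of $SQ$ and $NSQ$ interchanged. First I would record the length. Because $f\in WRP$ is unbalanced we have $g(0)=0$ by Proposition \ref{Proposition4}, hence $\Wa f(0)=\epsilon\sqrt{p^*}^{n+s}$, so Lemma \ref{Lemma7} (odd branch) applies and gives $m=\#D_{f,nsq}=\sN_f(NSQ)=\frac{p-1}{2}\left(p^{n-1}-\epsilon\sqrt{p^*}^{n+s-1}\right)$, which is the announced length. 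The dimension is $n$ once one checks that the defining set spans $\F_q$ over $\F_p$, equivalently that $c_\beta\ne 0$ for every $\beta\in\F_q^\star$; this follows from the weight formulas below being strictly positive, which is exactly what the constraints on $s$ in the remark after the theorem guarantee.

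Next, for $\beta\in\F_q^\star$ one has $\wt(c_\beta)=\#D_{f,nsq}-\sN_{nsq,\beta}$, so I would substitute the odd-$n+s$ evaluation of $\sN_{nsq,\beta}$ from Lemma \ref{Lemma14}: one value for $\beta\in\F_q^\star\setminus\Supp(\Wa f)$, and the three subcases $g(\beta)=0$, $g(\beta)\in SQ$, $g(\beta)\in NSQ$ for $0\ne\beta\in\Supp(\Wa f)$. Using the identity $\sqrt{p^*}^{n+s-1}=p^*\sqrt{p^*}^{n+s-3}=\eta_0(-1)\,p\,\sqrt{p^*}^{n+s-3}$, each of these differences collapses to one of only two nonzero values, and moreover the value coming from $\beta\notin\Supp(\Wa f)$ coincides with the one coming from one of the support subcases, which is why exactly three distinct nonzero weights appear. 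The multiplicities are then obtained by counting: Lemma \ref{SupportLemma} says there are $p^n-p^{n-s}$ elements outside $\Supp(\Wa f)$ and $p^{n-s}$ inside, while Lemma \ref{Lemma8} (odd-$n-s$ branch, legitimate since $n-s$ and $n+s$ have the same parity) gives the numbers of $\beta\in\Supp(\Wa f)$ with $g(\beta)=0$, $g(\beta)\in SQ$, $g(\beta)\in NSQ$; after removing $\beta=0$ (the zero codeword) and merging the contributions that give the same weight, one reads off Tables \ref{table4} and \ref{table4.}.

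The only genuinely delicate point is the case split $p\equiv 1\pmod 4$ versus $p\equiv 3\pmod 4$. When $p\equiv 3\pmod 4$ we have $\sqrt{p^*}=i\sqrt{p}$ and $\eta_0(-1)=-1$, so the odd powers $\sqrt{p^*}^{n+s-3}$, $\sqrt{p^*}^{n-s-1}$ carry factors of $i$ together with a sign depending on the parity of $n$, and the factor $\eta_0^n(-1)=(-1)^n$ in the multiplicities must be tracked through them; this is precisely what distinguishes Table \ref{table4} from Table \ref{table4.}. Everything else is the routine algebra already performed for $\sC_{D_{f,sq}}$, so in the paper the argument can simply be stated as: \emph{the proof is completed in the same way as Theorem \ref{Theorem4}, interchanging $SQ$ and $NSQ$ and invoking the corresponding cases of Lemma \ref{Lemma14}.}
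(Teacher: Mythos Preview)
Your approach is correct and is precisely the paper's own: compute $\#D_{f,nsq}$ from Lemma \ref{Lemma7}, evaluate $\wt(c_\beta)=\#D_{f,nsq}-\sN_{nsq,\beta}$ case by case via Lemma \ref{Lemma14}, and read off the multiplicities from Lemmas \ref{SupportLemma} and \ref{Lemma8}, splitting on $p\bmod 4$ only at the end to render the expressions real. (The phrase ``collapses to one of only two nonzero values'' is a slip---the four cases yield three distinct weights, as you yourself say immediately afterward---but the method is sound.)
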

\begin{proof}
Obviously, 
 we get 
 $\# D_{f,nsq}=\frac{p-1}{2}(p^{n-1}-\epsilon \sqrt{p^*}^{n+s-1})$ by Lemma \ref{Lemma7}
and $\wt(c_\beta)=\#  D_{f,nsq}- \sN_{nsq,\beta}$ for every $\beta \in\F_q^{\star}$ by Lemma \ref{Lemma14}. Then,  for every $\beta\in\F_q^{\star}\setminus \Supp(\Wa f),$   
\be\nn\begin{array}{ll}
\wt(c_\beta)&=\frac{(p-1)^2}{2}\left(p^{n-2}-\epsilon \eta_0(-1)\sqrt{p^*}^{n+s-3}\right)
\end{array}
\ee
and  the number of such codewords $c_\beta$ is equal to $p^n-p^{n-s}$ by Lemma \ref{SupportLemma}.
 For every  $0\neq  \beta\in \Supp(\Wa f),$
\be\nn\begin{array}{ll}
\wt(c_\beta)&=\left\{\begin{array}{ll}
\frac{p-1}{2}(p-1)p^{n-2},  & \mbox{ if } g(\beta)=0, \\
\frac{p-1}{2}\left((p-1)p^{n-2}+\epsilon \sqrt{p^*}^{n+s-3}(1-p^*)\right),& \mbox{ if } g(\beta) \in SQ,\\
\frac{p-1}{2}\left((p-1)p^{n-2}-\epsilon \sqrt{p^*}^{n+s-3}(p^*+1)\right),  & \mbox{ if }  g(\beta) \in NSQ,
 \end{array}\right.
 \end{array}
\ee and  the number of such codewords $c_\beta$  follows from Lemma \ref{Lemma8}.  Hence the proof is ended.  
\end{proof}
\begin{remark}
In Theorem \ref{Theorem5}, if   $p \equiv 3\pmod 4$ and $\epsilon \eta^{(n+s-1)/2}_0(-1)=1$ or  $p \equiv 1\pmod 4$ and $\epsilon=1$, then we have   the  condition $0\leq s\leq n-3$; otherwise, $0\leq s\leq n-1$ and $n\geq 2$.
\end{remark}
 \begin{example}\label{ExampleNSQ} The function $f:\F_{3^6}\to \F_{3}$ defined as $f(x)=\Tr^6(  \zeta x^{4}  + \zeta^{27} x^2)$,  where $\F_{3^6}^{\star}=\langle \zeta \rangle$ with $\zeta^6 + 2\zeta^4  +\zeta^2+2\zeta +2=0$,   is the  quadratic $1$-plateaued unbalanced  function in the set  WRP with
 $$\Wa {f}(\beta)\in %\{0,\epsilon \eta^{\frac{7}{2}}_0(-1)3^{\frac{7}{2}}\xi_3^{g(\beta)}\}=
\{0,i27\sqrt{3},i27\sqrt{3}\xi_3,i27\sqrt{3}\xi_3^2\}
=\{0,54\xi_3+27,-27\xi_3-54,-27\xi_3+27\}$$
for all $\beta\in \F_{3^6}$, where $\epsilon=-1$,   $\eta_0(-1)=-1$ and $g$ is an unbalanced $3$-ary function with $g(0)=0$. 
 Then,   $\sC_{D_{f,nsq}}$  is the three-weight  linear code with parameters $[216,6,126
]_3$,  weight enumerator   $1+ 72y^{126} + 576y^{144} +80y^{162}$ and  weight distribution $(1,72,576,80)$,  which is verified by MAGMA in \cite{bosma1997magma}. This code is minimal by Lemma \ref{Minimality}.  %Notice that   all    nonzero codewords of $\sC_{D_{f,nsq}}$  are minimal  by Lemma \ref{Minimality}.
\end{example}
\begin{remark} Let $n+s$ be an even and $f\in WRP$. Then, $\sC_{D_{f,nsq}}$ is  the three-weight linear code with the same parameters and weight distribution of $\sC_{D_{f,sq}}$ in Theorem \ref{Theorem3}.
\end{remark}

We now  obtain a shorter linear code from the above each constructed code. %  since the Hamming weights of their all nonzero codewords  have a common divisor $p-1$. 
%This says that  the code $\sC_{D_f}$ can  be punctured into a shorter one whose weight distribution is derived from that of $\sC_{D_f}$. Let $f\in WRP$. 
Let $f\in WRP$. For any $x\in\F_q$,
$f(x)$  is a quadratic residue (resp., quadratic non-residue) in $\F_p^{\star}$ if and only if  $f(ax)$  is a quadratic residue (resp., quadratic non-residue) in $\F_p^{\star}$ for every $a\in\F_p^{\star}$.
Then one can choose a subset $\bar{D}_{f,sq}$ of the defining set  $D_{f,sq}$ of  $\sC_{D_{f,sq}}$ such that % $\bigcup_{a\in \F_p^{\star}}a\bar{D}_{f,sq}$ is   a partition of $D_{f,sq}$. 
\be\nn
D_{f,sq}=\F_p^{\star}\bar{D}_{f,sq}=\{ab :a\in\F_p^{\star}, b\in  \bar{D}_{f,sq}\},
\ee
and  a subset $\bar{D}_{f,nsq}$ of the defining set  $D_{f,nsq}$ of  $\sC_{D_{f,nsq}}$ such that
$
D_{f,nsq}=\{ab :a\in\F_p^{\star}, b\in  \bar{D}_{f,nsq}\}.
$
% $\bigcup_{a\in \F_p^{\star}}a\bar{D}_{f,nsq}$ is   a partition of $D_{f,nsq}$.
 Hence, one can easily obtain the punctured versions $\sC_{\bar{D}_{f,sq}}$  and $\sC_{\bar{D}_{f,nsq}}$ of $\sC_{D_{f,sq}}$ and  $\sC_{D_{f,nsq}}$, respectively, whose parameters are  derived directly from that of the original codes.
% the linear code $\sC_{D_f}$ can be punctured into a shorter linear code $\sC_{\bar{D}_f}$.
Notice that  Corollaries \ref{Corollary7}, \ref{Corollary8}  and  \ref{Corollary9} follow directly from Theorems  \ref{Theorem3}, \ref{Theorem4} and  \ref{Theorem5}, respectively.

\begin{corollary}\label{Corollary7} %Let $n+s$ be even with $0\leq s\leq n-4$ and $f\in WRP$. Then,
The punctured version  $\sC_{\bar{D}_{f,sq}}$  of the    code  $\sC_{D_{f,sq}}$ of Theorem \ref{Theorem3}   is  the three-weight linear code with parameters 
$\left[\frac{1}{2}(p^{n-1}-\epsilon \eta_0(-1) \sqrt{p^*}^{n+s-2}), n\right]_p$ whose weight distribution is listed in  Table \ref{table333}. %, where  $\epsilon=\pm 1$ is the sign of $\Wa{f}$.  
%, where  $\epsilon=\pm 1$ is the sign of $\Wa{f}$. The Hamming weights of the codewords and the weight distribution of $\sC_{\bar{D}_{f,sq}}$  are as in  Table \ref{table333}.
\begin{table} 
\begin{center}	
\begin{tabular}{|c|c|c|}
\hline
Hamming weight $w$   & Multiplicity $A_w$ \\
\hline
\hline
\footnotesize{$0$} & \footnotesize{$1$}\\
\hline
\footnotesize{$\frac{p-1}{2}\left(p^{n-2}-\epsilon  \sqrt{p^*}^{n+s-4}\right)$} & \footnotesize{$p^n-p^{n-s}$}\\
\hline
\footnotesize{$  \frac{(p-1)}{2}p^{n-2}$} 
& \footnotesize{$ p^{n-s-1} +\frac{p-1}{2}\left(p^{n-s-1}+\epsilon \eta_0^{n+1}(-1)\sqrt{p^*}^{n-s-2}\right)-1 $}\\
\hline
\footnotesize{$ \frac{p-1}{2}p^{n-2}- \epsilon \eta_0(-1) \sqrt{p^*}^{n+s-2} $}
 & \footnotesize{$\frac{p-1}{2}\left(p^{n-s-1} -\epsilon \eta_0^{n+1}(-1)\sqrt{p^*}^{n-s-2}\right)$}\\
\hline
\end{tabular}\end{center}	
\caption{
\label{table333}  The weight distribution of   $\sC_{\bar{D}_{f,sq}}$ when $n+s$ is even}
\end{table}
\end{corollary}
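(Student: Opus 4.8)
The plan is to realize $\sC_{\bar{D}_{f,sq}}$ as a puncturing of $\sC_{D_{f,sq}}$, exactly as $\sC_{\bar{D}_f}$ was obtained from $\sC_{D_f}$ in the discussion preceding Corollary \ref{Corollary1}, and then to transfer the weight distribution of Theorem \ref{Theorem3} by dividing every nonzero weight by $p-1$ while keeping the multiplicities unchanged.

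\textbf{Step 1 ($\F_p^{\star}$-invariance of the defining set).} For $f\in WRP$, property $P2)$ gives $f(ax)=a^hf(x)$ with $h$ a positive even integer, so $a^h$ is a square in $\F_p^{\star}$ and therefore $f(x)\in SQ$ if and only if $f(ax)\in SQ$, for every $a\in\F_p^{\star}$ and $x\in\F_q$. Since $f(0)=0\notin SQ$, the set $D_{f,sq}$ contains no $0$ and is a disjoint union of $\F_p^{\star}$-orbits, each of size exactly $p-1$. Choosing one representative from each orbit yields $\bar{D}_{f,sq}$ with $D_{f,sq}=\F_p^{\star}\bar{D}_{f,sq}=\{ab:a\in\F_p^{\star},\,b\in\bar{D}_{f,sq}\}$, whence $\#\bar{D}_{f,sq}=\#D_{f,sq}/(p-1)=\tfrac12\bigl(p^{n-1}-\epsilon\eta_0(-1)\sqrt{p^*}^{n+s-2}\bigr)$ by Theorem \ref{Theorem3}; this is the claimed length.

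\textbf{Step 2 (dimension and weights).} For any $\beta\in\F_q$ and $a\in\F_p^{\star}$ one has $\Tr^n(\beta(ax))=a\Tr^n(\beta x)$, so $\Tr^n(\beta x)=0$ if and only if $\Tr^n(\beta(ax))=0$. Hence the set of coordinates of $c_\beta$ vanishing inside $D_{f,sq}$ is again a union of $\F_p^{\star}$-orbits, and
$$\sN_{sq,\beta}=\#\{x\in D_{f,sq}:\Tr^n(\beta x)=0\}=(p-1)\,\#\{x\in\bar{D}_{f,sq}:\Tr^n(\beta x)=0\}.$$
Consequently the weight of $c_\beta$ in $\sC_{\bar{D}_{f,sq}}$ is $\#\bar{D}_{f,sq}-\#\{x\in\bar{D}_{f,sq}:\Tr^n(\beta x)=0\}=\tfrac{1}{p-1}\bigl(\#D_{f,sq}-\sN_{sq,\beta}\bigr)$, i.e. exactly $\tfrac{1}{p-1}$ times the weight of $c_\beta$ in $\sC_{D_{f,sq}}$. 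The same invariance shows $c_\beta=0$ in $\sC_{\bar{D}_{f,sq}}$ if and only if $\Tr^n(\beta x)=0$ for all $x\in D_{f,sq}$, i.e. if and only if $c_\beta=0$ in $\sC_{D_{f,sq}}$; therefore the evaluation map $\beta\mapsto c_\beta$ has the same kernel on both codes, so $\dim\sC_{\bar{D}_{f,sq}}=\dim\sC_{D_{f,sq}}=n$.

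\textbf{Step 3 (conclusion).} Dividing each of the three nonzero Hamming weights in Table \ref{table3} by $p-1$ yields precisely the three weights in Table \ref{table333}, and since the correspondence $\beta\mapsto c_\beta$ is unchanged, the multiplicities $A_w$ are inherited verbatim from Theorem \ref{Theorem3}. This gives the asserted parameters and weight distribution. There is essentially no obstacle here beyond verifying the two invariances under $x\mapsto ax$ and that the kernel of the evaluation map does not enlarge; everything else is the mechanical division by $p-1$ (and, were one to record the minimum distance explicitly, a comparison of the three weights using the sign conventions in the remark following Theorem \ref{Theorem3} — though the statement asserts only length and dimension).
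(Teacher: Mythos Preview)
Your proof is correct and follows exactly the approach the paper indicates: the paper states (just before Corollary \ref{Corollary7}) that $f(x)\in SQ$ iff $f(ax)\in SQ$ for all $a\in\F_p^{\star}$, chooses $\bar{D}_{f,sq}$ so that $D_{f,sq}=\F_p^{\star}\bar{D}_{f,sq}$, and then declares that the parameters of $\sC_{\bar{D}_{f,sq}}$ ``are derived directly from that of the original codes'' in Theorem \ref{Theorem3}. You have simply spelled out this derivation in full, including the $\F_p^{\star}$-orbit structure, the scaling of weights by $1/(p-1)$, and the preservation of the evaluation kernel (hence of the dimension).
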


 \begin{example} The punctured version $\sC_{\bar{D}_{f,sq}}$ of $\sC_{D_{f,sq}}$  in Example  \ref{ExampleSQ}  is the three-weight    linear  code with parameters $[45,5,27]_3$,  weight enumerator   $1+ 50y^{27} + 162y^{30} +30y^{36}$ and  weight distribution $(1,50,162,30)$. 
%Notice that   all    nonzero codewords of $\sC_{\bar{D}_{f,sq}}$  are minimal  by Lemma \ref{Minimality}.
This code is minimal by Lemma \ref{Minimality} and is  almost optimal owing to the Griesmer bound. %G=41.
\end{example}

% \begin{example} The punctured version of $\sC_{D_{f,sq}}$ given in Example  \ref{ExampleSQ}  is the three-weight   linear $3$-ary code with parameters $[18,4,9
%]_3$,  weight enumerator   $1+ 4y^{9} + 72y^{12} +4y^{18}$ and  weight distribution $(1,4,72,4)$. 
%\end{example}
\begin{corollary}\label{Corollary8}
%Let $n+s$ be odd and $f\in WRP$. Then, $\sC_{\bar{D}_{f,sq}}$ 
The punctured version  $\sC_{\bar{D}_{f,sq}}$  of the    code  $\sC_{D_{f,sq}}$  of Theorem \ref{Theorem4}  is the three-weight linear code with parameters 
$\left[\frac{1}{2}(p^{n-1}+\epsilon \sqrt{p^*}^{n+s-1}),n\right]_p$ whose weight distribution is listed in  Table \ref{table555} and Table \ref{table555.} when $p \equiv 1\pmod 4$  and $p \equiv 3\pmod 4$, respectively.
\begin{table}[!htp]
\begin{center}	
\begin{tabular}{|c|c|c|}
\hline
Hamming weight $w$   & Multiplicity $A_w$ \\
\hline
\hline
\footnotesize{$0$} & \footnotesize{$1$}\\
\hline
%\footnotesize{$\frac{(p-1)^2}{2}\left(p^{n-2}+\epsilon \eta_0(-1)\sqrt{p^*}^{n+s-3}\right)$} & \footnotesize{$p^n-p^{n-s}$}\\ \hline
\footnotesize{$  \frac{p-1}{2}p^{n-2}$} & \footnotesize{$p^{n-s-1}-1   $}\\
\hline
\footnotesize{$ \frac{1}{2}\left((p-1)p^{n-2}+\epsilon (p+1) \sqrt{p}^{n+s-3}\right) $} & \footnotesize{$ \frac{p-1}{2}\left( p^{n-s-1} + \epsilon     \sqrt{p}^{n-s-1}\right) $}\\
\hline
\footnotesize{ $\frac{(p-1)}{2}\left(p^{n-2}+\epsilon \sqrt{p}^{n+s-3}\right) $}& \footnotesize{$p^n-p^{n-s}+\frac{p-1}{2}\left( p^{n-s-1} - \epsilon   \sqrt{p}^{n-s-1}\right)  $}\\
\hline
\end{tabular}\end{center}	
\caption{
\label{table555}  The weight distribution of  $\sC_{\bar{D}_{f,sq}}$  when $p \equiv 1\pmod 4$ and $n+s$ is odd}
\begin{center}	
\begin{tabular}{|c|c|c|}
\hline
Hamming weight $w$   & Multiplicity $A_w$ \\
\hline
\hline
\footnotesize{$0$} & \footnotesize{$1$}\\
\hline
%\footnotesize{$\frac{(p-1)^2}{2}\left(p^{n-2}+\epsilon \eta_0(-1)\sqrt{p^*}^{n+s-3}\right)$} & \footnotesize{$p^n-p^{n-s}$}\\ \hline
\footnotesize{$  \frac{p-1}{2}p^{n-2}$} & \footnotesize{$p^{n-s-1}-1   $}\\
\hline
\footnotesize{$ \frac{p-1}{2}\left( p^{n-2}-\epsilon   \sqrt{p^*}^{n+s-3}\right) $} & \footnotesize{$p^n-p^{n-s}+ \frac{p-1}{2}\left( p^{n-s-1} + \epsilon  (-1)^n    \sqrt{p^*}^{n-s-1}\right) $}\\
\hline
\footnotesize{ $\frac{1}{2}\left((p-1)p^{n-2}-\epsilon (p+1)\sqrt{p^*}^{n+s-3}\right) $}& \footnotesize{$\frac{p-1}{2}\left( p^{n-s-1} - \epsilon  (-1)^n   \sqrt{p^*}^{n-s-1}\right)  $}\\
\hline
\end{tabular}\end{center}	
\caption{
\label{table555.}  The weight distribution of  $\sC_{\bar{D}_{f,sq}}$  when $p \equiv 3\pmod 4$ and $n+s$ is odd}
\end{table}
\end{corollary}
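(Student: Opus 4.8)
The plan is to obtain Corollary~\ref{Corollary8} directly from Theorem~\ref{Theorem4} by exploiting the fact that the defining set $D_{f,sq}$ decomposes into $\F_p^{\star}$-orbits of equal size. First I would record that, since $f\in WRP$ satisfies $f(0)=0$ and $f(ax)=a^hf(x)$ with $h$ even and $\gcd(h-1,p-1)=1$, every element of $D_{f,sq}$ is nonzero (as $f(0)=0\notin SQ$), the property $f(b)\in SQ$ is preserved under multiplication by any $a\in\F_p^{\star}$ (because $a^h$ is a square and a product of squares is a square), and each orbit $\{ab:a\in\F_p^{\star}\}$ with $b\in\bar{D}_{f,sq}$ has cardinality exactly $p-1$ (as $ab=b$ forces $a=1$ in $\F_q$). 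This is precisely the partition $D_{f,sq}=\F_p^{\star}\bar{D}_{f,sq}$ fixed just before the corollary, so $\#\bar{D}_{f,sq}=\tfrac{1}{p-1}\#D_{f,sq}=\tfrac12\big(p^{n-1}+\epsilon\sqrt{p^*}^{n+s-1}\big)$ by Theorem~\ref{Theorem4}, which yields the claimed length.

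Next I would compare the two codes coordinate by coordinate. For a fixed $\beta\in\F_q^{\star}$, each $b\in\bar{D}_{f,sq}$, and each $a\in\F_p^{\star}$ one has $\Tr^n(\beta\,ab)=a\,\Tr^n(\beta b)$, so the $p-1$ coordinates of $c_\beta\in\sC_{D_{f,sq}}$ indexed by the orbit $\F_p^{\star}b$ vanish simultaneously exactly when $\Tr^n(\beta b)=0$ and are otherwise all nonzero. Hence $\wt_{\sC_{D_{f,sq}}}(c_\beta)=(p-1)\,\wt_{\sC_{\bar{D}_{f,sq}}}(c_\beta)$ for every $\beta\in\F_q^{\star}$, so the weight enumerator of $\sC_{\bar{D}_{f,sq}}$ is obtained from that of $\sC_{D_{f,sq}}$ in Theorem~\ref{Theorem4} by dividing every nonzero weight by $p-1$ while leaving all multiplicities $A_w$ unchanged. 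Splitting according to $p\equiv1\pmod 4$ and $p\equiv3\pmod 4$ --- which, through Lemma~\ref{Lemma6}$(ii)$ and $p^*=\eta_0(-1)p$, governs whether the surd terms appear as powers of $\sqrt p$ or of $\sqrt{p^*}$ and produces the parity factors $(-1)^n$ --- reproduces Table~\ref{table555} and Table~\ref{table555.}, and the minimum distance is read off as the smallest of the listed weights.

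Finally I would check that puncturing does not decrease the dimension. The evaluation map $\beta\mapsto c_\beta$ on $\sC_{\bar{D}_{f,sq}}$ is $\F_p$-linear, and its kernel $\{\beta\in\F_q:\Tr^n(\beta b)=0\text{ for all }b\in\bar{D}_{f,sq}\}$ coincides with $\{\beta\in\F_q:\Tr^n(\beta x)=0\text{ for all }x\in D_{f,sq}\}$ by the identity $\Tr^n(\beta\,ab)=a\,\Tr^n(\beta b)$; the latter set is $\{0\}$ because $\sC_{D_{f,sq}}$ already has dimension $n$ in Theorem~\ref{Theorem4}. Therefore $\dim\sC_{\bar{D}_{f,sq}}=n$ and the code has exactly the stated parameters. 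The only step demanding genuine care is the conversion in the second paragraph: matching the two normalisations of the surds and the $(-1)^n$ factors in Tables~\ref{table555} and \ref{table555.} for the two residue classes of $p$ modulo $4$; everything else is a mechanical transport of Theorem~\ref{Theorem4} across the $(p-1)$-to-one puncturing.
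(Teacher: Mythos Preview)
Your argument is correct and follows exactly the route the paper intends: the paper states that Corollary~\ref{Corollary8} ``follows directly from Theorem~\ref{Theorem4}'' via the $\F_p^{\star}$-orbit partition $D_{f,sq}=\F_p^{\star}\bar D_{f,sq}$ described just before Corollary~\ref{Corollary7}, which divides each nonzero Hamming weight by $p-1$ while preserving the multiplicities and the dimension. You have simply spelled out the details of this puncturing (orbit size, weight scaling via $\Tr^n(\beta ab)=a\,\Tr^n(\beta b)$, and injectivity of $\beta\mapsto c_\beta$) that the paper leaves implicit.
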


\begin{corollary}\label{Corollary9}
%Let $n+s$ be odd and $f\in WRP$. Then, 
The punctured version   $\sC_{\bar{D}_{f,nsq}}$   of the    code  $\sC_{D_{f,nsq}}$ of Theorem \ref{Theorem5}  is the   three-weight linear code with parameters 
$\left[\frac{1}{2}(p^{n-1}-\epsilon \sqrt{p^*}^{n+s-1}),n\right]_p$ whose weight distribution is listed in  Table \ref{table444} and Table \ref{table444.} when $p \equiv 1\pmod 4$  and $p \equiv 3\pmod 4$, respectively.
\begin{table}[!htp]
\begin{center}	
\begin{tabular}{|c|c|c|}
\hline
Hamming weight $w$   & Multiplicity $A_w$ \\
\hline
\hline
\footnotesize{$0$} & \footnotesize{$1$}\\
\hline
%\footnotesize{$\frac{(p-1)^2}{2}\left(p^{n-2}-\epsilon \eta_0(-1)\sqrt{p^*}^{n+s-3}\right) $} & \footnotesize{$p^n-p^{n-s}$}\\ \hline
\footnotesize{$\frac{(p-1)}{2}p^{n-2}  $} & \footnotesize{$p^{n-s-1}-1   $}\\
\hline
\footnotesize{$ \frac{p-1}{2}\left( p^{n-2}-\epsilon \sqrt{p}^{n+s-3}\right) $} & \footnotesize{$p^n-p^{n-s}+\frac{p-1}{2}\left(p^{n-s-1} +\epsilon    \sqrt{p}^{n-s-1} \right)$}\\
\hline
\footnotesize{ $ \frac{1}{2}\left((p-1)p^{n-2}-\epsilon(p+1) \sqrt{p}^{n+s-3}\right)$}& \footnotesize{$\frac{p-1}{2}\left( p^{n-s-1} - \epsilon     \sqrt{p}^{n-s-1}\right)  $}\\
\hline
\end{tabular}\end{center}	
\caption{
\label{table444}  The weight distribution of  $\sC_{\bar{D}_{f,nsq}}$  when $p \equiv 1\pmod 4$ and $n+s$ is odd}
\end{table}

\begin{table}[!htp]
\begin{center}	
\begin{tabular}{|c|c|c|}
\hline
Hamming weight $w$   & Multiplicity $A_w$ \\
\hline
\hline
\footnotesize{$0$} & \footnotesize{$1$}\\
\hline
%\footnotesize{$\frac{(p-1)^2}{2}\left(p^{n-2}-\epsilon \eta_0(-1)\sqrt{p^*}^{n+s-3}\right) $} & \footnotesize{$p^n-p^{n-s}$}\\\hline
\footnotesize{$\frac{p-1}{2}p^{n-2}  $} & \footnotesize{$p^{n-s-1}-1   $}\\
\hline
\footnotesize{$ \frac{1}{2}\left((p-1)p^{n-2}+\epsilon(p+1) \sqrt{p^*}^{n+s-3}\right) $} & \footnotesize{$\frac{p-1}{2}\left(p^{n-s-1} +\epsilon    (-1)^n   \sqrt{p^*}^{n-s-1} \right)$}\\
\hline
\footnotesize{ $ \frac{p-1}{2}\left(p^{n-2}+\epsilon  \sqrt{p^*}^{n+s-3}\right)$}& \footnotesize{$p^n-p^{n-s}+\frac{p-1}{2}\left( p^{n-s-1} - \epsilon  (-1)^n \sqrt{p^*}^{n-s-1}\right)  $}\\
\hline
\end{tabular}\end{center}	
\caption{
\label{table444.}  The weight distribution of  $\sC_{\bar{D}_{f,nsq}}$  when $p \equiv 3\pmod 4$ and $n+s$ is odd}

\end{table}

\end{corollary}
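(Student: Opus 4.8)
The plan is to realize $\sC_{\bar{D}_{f,nsq}}$ as a puncturing of the code $\sC_{D_{f,nsq}}$ of Theorem \ref{Theorem5} and then to obtain all parameters by dividing the data of that theorem by $p-1$. First I would record that, by property $P2$ of $f\in WRP$, we have $f(ax)=a^hf(x)$ with $h$ even, so $a^h$ is a square in $\F_p^{\star}$ and therefore $f(ax)\in NSQ$ precisely when $f(x)\in NSQ$, for every $a\in\F_p^{\star}$; hence $\F_p^{\star}$ acts freely on $D_{f,nsq}$ and one may fix a transversal $\bar{D}_{f,nsq}$ with $D_{f,nsq}=\F_p^{\star}\bar{D}_{f,nsq}$ and $b_1/b_2\notin\F_p^{\star}$ for distinct $b_1,b_2\in\bar{D}_{f,nsq}$; in particular $\#\bar{D}_{f,nsq}=\#D_{f,nsq}/(p-1)$. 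Since multiplying the argument of $\Tr^n$ by a nonzero element of $\F_p$ does not change whether the result is zero, for each fixed $\beta\in\F_q^{\star}$ the coordinates of $c_\beta\in\sC_{D_{f,nsq}}$ indexed by one $\F_p^{\star}$-orbit are simultaneously zero or simultaneously nonzero, whence the Hamming weight of $c_\beta$ in $\sC_{D_{f,nsq}}$ equals $(p-1)$ times that of the corresponding codeword in $\sC_{\bar{D}_{f,nsq}}$.

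Next I would check that puncturing does not lower the dimension. The kernel of the $\F_p$-linear map $\beta\mapsto c_\beta$ on $\sC_{\bar{D}_{f,nsq}}$ is $\{\beta\in\F_q:\Tr^n(\beta b)=0\ \forall b\in\bar{D}_{f,nsq}\}$; because $\bar{D}_{f,nsq}$ and $D_{f,nsq}$ have the same $\F_p$-span (they differ only by nonzero scalings), this kernel coincides with the kernel of $\beta\mapsto c_\beta$ on $\sC_{D_{f,nsq}}$, which is trivial by Theorem \ref{Theorem5}. Thus $\sC_{\bar{D}_{f,nsq}}$ again has dimension $n$, the correspondence $\beta\mapsto c_\beta$ matches that of Theorem \ref{Theorem5}, the weight multiplicities $A_w$ are unchanged, and both the length and every nonzero weight are exactly $1/(p-1)$ times the values in Theorem \ref{Theorem5}.

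Finally I would carry out the division. The length becomes $\tfrac{1}{p-1}\cdot\tfrac{p-1}{2}(p^{n-1}-\epsilon\sqrt{p^*}^{n+s-1})=\tfrac{1}{2}(p^{n-1}-\epsilon\sqrt{p^*}^{n+s-1})$, and the three nonzero weights in Tables \ref{table4} and \ref{table4.} divide cleanly: $\tfrac{(p-1)^2}{2}p^{n-2}$ and $\tfrac{(p-1)^2}{2}\big(p^{n-2}\mp\epsilon\sqrt{p^*}^{n+s-3}\big)$ become $\tfrac{p-1}{2}p^{n-2}$ and $\tfrac{p-1}{2}\big(p^{n-2}\mp\epsilon\sqrt{p^*}^{n+s-3}\big)$, while $\tfrac{p-1}{2}\big((p-1)p^{n-2}\pm\epsilon(p+1)\sqrt{p^*}^{n+s-3}\big)$ becomes $\tfrac{1}{2}\big((p-1)p^{n-2}\pm\epsilon(p+1)\sqrt{p^*}^{n+s-3}\big)$, which is still an integer since $p-1$ divides $\tfrac{(p-1)(p+1)}{2}$. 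Distinguishing $p\equiv 1\pmod 4$ from $p\equiv 3\pmod 4$ then reproduces Tables \ref{table444} and \ref{table444.} verbatim. The only point requiring a little care is checking that each weight listed in Theorem \ref{Theorem5} is genuinely divisible by $p-1$, so that the punctured weights are honest integers, together with the exact count $\#\bar{D}_{f,nsq}=\#D_{f,nsq}/(p-1)$; both are immediate consequences of the freeness of the $\F_p^{\star}$-action guaranteed by property $P2$, and everything else is routine bookkeeping.
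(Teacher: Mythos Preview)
Your proposal is correct and follows essentially the same approach as the paper: the paper simply notes that $f(x)\in NSQ$ iff $f(ax)\in NSQ$ for every $a\in\F_p^{\star}$ (by property $P2$), chooses a transversal $\bar{D}_{f,nsq}$ with $D_{f,nsq}=\F_p^{\star}\bar{D}_{f,nsq}$, and states that the parameters of $\sC_{\bar{D}_{f,nsq}}$ are derived directly from those of Theorem~\ref{Theorem5}. Your write-up is in fact more detailed than the paper's (you explicitly verify that the dimension is preserved and that the punctured weights are integers), but the underlying idea is identical.
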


 \begin{example} The punctured version $\sC_{\bar{D}_{f,nsq}}$ of $\sC_{D_{f,nsq}}$   in Example  \ref{ExampleNSQ}  is the three-weight   linear  code with parameters $[108,6,63]_3$,  weight enumerator   $1+ 72y^{63} + 576y^{72} +80y^{81}$ and  weight distribution $(1,72,576,80)$. % Notice that   all    nonzero codewords of $\sC_{\bar{D}_{f,nsq}}$  are minimal  by Lemma \ref{Minimality}. 
This code is minimal by Lemma \ref{Minimality} and  is almost optimal owing to the Griesmer bound. %G=96.
\end{example}

With the similar definition of the subcode $\bar{\sC}_{D_{f}}$ defined by (\ref{LinearSubCodeS}), we have 
  a linear code involving $D_{f,sq}$ defined by
\be\nn
\bar{\sC}_{D_{f,sq}}=\{c_\beta= (\Tr^n(\beta d'_1), \Tr^n(\beta d'_2), \ldots, \Tr^n(\beta d'_m)): \beta \in Supp(\Wa f )\},
\ee
which is the subcode  of $\sC_{D_{f,sq}}$  defined by (\ref{LinearCodeSQ}). Hence, the following codes   in  Corollaries \ref{Corollary10}, \ref{Corollary11}, \ref{Corollary12} and \ref{Corollary13} are the subcodes of the codes of  
Theorems \ref{Theorem3}, \ref{Theorem4} and Corollaries \ref{Corollary7}, \ref{Corollary8}, respectively.

\begin{corollary}\label{Corollary10}
%Let $n+s$ be even with $0\leq s\leq n-4$ and $f\in WRP$. Then,
The subcode $\bar{\sC}_{D_{f,sq}}$  of the code $\sC_{D_{f,sq}}$ of  Theorem \ref{Theorem3}    is the  two-weight linear code with parameters 
$ [\frac{p-1}{2}(p^{n-1}-\epsilon \eta_0(-1) \sqrt{p^*}^{n+s-2}), n-s ]_p$ whose weight distribution is listed in  Table \ref{table33}. %, where  $\epsilon=\pm 1$ is the sign of $\Wa{f}$.  
\begin{table}[!htp]
\begin{center}	
\begin{tabular}{|c|c|c|}
\hline
Hamming weight $w$   & Multiplicity $A_w$ \\
\hline
\hline
\footnotesize{$0$} & \footnotesize{$1$}\\
\hline
\footnotesize{$  \frac{(p-1)^2}{2}p^{n-2}$} 
& \footnotesize{$ p^{n-s-1} +\frac{p-1}{2}\left(p^{n-s-1}+\epsilon \eta_0^{n+1}(-1)\sqrt{p^*}^{n-s-2}\right)-1 $}\\
\hline
\footnotesize{$(p-1)\left(\frac{p-1}{2}p^{n-2} - \epsilon \eta_0(-1) \sqrt{p^*}^{n+s-2}\right)$}
 & \footnotesize{$\frac{p-1}{2}\left(p^{n-s-1} -\epsilon \eta_0^{n+1}(-1)\sqrt{p^*}^{n-s-2}\right)$}\\
\hline
\end{tabular}\end{center}	
\caption{
\label{table33}  The weight distribution of   $\bar{\sC}_{D_{f,sq}}$ when $n+s$ is even}
\end{table}
\end{corollary}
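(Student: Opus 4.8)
The plan is to read off Corollary \ref{Corollary10} as a direct specialization of Theorem \ref{Theorem3}, tracking only what changes when the parameter $\beta$ is restricted to the Walsh support $\Supp(\Wa f)$ rather than allowed to range over all of $\F_q$. First I would note that the defining set $D_{f,sq}$ is literally the one used in Theorem \ref{Theorem3}, so the length is unchanged and equals $\#D_{f,sq}=\frac{p-1}{2}\bigl(p^{n-1}-\epsilon\eta_0(-1)\sqrt{p^*}^{n+s-2}\bigr)$ by Lemma \ref{Lemma7}. The codewords of $\bar{\sC}_{D_{f,sq}}$ are exactly the $c_\beta$ with $\beta\in\Supp(\Wa f)$: for $\beta=0$ this is the zero codeword, while for $0\neq\beta\in\Supp(\Wa f)$ the weight $\wt(c_\beta)=\#D_{f,sq}-\sN_{sq,\beta}$ was already evaluated inside the proof of Theorem \ref{Theorem3} through Lemma \ref{Lemma14}, giving $\frac{(p-1)^2}{2}p^{n-2}$ when $g(\beta)=0$ or $g(\beta)\in NSQ$ and $(p-1)\bigl(\frac{p-1}{2}p^{n-2}-\epsilon\eta_0(-1)\sqrt{p^*}^{n+s-2}\bigr)$ when $g(\beta)\in SQ$. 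Since the case $\beta\in\F_q^\star\setminus\Supp(\Wa f)$ is simply discarded, the code sheds the weight $\frac{(p-1)^2}{2}\bigl(p^{n-2}-\epsilon\sqrt{p^*}^{n+s-4}\bigr)$ of Table \ref{table3} and becomes two-weight, as claimed.

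For the frequencies, I would use that $n+s$ even forces $n-s$ even, so the first case of Lemma \ref{Lemma8} applies and yields $\sN_g(0)=p^{n-s-1}+\epsilon\eta_0^{n+1}(-1)(p-1)\sqrt{p^*}^{n-s-2}$ together with $\sN_g(j)=p^{n-s-1}-\epsilon\eta_0^{n+1}(-1)\sqrt{p^*}^{n-s-2}$ for every $j\in\F_p^\star$. By Proposition \ref{Proposition4} we have $g(0)=0$, so the number of $\beta\in\Supp(\Wa f)\setminus\{0\}$ with $g(\beta)=0$ is $\sN_g(0)-1$; adding the $\frac{p-1}{2}$ classes $j\in NSQ$ contributes $\sN_g(0)-1+\frac{p-1}{2}\bigl(p^{n-s-1}-\epsilon\eta_0^{n+1}(-1)\sqrt{p^*}^{n-s-2}\bigr)$, which simplifies to the first multiplicity in Table \ref{table33}, while the $\frac{p-1}{2}$ classes $j\in SQ$ contribute $\frac{p-1}{2}\bigl(p^{n-s-1}-\epsilon\eta_0^{n+1}(-1)\sqrt{p^*}^{n-s-2}\bigr)$, the second multiplicity. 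A quick check confirms the three frequencies sum to $\#\Supp(\Wa f)=p^{n-s}$.

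The dimension is the only step with any content. Theorem \ref{Theorem3} asserts $\dim\sC_{D_{f,sq}}=n$, i.e. the evaluation map $\beta\mapsto c_\beta$ is injective on $\F_q$, hence a fortiori on $\Supp(\Wa f)$; combined with $\#\Supp(\Wa f)=p^{n-s}$ from Lemma \ref{SupportLemma} this gives $\#\bar{\sC}_{D_{f,sq}}=p^{n-s}$, so once $\bar{\sC}_{D_{f,sq}}$ is known to be a linear subcode its dimension is $n-s$. That linearity is the delicate point: since $c_{\beta_1}+c_{\beta_2}=c_{\beta_1+\beta_2}$ and the map is injective, closure under addition is equivalent to $\Supp(\Wa f)$ being an $\F_p$-subspace of $\F_q$ of dimension $n-s$. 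This is classical for quadratic $f\in WRP$ by the theory of quadratic forms (the Walsh support corresponds to the image of the linearized polynomial $L$, whose kernel has dimension $s$), and it is precisely the standing hypothesis under which $\bar{\sC}_{D_{f}}$ and its analogues were introduced as subcodes just before the corollary; I would invoke it here rather than reprove it. Apart from this subspace fact, the argument is the same bookkeeping as in Theorem \ref{Theorem3} restricted to $\beta\in\Supp(\Wa f)$, hence routine.
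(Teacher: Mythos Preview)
Your proposal is correct and follows exactly the paper's approach: the paper gives no separate proof for this corollary, stating only that its parameters are derived directly from Theorem~\ref{Theorem3} by restricting $\beta$ to $\Supp(\Wa f)$, and your write-up supplies precisely those details (length via Lemma~\ref{Lemma7}, the two surviving weights via Lemma~\ref{Lemma14}, multiplicities via Lemma~\ref{Lemma8}). Your caution about the linearity of $\bar{\sC}_{D_{f,sq}}$---that it hinges on $\Supp(\Wa f)$ being an $\F_p$-subspace---is in fact more careful than the paper itself, which simply calls this set a ``subcode'' without justification (Lemma~\ref{Walshsupport} gives closure under $\F_p^\star$-scaling but not under addition); treating it as a standing hypothesis, as you do, matches the paper's implicit stance.
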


\begin{corollary}\label{Corollary11}
%Let $n+s$ be odd and $f\in WRP$. Then, 
The subcode $\bar{\sC}_{D_{f,sq}}$  of the code $\sC_{D_{f,sq}}$ of Theorem \ref{Theorem4}    is the three-weight linear code with parameters 
$\left[\frac{p-1}{2}(p^{n-1}+\epsilon \sqrt{p^*}^{n+s-1}),n-s\right]_p$ whose weight distribution is listed in  Table \ref{table55}. %, where  $\epsilon=\pm 1$ is the sign of $\Wa{f}$.  
\begin{table}[!htp]
\begin{center}	
\begin{tabular}{|c|c|c|}
\hline
Hamming weight $w$   & Multiplicity $A_w$ \\
\hline
\hline
\footnotesize{$0$} & \footnotesize{$1$}\\
\hline
\footnotesize{$  \frac{(p-1)^2}{2}p^{n-2}$} & \footnotesize{$p^{n-s-1}-1   $}\\
\hline
\footnotesize{$ \frac{p-1}{2}\left((p-1)p^{n-2}+\epsilon (p^*+1) \sqrt{p^*}^{n+s-3}\right) $} & \footnotesize{$ \frac{p-1}{2}\left( p^{n-s-1} + \epsilon \eta_0^n(-1)    \sqrt{p^*}^{n-s-1}\right) $}\\
\hline
\footnotesize{ $\frac{p-1}{2}\left((p-1)p^{n-2}+\epsilon (p^*-1)\sqrt{p^*}^{n+s-3}\right) $}& \footnotesize{$\frac{p-1}{2}\left( p^{n-s-1} - \epsilon \eta_0^n(-1)    \sqrt{p^*}^{n-s-1}\right)  $}\\
\hline
\end{tabular}\end{center}	
\caption{
\label{table55}  The weight distribution of  $\bar{\sC}_{D_{f,sq}}$  when $n+s$ is odd}
\end{table}
\end{corollary}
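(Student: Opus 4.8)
The plan is to read off Corollary~\ref{Corollary11} from Theorem~\ref{Theorem4} by keeping the defining set $D_{f,sq}$ but restricting the evaluation parameter $\beta$ from all of $\F_q$ to the Walsh support $\Supp(\Wa f)$, exactly as in the definition of $\bar{\sC}_{D_f}$ in~(\ref{LinearSubCodeS}). Since the coordinate positions are unchanged, the length stays equal to $\#D_{f,sq}=\frac{p-1}{2}\big(p^{n-1}+\epsilon\sqrt{p^*}^{n+s-1}\big)$, obtained from Lemma~\ref{Lemma7} (case $n+s$ odd) by summing $\sN_f(j)$ over $j\in SQ$ and using $g(0)=0$ (Proposition~\ref{Proposition4}) so that $\Wa f(0)=\epsilon\sqrt{p^*}^{n+s}$, precisely as in Theorem~\ref{Theorem4}. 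For the dimension, $\#\Supp(\Wa f)=p^{n-s}$ by Lemma~\ref{SupportLemma}, and the $\F_p$-linear map $\beta\mapsto c_\beta$ is injective on $\F_q$ (this is the content of $\dim\sC_{D_{f,sq}}=n$ in Theorem~\ref{Theorem4}), so it stays injective on $\Supp(\Wa f)$; hence $\bar{\sC}_{D_{f,sq}}$ has exactly $p^{n-s}$ distinct codewords, giving dimension $n-s$.

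For the weights I would write $\wt(c_\beta)=\#D_{f,sq}-\sN_{sq,\beta}$ and restrict to $0\neq\beta\in\Supp(\Wa f)$, using the case ``$n+s$ odd, $\beta\in\Supp(\Wa f)$'' of Lemma~\ref{Lemma14} for $\sN_{sq,\beta}$. Subtracting, the subcase $g(\beta)=0$ produces $\frac{(p-1)^2}{2}p^{n-2}$; in the subcases $g(\beta)\in SQ$ and $g(\beta)\in NSQ$ one collects the two $\epsilon$-terms and uses $\sqrt{p^*}^{n+s-1}=p^*\sqrt{p^*}^{n+s-3}$ to obtain $\frac{p-1}{2}\big((p-1)p^{n-2}+\epsilon(p^*+1)\sqrt{p^*}^{n+s-3}\big)$ and $\frac{p-1}{2}\big((p-1)p^{n-2}+\epsilon(p^*-1)\sqrt{p^*}^{n+s-3}\big)$, respectively. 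This is exactly the point where restricting to the Walsh support pays off: the two weights collapse to the single $p^*$-form of Table~\ref{table55}, so no case split on $p\bmod 4$ (unlike Theorem~\ref{Theorem4}) is needed, and since $p^*\neq\pm1$ for odd $p$ the three weights are pairwise distinct in the admissible range of $s$.

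For the multiplicities I would invoke Lemma~\ref{Lemma8} in the case $n-s$ odd (equivalent to $n+s$ odd, since $n-s$ and $n+s$ have the same parity): it gives $\sN_g(0)=p^{n-s-1}$ elements $\beta\in\Supp(\Wa f)$ with $g(\beta)=0$, one of which is $\beta=0$ by Proposition~\ref{Proposition4}, leaving $p^{n-s-1}-1$ nonzero codewords of weight $\frac{(p-1)^2}{2}p^{n-2}$; summing $\sN_g(j)$ over $j\in SQ$ (resp.\ over $j\in NSQ$) yields $\frac{p-1}{2}\big(p^{n-s-1}+\epsilon\eta_0^n(-1)\sqrt{p^*}^{n-s-1}\big)$ (resp.\ $\frac{p-1}{2}\big(p^{n-s-1}-\epsilon\eta_0^n(-1)\sqrt{p^*}^{n-s-1}\big)$) codewords of the two remaining weights. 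A concluding sanity check is that these three multiplicities plus the single zero codeword sum to $p^{n-s}$, consistent with the dimension.

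The arithmetic above is routine; the one point I expect to need genuine care is the dimension claim, namely that $\bar{\sC}_{D_{f,sq}}$ is an honest \emph{linear} $[\,\cdot\,,\,n-s]_p$ code rather than just a $p^{n-s}$-element subset of $\sC_{D_{f,sq}}$. Via the injection $\beta\mapsto c_\beta$ this is equivalent to $\Supp(\Wa f)$ being an $\F_p$-subspace of $\F_q$; by Lemma~\ref{Walshsupport} it is already $\F_p^\star$-stable and it contains $0$ (since $f$ is unbalanced), and for the quadratic members of $WRP$ — the intended source of examples, per the Remark following Proposition~2 — it is classically a subspace of dimension $n-s$. I would therefore record this subspace property (for the functions at hand) as the structural input that legitimises the phrase ``subcode'', after which the parameters are inherited verbatim from Theorem~\ref{Theorem4} as described.
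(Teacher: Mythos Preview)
Your proposal is correct and follows the same route the paper takes: the paper simply states that the subcodes in Corollaries~\ref{Corollary10}--\ref{Corollary13} have parameters ``directly derived from that of corresponding codes'', and your derivation (restrict $\beta$ to $\Supp(\Wa f)$, read off length from Lemma~\ref{Lemma7}, weights from the $\beta\in\Supp(\Wa f)$ case of Lemma~\ref{Lemma14}, multiplicities from Lemma~\ref{Lemma8}) is exactly what this amounts to. Your flag that linearity of $\bar{\sC}_{D_{f,sq}}$ rests on $\Supp(\Wa f)$ being an $\F_p$-subspace is a fair point the paper leaves implicit; otherwise your argument matches the paper's intended one.
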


\begin{corollary}\label{Corollary12}
%Let $n+s$ be even with $0\leq s\leq n-4$ and $f\in WRP$. Then,
The subcode  $\bar{\sC}_{\bar{D}_{f,sq}}$  of the code  $\sC_{\bar{D}_{f,sq}}$ of  Corollary \ref{Corollary7}   is  the two-weight linear code with parameters 
$ [\frac{1}{2}(p^{n-1}-\epsilon \eta_0(-1) \sqrt{p^*}^{n+s-2}), n-s ]_p$ whose weight distribution is listed in  Table \ref{table3333}. % where  $\epsilon=\pm 1$ is the sign of $\Wa{f}$.  
\begin{table}[!htp]
\begin{center}	
\begin{tabular}{|c|c|c|}
\hline
Hamming weight $w$   & Multiplicity $A_w$ \\
\hline
\hline
\footnotesize{$0$} & \footnotesize{$1$}\\
%\hline
%\footnotesize{$\frac{(p-1)}{2}\left(p^{n-2}-\epsilon  \sqrt{p^*}^{n+s-4}\right)$} & \footnotesize{$p^n-p^{n-s}$}\\
\hline
\footnotesize{$  \frac{(p-1)}{2}p^{n-2}$} 
& \footnotesize{$ p^{n-s-1} +\frac{p-1}{2}\left(p^{n-s-1}+\epsilon \eta_0^{n+1}(-1)\sqrt{p^*}^{n-s-2}\right)-1 $}\\
\hline
\footnotesize{$ \frac{p-1}{2}p^{n-2}- \epsilon \eta_0(-1) \sqrt{p^*}^{n+s-2} $}
 & \footnotesize{$\frac{p-1}{2}\left(p^{n-s-1} -\epsilon \eta_0^{n+1}(-1)\sqrt{p^*}^{n-s-2}\right)$}\\
\hline
\end{tabular}\end{center}	
\caption{
\label{table3333}  The weight distribution of   $\bar{\sC}_{\bar{D}_{f,sq}}$ when $n+s$ is even}
\end{table}
\end{corollary}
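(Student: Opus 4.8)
The plan is to obtain everything directly from the code $\sC_{\bar{D}_{f,sq}}$ of Corollary~\ref{Corollary7} by restricting the evaluation index $\beta$ from $\F_q$ to the Walsh support $\Supp(\Wa{f})$. First I would note that $\bar{\sC}_{\bar{D}_{f,sq}}$ has the \emph{same} defining set $\bar{D}_{f,sq}$ as $\sC_{\bar{D}_{f,sq}}$, so its length is unchanged and equals $\frac{1}{2}\bigl(p^{n-1}-\epsilon\eta_0(-1)\sqrt{p^*}^{n+s-2}\bigr)$. For the dimension, recall that $\sC_{D_{f,sq}}$, and hence its puncturing $\sC_{\bar{D}_{f,sq}}$, has dimension $n$, i.e.\ the evaluation map $\beta\mapsto c_\beta$ is injective; restricting it to $\Supp(\Wa{f})$ — which, as for all the subcodes of this section, yields a linear code via Lemma~\ref{Walshsupport} (in the quadratic case, $\Supp(\Wa{f})$ is the image of the symmetric matrix associated with $f$) — and using $\#\Supp(\Wa{f})=p^{n-s}$ from Lemma~\ref{SupportLemma}, we get dimension $n-s$.

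Next I would read off the weights. From the proof of Theorem~\ref{Theorem3}, divided by $p-1$ exactly as in the passage preceding Corollary~\ref{Corollary7}, the codeword $c_\beta$ with $0\neq\beta\in\Supp(\Wa{f})$ has weight $\frac{p-1}{2}p^{n-2}$ when $g(\beta)=0$ or $g(\beta)\in NSQ$, and weight $\frac{p-1}{2}p^{n-2}-\epsilon\eta_0(-1)\sqrt{p^*}^{n+s-2}$ when $g(\beta)\in SQ$; the codewords indexed by $\beta\in\F_q^{\star}\setminus\Supp(\Wa{f})$, which carried the third weight of $\sC_{\bar{D}_{f,sq}}$, simply disappear. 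Hence $\bar{\sC}_{\bar{D}_{f,sq}}$ has exactly the two nonzero weights of Table~\ref{table3333}, and its minimum distance is the smaller of the two.

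Finally I would count multiplicities. Since $n+s$ is even, so is $n-s$, and Lemma~\ref{Lemma8} gives $\sN_g(0)=p^{n-s-1}+\epsilon\eta_0^{n+1}(-1)(p-1)\sqrt{p^*}^{n-s-2}$ together with $\sN_g(j)=p^{n-s-1}-\epsilon\eta_0^{n+1}(-1)\sqrt{p^*}^{n-s-2}$ for every $j\in\F_p^{\star}$. Using $g(0)=0$ (Proposition~\ref{Proposition4}) to discard the zero codeword, the multiplicity of $\frac{p-1}{2}p^{n-2}$ is $\sN_g(0)+\frac{p-1}{2}\bigl(p^{n-s-1}-\epsilon\eta_0^{n+1}(-1)\sqrt{p^*}^{n-s-2}\bigr)-1$, which simplifies to $p^{n-s-1}+\frac{p-1}{2}\bigl(p^{n-s-1}+\epsilon\eta_0^{n+1}(-1)\sqrt{p^*}^{n-s-2}\bigr)-1$, while the multiplicity of the other weight is $\frac{p-1}{2}\bigl(p^{n-s-1}-\epsilon\eta_0^{n+1}(-1)\sqrt{p^*}^{n-s-2}\bigr)$, matching Table~\ref{table3333}. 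There is no deep step here: the only thing requiring care is the bookkeeping — verifying that the classes $g(\beta)=0$ and $g(\beta)\in NSQ$ genuinely merge into a single weight (so that the code is two-weight rather than three-weight) and checking that the two listed multiplicities, plus $1$ for the zero codeword, sum to $p^{n-s}$; this last identity, which follows from $\sum_{j\in\F_p}\sN_g(j)=p^{n-s}$, is the cleanest consistency check on the whole computation.
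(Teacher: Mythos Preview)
Your proposal is correct and follows essentially the same approach as the paper, which gives no explicit proof for this corollary beyond the remark that ``their parameters are directly derived from that of corresponding codes.'' You simply unpack that derivation: same defining set (hence same length), restriction of $\beta$ to $\Supp(\Wa f)$ (hence the third weight disappears and dimension drops to $n-s$), and multiplicities read off from Lemma~\ref{Lemma8} combined with Proposition~\ref{Proposition4}. One small caution: Lemma~\ref{Walshsupport} only gives closure of $\Supp(\Wa f)$ under scalar multiplication, not under addition, so it does not by itself prove the subcode is linear; your parenthetical about the quadratic case is the honest way to handle this, and indeed the paper itself does not justify linearity beyond that setting.
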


\begin{corollary}\label{Corollary13}
% Let $n+s$ be odd and $f\in WRP$. Then, 
The subcode  $\bar{\sC}_{\bar{D}_{f,sq}}$  of the code  $\sC_{\bar{D}_{f,sq}}$ of Corollary \ref{Corollary8}    is the three-weight linear code with parameters 
$\left[\frac{1}{2}(p^{n-1}+\epsilon \sqrt{p^*}^{n+s-1}),n-s\right]_p$ whose weight distribution is listed in  Table \ref{table5555}.
\begin{table}[!htp]
\begin{center}	
\begin{tabular}{|c|c|c|}
\hline
Hamming weight $w$   & Multiplicity $A_w$ \\
\hline
\hline
\footnotesize{$0$} & \footnotesize{$1$}\\
%\hline
%\footnotesize{$\frac{(p-1)}{2}\left(p^{n-2}+\epsilon \eta_0(-1)\sqrt{p^*}^{n+s-3}\right)$} & \footnotesize{$p^n-p^{n-s}$}\\
\hline
\footnotesize{$  \frac{(p-1)}{2}p^{n-2}$} & \footnotesize{$p^{n-s-1}-1   $}\\
\hline
\footnotesize{$ \frac{1}{2}\left((p-1)p^{n-2}+\epsilon (p^*+1) \sqrt{p^*}^{n+s-3}\right) $} & \footnotesize{$ \frac{p-1}{2}\left( p^{n-s-1} + \epsilon \eta_0^n(-1)    \sqrt{p^*}^{n-s-1}\right) $}\\
\hline
\footnotesize{ $\frac{1}{2}\left((p-1)p^{n-2}+\epsilon (p^*-1)\sqrt{p^*}^{n+s-3}\right) $}& \footnotesize{$\frac{p-1}{2}\left( p^{n-s-1} - \epsilon \eta_0^n(-1)    \sqrt{p^*}^{n-s-1}\right)  $}\\
\hline
\end{tabular}\end{center}	
\caption{
\label{table5555}  The weight distribution of  $\bar{\sC}_{\bar{D}_{f,sq}}$  when $n+s$ is odd}
\end{table}
\end{corollary}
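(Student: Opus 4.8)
The plan is to realise $\bar{\sC}_{\bar{D}_{f,sq}}$ as the subcode of the code $\sC_{\bar{D}_{f,sq}}$ of Corollary \ref{Corollary8} obtained by letting the index $\beta$ run only over $\Supp(\Wa f)$ rather than over all of $\F_q$; equivalently, it is the punctured version of the code $\bar{\sC}_{D_{f,sq}}$ of Corollary \ref{Corollary11}. Since the Hamming weight $\wt(c_\beta)$ depends only on $\beta$ and has already been evaluated for every $\beta\in\F_q$ in the proofs leading to Theorem \ref{Theorem4} (through Lemmas \ref{Lemma7} and \ref{Lemma14}), the genuinely new points are the length coming from $\bar{D}_{f,sq}$, the dimension, and the accounting of the multiplicities over $\beta\in\Supp(\Wa f)$.

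First I would settle the length and the dimension. Since $D_{f,sq}=\F_p^{\star}\bar{D}_{f,sq}$, we have $\#\bar{D}_{f,sq}=\frac{1}{p-1}\#D_{f,sq}=\frac{1}{2}\left(p^{n-1}+\epsilon\sqrt{p^*}^{n+s-1}\right)$ by Lemma \ref{Lemma7}, which is the asserted length. The evaluation map $\beta\mapsto c_\beta$ is $\F_p$-linear, and it is injective on $\F_q$ because the code $\sC_{\bar{D}_{f,sq}}$ of Corollary \ref{Corollary8} has dimension $n$; hence its restriction to $\Supp(\Wa f)$ is injective. As $\Supp(\Wa f)$ is an $\F_p$-subspace of $\F_q$ of cardinality $p^{n-s}$ (Lemma \ref{SupportLemma}), the code $\bar{\sC}_{\bar{D}_{f,sq}}$ has dimension $n-s$.

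Next I would compute the weight distribution. For $0\neq\beta\in\Supp(\Wa f)$, the count $\#\{x\in\bar{D}_{f,sq}:\Tr^n(\beta x)=0\}$ equals $\frac{1}{p-1}\#\{x\in D_{f,sq}:\Tr^n(\beta x)=0\}$, because $f(x)\in SQ$ if and only if $f(ax)\in SQ$, while $\Tr^n(\beta ax)=a\Tr^n(\beta x)$ for every $a\in\F_p^{\star}$; consequently $\wt(c_\beta)$ in $\bar{\sC}_{\bar{D}_{f,sq}}$ is $\frac{1}{p-1}$ times the corresponding weight of $\sC_{D_{f,sq}}$. By Lemma \ref{Lemma14} (with $n+s$ odd) this weight takes exactly the three values listed in Table \ref{table5555}, according to whether $g(\beta)=0$, $g(\beta)\in SQ$, or $g(\beta)\in NSQ$, and by Lemma \ref{Lemma8} (with $n-s$ odd) the corresponding numbers of such $\beta$ are $\sN_g(0)-1$, $\sum_{j\in SQ}\sN_g(j)$, and $\sum_{j\in NSQ}\sN_g(j)$. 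Substituting $\sN_g(0)=p^{n-s-1}$ and $\sN_g(j)=p^{n-s-1}\pm\epsilon\eta_0^{n}(-1)\sqrt{p^*}^{n-s-1}$ produces exactly Table \ref{table5555}; equivalently, this table is obtained from that of Corollary \ref{Corollary8} by deleting the term $p^n-p^{n-s}$ (contributed by the $\beta\in\F_q^{\star}\setminus\Supp(\Wa f)$ that are now excluded) from the last multiplicity.

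The delicate point is the dimension claim: one must know that $\Supp(\Wa f)$ is an $\F_p$-linear subspace of $\F_q$ of dimension $n-s$, so that $\bar{\sC}_{\bar{D}_{f,sq}}$ is genuinely a linear code. This is immediate for the quadratic members of $WRP$, whose Walsh support is an $(n-s)$-dimensional subspace of $\F_q$; in general it rests on Lemma \ref{Walshsupport} together with the cardinality count of Lemma \ref{SupportLemma}. Beyond this, the argument is the same routine substitution already carried out, with the obvious modifications, for Corollaries \ref{Corollary8} and \ref{Corollary11}.
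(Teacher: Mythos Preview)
Your approach matches the paper's: the paper gives no explicit proof for this corollary, simply declaring that the parameters of the subcodes in Corollaries \ref{Corollary10}--\ref{Corollary13} are derived directly from those of the corresponding codes by restricting $\beta$ to $\Supp(\Wa f)$. Your derivation via Lemmas \ref{Lemma7}, \ref{Lemma8} and \ref{Lemma14} (equivalently, by dividing the weights of Corollary \ref{Corollary11} by $p-1$, or by deleting the $p^n-p^{n-s}$ contribution from Corollary \ref{Corollary8}) is exactly this, carried out in full.

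One small caveat on the point you yourself flag as delicate: saying that in the general case the $\F_p$-linearity of $\Supp(\Wa f)$ ``rests on Lemma \ref{Walshsupport} together with the cardinality count of Lemma \ref{SupportLemma}'' overstates what those two lemmas deliver---they give closure under $\F_p^\star$-scaling and the cardinality $p^{n-s}$, but neither yields closure under addition. The paper itself does not address this point (it simply asserts that $\bar\sC_{D_f}$ is a linear code of dimension $n-s$), so by raising it you are already being more careful than the source; just be aware that your stated justification for the non-quadratic case does not close the gap.
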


With the similar definition of the subcode $\bar{\sC}_{D_{f}}$ defined by (\ref{LinearSubCodeS}), we have %we can work on the Walsh support of a weakly regular plateaued function $f$ to define a subcode of any constructed code. we can consider 
a linear code involving $D_{f,nsq}$  
\be\nn
\bar{\sC}_{D_{f,nsq}}=\{c_\beta= (\Tr^n(\beta d''_1), \Tr^n(\beta d''_2), \ldots, \Tr^n(\beta d''_m)): \beta \in Supp(\Wa f )\},
\ee
which is the subcode  of $\sC_{D_{f,nsq}}$ defined by  (\ref{LinearCodeNSQ}). 
Hence, the following codes   in  Corollaries \ref{Corollary14}  and \ref{Corollary15}  are the subcodes of the constructed codes in 
Theorem  \ref{Theorem5} and Corollary \ref{Corollary9}, respectively.
%This suggests that Corollaries \ref{Corollary14}  and \ref{Corollary15}  follow directly from Theorem  \ref{Theorem5} and Corollary \ref{Corollary9}, respectively.

\begin{corollary}\label{Corollary14}
%Let $n+s$ be odd and $f\in WRP$. Then, 
The subcode  $\bar{\sC}_{D_{f,nsq}}$   of the code  $\sC_{D_{f,nsq}}$  of Theorem \ref{Theorem5}  is the three-weight linear code with parameters 
$\left[\frac{p-1}{2}(p^{n-1}-\epsilon \sqrt{p^*}^{n+s-1}),n-s\right]_p$ whose weight distribution is listed in  Table \ref{table44}. %, where  $\epsilon=\pm 1$ is the sign of $\Wa{f}$.  
\begin{table}[!htp]
\begin{center}	
\begin{tabular}{|c|c|c|}
\hline
Hamming weight $w$   & Multiplicity $A_w$ \\
\hline
\hline
\footnotesize{$0$} & \footnotesize{$1$}\\
\hline
\footnotesize{$\frac{p-1}{2}(p-1)p^{n-2}  $} & \footnotesize{$p^{n-s-1}-1   $}\\
\hline
\footnotesize{$ \frac{p-1}{2}\left((p-1)p^{n-2}-\epsilon(p^*-1) \sqrt{p^*}^{n+s-3}\right) $} & \footnotesize{$\frac{p-1}{2}\left(p^{n-s-1} +\epsilon   \eta_0^n(-1)    \sqrt{p^*}^{n-s-1} \right)$}\\
\hline
\footnotesize{ $ \frac{p-1}{2}\left((p-1)p^{n-2}-\epsilon(p^*+1) \sqrt{p^*}^{n+s-3}\right)$}& \footnotesize{$\frac{p-1}{2}\left( p^{n-s-1} - \epsilon \eta_0^n(-1)    \sqrt{p^*}^{n-s-1}\right)  $}\\
\hline
\end{tabular}\end{center}	
\caption{
\label{table44}  The weight distribution of  $\bar{\sC}_{D_{f,nsq}}$  when $n+s$ is odd}
\end{table}
\end{corollary}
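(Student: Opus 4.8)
The plan is to rerun the weight count of Theorem~\ref{Theorem5} with the evaluation parameter $\beta$ now restricted to $\Supp(\Wa f)$, and then recount the multiplicities. By construction $\bar{\sC}_{D_{f,nsq}}$ is the subcode $\{c_\beta:\beta\in\Supp(\Wa f)\}$ of $\sC_{D_{f,nsq}}$, so it has the same length $\#D_{f,nsq}=\frac{p-1}{2}\bigl(p^{n-1}-\epsilon\sqrt{p^*}^{n+s-1}\bigr)$, given by Lemma~\ref{Lemma7}. For the dimension I would use that the $\F_p$-linear map $\beta\mapsto c_\beta$ on $\F_q$ is injective --- this is exactly the content of the dimension count $n$ for $\sC_{D_{f,nsq}}$ in Theorem~\ref{Theorem5} --- hence injective on $\Supp(\Wa f)$ as well; since $\#\Supp(\Wa f)=p^{n-s}$ by Lemma~\ref{SupportLemma} and $\bar{\sC}_{D_{f,nsq}}$ is a subcode, it is a linear code of dimension $n-s$.

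Next I would read the nonzero weights straight off the proof of Theorem~\ref{Theorem5}: for $0\neq\beta\in\Supp(\Wa f)$, Lemmas~\ref{Lemma7} and~\ref{Lemma14} give $\wt(c_\beta)=\#D_{f,nsq}-\sN_{nsq,\beta}$, that is,
\[
\wt(c_\beta)=\begin{cases}
\frac{p-1}{2}(p-1)p^{n-2}, & g(\beta)=0,\\
\frac{p-1}{2}\bigl((p-1)p^{n-2}-\epsilon(p^*-1)\sqrt{p^*}^{n+s-3}\bigr), & g(\beta)\in SQ,\\
\frac{p-1}{2}\bigl((p-1)p^{n-2}-\epsilon(p^*+1)\sqrt{p^*}^{n+s-3}\bigr), & g(\beta)\in NSQ,
\end{cases}
\]
where I rewrite $\sqrt{p^*}^{n+s-3}(1-p^*)=-(p^*-1)\sqrt{p^*}^{n+s-3}$ to match Table~\ref{table44}. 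The codeword $c_0=0$ supplies the zero-weight row (recall $0\in\Supp(\Wa f)$, since $f$ is unbalanced).

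Finally, for the multiplicities I would count the elements of $\Supp(\Wa f)$ by the value of $g$, which is precisely $\sN_g(j)$ of Lemma~\ref{Lemma8}. Since $n+s$ is odd, so is $n-s$, and Lemma~\ref{Lemma8} yields $\sN_g(0)=p^{n-s-1}$ and $\sN_g(j)=p^{n-s-1}+\epsilon\eta_0^n(-1)\eta_0(j)\sqrt{p^*}^{n-s-1}$ for $j\in\F_p^{\star}$. Discarding $\beta=0$ from the class $g=0$ (note $g(0)=0$ by Proposition~\ref{Proposition4}) gives multiplicity $p^{n-s-1}-1$ for the first nonzero weight, while summing $\sN_g(j)$ over $j\in SQ$ and over $j\in NSQ$ gives $\frac{p-1}{2}\bigl(p^{n-s-1}+\epsilon\eta_0^n(-1)\sqrt{p^*}^{n-s-1}\bigr)$ and $\frac{p-1}{2}\bigl(p^{n-s-1}-\epsilon\eta_0^n(-1)\sqrt{p^*}^{n-s-1}\bigr)$ for the two remaining weights, matching Table~\ref{table44}; the identity $1+(p^{n-s-1}-1)+(p-1)p^{n-s-1}=p^{n-s}$ confirms that no codeword is missed. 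The only non-routine point is the dimension claim, i.e.\ that distinct $\beta\in\Supp(\Wa f)$ never collapse to the same codeword, but this is inherited for free from the $n$-dimensionality of the ambient code $\sC_{D_{f,nsq}}$ of Theorem~\ref{Theorem5}; all the weight arithmetic has already been carried out there and in Lemmas~\ref{Lemma8} and~\ref{Lemma14}.
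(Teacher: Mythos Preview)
Your derivation is precisely what the paper intends: it offers no separate proof for this corollary, only the remark that the subcode's parameters are ``directly derived'' from Theorem~\ref{Theorem5} by restricting $\beta$ to $\Supp(\Wa f)$, and you have filled in those details correctly via Lemmas~\ref{Lemma7}, \ref{Lemma8} and~\ref{Lemma14}. One small remark on your closing sentence: injectivity of $\beta\mapsto c_\beta$ is indeed free from the ambient dimension, but the genuinely tacit point (left unaddressed in the paper as well) is that $\Supp(\Wa f)$ is an $\F_p$-linear subspace of $\F_q$, which is what makes $\bar{\sC}_{D_{f,nsq}}$ closed under addition and hence a linear subcode in the first place.
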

\begin{corollary}\label{Corollary15}
%Let $n+s$ be odd and $f\in WRP$. Then, 
The subcode   $\bar{\sC}_{\bar{D}_{f,nsq}}$  of the code  $\sC_{\bar{D}_{f,nsq}}$ of  Corollary \ref{Corollary9}  is the   three-weight linear code with parameters 
$\left[\frac{1}{2}(p^{n-1}-\epsilon \sqrt{p^*}^{n+s-1}),n-s\right]_p$ whose weight distribution is listed in  Table \ref{table4444}. %, where  $\epsilon=\pm 1$ is the sign of $\Wa{f}$.  
\begin{table}[!htp]
\begin{center}	
\begin{tabular}{|c|c|c|}
\hline
Hamming weight $w$   & Multiplicity $A_w$ \\
\hline
\hline
\footnotesize{$0$} & \footnotesize{$1$}\\
%\hline
%\footnotesize{$\frac{(p-1)}{2}\left(p^{n-2}-\epsilon \eta_0(-1)\sqrt{p^*}^{n+s-3}\right) $} & \footnotesize{$p^n-p^{n-s}$}\\
\hline
\footnotesize{$\frac{1}{2}(p-1)p^{n-2}  $} & \footnotesize{$p^{n-s-1}-1   $}\\
\hline
\footnotesize{$ \frac{1}{2}\left((p-1)p^{n-2}-\epsilon(p^*-1) \sqrt{p^*}^{n+s-3}\right) $} & \footnotesize{$\frac{p-1}{2}\left(p^{n-s-1} +\epsilon   \eta_0^n(-1)    \sqrt{p^*}^{n-s-1} \right)$}\\
\hline
\footnotesize{ $ \frac{1}{2}\left((p-1)p^{n-2}-\epsilon(p^*+1) \sqrt{p^*}^{n+s-3}\right)$}& \footnotesize{$\frac{p-1}{2}\left( p^{n-s-1} - \epsilon \eta_0^n(-1)    \sqrt{p^*}^{n-s-1}\right)  $}\\
\hline
\end{tabular}\end{center}	
\caption{
\label{table4444}  The weight distribution of  $\bar{\sC}_{\bar{D}_{f,nsq}}$  when $n+s$ is odd}
\end{table}
\end{corollary}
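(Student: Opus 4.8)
The plan is to imitate the proofs of Theorem \ref{Theorem5} and Corollary \ref{Corollary9}, the only modification being that the evaluation parameter $\beta$ now ranges over $\Supp(\Wa f)$ rather than over all of $\F_q$; all the exponential-sum data that will be needed have already been isolated in Lemmas \ref{Lemma7}, \ref{Lemma8}, \ref{Lemma14} and Proposition \ref{Proposition4}. First I would settle the length: since $n+s$ is odd (inherited from Corollary \ref{Corollary9}), Lemma \ref{Lemma7} gives $\sN_f(j)=p^{n-1}-\epsilon\sqrt{p^*}^{n+s-1}$ for each $j\in NSQ$, whence $\#D_{f,nsq}=\frac{p-1}{2}\bigl(p^{n-1}-\epsilon\sqrt{p^*}^{n+s-1}\bigr)$; because $f(ax)=a^hf(x)$ with $h$ even forces $f(x)\in NSQ\iff f(ax)\in NSQ$, one has $D_{f,nsq}=\F_p^{\star}\bar{D}_{f,nsq}$, so $\#\bar{D}_{f,nsq}=\frac{1}{2}\bigl(p^{n-1}-\epsilon\sqrt{p^*}^{n+s-1}\bigr)$, which is the stated length and, as it must be for a subcode, coincides with the length in Corollary \ref{Corollary9}.

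Next I would compute the weights and their frequencies. As in the proof of Corollary \ref{Corollary9}, the identity $\Tr^n(\beta ab)=a\,\Tr^n(\beta b)$ for $a\in\F_p^{\star}$ yields $\wt(c_\beta)=\bigl(\#D_{f,nsq}-\sN_{nsq,\beta}\bigr)/(p-1)$ for every $\beta\neq 0$. Here $\beta$ runs only over $\Supp(\Wa f)$, so I would substitute the values of $\sN_{nsq,\beta}$ from the case ``$0\neq\beta\in\Supp(\Wa f)$, $n+s$ odd'' of Lemma \ref{Lemma14}, split according to $g(\beta)=0$, $g(\beta)\in SQ$, $g(\beta)\in NSQ$; dividing by $p-1$ and using $\sqrt{p^*}^{2}=p^*$ to collapse terms produces exactly the three nonzero weights $\frac{1}{2}(p-1)p^{n-2}$, $\frac{1}{2}\bigl((p-1)p^{n-2}-\epsilon(p^*-1)\sqrt{p^*}^{n+s-3}\bigr)$, $\frac{1}{2}\bigl((p-1)p^{n-2}-\epsilon(p^*+1)\sqrt{p^*}^{n+s-3}\bigr)$ of Table \ref{table4444}. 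For the multiplicities I would count the $\beta\in\Supp(\Wa f)$ in each class by Lemma \ref{Lemma8} with $n-s$ odd: there are $\sN_g(0)=p^{n-s-1}$ elements with $g(\beta)=0$, from which I remove the single element $\beta=0$ (legitimate because $g(0)=0$ by Proposition \ref{Proposition4}), giving $p^{n-s-1}-1$; summing $\sN_g(j)$ over the $\frac{p-1}{2}$ squares (resp.\ nonsquares) $j$ gives $\frac{p-1}{2}\bigl(p^{n-s-1}\pm\epsilon\eta_0^n(-1)\sqrt{p^*}^{n-s-1}\bigr)$. These are the multiplicities in Table \ref{table4444}, and a quick check shows they add up to $p^{n-s}$.

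Finally, for the dimension I would note that all three nonzero weights above are strictly positive, so $c_\beta\neq 0$ for every $0\neq\beta\in\Supp(\Wa f)$; hence $\beta\mapsto c_\beta$ is injective on $\Supp(\Wa f)$, and since $\#\Supp(\Wa f)=p^{n-s}$ by Lemma \ref{SupportLemma}, the code has $p^{n-s}$ codewords, i.e.\ dimension $n-s$, the same reduction by $s$ already seen in Corollaries \ref{Corollary10}--\ref{Corollary14}. The one ingredient that is not mere bookkeeping --- and it is the same delicate point underlying every subcode statement in this section --- is that $\Supp(\Wa f)\cup\{0\}$ has to be an $\F_p$-linear subspace of $\F_q$ for $\bar{\sC}_{\bar{D}_{f,nsq}}$ to be genuinely a linear code of dimension $n-s$; for the quadratic members of $WRP$ this is guaranteed by the description $\Supp(\Wa f)=\mathrm{Im}(L)$ of the image of the associated linearized polynomial $L$, whose rank is $n-s$. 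I expect this structural fact, rather than the arithmetic of the tables, to be the part that requires the most care.
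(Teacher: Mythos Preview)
Your proposal is correct and follows precisely the route the paper intends: the paper gives no explicit proof for Corollary \ref{Corollary15}, stating only that its parameters ``are directly derived from that of corresponding codes,'' i.e.\ from Theorem \ref{Theorem5} and Corollary \ref{Corollary9} by restricting $\beta$ to $\Supp(\Wa f)$ and reading off weights from Lemma \ref{Lemma14} and multiplicities from Lemma \ref{Lemma8}, exactly as you do.

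Your final paragraph, however, surfaces a point the paper does not address: for $\bar{\sC}_{\bar D_{f,nsq}}$ to be a \emph{linear} code one needs $\Supp(\Wa f)$ to be an $\F_p$-subspace of $\F_q$. The paper only proves closure under $\F_p^\star$-scaling (Lemma \ref{Walshsupport}) and never establishes closure under addition for general $f\in WRP$; it simply takes the subcode claim for granted. Your observation that this holds for quadratic $f$ (where $\Supp(\Wa f)$ is the image of the associated symmetric bilinear form, hence a subspace of dimension $n-s$) is the right justification in that case, and you are correct that outside the quadratic setting this is the delicate point --- one the paper leaves open. So your argument is at least as complete as the paper's, and more honest about the structural hypothesis actually being used.
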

  \begin{remark}
When we assume only the weakly regular bent-ness in this subsection, we can obviously recover the linear codes obtained by Tang et al.   \cite{tang2016linear}. Therefore, this subsection can be viewed as an extension of \cite{tang2016linear} to the notion of weakly regular  plateaued functions.
\end{remark}
The following natural question may now spring to mind: Are the constructed codes in this section  minimal? The following section investigates the minimality of the constructed codes.

\section{The minimality of the constructed  linear codes}\label{SectionSSS}
This section confirms that the constructed codes   from weakly regular    plateaued functions  in Section \ref{Constructions}  are minimal. In other words, with the help of Lemma \ref{Minimality}, we   observe  that  all   nonzero codewords of the constructed  codes are minimal  for almost all cases. 
To do this, we consider separately   the constructed codes  in
Theorems \ref{Theorem1},  \ref{Theorem2},  \ref{Theorem3},  \ref{Theorem4} and   \ref{Theorem5}.

\begin{theorem}\label{Theorem6}
Let $n+s$ be an even integer.  If $\epsilon \eta_0^{(n+s)/2}(-1)=1$, then the linear code $\sC_{D_f}$ of Theorem \ref{Theorem1} is  minimal       with parameters $$\left[ p^{n-1}-1+  (p-1) p^{(n+s-2)/2},n,(p-1)p^{n-2}\right]_p$$ when  $0\leq s\leq n-4$; otherwise,  $\left[ p^{n-1}-1-  (p-1) p^{(n+s-2)/2},n,(p-1)\left(p^{n-2}-p^{(n+s-2)/2}\right)\right]_p$ when $0\leq s\leq n-6$.
%\be\nn
%\begin{array}{ll}
%\left[ p^{n-1}-1+  (p-1) p^{(n+s-2)/2},n,(p-1)p^{n-2}\right], & \mbox{ if } \epsilon \eta_0^{(n+s)/2}(-1)=1,\\
%\left[ p^{n-1}-1-  (p-1) p^{(n+s-2)/2},n,(p-1)\left(p^{n-2}-p^{(n+s-2)/2}\right)\right],  & \mbox{ otherwise,}
%\end{array}
%\ee   when we have the condition $0\leq s\leq n-4$ and  $0\leq s\leq n-6$, respectively
\end{theorem}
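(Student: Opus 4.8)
The plan is to read off the three nonzero Hamming weights of $\sC_{D_f}$ from Table \ref{table1}, rewrite the irrational factors $\sqrt{p^*}^{\,n+s-4}$ and $\sqrt{p^*}^{\,n+s-2}$ as powers of $p$ carrying a single common sign, identify $w_{\min}$ and $w_{\max}$ in each of the two sign cases, and conclude by the Ashikhmin--Barg bound (Lemma \ref{Minimality}).

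First I would set $\delta:=\epsilon\,\eta_0^{(n+s)/2}(-1)\in\{+1,-1\}$. Since $n+s$ is even, $\sqrt{p^*}^{\,2k}=\eta_0^{k}(-1)p^{k}$ together with a short parity check gives
$$\epsilon(p-1)\sqrt{p^*}^{\,n+s-4}=\delta(p-1)p^{(n+s-4)/2},\qquad \epsilon\,\eta_0(-1)\sqrt{p^*}^{\,n+s-2}=\delta\,p^{(n+s-2)/2},$$
so that the nonzero weights in Table \ref{table1} become
$$w_1=(p-1)\!\left(p^{n-2}+\delta(p-1)p^{(n+s-4)/2}\right),\quad w_2=(p-1)p^{n-2},\quad w_3=(p-1)\!\left(p^{n-2}+\delta\,p^{(n+s-2)/2}\right),$$
and, by the same computation applied to the length formula of Theorem \ref{Theorem1}, the length of $\sC_{D_f}$ equals $p^{n-1}-1+\delta(p-1)p^{(n+s-2)/2}$.

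Next I would compare the $w_i$. One computes $w_1-w_2=\delta(p-1)^2 p^{(n+s-4)/2}$ and $w_3-w_1=\delta(p-1)p^{(n+s-4)/2}$. For the ranges of $s$ in the statement all three weights are positive, and: if $\delta=1$ then $w_3>w_1>w_2$, so $w_{\min}=w_2=(p-1)p^{n-2}$ and $w_{\max}=w_3=(p-1)(p^{n-2}+p^{(n+s-2)/2})$; if $\delta=-1$ then $w_3<w_1<w_2$, so $w_{\max}=w_2=(p-1)p^{n-2}$ and $w_{\min}=w_3=(p-1)(p^{n-2}-p^{(n+s-2)/2})$. Substituting into the Ashikhmin--Barg condition $\tfrac{p-1}{p}<\tfrac{w_{\min}}{w_{\max}}$ and clearing denominators, the case $\delta=1$ reduces to $(p-1)p^{(n+s-2)/2}<p^{n-2}$ and the case $\delta=-1$ reduces to $p^{(n+s)/2}<p^{n-2}$; because $n+s$ is even, hence $n-s$ is even, these inequalities hold exactly when $0\le s\le n-4$ and when $0\le s\le n-6$, respectively. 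Reading off $d=w_{\min}$ and the length computed above then yields the two parameter lists in the statement.

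The only real difficulty is bookkeeping: one has to track the sign $\delta$ carefully through the powers of $\sqrt{p^*}$ and through the comparison of the $w_i$, and note that $w_{\min}$ and $w_{\max}$ genuinely interchange between the two sign cases, which is exactly what makes the Ashikhmin--Barg threshold produce the two different bounds $n-4$ and $n-6$; the parity constraint $s\equiv n\pmod 2$ is then what upgrades the naive bounds $s<n-2$ and $s<n-4$ to $s\le n-4$ and $s\le n-6$.
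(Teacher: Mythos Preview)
Your proposal is correct and follows essentially the same route as the paper's own proof: identify $w_{\min}$ and $w_{\max}$ from Table~\ref{table1} in each sign case and apply the Ashikhmin--Barg bound (Lemma~\ref{Minimality}). The only difference is cosmetic---you introduce the shorthand $\delta$ and explicitly order all three weights $w_1,w_2,w_3$, whereas the paper simply asserts the values of $w_{\min}$ and $w_{\max}$ without discussing the intermediate weight $w_1$; your version is therefore slightly more thorough but not genuinely different.
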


\begin{proof} If $\epsilon \eta_0^{(n+s)/2}(-1)=1$, then 
$w_{\min}=(p-1)p^{n-2}\mbox{ and } 
w_{\max}=(p-1)\left(p^{n-2} + p^{(n+s-2)/2}\right)$; otherwise,  
$w_{\min}=(p-1)\left(p^{n-2}- p^{(n+s-2)/2}\right) \mbox{ and } 
w_{\max}=(p-1)p^{n-2}.$ 
In the  first case, 
we observe that
$$ \frac{p-1}{p} <\frac{(p-1)p^{n-2}}{(p-1)\left(p^{n-2} + p^{(n+s-2)/2}\right)}$$
if   $0\leq s \leq n-4$.
Similarly, in the  second case, 
we observe that
$$ \frac{p-1}{p} <\frac{(p-1)\left(p^{n-2}- p^{(n+s-2)/2}\right)}{(p-1)p^{n-2}}$$
if $0\leq s \leq n-6$.
Hence, the proof is  completed from Lemma \ref{Minimality}.
\end{proof} 

 \begin{corollary}
The constructed codes in Corollaries \ref{Corollary1}, \ref{Corollary3} and \ref{Corollary5}  are minimal  with the corresponding condition in Theorem \ref{Theorem6}.
\end{corollary}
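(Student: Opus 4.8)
The plan is to derive the minimality of each of the three codes directly from the Ashikhmin--Barg criterion (Lemma~\ref{Minimality}), by showing that the ratio $w_{\min}/w_{\max}$ of every one of them coincides with that of the code $\sC_{D_f}$ of Theorem~\ref{Theorem1} already analyzed in Theorem~\ref{Theorem6}. The point is that the two operations relating these codes to $\sC_{D_f}$ are harmless for this ratio: passing from $\sC_{D_f}$ to its punctured version $\sC_{\bar{D}_f}$ divides every nonzero codeword weight by the fixed positive constant $p-1$, and passing to a subcode indexed by $\Supp(\Wa f)$ merely deletes codewords without creating any new weight value. Hence, provided the extreme weights of $\sC_{D_f}$ are among the surviving ones, $w_{\min}/w_{\max}$ is unchanged, and everything reduces to locating $w_{\min}$ and $w_{\max}$ in Table~\ref{table1}.

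First I would rewrite the three weights of $\sC_{D_f}$ via $\epsilon\,\eta_0(-1)\sqrt{p^*}^{\,n+s-2}=\epsilon\,\eta_0^{(n+s)/2}(-1)\,p^{(n+s-2)/2}$ and $\epsilon(p-1)\sqrt{p^*}^{\,n+s-4}=\epsilon\,\eta_0^{(n+s)/2}(-1)(p-1)\,p^{(n+s-4)/2}$, and verify the elementary ordering. When $\epsilon\,\eta_0^{(n+s)/2}(-1)=1$, since $0<(p-1)p^{(n+s-4)/2}<p^{(n+s-2)/2}$, the three weights satisfy
\[ (p-1)p^{n-2}<(p-1)\bigl(p^{n-2}+(p-1)p^{(n+s-4)/2}\bigr)<(p-1)\bigl(p^{n-2}+p^{(n+s-2)/2}\bigr), \]
and when $\epsilon\,\eta_0^{(n+s)/2}(-1)=-1$ the order is reversed. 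In either case the weight attached to $\beta\in\F_q^{\star}\setminus\Supp(\Wa f)$ is the \emph{middle} one, so both $w_{\min}$ and $w_{\max}$ of $\sC_{D_f}$ belong to the pair $\{(p-1)p^{n-2},\,(p-1)(p^{n-2}+\epsilon\,\eta_0(-1)\sqrt{p^*}^{\,n+s-2})\}$ --- precisely the two weights retained in $\bar{\sC}_{D_f}$ (Table~\ref{table11}) and, after division by $p-1$, in $\bar{\sC}_{\bar{D}_f}$ (Table~\ref{table1111}).

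Comparing with Tables~\ref{table111}, \ref{table11} and \ref{table1111} then shows that the nonzero weights of $\sC_{\bar{D}_f}$ are exactly those of $\sC_{D_f}$ scaled by $1/(p-1)$, while those of $\bar{\sC}_{D_f}$ and $\bar{\sC}_{\bar{D}_f}$ are obtained from $\sC_{D_f}$ and $\sC_{\bar{D}_f}$ respectively by discarding the intermediate weight. Consequently each of the three codes has the same value of $w_{\min}/w_{\max}$ as $\sC_{D_f}$, so the inequality $\frac{p-1}{p}<\frac{w_{\min}}{w_{\max}}$ holds under exactly the conditions on $s$ obtained in the proof of Theorem~\ref{Theorem6}, namely $0\le s\le n-4$ when $\epsilon\,\eta_0^{(n+s)/2}(-1)=1$ and $0\le s\le n-6$ otherwise. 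Minimality then follows from Lemma~\ref{Minimality}; in the degenerate case where one of the multiplicities in the tables vanishes the code is a one-weight code and hence minimal trivially.

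The only delicate point is the bookkeeping of the powers of $\sqrt{p^*}$ together with the sign $\epsilon\,\eta_0^{(n+s)/2}(-1)$, needed to place the three weights in the correct order in each parity/sign case; apart from this there is no genuine obstacle, since the required estimate is literally the one already carried out for Theorem~\ref{Theorem6}.
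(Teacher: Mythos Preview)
Your proposal is correct and follows essentially the same approach as the paper, which states this corollary without proof as an immediate consequence of Theorem~\ref{Theorem6}. Your explicit verification that the weight corresponding to $\beta\in\F_q^{\star}\setminus\Supp(\Wa f)$ is always the \emph{intermediate} one among the three weights in Table~\ref{table1} (so that passing to the subcodes of Corollaries~\ref{Corollary3} and~\ref{Corollary5} preserves $w_{\min}/w_{\max}$) is exactly the point the paper leaves implicit, and your sign computation $\epsilon\,\eta_0(-1)\sqrt{p^*}^{\,n+s-2}=\epsilon\,\eta_0^{(n+s)/2}(-1)\,p^{(n+s-2)/2}$ is correct.
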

 
\begin{theorem}\label{Theorem7}
Let $n+s$ be an odd integer with $0\leq s\leq n-5$.  Then the linear code $\sC_{D_f}$ 
of Theorem \ref{Theorem2} is   minimal     with parameters
 $\left[ p^{n-1}-1,n,(p-1)\left(p^{n-2}- p^{(n+s-3)/2}\right)\right]_p$.
\end{theorem}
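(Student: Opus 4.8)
The plan is to apply the Ashikhmin--Barg sufficient condition of Lemma \ref{Minimality}, following the same scheme as the proof of Theorem \ref{Theorem6}. First I would read off the three nonzero Hamming weights of $\sC_{D_f}$ from Table \ref{table2}. Since $n+s$ is odd, the exponent $n+s-3$ is even, so $\sqrt{p^*}^{\,n+s-3}=\eta_0^{(n+s-3)/2}(-1)\,p^{(n+s-3)/2}$ and in particular $\bigl|\sqrt{p^*}^{\,n+s-3}\bigr|=p^{(n+s-3)/2}$. Hence, whatever the value of $\epsilon\,\eta_0^{(n+s-3)/2}(-1)\in\{1,-1\}$, the set of nonzero weights is
$$
\Bigl\{\,(p-1)p^{n-2},\ (p-1)\bigl(p^{n-2}-p^{(n+s-3)/2}\bigr),\ (p-1)\bigl(p^{n-2}+p^{(n+s-3)/2}\bigr)\,\Bigr\},
$$
so that $w_{\min}=(p-1)\bigl(p^{n-2}-p^{(n+s-3)/2}\bigr)$ and $w_{\max}=(p-1)\bigl(p^{n-2}+p^{(n+s-3)/2}\bigr)$. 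This also pins down the minimum distance and confirms the stated parameters $[\,p^{n-1}-1,\ n,\ (p-1)(p^{n-2}-p^{(n+s-3)/2})\,]_p$.

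Next I would reduce the Ashikhmin--Barg inequality $\frac{p-1}{p}<\frac{w_{\min}}{w_{\max}}$ to an elementary estimate. Put $a=p^{n-2}$ and $b=p^{(n+s-3)/2}$; both are positive, and $a>b$ since $s\le n-5$ forces $(n+s-3)/2<n-2$. Multiplying through by the positive quantities $p$ and $a+b$, the inequality $\frac{p-1}{p}<\frac{a-b}{a+b}$ becomes $(p-1)(a+b)<p(a-b)$, which simplifies to $(2p-1)\,b<a$, i.e.
$$
(2p-1)\,p^{(n+s-3)/2}<p^{n-2}.
$$

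Finally I would verify this last inequality under the hypothesis $0\le s\le n-5$, which gives $(n+s-3)/2\le n-4$, whence
$$
(2p-1)\,p^{(n+s-3)/2}\le(2p-1)\,p^{n-4}<p^{2}\,p^{n-4}=p^{n-2},
$$
where we used $2p-1<p^{2}$, valid because $p^{2}-(2p-1)=(p-1)^{2}>0$. Thus the Ashikhmin--Barg condition holds and Lemma \ref{Minimality} yields that every nonzero codeword of $\sC_{D_f}$ is minimal, i.e. $\sC_{D_f}$ is minimal. I do not expect a genuine obstacle here: the argument is a routine computation, and the only points requiring a little care are tracking the sign $\epsilon\,\eta_0^{(n+s-3)/2}(-1)$ so that $w_{\min}$ and $w_{\max}$ are correctly identified regardless of which of the two signs occurs, and using the parity assumption on $n+s$ to guarantee that $(n+s-3)/2$ is an integer.
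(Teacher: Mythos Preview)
Your proposal is correct and follows essentially the same approach as the paper: identify $w_{\min}$ and $w_{\max}$ from the weight table (noting that the two sign cases $\epsilon\,\eta_0^{(n+s-3)/2}(-1)=\pm 1$ yield the same pair of extremal weights), then verify the Ashikhmin--Barg inequality under the constraint $0\le s\le n-5$. The paper merely asserts that $\frac{p-1}{p}<\frac{w_{\min}}{w_{\max}}$ holds when $0\le s\le n-5$, whereas you spell out the algebra $(2p-1)p^{(n+s-3)/2}<p^{n-2}$ explicitly, but the argument is the same.
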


\begin{proof}  There are two cases:  $\epsilon \eta_0^{(n+s-3)/2}(-1)=\pm 1$. For both cases,  we have  
$w_{\min}=(p-1)\left(p^{n-2}- p^{(n+s-3)/2}\right)$ and
$w_{\max}=(p-1)\left(p^{n-2}+  p^{(n+s-3)/2}\right).$
Then  we observe that 
$$\frac{p-1}{p} <\frac{w_{\min}}{w_{\max}}$$
if $0\leq s\leq n-5$. It then follows from Lemma \ref{Minimality} that all nonzero codewords of $\sC_{D_f}$ are minimal if $0\leq s\leq n-5$.
\end{proof} 

\begin{corollary} Let $n+s$ be an odd integer with $0\leq s\leq n-5$. 
Then the constructed codes in Corollaries \ref{Corollary2}, \ref{Corollary4} and \ref{Corollary6}  are minimal. %  with the corresponding condition in Theorem \ref{Theorem7}.
\end{corollary}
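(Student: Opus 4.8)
The plan is to invoke the Ashikhmin--Barg sufficient condition of Lemma \ref{Minimality}, exactly as in the proof of Theorem \ref{Theorem6}, but now using the weight data from Table \ref{table2}. First I would read off from Theorem \ref{Theorem2} the three nonzero Hamming weights of $\sC_{D_f}$, namely $(p-1)p^{n-2}$, $(p-1)\bigl(p^{n-2}-\epsilon\sqrt{p^*}^{\,n+s-3}\bigr)$ and $(p-1)\bigl(p^{n-2}+\epsilon\sqrt{p^*}^{\,n+s-3}\bigr)$. Since $n+s$ is odd, $n+s-3$ is even, so $\sqrt{p^*}^{\,n+s-3}=\eta_0^{(n+s-3)/2}(-1)\,p^{(n+s-3)/2}$ is a real number of absolute value $p^{(n+s-3)/2}$; hence $\epsilon\sqrt{p^*}^{\,n+s-3}=\pm p^{(n+s-3)/2}$. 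In either sign case the three weights are $(p-1)p^{n-2}$ and $(p-1)\bigl(p^{n-2}\pm p^{(n+s-3)/2}\bigr)$, so
$$
w_{\min}=(p-1)\bigl(p^{n-2}-p^{(n+s-3)/2}\bigr),\qquad
w_{\max}=(p-1)\bigl(p^{n-2}+p^{(n+s-3)/2}\bigr).
$$
(One should note that the hypothesis $0\le s\le n-5$, together with $s\ge 0$, guarantees $n\ge 5$ and in particular $n-2>(n+s-3)/2$, so $w_{\min}$ is genuinely positive and is indeed the smallest of the three weights; the middle weight $(p-1)p^{n-2}$ lies strictly between.)

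Next I would verify the Ashikhmin--Barg inequality $\dfrac{p-1}{p}<\dfrac{w_{\min}}{w_{\max}}$. Substituting the expressions above and cancelling the common factor $(p-1)$, this is equivalent to
$$
(p-1)\bigl(p^{n-2}+p^{(n+s-3)/2}\bigr)<p\bigl(p^{n-2}-p^{(n+s-3)/2}\bigr),
$$
i.e. $p^{n-1}+(p-1)p^{(n+s-3)/2}<p^{n-1}-p\cdot p^{(n+s-3)/2}$ is \emph{not} quite the right rearrangement; carefully, the inequality becomes $p^{n-1}-p^{n-2}+(p-1)p^{(n+s-3)/2}<p^{n-1}-p^{(n+s-1)/2}$, which after cancelling $p^{n-1}$ and rearranging reads $(2p-1)p^{(n+s-3)/2}<p^{n-2}$. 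Since $p\ge 3$ we have $2p-1<p^2$, so it suffices that $p^2\cdot p^{(n+s-3)/2}\le p^{n-2}$, that is $(n+s-3)/2+2\le n-2$, i.e. $n+s+1\le 2n-4$, i.e. $s\le n-5$. This is precisely the stated hypothesis, so the inequality holds and Lemma \ref{Minimality} applies, giving that every nonzero codeword of $\sC_{D_f}$ is minimal.

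Finally, I would record the parameters: by Theorem \ref{Theorem2} the length is $p^{n-1}-1$, the dimension is $n$, and the minimum distance is $w_{\min}=(p-1)\bigl(p^{n-2}-p^{(n+s-3)/2}\bigr)$, which matches the claimed $\left[p^{n-1}-1,n,(p-1)\bigl(p^{n-2}-p^{(n+s-3)/2}\bigr)\right]_p$. There is no real obstacle here: the only thing to be careful about is the elementary estimate $(2p-1)p^{(n+s-3)/2}<p^{n-2}$ and making sure the threshold it produces is exactly $s\le n-5$ rather than something slightly stronger or weaker; the computation above confirms it is sharp given the crude bound $2p-1<p^2$, and one could check small cases ($p=3$, $s=n-5$) to be safe. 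The argument is the odd-$n+s$ analogue of Theorem \ref{Theorem6}, with the single weight case (both signs of $\epsilon\eta_0^{(n+s-3)/2}(-1)$ give the same $w_{\min},w_{\max}$) making it slightly cleaner than the even case.
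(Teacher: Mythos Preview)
Your argument is correct and matches the paper's proof of Theorem~\ref{Theorem7}: both identify $w_{\min}=(p-1)\bigl(p^{n-2}-p^{(n+s-3)/2}\bigr)$ and $w_{\max}=(p-1)\bigl(p^{n-2}+p^{(n+s-3)/2}\bigr)$ regardless of the sign of $\epsilon\eta_0^{(n+s-3)/2}(-1)$, and then check the Ashikhmin--Barg inequality under the hypothesis $s\le n-5$. Your explicit verification that $(2p-1)p^{(n+s-3)/2}<p^{n-2}$ is simply a fuller version of what the paper asserts in one line.

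There is, however, a small mismatch: the statement you were asked to prove is not Theorem~\ref{Theorem7} but the corollary following it, which concerns the codes of Corollaries~\ref{Corollary2}, \ref{Corollary4} and~\ref{Corollary6} rather than $\sC_{D_f}$ itself. The paper gives no separate proof for this corollary; the intended (and easy) argument is that these three codes inherit exactly the same ratio $w_{\min}/w_{\max}$. Indeed, the punctured code $\sC_{\bar D_f}$ of Corollary~\ref{Corollary2} has each weight of $\sC_{D_f}$ divided by $p-1$ (compare Tables~\ref{table2} and~\ref{table222}), while the subcodes of Corollaries~\ref{Corollary4} and~\ref{Corollary6} carry the very same set of nonzero weights as their parent codes (Tables~\ref{table22} and~\ref{table2222}). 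Hence $w_{\min}/w_{\max}$ is unchanged in all three cases, and the Ashikhmin--Barg condition you verified for $\sC_{D_f}$ transfers immediately. One sentence to this effect is all that is missing to bridge your proof of Theorem~\ref{Theorem7} to the corollary actually asked for.
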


\begin{theorem}\label{Theorem8}
Let $n+s$ be an even integer.  If $\epsilon \eta_0^{(n+s)/2}(-1)=1$, then the linear code $\sC_{D_{f,sq}}$ of  Theorem \ref{Theorem3} is  minimal     with parameters
\be\nn
\left[ \frac{p-1}{2}\left(p^{n-1}-  p^{(n+s-2)/2}\right), n, \frac{(p-1)^2}{2}p^{n-2}-  (p-1) p^{(n+s-2)/2}\right]_p
\ee
when $0\leq s\leq n-6$; otherwise, $\left[ \frac{p-1}{2}\left(p^{n-1}+p^{(n+s-2)/2}\right), n,\frac{(p-1)^2}{2}p^{n-2} \right]_p$
when $0\leq s\leq n-4$.
%\be\nn
%\begin{array}{ll}
%\left[ \frac{p-1}{2}\left(p^{n-1}-  p^{(n+s-2)/2}\right), n, \frac{(p-1)^2}{2}p^{n-2}-  (p-1) p^{(n+s-2)/2}\right], & \mbox{ if } \epsilon \eta_0^{(n+s)/2}(-1)=1,\\
%\left[ \frac{p-1}{2}\left(p^{n-1}+p^{(n+s-2)/2}\right), n,\frac{(p-1)^2}{2}p^{n-2} \right], & \mbox{ otherwise.}
%\end{array}
%\ee 
\end{theorem}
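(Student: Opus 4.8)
The plan is to follow the template of the proofs of Theorems~\ref{Theorem6} and \ref{Theorem7}: read the three nonzero Hamming weights of $\sC_{D_{f,sq}}$ off Table~\ref{table3}, clear the powers of $\sqrt{p^*}$, identify $w_{\min}$ and $w_{\max}$, and then invoke the Ashikhmin--Barg criterion of Lemma~\ref{Minimality}.

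First I would normalize the weights. Since $n+s$ is even and $p^*=\eta_0(-1)p$, we have $\sqrt{p^*}^{n+s-2}=\eta_0^{(n+s)/2}(-1)\,\eta_0(-1)\,p^{(n+s-2)/2}$ and $\sqrt{p^*}^{n+s-4}=\eta_0^{(n+s)/2}(-1)\,p^{(n+s-4)/2}$. Writing $\delta:=\epsilon\,\eta_0^{(n+s)/2}(-1)\in\{\pm1\}$, so that $\epsilon\,\eta_0(-1)\,\sqrt{p^*}^{n+s-2}=\delta\,p^{(n+s-2)/2}$ and $\epsilon\,\sqrt{p^*}^{n+s-4}=\delta\,p^{(n+s-4)/2}$, the three weights of $\sC_{D_{f,sq}}$ listed in Table~\ref{table3} become
\[
w_1=\frac{(p-1)^2}{2}\bigl(p^{n-2}-\delta\, p^{(n+s-4)/2}\bigr),\qquad
w_2=\frac{(p-1)^2}{2}\,p^{n-2},\qquad
w_3=\frac{(p-1)^2}{2}\,p^{n-2}-(p-1)\,\delta\, p^{(n+s-2)/2},
\]
and the length equals $\frac{p-1}{2}\bigl(p^{n-1}-\delta\, p^{(n+s-2)/2}\bigr)$, which already matches the two lengths appearing in the statement, according to $\delta=1$ or $\delta=-1$.

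Next I would order the weights. Using $p^{(n+s-2)/2}=p\cdot p^{(n+s-4)/2}$ one finds $w_2-w_1=\frac{(p-1)^2}{2}\,\delta\, p^{(n+s-4)/2}$ and $w_1-w_3=\delta\,(p-1)\,\frac{p+1}{2}\,p^{(n+s-4)/2}$. Hence, if $\delta=1$ then $w_3<w_1<w_2$, so $w_{\min}=w_3$ and $w_{\max}=w_2$; if $\delta=-1$ then $w_2<w_1<w_3$, so $w_{\min}=w_2$ and $w_{\max}=w_3$. In either case the triple $(\text{length},\,n,\,w_{\min})$ is precisely the parameter set asserted. For $\delta=1$ one computes $\frac{w_{\min}}{w_{\max}}=1-\frac{2}{(p-1)\,p^{(n-s-2)/2}}$, so the inequality $\frac{p-1}{p}<\frac{w_{\min}}{w_{\max}}$ required by Lemma~\ref{Minimality} is equivalent to $\frac{2p}{p-1}<p^{(n-s-2)/2}$; since $n+s$ is even and $\frac{2p}{p-1}\le 3$, this holds whenever $n-s\ge 6$, i.e. $0\le s\le n-6$ (the bound being tight at $p=3$, where $\frac{2p}{p-1}=3$). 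For $\delta=-1$ one computes $\frac{w_{\min}}{w_{\max}}=\bigl(1+\frac{2}{(p-1)\,p^{(n-s-2)/2}}\bigr)^{-1}$, and $\frac{p-1}{p}<\frac{w_{\min}}{w_{\max}}$ reduces to $p^{(n-s-2)/2}>2$, which holds exactly when $n-s\ge 4$, i.e. $0\le s\le n-4$. Applying Lemma~\ref{Minimality} in both cases finishes the proof.

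I do not anticipate a real obstacle: this is the same two-line estimate used for Theorems~\ref{Theorem6} and \ref{Theorem7}. The only delicate points are keeping track of the exponents of $\eta_0(-1)$ and $p$ when eliminating $\sqrt{p^*}$, and checking that the stated ranges of $s$ are exactly what the Ashikhmin--Barg inequality forces in the worst case $p=3$---which is what produces the asymmetry between ``$0\le s\le n-6$'' and ``$0\le s\le n-4$'' in the two cases.
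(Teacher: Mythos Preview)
Your proof is correct and follows essentially the same approach as the paper: identify $w_{\min}$ and $w_{\max}$ from Table~\ref{table3} in each of the two cases $\delta=\pm1$ and apply the Ashikhmin--Barg criterion of Lemma~\ref{Minimality}. You carry out the $\sqrt{p^*}$-normalization and the ordering of all three weights more explicitly than the paper (which simply asserts $w_{\min}$ and $w_{\max}$), and you make precise why the thresholds $n-6$ and $n-4$ are forced by the worst case $p=3$, but the argument is the same.
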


\begin{proof} If $\epsilon \eta_0^{(n+s)/2}(-1)=1$, then
$w_{\min}=\frac{(p-1)^2}{2}p^{n-2}-  (p-1) p^{(n+s-2)/2} \mbox{ and }
w_{\max}=\frac{(p-1)^2}{2}p^{n-2}$; otherwise, 
$w_{\min}=\frac{(p-1)^2}{2}p^{n-2} \mbox{ and }
w_{\max}=\frac{(p-1)^2}{2}p^{n-2}+ (p-1) p^{(n+s-2)/2}.$
In the  first case, we have 
$$ \frac{p-1}{p} <\frac{\frac{(p-1)^2}{2}p^{n-2}-  (p-1) p^{(n+s-2)/2}}{\frac{(p-1)^2}{2}p^{n-2}}$$
if $0\leq s\leq n-6$, and in the second case, we have
$$ \frac{p-1}{p} <\frac{\frac{(p-1)^2}{2}p^{n-2} }{\frac{(p-1)^2}{2}p^{n-2}+ (p-1) p^{(n+s-2)/2}}$$ if $0\leq s\leq n-4$.
Hence, the proof is completed by Lemma \ref{Minimality}.
\end{proof} 
 \begin{corollary}
The constructed codes in Corollaries \ref{Corollary7}, \ref{Corollary10} and \ref{Corollary12} are minimal  with the corresponding condition in Theorem \ref{Theorem8}.
\end{corollary}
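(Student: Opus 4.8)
The plan is to reduce all three assertions to the single Ashikhmin--Barg computation already performed in Theorem \ref{Theorem8} for $\sC_{D_{f,sq}}$, by showing that each of the three codes carries the very same ratio $w_{\min}/w_{\max}$. First I would compare Table \ref{table333} of Corollary \ref{Corollary7} with Table \ref{table3} of Theorem \ref{Theorem3} and observe that the three nonzero Hamming weights of $\sC_{\bar{D}_{f,sq}}$ are obtained from those of $\sC_{D_{f,sq}}$ simply by dividing through by $p-1$; this is the expected effect of the $(p-1)$-to-one scaling $\Tr^n(\beta ab)=a\Tr^n(\beta b)$ underlying the puncturing. Since multiplying every nonzero weight by a fixed positive constant leaves the quotient $w_{\min}/w_{\max}$ unchanged, the inequality $\frac{p-1}{p}<\frac{w_{\min}}{w_{\max}}$ of Lemma \ref{Minimality} holds for $\sC_{\bar{D}_{f,sq}}$ under exactly the condition on $s$ recorded in Theorem \ref{Theorem8}, and minimality follows verbatim.

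For the subcode $\bar{\sC}_{D_{f,sq}}$ of Corollary \ref{Corollary10}, the point is that its two weights (Table \ref{table33}) are precisely the two weights of $\sC_{D_{f,sq}}$ indexed by $\beta\in\Supp(\Wa f)$; the remaining weight of the full code, the one with multiplicity $p^n-p^{n-s}$ arising from $\beta\in\F_q^\star\setminus\Supp(\Wa f)$, is dropped. Writing $\kappa=\epsilon\eta_0^{(n+s)/2}(-1)\in\{\pm1\}$ and $\delta=p^{(n+s-2)/2}$, I would record that the three nonzero weights of $\sC_{D_{f,sq}}$ equal the common value $\frac{(p-1)^2}{2}p^{n-2}$ shifted by $-\frac{(p-1)^2}{2p}\kappa\delta$, by $0$, and by $-(p-1)\kappa\delta$, respectively. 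Because these shifts share the sign $-\kappa$ and have magnitudes $0<\frac{(p-1)^2}{2p}\delta<(p-1)\delta$, the weights are linearly ordered with the dropped weight sitting in the middle; hence $w_{\min}$ and $w_{\max}$ of $\bar{\sC}_{D_{f,sq}}$ coincide with those of $\sC_{D_{f,sq}}$, and once more the ratio, and so the condition on $s$, are unchanged.

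The code $\bar{\sC}_{\bar{D}_{f,sq}}$ of Corollary \ref{Corollary12} combines the two reductions: it is the subcode of the punctured code $\sC_{\bar{D}_{f,sq}}$ obtained by restricting $\beta$ to $\Supp(\Wa f)$, so its two weights (Table \ref{table3333}) are the extreme weights of Corollary \ref{Corollary7}'s code, which are themselves $\frac{1}{p-1}$ times the extreme weights of $\sC_{D_{f,sq}}$. Both operations preserve $w_{\min}/w_{\max}$, so the Ashikhmin--Barg bound again holds under the identical condition. In each of the three cases I would then invoke Lemma \ref{Minimality} to conclude minimality, with the precise ranges $0\le s\le n-6$ when $\kappa=1$ and $0\le s\le n-4$ when $\kappa=-1$, matching Theorem \ref{Theorem8}.

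The only genuinely arithmetic point, and the place where a slip could occur, is the ordering of the three weights of $\sC_{D_{f,sq}}$: I must be certain that the weight attached to $\beta\notin\Supp(\Wa f)$ is the \emph{middle} one and not an extreme one, since otherwise deleting it in the two subcodes would alter $w_{\min}$ or $w_{\max}$ and break the reduction. This is settled by the elementary comparison $(p-1)\delta>\frac{(p-1)^2}{2p}\delta\iff 2p>p-1$, together with the observation that the two nonzero shifts carry the same sign $-\kappa$; once this is in hand, the remainder is bookkeeping against Tables \ref{table3}, \ref{table333}, \ref{table33} and \ref{table3333}.
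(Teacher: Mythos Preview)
Your proposal is correct and follows the same idea the paper leaves implicit (the corollary is stated without proof): the Ashikhmin--Barg ratio $w_{\min}/w_{\max}$ computed for $\sC_{D_{f,sq}}$ in Theorem \ref{Theorem8} transfers unchanged to the punctured code, the subcode, and their combination. Your extra verification that the weight attached to $\beta\notin\Supp(\Wa f)$ lies strictly between the other two is a useful detail the paper glosses over; with it, your argument is airtight.
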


\begin{theorem}\label{Theorem9}
Let $n+s$ be an odd integer with $0\leq s\leq n-5$.  Then 
the linear code  $\sC_{D_{f,sq}}$ of Theorem \ref{Theorem4} is   minimal     with parameters

\be\nn
\begin{array}{ll}
 \left[\frac{p-1}{2}(p^{n-1}+ p^{(n+s-1)/2}),n, \frac{(p-1)^2}{2}p^{n-2}\right]_p,&\mbox{if } \epsilon \eta_0^{(n+s-1)/2}(-1)=1,\\
 \left[\frac{p-1}{2}(p^{n-1}- p^{(n+s-1)/2}),n, \frac{p-1}{2}\left((p-1)p^{n-2}- (p+1) p^{(n+s-3)/2}\right)\right]_p, & \mbox{ otherwise.}
\end{array}
\ee  
\end{theorem}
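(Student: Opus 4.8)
The plan is to follow the pattern of Theorems~\ref{Theorem6}--\ref{Theorem8}: we inherit the weight distribution (and the dimension $n$) of $\sC_{D_{f,sq}}$ from Theorem~\ref{Theorem4}, i.e. from Tables~\ref{table5} and \ref{table5.}, identify the smallest and largest nonzero weights $w_{\min}$ and $w_{\max}$, and then verify the Ashikhmin--Barg inequality $\frac{p-1}{p}<\frac{w_{\min}}{w_{\max}}$ of Lemma~\ref{Minimality}. Since Theorem~\ref{Theorem4} only records the length and dimension, part of the work here is precisely to pin down the minimum distance $d=w_{\min}$, which depends on the signs.

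First I would set $\delta:=\epsilon\,\eta_0^{(n+s-1)/2}(-1)\in\{1,-1\}$, the quantity appearing in the hypothesis, and rewrite the three nonzero weights so that they no longer depend on $p\bmod 4$. As $n+s$ is odd, both $n+s-1$ and $n+s-3$ are even, so $\sqrt{p^*}^{\,n+s-1}=\eta_0^{(n+s-1)/2}(-1)\,p^{(n+s-1)/2}$ and $\sqrt{p^*}^{\,n+s-3}=\eta_0^{(n+s-3)/2}(-1)\,p^{(n+s-3)/2}$ are (real) integers, with $\eta_0^{(n+s-3)/2}(-1)=\eta_0(-1)\,\eta_0^{(n+s-1)/2}(-1)$. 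Substituting these into Table~\ref{table5} (where $\eta_0(-1)=1$, so $\delta=\epsilon$) and into Table~\ref{table5.} (where $\eta_0(-1)=-1$, so $\delta=-\epsilon$), a short computation shows that in \emph{both} cases the nonzero weights are exactly $N$, $N+\delta b$ and $N+\delta a$, and the length is $\frac{p-1}{2}\bigl(p^{n-1}+\delta\,p^{(n+s-1)/2}\bigr)$, where
\[
N=\tfrac{(p-1)^2}{2}p^{n-2},\qquad b=\tfrac{(p-1)^2}{2}p^{(n+s-3)/2},\qquad a=\tfrac{(p-1)(p+1)}{2}p^{(n+s-3)/2},
\]
and $a>b>0$. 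Hence if $\delta=1$ then $w_{\min}=N$ and $w_{\max}=N+a$, whereas if $\delta=-1$ then $w_{\min}=N-a$ and $w_{\max}=N$; taking the minimum distance to be $w_{\min}$ and inserting $\delta=\pm1$ into the length already produces the two stated parameter sets (and $N-a>0$ when $s\le n-5$, using $p\ge 3$).

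It then remains to check the Ashikhmin--Barg condition. For $\delta=1$, clearing denominators turns $\frac{p-1}{p}<\frac{N}{N+a}$ into $(p-1)a<N$, that is $(p+1)p^{(n+s-3)/2}<p^{n-2}$; for $\delta=-1$, $\frac{p-1}{p}<\frac{N-a}{N}$ becomes $p\,a<N$, that is $p(p+1)p^{(n+s-3)/2}<(p-1)p^{n-2}$. Under the hypothesis $0\le s\le n-5$ we have $(n+s-3)/2\le n-4$, so both inequalities follow from the elementary estimates $p+1<p^2$ and $p(p+1)<(p-1)p^2$ (the latter being $2p+1<p^2$, valid for every odd prime $p\ge 3$). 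Lemma~\ref{Minimality} then shows that every nonzero codeword of $\sC_{D_{f,sq}}$ is minimal, completing the proof; the analogous statement for the punctured and subcode versions follows verbatim.

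The only place requiring genuine care is the normalization step: one must keep track of the three signs $\eta_0(-1)$, $\eta_0^{(n+s-1)/2}(-1)$, $\eta_0^{(n+s-3)/2}(-1)$ and verify that Tables~\ref{table5} and \ref{table5.}, written respectively with $\sqrt{p}$ and with $\sqrt{p^*}$, really do collapse to the same triple $N,\,N+\delta b,\,N+\delta a$. Once that identification is in place, locating $w_{\min}$, $w_{\max}$ and checking the two numeric inequalities is routine, so I expect no obstacle beyond this bookkeeping.
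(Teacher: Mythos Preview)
Your proof is correct and follows the same strategy as the paper: identify $w_{\min}$ and $w_{\max}$ from the weight tables of Theorem~\ref{Theorem4} and then verify the Ashikhmin--Barg inequality of Lemma~\ref{Minimality} under the constraint $0\le s\le n-5$. Your normalization via $\delta=\epsilon\,\eta_0^{(n+s-1)/2}(-1)$ is a slightly cleaner presentation than the paper's, which treats $p\equiv 1\pmod 4$ and $p\equiv 3\pmod 4$ separately before arriving at the same pair of extremal weights, but the substance of the argument is identical.
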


\begin{proof} When $p\equiv 1 \pmod 4$, 
we have 
\be\label{minweight}
\begin{array}{ll}
w_{\min}&=\frac{(p-1)^2}{2}p^{n-2} \mbox{ and }\\
w_{\max}&=  \frac{p-1}{2}\left((p-1)p^{n-2}+(p+1) p^{(n+s-3)/2}\right)
\end{array}
\ee
if  $\epsilon=1$; otherwise, we have
\be\label{maxweight}
\begin{array}{ll}
w_{\min}&=\frac{p-1}{2}\left((p-1)p^{n-2}- (p+1) p^{(n+s-3)/2}\right) \mbox{ and } \\
w_{\max}&=\frac{(p-1)^2}{2}p^{n-2}.
\end{array}
\ee
When $p\equiv 3 \pmod 4$,   we have the Hamming weights  in (\ref{minweight}) if $\epsilon\eta^{(n+s-1)/2}_0(-1)=1$; otherwise,  in (\ref{maxweight}).
%and  $\epsilon\eta^{(n+s-1)/2}_0(-1)=-1$, we have the Hamming weights  in (\ref{minweight}) and  in (\ref{maxweight}), respectively.
%For $0\leq s \leq n-5$,  one  can observe that 
%\be\nn
% \frac{p-1}{p} <\frac{w_{\min}}{w_{\max}}=\frac{ \frac{(p-1)^2}{2}p^{n-2}}{  \frac{p-1}{2}\left((p-1)p^{n-2}+ (p+1) p^{(n+s-3)/2}\right)}.
%\ee$
For each case above,  we have $$\frac{p-1}{p} <\frac{w_{\min}}{w_{\max}}$$
if  $0\leq s\leq n-5$.  Hence, Lemma \ref{Minimality} completes the proof.
\end{proof}

 \begin{corollary} Let $n+s$ be an odd integer with $0\leq s\leq n-5$. Then
the constructed codes in Corollaries \ref{Corollary8}, \ref{Corollary11} and \ref{Corollary13} are minimal. %  with the corresponding condition in Theorem \ref{Theorem9}.
\end{corollary}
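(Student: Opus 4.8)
The plan is to deduce the minimality of all three families directly from the minimality of the ambient code $\sC_{D_{f,sq}}$ already established in Theorem \ref{Theorem9}, rather than reapplying Lemma \ref{Minimality} from scratch to each of Tables \ref{table555}, \ref{table555.}, \ref{table55} and \ref{table5555}. The observation driving everything is that each of the codes in Corollaries \ref{Corollary8}, \ref{Corollary11} and \ref{Corollary13} is obtained from $\sC_{D_{f,sq}}$ of Theorem \ref{Theorem4} by a puncturing and/or by passing to the subcode indexed by $\Supp(\Wa f)$, and both operations move the quantity $w_{\min}/w_{\max}$ in the favourable direction. Since Theorem \ref{Theorem9} shows that this ratio for $\sC_{D_{f,sq}}$ exceeds $\frac{p-1}{p}$ exactly when $0\leq s\leq n-5$, it will suffice to check that the ratio of each derived code is at least as large.

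First I would record the effect of puncturing. Because $D_{f,sq}=\F_p^{\star}\bar{D}_{f,sq}$, for every $\beta\in\F_q^{\star}$ one has $\#\{x\in D_{f,sq}:\Tr^n(\beta x)=0\}=(p-1)\#\{x\in\bar{D}_{f,sq}:\Tr^n(\beta x)=0\}$, exactly as in the displayed identity used to pass from $\sC_{D_f}$ to $\sC_{\bar{D}_f}$. Consequently the Hamming weight of $c_\beta$ in $\sC_{D_{f,sq}}$ equals $p-1$ times its weight in the punctured code $\sC_{\bar{D}_{f,sq}}$ of Corollary \ref{Corollary8}, so every nonzero weight of the latter is the corresponding weight of $\sC_{D_{f,sq}}$ divided by the positive constant $p-1$. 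A common positive scaling leaves $w_{\min}/w_{\max}$ unchanged, so this ratio for $\sC_{\bar{D}_{f,sq}}$ is identical to the one computed in Theorem \ref{Theorem9} and exceeds $\frac{p-1}{p}$ for $0\leq s\leq n-5$; Lemma \ref{Minimality} then gives the minimality of $\sC_{\bar{D}_{f,sq}}$.

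Next I would treat the subcodes. The codewords of $\bar{\sC}_{D_{f,sq}}$ (resp.\ of $\bar{\sC}_{\bar{D}_{f,sq}}$) are precisely those $c_\beta$ of $\sC_{D_{f,sq}}$ (resp.\ of $\sC_{\bar{D}_{f,sq}}$) with $\beta\in\Supp(\Wa f)$, and each such codeword retains its Hamming weight inside the smaller code. Hence the set of nonzero weights of a subcode is contained in that of its ambient code, which forces $w_{\min}^{\mathrm{sub}}\geq w_{\min}$ and $w_{\max}^{\mathrm{sub}}\leq w_{\max}$, and therefore $\frac{w_{\min}^{\mathrm{sub}}}{w_{\max}^{\mathrm{sub}}}\geq\frac{w_{\min}}{w_{\max}}>\frac{p-1}{p}$ for $0\leq s\leq n-5$. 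Applying this with ambient code $\sC_{D_{f,sq}}$ (Theorem \ref{Theorem4}) settles Corollary \ref{Corollary11}, and with ambient code $\sC_{\bar{D}_{f,sq}}$ (Corollary \ref{Corollary8}, already shown minimal) settles Corollary \ref{Corollary13}. In fact, comparing Table \ref{table55} with Tables \ref{table5} and \ref{table5.} shows that the three nonzero weights of $\bar{\sC}_{D_{f,sq}}$ coincide with those of $\sC_{D_{f,sq}}$, because the weight attached to $\beta\notin\Supp(\Wa f)$ already equals one of those attached to $\beta\in\Supp(\Wa f)$; the analogous comparison links Table \ref{table5555} to Tables \ref{table555} and \ref{table555.}, so the ratio is not merely bounded below but in fact unchanged.

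The substance is inherited from Theorem \ref{Theorem9}, so no new exponential-sum estimate is needed beyond Lemmas \ref{Lemma7} and \ref{Lemma14} invoked there. The only points to verify with care, and hence the mild obstacle, are bookkeeping ones: that puncturing genuinely rescales \emph{all} weights by the single factor $p-1$, and that across both congruence classes $p\equiv 1,3\pmod 4$ the entries of Tables \ref{table555}, \ref{table555.}, \ref{table55} and \ref{table5555} are exactly the Theorem \ref{Theorem4} weights rescaled and/or restricted to $\Supp(\Wa f)$. Once these identifications are confirmed, Lemma \ref{Minimality} closes each of the three cases.
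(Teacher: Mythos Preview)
Your proposal is correct and follows essentially the same reasoning the paper relies on: the corollary is stated without a separate proof and is meant to be read as an immediate consequence of Theorem \ref{Theorem9}, since the punctured code of Corollary \ref{Corollary8} has every weight divided by the common factor $p-1$ (leaving $w_{\min}/w_{\max}$ unchanged), and the subcodes of Corollaries \ref{Corollary11} and \ref{Corollary13} have nonzero weights contained in those of their ambient codes (so the ratio can only increase). Your additional check that the subcode weights in Table \ref{table55} and Table \ref{table5555} in fact coincide as sets with those of the ambient codes is a nice sharpening, but not needed for the conclusion.
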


\begin{theorem}\label{Theorem10}
Let $n+s$ be an odd integer with $0\leq s\leq n-5$.  Then the linear code
$\sC_{D_{f,nsq}}$  of Theorem \ref{Theorem5} is   minimal     with parameters
\be\nn
\begin{array}{ll}
\left[\frac{p-1}{2}(p^{n-1}- p^{(n+s-1)/2}),n,\frac{p-1}{2}\left((p-1)p^{n-2}- (p+1) p^{(n+s-3)/2}\right)\right]_p,&\mbox{if } \epsilon \eta_0^{(n+s-1)/2}(-1)=1,\\
 \left[\frac{p-1}{2}(p^{n-1}+  p^{(n+s-1)/2}),n,\frac{(p-1)^2}{2}p^{n-2}\right]_p, & \mbox{ otherwise.}
\end{array}
\ee  
\end{theorem}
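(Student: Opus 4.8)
The plan is to apply the Ashikhmin--Barg bound (Lemma \ref{Minimality}): it suffices to identify the minimum and maximum nonzero Hamming weights $w_{\min}$ and $w_{\max}$ of $\sC_{D_{f,nsq}}$ and to verify that $\frac{p-1}{p}<\frac{w_{\min}}{w_{\max}}$ holds exactly when $0\le s\le n-5$. The argument runs in complete parallel with the proof of Theorem \ref{Theorem9}, reading the relevant weights off Tables \ref{table4} and \ref{table4.} in place of the tables of Theorem \ref{Theorem4}.

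First I would put the three nonzero weights into a $p$-adic normal form. Since $n+s$ is odd, both $n+s-1$ and $n+s-3$ are even, so $\sqrt{p^*}^{\,n+s-1}=\eta_0^{(n+s-1)/2}(-1)\,p^{(n+s-1)/2}$ and $\sqrt{p^*}^{\,n+s-3}=\eta_0^{(n+s-3)/2}(-1)\,p^{(n+s-3)/2}$. Writing $\sigma=\epsilon\,\eta_0^{(n+s-1)/2}(-1)$ and $t=p^{(n+s-3)/2}$, and using $\eta_0^{(n+s-3)/2}(-1)=\eta_0(-1)\,\eta_0^{(n+s-1)/2}(-1)$ together with $\eta_0(-1)=1$ when $p\equiv1\pmod4$ and $\eta_0(-1)=-1$ when $p\equiv3\pmod4$, one checks that in each of the four cases of Tables \ref{table4} and \ref{table4.} the three nonzero weights are $\frac{(p-1)^2}{2}p^{n-2}$, $\frac{(p-1)^2}{2}(p^{n-2}-\sigma t)$ and $\frac{p-1}{2}((p-1)p^{n-2}-\sigma(p+1)t)$ (the coefficients of $\sqrt{p^*}^{\,n+s-3}$ carry opposite signs in the two $p\bmod4$ tables precisely because $\eta_0(-1)$ does). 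Hence the two sign-cases of the statement emerge: if $\sigma=1$ then, because $\frac{p^2-1}{2}>\frac{(p-1)^2}{2}$, we get $w_{\min}=\frac{p-1}{2}((p-1)p^{n-2}-(p+1)t)$ and $w_{\max}=\frac{(p-1)^2}{2}p^{n-2}$, while if $\sigma=-1$ we get $w_{\min}=\frac{(p-1)^2}{2}p^{n-2}$ and $w_{\max}=\frac{p-1}{2}((p-1)p^{n-2}+(p+1)t)$; in both cases $w_{\min}$ is precisely the minimum distance claimed, and the length $\frac{p-1}{2}(p^{n-1}-\sigma p^{(n+s-1)/2})$ and dimension $n$ are inherited from Theorem \ref{Theorem5}. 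Under the hypothesis $0\le s\le n-5$ the three multiplicities in Tables \ref{table4} and \ref{table4.} are positive, so these are indeed the extreme weights.

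Next I would check the Ashikhmin--Barg inequality. Since $\frac{t}{p^{n-2}}=p^{(s-n+1)/2}$, in the case $\sigma=1$ we have $\frac{w_{\min}}{w_{\max}}=1-\frac{p+1}{p-1}\,p^{(s-n+1)/2}$, so $\frac{p-1}{p}<\frac{w_{\min}}{w_{\max}}$ is equivalent to $p^{(s-n+3)/2}<\frac{p-1}{p+1}$; in the case $\sigma=-1$ we have $\frac{w_{\min}}{w_{\max}}=\big(1+\frac{p+1}{p-1}\,p^{(s-n+1)/2}\big)^{-1}$, and the condition becomes $p^{(s-n+3)/2}<\frac{p}{p+1}$. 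In both cases $(s-n+3)/2$ is an integer (since $n+s-3$ is even): for $s=n-3$ it equals $0$ and the inequality fails, $s=n-4$ is ruled out by the parity of $n+s$, and for every $s\le n-5$ it is at most $-1$, so $p^{(s-n+3)/2}\le p^{-1}<\frac{p-1}{p+1}\le\frac{p}{p+1}$ for all odd primes $p$. Hence the bound of Lemma \ref{Minimality} holds exactly when $0\le s\le n-5$, and all nonzero codewords of $\sC_{D_{f,nsq}}$ are then minimal.

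The only delicate point is the normalisation in the second paragraph: propagating the sign $\sigma=\epsilon\,\eta_0^{(n+s-1)/2}(-1)$ through $\sqrt{p^*}^{\,n+s-3}=(p^*)^{(n+s-3)/2}$ and confirming that the four cases of Tables \ref{table4} and \ref{table4.} collapse to the two scenarios above. Once $w_{\min}$ and $w_{\max}$ have been pinned down, the remaining estimates are elementary and formally identical to those in the proof of Theorem \ref{Theorem9}.
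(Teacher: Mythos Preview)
Your proposal is correct and follows essentially the same approach as the paper: identify $w_{\min}$ and $w_{\max}$ from the weight tables of Theorem \ref{Theorem5}, observe that they match exactly the pair appearing in the proof of Theorem \ref{Theorem9} (your $\sigma=1$ corresponds to the paper's~(\ref{maxweight}) and $\sigma=-1$ to~(\ref{minweight})), and then apply the Ashikhmin--Barg bound. The paper simply cites Theorem \ref{Theorem9} for the final inequality, whereas you spell out the verification explicitly; your unifying substitution $\sigma=\epsilon\,\eta_0^{(n+s-1)/2}(-1)$ is a clean way to collapse the four $p\bmod 4$ sub-cases into two.
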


\begin{proof}  When $p\equiv 1 \pmod 4$, we have   the Hamming weights in (\ref{maxweight})  if  $\epsilon=1$; otherwise, in (\ref{minweight}).
Similarly, in the case of $p\equiv 3 \pmod 4$,   we have the Hamming weights in    (\ref{maxweight}) if  $\epsilon\eta^{(n+s-1)/2}_0(-1)=1$; otherwise, in (\ref{minweight}).
%For each case above,  one  can observe that $$ \frac{p-1}{p} <\frac{w_{\min}}{w_{\max}}$$ for every integer $s$ and $n$ with $0\leq s \leq n-5$.
 Hence,  the assertion follows directly from Theorem \ref{Theorem9}. 
\end{proof}
  \begin{corollary} Let $n+s$ be an odd integer with $0\leq s\leq n-5$. 
Then the constructed codes in Corollaries \ref{Corollary9}, \ref{Corollary14} and \ref{Corollary15}  are minimal. %  with the corresponding condition in Theorem \ref{Theorem10}.
\end{corollary}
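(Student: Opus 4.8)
The plan is to deduce all three minimality statements from the minimality of $\sC_{D_{f,nsq}}$ already established in Theorem \ref{Theorem10}, by tracking how the operations that produce these codes affect the ratio $w_{\min}/w_{\max}$ governing the Ashikhmin--Barg criterion (Lemma \ref{Minimality}). The three codes are all built from the code $\sC_{D_{f,nsq}}$ of Theorem \ref{Theorem5}: the code $\sC_{\bar D_{f,nsq}}$ of Corollary \ref{Corollary9} is its punctured version, the code $\bar\sC_{D_{f,nsq}}$ of Corollary \ref{Corollary14} is the subcode obtained by restricting $\beta$ to $\Supp(\Wa f)$, and the code $\bar\sC_{\bar D_{f,nsq}}$ of Corollary \ref{Corollary15} is simultaneously the punctured version of the Corollary \ref{Corollary14} code and the restriction to $\Supp(\Wa f)$ of the Corollary \ref{Corollary9} code.

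First I would record two elementary facts about these operations. (i) Since $D_{f,nsq}=\F_p^\star\bar D_{f,nsq}$, for every $\beta\in\F_q$ the number of coordinates in $D_{f,nsq}$ on which $c_\beta$ is nonzero is exactly $p-1$ times the corresponding number for $\bar D_{f,nsq}$; hence puncturing divides every nonzero weight by $p-1$ and leaves the ratio $w_{\min}/w_{\max}$ unchanged. (ii) The subcode obtained by letting $\beta$ range only over $\Supp(\Wa f)$ consists of the codewords $c_\beta$ with $\beta\in\Supp(\Wa f)$, whose weights coincide with their weights in the full code; thus the set of nonzero weights of the subcode is a subset of the set of nonzero weights of $\sC_{D_{f,nsq}}$, so $w_{\min}$ cannot decrease and $w_{\max}$ cannot increase, i.e. the ratio $w_{\min}/w_{\max}$ cannot decrease. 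Both facts are immediate from the weight tables (Tables \ref{table4}, \ref{table4.}, \ref{table444}, \ref{table444.}, \ref{table44}, \ref{table4444}): the punctured tables are the full tables with each weight divided by $p-1$, and the subcode tables list weights already occurring in the full tables.

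Combining these facts, each of the three codes satisfies $w_{\min}/w_{\max}\ge$ the corresponding ratio for $\sC_{D_{f,nsq}}$: equality for the punctured code of Corollary \ref{Corollary9}, and $\ge$ for the subcodes of Corollaries \ref{Corollary14} and \ref{Corollary15}, the latter being a subcode of the ratio-preserving punctured code. By Theorem \ref{Theorem10}, when $n+s$ is odd and $0\le s\le n-5$ the code $\sC_{D_{f,nsq}}$ satisfies $\frac{p-1}{p}<\frac{w_{\min}}{w_{\max}}$. Therefore the same strict inequality holds for all three codes, and Lemma \ref{Minimality} yields their minimality.

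The main point to get right is fact (ii): one must confirm that restricting $\beta$ to $\Supp(\Wa f)$ discards only the weight contributed by $\beta\in\F_q^\star\setminus\Supp(\Wa f)$ and introduces no new weight value, which is exactly what the comparison of the subcode tables with Tables \ref{table4} and \ref{table4.} shows (for $p\equiv 1\bmod 4$ the discarded weight already coincides with the $g(\beta)\in SQ$ weight, so the weight set is in fact unchanged; for $p\equiv 3\bmod 4$ it is a proper subset). Beyond this bookkeeping there is no further obstacle, since no new estimate is required: the single inequality $\frac{p-1}{p}<\frac{w_{\min}}{w_{\max}}$ proved for $\sC_{D_{f,nsq}}$ in Theorem \ref{Theorem10} propagates verbatim.
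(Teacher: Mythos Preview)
Your argument is correct and is exactly the reasoning the paper leaves implicit (the corollary is stated without proof, as an immediate consequence of Theorem \ref{Theorem10}): puncturing by $\F_p^\star$ divides every weight by $p-1$ and preserves $w_{\min}/w_{\max}$, while restricting to $\beta\in\Supp(\Wa f)$ can only shrink the set of nonzero weights, so the Ashikhmin--Barg inequality propagates from $\sC_{D_{f,nsq}}$ to all three codes. One small inaccuracy in your parenthetical bookkeeping: for $p\equiv 3\pmod 4$ the discarded weight $\frac{(p-1)^2}{2}(p^{n-2}+\epsilon\sqrt{p^*}^{\,n+s-3})$ also coincides with one of the remaining weights (compare Table \ref{table4.} with Table \ref{table44} after substituting $p^*=-p$), so the weight set is unchanged in that case too---but this has no bearing on the validity of your proof.
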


\begin{remark}
We conclude from this section that the constructed codes in this paper  are minimal for almost all cases. Hence,  the secret sharing schemes based on the dual codes of the constructed minimal linear codes in this paper have the nice access structures described in Proposition \ref{Structure}.  This is the motivation why we construct  a punctured version and a subcode of each constructed code. % since they are also minimal linear codes.
\end{remark}

\section{Conclusion}
In this paper, inspired by the work of \cite{tang2016linear},  we push further the use of weakly regular plateaued functions over finite fields of odd characteristic  introduced  recently by Mesnager et al.    \cite{mesnager2017WCC}.  By generalizing the linear codes constructed from weakly regular bent functions in \cite{tang2016linear}, we obtain new minimal linear codes with more freedom in the choice of the functions involved in the construction of two or three weight linear codes. They contain  the (almost) optimal codes with respect to the Singleton and Griesmer bounds.
 The paper provides the first construction of linear codes with few weights from weakly regular plateaued functions based on the second generic construction.
 The obtained  minimal codes in this paper can be directly used to construct secret sharing schemes with the nice access structures.
To the best of our knowledge, they are  inequivalent to the known ones (since there is no minimal linear code with these parameters) in the literature.

\bibliography{myBiblio}

\begin{thebibliography}{10}

\bibitem{ashikhmin1998minimal}
A.~Ashikhmin and A.~Barg.
\newblock Minimal vectors in linear codes.
\newblock {\em IEEE Transactions on Information Theory}, 44(5):2010--2017,
  (1998).

\bibitem{bosma1997magma}
W.~Bosma, J.~Cannon, and C.~Playoust.
\newblock The magma algebra system i: The user language.
\newblock {\em Journal of Symbolic Computation}, 24(3):235--265, 1997.

\bibitem{carlet2005linear}
C.~Carlet, C.~Ding, and J.~Yuan.
\newblock Linear codes from perfect nonlinear mappings and their secret sharing
  schemes.
\newblock {\em IEEE Transactions on Information Theory}, 51(6):2089--2102,
  2005.

\bibitem{ccesmelioglu2010construction}
A.~{\c{C}}esmelioglu, G.~McGuire, and W.~Meidl.
\newblock A construction of weakly and non-weakly regular bent functions.
\newblock {\em arXiv preprint arXiv:1011.6242}, 2010.

\bibitem{ding2015linear}
C.~Ding.
\newblock Linear codes from some 2-designs.
\newblock {\em IEEE Transactions on information theory}, 61(6):3265--3275,
  2015.

\bibitem{ding2016construction}
C.~Ding.
\newblock A construction of binary linear codes from boolean functions.
\newblock {\em Discrete mathematics}, 339(9):2288--2303, 2016.

\bibitem{ding2018minimal}
C.~Ding, Z.~Heng, and Z.~Zhou.
\newblock Minimal binary linear codes.
\newblock {\em IEEE Transactions on Information Theory}, 2018.

\bibitem{ding2003covering}
C.~Ding and J.~Yuan.
\newblock Covering and secret sharing with linear codes.
\newblock {\em DMTCS}, 2731:11--25, 2003.

\bibitem{ding2015class}
K.~Ding and C.~Ding.
\newblock A class of two-weight and three-weight codes and their applications
  in secret sharing.
\newblock {\em IEEE Transactions on Information Theory}, 61(11):5835--5842,
  2015.

\bibitem{helleseth2006monomial}
T.~Helleseth and A.~Kholosha.
\newblock Monomial and quadratic bent functions over the finite fields of odd
  characteristic.
\newblock {\em IEEE Transactions on Information Theory}, 52(5):2018--2032,
  2006.

\bibitem{heng2018minimal}
Z.~Heng, C.~Ding, and Z.~Zhou.
\newblock Minimal linear codes over finite fields.
\newblock {\em arXiv preprint arXiv:1803.09988}, 2018.

\bibitem{hou2004solution}
X.-d. Hou.
\newblock Solution to a problem of s. payne.
\newblock {\em Proceedings of the American Mathematical Society}, 132(1):1--6,
  2004.

\bibitem{ireland2013classical}
K.~Ireland and M.~Rosen.
\newblock {\em A classical introduction to modern number theory}, volume~84.
\newblock Springer Science \& Business Media, 2013.

\bibitem{lidl1997finite}
R.~Lidl and H.~Niederreiter.
\newblock {\em Finite fields}, volume~20.
\newblock Cambridge university press, 1997.

\bibitem{mesnager2017linear}
S.~Mesnager.
\newblock Linear codes with few weights from weakly regular bent functions
  based on a generic construction.
\newblock {\em Cryptography and Communications}, 9(1):71--84, 2017.

\bibitem{mesnager2017WCC}
S.~Mesnager, F.~{\"O}zbudak, and A.~S{\i}nak.
\newblock A new class of three-weight linear codes from weakly regular
  plateaued functions.
\newblock In {\em Proceedings of the Tenth International Workshop on Coding and
  Cryptography (WCC) 2017.}

\bibitem{mesnager2015results}
S.~Mesnager, F.~{\"O}zbudak, and A.~S{\i}nak.
\newblock Results on characterizations of plateaued functions in arbitrary
  characteristic.
\newblock In {\em International Conference on Cryptography and Information
  Security in the Balkans}, pages 17--30. Springer, 2015.

\bibitem{DCC}
S.~Mesnager, F.~{\"O}zbudak, and A.~S{\i}nak.
\newblock Linear codes from weakly regular plateaued functions and their secret
  sharing schemes.
\newblock {\em Designs, Codes and Cryptography}, 2018.

\bibitem{rothaus1976bent}
O.~S. Rothaus.
\newblock On “bent” functions.
\newblock {\em Journal of Combinatorial Theory, Series A}, 20(3):300--305,
  1976.

\bibitem{tang2016linear}
C.~Tang, N.~Li, Y.~Qi, Z.~Zhou, and T.~Helleseth.
\newblock Linear codes with two or three weights from weakly regular bent
  functions.
\newblock {\em IEEE Transactions on Information Theory}, 62(3):1166--1176,
  2016.

\bibitem{zheng1999plateaued}
Y.~Zheng and X.-M. Zhang.
\newblock Plateaued functions.
\newblock In {\em ICICS}, volume~99, pages 284--300. Springer, 1999.

\bibitem{zhou2016linear}
Z.~Zhou, N.~Li, C.~Fan, and T.~Helleseth.
\newblock Linear codes with two or three weights from quadratic bent functions.
\newblock {\em Designs, Codes and Cryptography}, 81(2):283--295, 2016.

\end{thebibliography}
\bibliographystyle{abbrv}
\bibliographystyle{iamBiblioStyle}

\end{document}